\documentclass[11pt,letterpaper]{amsart}
\usepackage[foot]{amsaddr}
\usepackage{ifxetex}
\ifxetex
  \usepackage[no-math]{fontspec}
\else
\fi
\usepackage{tikz}
\usetikzlibrary{arrows.meta, positioning, shapes, calc}
\usepackage{amsmath}
\usepackage{amsfonts}
\usepackage{amssymb}
\usepackage{amsthm}

\usepackage[margin=1in]{geometry}
\usepackage{microtype}

\ifxetex 
  \usepackage[libertine]{newtxmath}
\else
  \usepackage{newtxmath}
\fi
\usepackage[tt=false]{libertine} 
\usepackage{caption}
\usepackage{tcolorbox}
\usepackage{xspace}
\usepackage{bbm}
\usepackage{hyperref}
\hypersetup{colorlinks=true,allcolors=blue}
\usepackage[linesnumbered,boxed,ruled,vlined]{algorithm2e}
\usepackage{bm}
\usepackage{bbm}
\usepackage[numbers]{natbib}
\usepackage{soul}
\soulregister\Cref7
\soulregister\eqref7

\usepackage{enumerate} 
\usepackage{enumitem}
\usepackage{ragged2e}
\usepackage{tabularx}
\usepackage{array}
\usepackage{longtable}
\usepackage{hhline}
\usepackage{multirow}
\usepackage{booktabs}
\usepackage{aliascnt}
\usepackage{braket}
\usepackage[nameinlink]{cleveref}

\makeatletter

\renewcommand{\@tocpagenum}[1]{\hss{\footnotesize\mdseries #1}}
\def\l@section{\@tocline{1}{.55em plus .1em}{1pc}{}{\bfseries}}
\def\l@subsection{\@tocline{2}{.08em}{2pc}{4pc}{\small}}
\makeatother

\newcolumntype{L}[1]{>{\raggedright\arraybackslash}p{#1}}
\newcolumntype{C}[1]{>{\centering\arraybackslash}m{#1}}
\newcolumntype{R}[1]{>{\raggedleft\arraybackslash}p{#1}}
\renewcommand\arraystretch{1.5}  

\usepackage{makecell}

\usepackage{footnote}
\usepackage{float}
\makesavenoteenv{tabular}

\renewcommand{\epsilon}{\varepsilon}

\newtheorem{theorem}{Theorem}[section]
\newaliascnt{observation}{theorem}

\aliascntresetthe{observation}
\newaliascnt{claim}{theorem}

\aliascntresetthe{claim}
\newtheorem*{claim*}{Claim}
\newaliascnt{condition}{theorem}

\aliascntresetthe{condition}
\newaliascnt{example}{theorem}

\aliascntresetthe{example}
\newaliascnt{fact}{theorem}

\aliascntresetthe{fact}
\newaliascnt{lemma}{theorem}
\newtheorem{lemma}[lemma]{Lemma}
\aliascntresetthe{lemma}
\newaliascnt{proposition}{theorem}

\aliascntresetthe{proposition}
\newaliascnt{corollary}{theorem}
\newtheorem{corollary}[corollary]{Corollary}
\aliascntresetthe{corollary}
\newcommand{\restatedtheoremnumber}{}
\newtheorem*{restatedlemma}{Lemma \restatedtheoremnumber}

\theoremstyle{definition}

\newaliascnt{definition}{theorem}
\newtheorem{definition}[definition]{Definition}
\aliascntresetthe{definition}
\newaliascnt{remark}{theorem}

\aliascntresetthe{remark}
\newtheorem*{remark*}{Remark}

\crefname{theorem}{Theorem}{Theorems}
\Crefname{theorem}{Theorem}{Theorems}
\crefname{observation}{observation}{observations}
\Crefname{observation}{Observation}{Observations}
\crefname{claim}{claim}{claims}
\Crefname{claim}{Claim}{Claims}
\crefname{condition}{condition}{conditions}
\Crefname{condition}{Condition}{Conditions}
\crefname{example}{example}{examples}
\Crefname{example}{Example}{Examples}
\crefname{fact}{fact}{facts}
\Crefname{fact}{Fact}{Facts}
\crefname{lemma}{lemma}{lemmas}
\Crefname{lemma}{Lemma}{Lemmas}
\crefname{proposition}{proposition}{propositions}
\Crefname{proposition}{Proposition}{Propositions}
\crefname{corollary}{corollary}{corollaries}
\Crefname{corollary}{Corollary}{Corollaries}
\crefname{definition}{definition}{definitions}
\Crefname{definition}{Definition}{Definitions}
\crefname{remark}{remark}{remarks}
\Crefname{remark}{Remark}{Remarks}
\crefname{assumption}{assumption}{assumptions}
\Crefname{assumption}{Assumption}{Assumptions}

\newcommand{\R}{\mathbb{R}}

\DeclareMathOperator*{\oPr}{\mathbb{P}}
\renewcommand{\Pr}[2][]{\ifthenelse{\isempty{#1}}
    {\oPr\left[#2\right]}
    {\oPr_{#1}\left[#2\right]}}

\DeclareMathOperator*{\oE}{\mathbb{E}}
\newcommand{\E}[2][]{\ifthenelse{\isempty{#1}}
    {\oE\left[#2\right]}
    {\oE_{#1}\left[#2\right]}}

\DeclareMathOperator*{\oInf}{\mathrm{Inf}}
\newcommand{\Inf}[2][]{\ifthenelse{\isempty{#1}}
    {\oInf\left[#2\right]}
    {\oInf_{#1}\left[#2\right]}}

\newcommand{\e}{\mathrm{e}}

\newcommand{\norm}[1]{\left\Vert#1\right\Vert}
\newcommand{\abs}[1]{\left\vert#1\right\vert}
\newcommand{\tuple}[1]{\left(#1\right)} 
\newcommand{\inbr}[1]{\left[#1\right]} 
\newcommand{\ceil}[1]{\lceil #1\rceil}

\newcommand{\tp}{\tuple}

\newcommand{\defeq}{\triangleq}

\renewcommand{\d}{\,\-d}

\newcommand{\ones}{{\bm 1}}

\newcommand{\DTV}[2]{d_{\mathrm{TV}}\left({#1},{#2}\right)}

\def\*#1{\bm{#1}} 
\def\+#1{\mathcal{#1}} 
\def\-#1{\mathrm{#1}} 
\def\=#1{\mathbb{#1}} 
\def\`#1{\mathfrak{#1}}

\usepackage[textsize=tiny]{todonotes}

\usepackage{xifthen}

\makeatletter
\def\prob#1#2#3{\goodbreak\begin{list}{}{\labelwidth\z@ \itemindent-\leftmargin
                        \itemsep\z@  \topsep6\p@\@plus6\p@
                        \let\makelabel\descriptionlabel}
                \item[\it Name]#1
               \item[\it Instance]                #2
                \item[\it Output]#3
                \end{list}}
\makeatother

\usepackage{enumitem}

\newcommand{\inner}[2]{\langle #1,#2\rangle}
\newcommand{\Diri}[2]{\ensuremath{\+E_{#1}\left(#2\right)}}

\DeclareMathOperator*{\oVar}{Var}
\newcommand{\Var}[2][]{\ifthenelse{\isempty{#1}}
    {\oVar\left[#2\right]}
    {\oVar_{#1}\left[#2\right]}}

\newcommand{\1}{\mathbf{1}}

\makeatletter
\renewcommand{\author@andify}{\nxandlist{\unskip, }{\unskip, }{\unskip, }}
\makeatother

\title[Simulated annealing and Weak Poincar\'e inequalities for the SK model]{Simulated annealing and Weak Poincar\'e inequalities with inverse-polynomial accuracy for the Sherrington--Kirkpatrick model}
\author[Zhe Hou]{Zhe Hou}
\author[Jingcheng Liu]{Jingcheng Liu}
\author[Yixiao Yu]{Yixiao Yu}
\address{State Key Laboratory for Novel Software Technology, Nanjing University}
\email{houzhe@smail.nju.edu.cn}
\email{liu@nju.edu.cn}
\email{yixiaoyu@smail.nju.edu.cn}

\begin{document}

\begin{abstract}

We study sampling from the Gibbs distribution of the Sherrington--Kirkpatrick (SK) model with Glauber dynamics.
For every fixed inverse temperature $0\leq\beta<1$ and every fixed $M,D>0$, we give a polynomial-time simulated annealing algorithm whose output distribution is within $n^{-M}$ total-variation distance of the Gibbs distribution, with probability at least $1-n^{-D}$ over the interaction matrix.
The algorithm starts from uniform product spins and uses Glauber dynamics along an increasing inverse-temperature schedule.
The same approach also yields partition-function estimates with relative error $n^{-M}$.
The key ingredient is a quantitative weak Poincar\'e inequality.

Building on the stochastic-localization approach to weak Poincar\'e inequalities~\cite{DaviesLeeSandhuShiWPI26}, we use Gaussian isoperimetry to directly compare exact stochastic localization paths.
Our comparison tolerates failures of local Lipschitz continuity of the posterior mean by controlling their accumulated cost.

The local Lipschitzness comes from approximating the posterior means by stationary points of the Thouless--Anderson--Palmer (TAP) free energy in locally strongly convex regions located by approximate message passing, a paradigm introduced by algorithmic stochastic localization~\cite{ElAlaouiMontanariSellke25,Celentano24}.
The approximation errors are roughly characterized by the validity of the TAP gradient equations, but in a different coordinate.
To show that the errors rarely accumulate too much, we need a strong concentration control.
A direct argument would require solving a variant of an open problem by Talagrand~\cite[Research Problem~1.7.9]{Talagrand11}.
We bypass this open problem with an intermediate conditioning step and transfer the control back by establishing the stability of the TAP gradient under deletion of coordinates.

\end{abstract}
\maketitle
\begingroup
\setcounter{tocdepth}{1}
\tableofcontents
\endgroup
\clearpage

\section{Introduction}
\label{sec:intro}

Disordered systems provide a setting for understanding how random interactions produce collective behavior.
A central example is the Sherrington--Kirkpatrick (SK) model, which consists of $n$ binary spins with random interactions between every pair of spins \cite{SherringtonKirkpatrick75}.
The interactions may favor either agreement or disagreement between spins, creating competing preferences across the system.
Despite its simple definition, the model exhibits a rich equilibrium structure and has played a central role in the development of replica symmetry breaking and the cavity method \cite{Talagrand11,Panchenko13}, as well as iterative methods for solving the Thouless--Anderson--Palmer (TAP) equations \cite{ThoulessAndersonPalmer77,Bolthausen14}.

To define the SK model, let $A$ be a random symmetric $n\times n$ matrix with zeros on the diagonal, whose entries above the diagonal are independent $N(0,1/n)$ random variables.
We call $A$ the interaction matrix and write $\mathbb P_{0,n}$ for its law.
At inverse temperature $\beta\geq0$, the model assigns to each spin configuration $x\in\{\pm1\}^n$ the Gibbs probability
\begin{equation*}
 \mu_A(x)=\frac{\exp(\beta\langle x,Ax\rangle/2)}{\sum_{z\in\{\pm1\}^n}\exp(\beta\langle z,Az\rangle/2)}.
\end{equation*}
When the inverse temperature varies, we write $\mu_{\beta,A}$ for this distribution.

The Parisi theory describes the limiting free energy and the organization of the Gibbs measure through the overlaps between spin configurations \cite{Talagrand11,Panchenko13}.
In the absence of an external field, the SK model undergoes a spin-glass transition at inverse temperature $\beta=1$: the high-temperature phase $0\leq\beta<1$ is replica symmetric, while the low-temperature phase exhibits replica symmetry breaking \cite{Talagrand11,Panchenko13}.
This equilibrium transition also suggests a computational question: does the high-temperature phase coincide with a regime in which the Gibbs distribution can be sampled efficiently?

Sampling makes the connection between equilibrium structure and computation particularly concrete.
An optimization algorithm may return a single low-energy configuration, whereas a sampling algorithm must reproduce the probabilities assigned by the Gibbs distribution.
For the SK model, a natural candidate is Glauber dynamics.
One step updates a configuration $x\in\{\pm1\}^n$ as follows:
\begin{enumerate}
 \item Choose a coordinate $i\in[n]$ uniformly at random.
 \item Resample $x_i$ from its conditional distribution under $\mu_A$ given $x_{-i}$, leaving all other coordinates unchanged.
\end{enumerate}
Understanding whether such local updates suffice throughout $\beta<1$ has been a central challenge.

Recent work has made progress on sampling from the SK Gibbs distribution.
El Alaoui, Montanari, and Sellke combined algorithmic stochastic localization with approximate message passing and a correction based on the TAP free energy; together with Celentano's local convexity estimates, this gives polynomial-time sampling throughout $\beta<1$ in normalized $2$-Wasserstein distance \cite{ElAlaouiMontanariSellke25,Celentano24}.
For single-site Glauber dynamics, a long line of work has progressively sharpened functional inequalities and mixing bounds for the SK model and related spin glasses \cite{BauerschmidtBodineau19,EldanKoehlerZeitouni22,AnariJainKoehlerPhamVuong22,AdhikariBrenneckeXuYau24,AnariJainKoehlerPhamVuong24,AnariKoehlerVuong24,ChenEldan25}.
{The known range for polynomial mixing from every initial configuration has advanced}
to $\beta<1/2$ \cite{Wang26} and, most recently, to $\beta<1/2+\varepsilon$ for some absolute constant $\varepsilon>0$ \cite{BobanLiOveisGharan26}.

We show that combining Glauber dynamics with simulated annealing gives polynomial-time sampling throughout the replica-symmetric phase, with inverse-polynomial total-variation error.

\subsection{Our results}
\label{subsec:intro-results}

Our first result establishes a weak Poincar\'e inequality (WPI) for Glauber dynamics.
Its Dirichlet form and the oscillation of a function $f:\{\pm1\}^n\to\R$ are
\begin{equation*}
 \Diri{\mu_A}{f,f}=\frac1n\sum_{i=1}^n\E[X\sim\mu_A]{\Var[X\sim\mu_A]{f(X)\mid X_{-i}}},\qquad \operatorname{osc}(f)=\max f-\min f.
\end{equation*}

\begin{theorem}[Polynomial-scale weak Poincar\'e inequality]
\label{thm:app-wpi-poly}
Fix $0\leq\beta<1$ and $D,M_0>0$.
There are constants $\Gamma>0$, depending only on $\beta$, and $C,B,N>0$, depending only on $(\beta,D,M_0)$, such that for $n\geq N$, with $\mathbb P_{0,n}$-probability at least $1-n^{-D}$,
\begin{equation*}
 \Var[\mu_A]{f}\leq Cn^{7+B\Gamma}\epsilon^{-\Gamma}\Diri{\mu_A}{f,f}+\epsilon\operatorname{osc}(f)^2
\end{equation*}
simultaneously for every $\epsilon\in[n^{-M_0},1/2]$ and every $f:\{\pm1\}^n\to\R$.
\end{theorem}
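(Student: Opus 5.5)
We will follow the stochastic-localization route to weak Poincar\'e inequalities of \cite{DaviesLeeSandhuShiWPI26}. Writing $\beta A=\beta(A+\lambda I)-\beta\lambda I$ with $\lambda$ chosen so that $A+\lambda I\succeq0$ (on the high-probability event $\norm{A}\leq 2+o(1)$), we decompose $\mu_A$ as a Gaussian mixture of product measures. Run stochastic localization $\mu_t$ with the quadratic driving matrix $Q=\beta(A+\lambda I)$, up to a time $T$ where $\mu_T$ is a product-type measure; product measures satisfy a Poincar\'e inequality with constant $O(1)$ for Glauber dynamics. By the law of total variance, $\Var[\mu_A]{f}=\E{\Var[\mu_T]{f}}+\Var{\E[\mu_T]{f}}$. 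The first term is bounded by $O(1)\,\E{\Diri{\mu_T}{f,f}}=O(1)\,\Diri{\mu_A}{f,f}$ because the Dirichlet form is a martingale (conditional variances averaged over $\mu_T$ recover those of $\mu_A$). Hence the main task is to control the variance of the \emph{mean path} $\E[\mu_T]{f}$, that is, how far apart $\E[\mu_T]{f}$ can be for two realizations of the localization path.

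To bound $\Var{\E[\mu_T]{f}}$, we couple two exact localization paths driven by Gaussian noise, and use Gaussian isoperimetry on the driving Brownian motion. The observation field $y_T$ is a Gaussian-perturbed function of the posterior means $m_t$. If the map from noise to $\mu_T$ were $L$-Lipschitz in a suitable transport sense, then $\E[\mu_T]{f}$ would be an $O(L)$-Lipschitz function of the noise, up to Dirichlet-form costs: moving $\mu_T$ by a tilt of size $\delta$ changes $\E{f}$ by at most $\delta\sqrt{\Diri{\mu}{f,f}}\cdot\mathrm{poly}(n)$. Gaussian Poincar\'e would then give the bound. Since Lipschitzness holds only on good sets, we instead use isoperimetry. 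Let $G$ be the event that along the path the posterior mean $m_t$ stays within $n^{-c}$ of a TAP stationary point in a locally strongly convex region found by AMP, as in \cite{ElAlaouiMontanariSellke25,Celentano24}. On $G$ the drift $t\mapsto m_t$ is locally Lipschitz with constant $O(1)$. Off $G$ we pay $\operatorname{osc}(f)^2$ times the probability of failure. We then show that the accumulated failure cost (the time integral of the non-Lipschitz excess) exceeds $K$ with probability at most $\exp(-cK)$, or polynomially small in the appropriate scale. Optimizing the threshold against $\epsilon$ yields the $\epsilon^{-\Gamma}$ dependence, with $\Gamma$ depending only on the strong-convexity/contraction rate at $\beta$. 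The $n^{7+B\Gamma}$ factor collects the dimension costs of comparing Dirichlet forms of tilted measures and of the discretization of time.

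The main obstacle is the concentration step: showing that the TAP gradient residual $\nabla\mathcal F_{\mathrm{TAP}}(m_t)$, which controls the approximation error and which is naturally measured in the field coordinate rather than in $m$, is small with probability $1-n^{-D}$ simultaneously along the path, for a fixed disorder $A$. A direct argument needs quenched concentration of TAP equations, which is Talagrand's open problem. Our plan is to condition on an intermediate set of coordinates: delete a random small block $S$, establish concentration for the conditioned system by cavity/Gaussian-interpolation estimates, and then transfer the control back. For the transfer, we prove that the TAP gradient is stable under deleting the coordinates in $S$; its change is bounded by $O(|S|/n)$ in the relevant norm using the local strong convexity. Finally, a union bound over a polynomial grid of $\epsilon\in[n^{-M_0},1/2]$ and of times gives the simultaneous statement, with $C,B,N$ depending on $(\beta,D,M_0)$.
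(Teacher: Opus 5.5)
Your outline (total-variance split, Gaussian isoperimetry on the driving noise, a locally Lipschitz TAP surrogate with an accumulated cost, deletion stability for the TAP residual) resembles the paper's at a high level, but it has a gap at its foundation.

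You localize quadratically with $Q=\beta(A+\lambda I)$ so that $\mu_T$ is a product measure. You then import the AMP/TAP local-convexity results of \cite{ElAlaouiMontanariSellke25,Celentano24} to make the drift locally Lipschitz. Along your path, however, $\mu_t$ is SK at inverse temperature $\beta(1-t)$, with a field distributed as $tQX+Q^{1/2}B_t$ whose noise depends on every row of $A$. The cited results, and any cavity analysis of the residual, concern the isotropic channel $Y_t=tX+B_t$ at fixed $A$. There, under the planted law, the removed row $a_i=\ell X_iX_{-i}+\sqrt\ell\,G_i$ is independent of the cavity data $(J^{\setminus i},y_{-i})$. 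In your channel $y_{-i}$ depends on row $i$, so this structure is lost, and nothing in the proposal replaces it. If such estimates were available, the product endpoint would spare you two costs the paper pays for keeping the linear channel. One is the strong-field Poincar\'e inequality of \cite{BandeiraElAlaouiRodder26}, which is where $n^7$ comes from. The other is a terminal cost controlling $\langle y-z,m(y)-m(z)\rangle$ between two non-product terminal posteriors.

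Two steps of your path comparison also need repair. First, the isoperimetric shift has Cameron--Martin norm of order $\sqrt{\log(1/v)}$, so posterior means cannot be compared linearly in the tilt. The paper uses $|\E[\nu]{f}-\E[\nu']{f}|^2\le2\bigl(\exp([D(\nu\Vert\nu')+D(\nu'\Vert\nu)]/2)-1\bigr)(\Var[\nu]{f}+\Var[\nu']{f})$, and this exponential is the source of $v^{-\Gamma}$. Second, isoperimetry selects two specific paths, so you need a Markov good set on which $\Var[\mu_{Y_T^\omega}]{f}\le\frac4v\,\E[\omega'\sim\gamma]{\Var[\mu_{Y_T^{\omega'}}]{f}}$. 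A bound by an averaged Dirichlet form does not suffice.

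The quantitative core is also off. The good event ``$m_t$ within $n^{-c}$ of a TAP stationary point'' is not available. Coordinates of the TAP residual at the true posterior mean are of size $n^{-1/2}$ (cf.\ \Cref{lem:tap-complete-row}), so $\|\nabla F_{\mathrm{TAP}}(m)\|_2$ is of order one. There is then no reason for $\|m-m^\star\|_2$ to be polynomially small. A normalized version $\|m-m^\star\|_2\le n^{1/2-c}$ would be useless, since the cost is exponentiated.

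What holds and suffices is $\|m-m^\star\|_2^2\le C\log n$ with failure probability $n^{-R}$ at each fixed $(\beta,t)$. This enters the exponentiated KL only as a factor $n^B$. No simultaneity in time is needed, because fixed-time bounds are integrated by Fubini and Markov (\Cref{lem:scalar-reference-path-error}).

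Your concentration step names the strategy but not the mechanism. The proposed transfer (stability of order $|S|/n$ from local strong convexity) cannot work: $m$ is not a stationary point of $F_{\mathrm{TAP}}$, so convexity says nothing about how its residual moves under deletion. The missing ideas are:
\begin{itemize}
\item The conditional second moments $e_i=\E{r_i^2\mid\mathcal F^{\setminus i}}$ satisfy $\|e_i\|_{L^p}\le Cp/n$. This follows from $\|\operatorname{Tr}[(\Sigma^{\setminus i})^2]\|_{L^p}\le Knp$, which comes from an overlap MGF proved by a square-root interpolation. Hence $\E{\exp(\kappa\sum_ie_i)}\le2$.
\item Single-coordinate stability $\bigl\|\,\|\mathbf r\|_2-\|\mathbf r^{\setminus i}\|_2\bigr\|_{L^{4p}}\le C_pn^{-1/4}$. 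This is proved by an SDE from the cavity law to the pinned laws, BDG, and a Gaussian-smoothing bound $\|\partial\mathbf r/\partial y\|_{L^p}\le C_pn^{1/4}$.
\item The recursion $s^2\,\mathbb P(\|\mathbf r\|_2\ge s)\le a\,\mathbb P(\|\mathbf r\|_2\ge s-2\eta_n)+O(n^{-H-4})$, with $a=O_H(\log n)$ and $\eta_n=n^{-1/8}$. It yields $\mathbb P(\|\mathbf r\|_2^2>C_H\log n)\le n^{-H}$.
\end{itemize}

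Finally, three further ingredients are absent from your plan. These bounds are proved under the planted law. The AMP good events of \cite{ElAlaouiMontanariSellke25,Celentano24} hold only with probability $1-o(1)$ and must be boosted to $1-n^{-H}$, using Gaussian concentration of the distance to the good set and overlap concentration for posterior-mean proximity. Everything is then transferred to $\mathbb P_{0,n}$ via the likelihood-ratio moments of \Cref{lem:planting-likelihood}.
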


Following the annealing framework of Huang, Mohanty, Rajaraman, and Wu \cite{HuangMohantyRajaramanWu25}, we obtain the following sampling guarantee.

\begin{theorem}[Sampling by simulated annealing]
\label{thm:app-wpi-annealing}
Fix $0\leq\beta<1$ and $M,D>0$.
There is a randomized algorithm that starts from uniform product spins and uses Glauber dynamics along a deterministic increasing inverse-temperature schedule.
For all sufficiently large $n$, it runs in $n^{O_{\beta,M,D}(1)}$ time and, with $\mathbb P_{0,n}$-probability at least $1-n^{-D}$, produces a sample whose law is within $n^{-M}$ total variation of $\mu_A$.
\end{theorem}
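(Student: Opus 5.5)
We derive Theorem~\ref{thm:app-wpi-annealing} from Theorem~\ref{thm:app-wpi-poly} through the annealing framework of \cite{HuangMohantyRajaramanWu25}. The algorithm fixes a schedule $0=\beta_0<\beta_1<\dots<\beta_K=\beta$ with $\beta_{k+1}-\beta_k=\delta$ and $K=\lceil\beta/\delta\rceil$. Here $\delta=n^{-c}$ is polynomially small, and we will need $\delta\lesssim n^{-1/2}$ up to logarithms. We sample $X_0$ uniformly from $\{\pm1\}^n=\mu_{0,A}$. For each $k$ we run $T$ steps of Glauber dynamics for $\mu_{\beta_{k+1},A}$ started from the current state, and output the final state.

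The first step is to apply Theorem~\ref{thm:app-wpi-poly} at every $\beta_k$ on a single good event. Each $\beta_k\le\beta<1$, and the constants $\Gamma,C,B,N$ can be taken uniform over $[0,\beta]$; either the proof gives this, or we rerun it at the fixed temperature $\beta$ and use monotonicity of the constants in $\beta$. A union bound over $K=n^{O(1)}$ temperatures, using $D'=D+c+1$, gives probability at least $1-n^{-D}$. On the same event we require $\|A\|_{\mathrm{op}}\le 3$, which holds with exponentially high probability.

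The second step turns the weak Poincar\'e inequality into convergence from a warm start. Write $h_t=\frac{d(\nu P_t)}{d\mu}$ for the density of the chain at time $t$ relative to the target. We use the standard consequence of a WPI for the continuous-time (or lazy) chain, obtained by applying it to $f=h_t$ (\cite{HuangMohantyRajaramanWu25}, R\"ockner--Wang). It says that if $\|h_0\|_\infty\le R$, then
\[\operatorname{Var}_\mu(h_t)\le e^{-t/\alpha(\epsilon)}\operatorname{Var}_\mu(h_0)+\epsilon\operatorname{osc}(h_0)^2,\qquad \alpha(\epsilon)=Cn^{7+B\Gamma}\epsilon^{-\Gamma}.\]
Here $\operatorname{osc}(h_t)\le\operatorname{osc}(h_0)\le R$ because $P_t$ is a contraction in $L^\infty$. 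We choose $\epsilon=n^{-M_0}$ with $M_0$ large, and $t=\alpha(\epsilon)\cdot O(\log n)$, which is polynomial. This gives $\DTV{\nu P_t}{\mu}\le\sqrt{\operatorname{Var}(h_t)}\le R\,n^{-M_0/2}+n^{-M_0}$. To get Glauber steps we multiply $t$ by $n$ after the $1/n$ normalization of the Dirichlet form, and take a Poissonized number of steps to relate discrete and continuous time.

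The third step, which we expect to be the main obstacle, is the warmness bound between consecutive temperatures. We have
\[\frac{\mu_{\beta_k}(x)}{\mu_{\beta_{k+1}}(x)}=e^{-\delta\langle x,Ax\rangle/2}\frac{Z_{\beta_{k+1}}}{Z_{\beta_k}}.\]
Since $|\langle x,Ax\rangle|\le 3n$, this ratio is at most $e^{3\delta n}$. With $\delta=1/n$ it is $O(1)$, so $R=O(1)$ and $K=O(n)$. This avoids any delicate concentration, at the cost of a polynomial number of stages.

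The final step is the error propagation. Let $\nu_k$ be the law of the state after stage $k$. By the triangle inequality and the contraction of $P_t$ in total variation,
\[\DTV{\nu_{k+1}}{\mu_{\beta_{k+1}}}\le\DTV{\nu_k}{\mu_{\beta_k}}+\DTV{\mu_{\beta_k}P_t}{\mu_{\beta_{k+1}}}.\]
The last term is at most $R\,n^{-M_0/2}+n^{-M_0}$, because the exact warm start $\mu_{\beta_k}$ has density bounded by $R$. Summing over stages gives total error $O(n\cdot n^{-M_0/2})\le n^{-M}$ once $M_0=2M+4$. The total running time is $K\cdot n\cdot t=n^{O_{\beta,M,D}(1)}$. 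Partition-function estimates would follow similarly by telescoping $Z_{\beta_{k+1}}/Z_{\beta_k}=\E[\mu_{\beta_k}]{e^{\delta\langle X,AX\rangle/2}}$. Each such ratio is a bounded average, estimated from polynomially many samples.
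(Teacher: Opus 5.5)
Your argument is correct and is essentially the paper's proof: step size at most $1/n$ so that $\|\mathrm d\mu_{\beta_{j-1},A}/\mathrm d\mu_{\beta_j,A}\|_\infty\le\e^3$ on $\|A\|_{\mathrm{op}}\le3$, a WPI at every schedule temperature on one common event with $\epsilon\approx n^{-(2M+4)}$, the warm-start variance decay, and the telescoping triangle inequality over $O(n)$ stages. The paper stays in discrete time, since the random-scan heat-bath kernel already has spectrum in $[0,1]$, so it needs neither laziness nor Poissonization.

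The one step you leave soft is uniformity of the WPI constants over the $n$-dependent temperatures $\beta_k$. Theorem~\ref{thm:app-wpi-poly} is stated at a single fixed $\beta$, and monotonicity of its constants in $\beta$ is not available, so your second route does not work. Your first route is what the paper does, in Corollary~\ref{cor:app-wpi-schedule}. It uses the interval-uniform Lemmas~\ref{lem:app-wpi-branches} and~\ref{lem:app-wpi-two-path-terminal} on $[1/15,\beta]$. For $\beta_k\le1/15$ those lemmas do not apply, because they require $\beta_0>0$, and the paper uses the Eldan--Koehler--Zeitouni spectral gap there instead.
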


By a standard connection between counting and sampling, we also give an algorithm to approximate the partition function $Z_A(0)=\sum_{x\in\{\pm1\}^n}\exp(\beta\langle x,Ax\rangle/2)$~\cite{JerrumValiantVazirani86,StefankovicVempalaVigoda09}.

\begin{corollary}[Approximate counting]
\label{cor:intro-counting}
Fix $0\leq\beta<1$ and $M,D>0$.
There is a randomized algorithm that, given $A$ and $\eta\in(0,1/2]$, outputs an estimate $\widehat Z$ of $Z_A(0)$.
For all sufficiently large $n$, it runs in $n^{O_{\beta,M,D}(1)}\log(1/\eta)$ time and, with $\mathbb P_{0,n}$-probability at least $1-n^{-D}$, satisfies $\Pr{\abs{\widehat Z/Z_A(0)-1}>n^{-M}\mid A}\leq\eta$ simultaneously for every $\eta\in(0,1/2]$.
\end{corollary}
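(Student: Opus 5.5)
The plan is the standard telescoping-product reduction from counting to sampling, run along the same deterministic schedule as in \Cref{thm:app-wpi-annealing}. Fix a schedule $0=\beta_0<\beta_1<\dots<\beta_K=\beta$ with $K=n^{O(1)}$ steps. Write $Z_k=\sum_x\exp(\beta_k\langle x,Ax\rangle/2)$, so that $Z_0=2^n$ and
\[
 Z_A(0)=2^n\prod_{k=0}^{K-1}\frac{Z_{k+1}}{Z_k},\qquad \frac{Z_{k+1}}{Z_k}=\E[X\sim\mu_{\beta_k,A}]{W_k(X)},\qquad W_k(x)=\exp\tuple{(\beta_{k+1}-\beta_k)\langle x,Ax\rangle/2}.
\]
On the event of $\mathbb P_{0,n}$-probability $1-n^{-D}$ where $\norm{A}_{\mathrm{op}}\leq 3$ and the conclusion of \Cref{thm:app-wpi-annealing} holds at every intermediate temperature, we get $\abs{\langle x,Ax\rangle}\leq 3n$. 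Taking the steps $\beta_{k+1}-\beta_k\leq 1/n$ then gives $W_k\in[e^{-2},e^{2}]$. Hence each ratio is the mean of a bounded random variable whose relative variance is $O(1)$.

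The sampler of \Cref{thm:app-wpi-annealing} is applied with the target $\beta_k$ in place of $\beta$; its guarantees hold uniformly along the schedule, since the proof covers all temperatures $\leq\beta$ and a union bound over $K$ temperatures costs only a polynomial factor. With accuracy $n^{-M'}$, it produces independent samples whose law is within $n^{-M'}$ in total variation of $\mu_{\beta_k,A}$. Estimate each ratio by the empirical mean of $W_k$ over $s$ such samples, and let $\widehat Z$ be $2^n$ times the product of these estimates. There are two sources of error.
\begin{itemize}
 \item \emph{Bias.} Each estimate has mean within $e^2n^{-M'}$ of the true ratio, which is at least $e^{-2}$. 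The product of the means therefore has relative bias at most $(1+e^4n^{-M'})^K-1\leq n^{-M}/2$ once $M'$ is chosen large.
 \item \emph{Fluctuation.} The estimates are independent with relative variance $O(1/s)$. The relative variance of the product is $\prod_k(1+O(1/s))-1=O(K/s)$. Choosing $s=\Theta(Kn^{2M})$ and applying Chebyshev gives failure probability at most $1/4$ for relative deviation $n^{-M}/2$.
\end{itemize}

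To reach confidence $\eta$, take the median of $O(\log(1/\eta))$ independent repetitions; the standard Chernoff argument for medians then gives failure probability at most $\eta$. The running time is $O(\log(1/\eta))\cdot K\cdot s\cdot n^{O(1)}=n^{O_{\beta,M,D}(1)}\log(1/\eta)$. The good event depends only on $A$, not on $\eta$, so the guarantee holds simultaneously for all $\eta\in(0,1/2]$.

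The main point needing care is uniformity: the sampler's guarantee must hold with a single high-probability event in $A$ for all temperatures $\beta_k\leq\beta$, with the inverse-temperature steps fine enough that $W_k$ stays bounded. I expect this to follow directly from the annealing construction, since that construction itself passes through exactly these temperatures. Failing that, I would apply \Cref{thm:app-wpi-poly} at each $\beta_k$ with $D$ increased by $O(1)$ and take a union bound.
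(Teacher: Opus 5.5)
Your proposal is correct and has the same skeleton as the paper's proof: the annealing schedule $\beta_j=j\beta/\lceil\beta n\rceil$, a telescoping product of ratios $Z_j/Z_{j-1}$ whose integrands lie in $[\mathrm{e}^{-3/2},\mathrm{e}^{3/2}]$, and samples drawn from the annealing chain stopped after stage $j-1$.

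The uniformity issue you flag is settled in the paper exactly as you expect. On the single matrix event of \Cref{cor:app-wpi-schedule}, whose constants are uniform over $\beta\in[0,\beta_*]$, \Cref{lem:app-wpi-annealing-composition} bounds the total-variation error of the run stopped after stage $j$ by the sum of the first $j$ stage errors. Hence one event in $A$ covers every intermediate temperature. Your fallback of invoking \Cref{thm:app-wpi-poly} separately at each $\beta_k$ would not do this by itself, because its constants and threshold $N$ are stated for a fixed $\beta$.

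The genuine difference is the concentration step. The paper makes each ratio individually accurate to relative error $\varepsilon=n^{-M}/(4L)$. It uses Hoeffding with $s=\lceil 2\mathrm{e}^6\varepsilon^{-2}\log(2L/\eta)\rceil$ samples, union-bounds over the $L$ ratios, and sums the logarithmic errors. This gives the $\log(1/\eta)$ dependence in one step, with no amplification layer.

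You instead bound the relative variance of the whole product, which grows only additively as $O(K/s)$. So $s=\Theta(Kn^{2M})$ samples per ratio suffice for constant success probability, and a median over $O(\log(1/\eta))$ repetitions boosts this to $1-\eta$. This saves roughly a factor $K$ in the total sample count, at the cost of the median trick. Both counts are polynomial, so the difference does not matter for the corollary.

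One small slip: allotting $n^{-M}/2$ to the bias and $n^{-M}/2$ to the fluctuation gives a combined relative error up to $(1+n^{-M}/2)^2-1>n^{-M}$. Use, e.g., $n^{-M}/3$ for each.
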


\subsection{Related work}
\label{subsec:related-work}

\subsubsection{Algorithmic stochastic localization}

El Alaoui, Montanari, and Sellke developed a polynomial-time sampler based on algorithmic stochastic localization (ASL), approximating posterior means with approximate message passing (AMP) and a correction based on the TAP free energy \cite{ElAlaouiMontanariSellke25}.
Celentano subsequently strengthened the analysis of the TAP correction by proving local strong convexity around AMP iterates using a Sudakov--Fernique comparison adapted to AMP \cite{Celentano24}.
Combined with the ASL sampler, these estimates yield vanishing normalized $2$-Wasserstein error throughout $\beta<1$, which roughly measures the minimum expected fraction of disagreeing coordinates under a coupling. In comparison, our sampling result concerns total variation distance, which measures the minimum probability of any disagreement under a coupling.

The TAP free energy also underlies potential Hessian ascent (PHA), which Jekel, Sandhu, and Shi introduced to find near-optimal SK configurations through randomized updates guided by a TAP-corrected Hessian, under a full replica-symmetry-breaking assumption \cite{JekelSandhuShi25}.
Davies, Lee, Sandhu, and Shi later adapted this optimization method to sampling, approximating posterior covariances with inverse TAP Hessians to implement ASL \cite{DaviesLeeSandhuShi26}.
They combined this construction with Jarzynski reweighting and rejection sampling to obtain polynomial-time sampling with vanishing total-variation (TV) error for $\beta<1/2$.

\subsubsection{Rapid mixing of Glauber dynamics}

For Glauber dynamics, early results for the SK model covered $\beta<1/4$: Bauerschmidt and Bodineau established a log-Sobolev inequality \cite{BauerschmidtBodineau19}, and Eldan, Koehler, and Zeitouni proved a spectral-gap bound yielding polynomial mixing from every initial configuration \cite{EldanKoehlerZeitouni22}.
Within this range, entropic independence and localization subsequently improved the bounds to optimal-order mixing \cite{AnariJainKoehlerPhamVuong22,ChenEldan25}.
Anari, Koehler, and Vuong then used trickledown to extend the temperature range to a threshold of approximately $0.295$ \cite{AnariKoehlerVuong24}.
More recently, Wang extended optimal-order mixing to every fixed $\beta<1/2$ \cite{Wang26}, and Boban, Li, and Oveis Gharan pushed polynomial mixing beyond this threshold to $\beta\leq1/2+c$ for an absolute constant $c>0$ \cite{BobanLiOveisGharan26}.

Beyond this high-temperature regime, Bandeira, El Alaoui, and R\"odder proved polynomial mixing at any fixed inverse temperature provided the SK model has a sufficiently strong constant external field, whose strength is independent of $n$ \cite{BandeiraElAlaouiRodder26}.

\subsubsection{Glauber dynamics from a warm start}

Huang, Mohanty, Rajaraman, and Wu developed a framework for analyzing simulated annealing through weak Poincar\'e inequalities (WPI) and warm starts, and applied it to spherical spin glasses \cite{HuangMohantyRajaramanWu25}.
For the SK model, Davies, Lee, Sandhu, and Shi subsequently derived a WPI for $\beta<1/2$ from approximate stochastic localization and conductance, and used it to analyze Glauber dynamics from a warm start \cite{DaviesLeeSandhuShiWPI26}. Notably, their WPI is not strong enough to support a simple simulated annealing.

\subsubsection{Zero-free methods}

For approximate counting, Bencs, Huang, Lee, Liu, and Regts combined zero-freeness with a truncated Taylor expansion of the log partition function \cite{BencsHuangLeeLiuRegts26}.
Their method yields deterministic quasipolynomial-time approximation for the SK model throughout $\beta<1$, with high probability over the interaction matrix.

Together, these works provide several routes to sampling and counting in the SK model, with different temperature ranges and approximation guarantees.
We compare these guarantees with those of our annealing algorithm in \Cref{tab:related-work}.

\begin{table}[H]
\centering
\footnotesize
\setlength{\tabcolsep}{4pt}
\renewcommand\arraystretch{1.05}
\captionsetup{font=footnotesize,skip=5pt}
\begin{tabular}{@{}L{0.18\textwidth}L{0.13\textwidth}L{0.30\textwidth}L{0.30\textwidth}@{}}
\toprule
Work & Range & Method & Guarantee\\
\midrule
\cite{ElAlaouiMontanariSellke25,Celentano24} & $\beta<1$ & ASL with AMP and TAP free energy & vanishing normalized $2$-Wasserstein error\\
\cite{DaviesLeeSandhuShi26} & $\beta<1/2$ & potential Hessian ascent and ASL & vanishing TV error\\
\midrule
\cite{EldanKoehlerZeitouni22} & $\beta<1/4$ & Glauber dynamics & polynomial mixing from every start\\
\cite{Wang26} & $\beta<1/2$ & Glauber dynamics & optimal-order mixing from every start\\
\cite{BobanLiOveisGharan26} & $\beta\leq1/2+c$ & Glauber dynamics & polynomial mixing from every start\\
\cite{BandeiraElAlaouiRodder26} & strong field & Glauber dynamics & polynomial mixing\\
\midrule
\cite{DaviesLeeSandhuShiWPI26} & $\beta<1/2$ & WPI and warm-start Glauber dynamics & WPI coefficient $n\e^{C/\epsilon}$\\
This paper & $\beta<1$ & WPI and heat-bath simulated annealing & TV and relative error $n^{-M}$ in polynomial time\\
\midrule
\cite{BencsHuangLeeLiuRegts26} & $\beta<1$ & zero-free methods & deterministic quasipolynomial counting\\
\bottomrule
\end{tabular}
\caption{Selected guarantees for a typical SK interaction matrix, grouped by the four approaches above; the strong-field result includes an external field.
Here $c>0$ is an absolute constant, $\epsilon$ is the WPI defect, and ``every start'' means every initial configuration.}
\label{tab:related-work}
\end{table}

\subsection{Concurrent work}
\label{subsec:concurrent-work}

Very recently, Lee, Sandhu, and Shi obtained polynomial-time sampling throughout $\beta<1$ with $o_n(1)$ TV error \cite{LeeSandhuShiPHAIV26} by potential Hessian ascent (PHA).
Their algorithm combines algorithmic stochastic localization with rejection sampling on path space via Jarzynski's equality, using local convexity of the TAP free energy to control the approximations.
Lee, Sandhu, and Shi also announce a forthcoming WPI result for the SK model throughout $\beta<1$ together with an algorithmic construction of a warm start \cite{LeeSandhuShiPHAIV26}.

In comparison, we establish a WPI by controlling the gradient residual of the TAP free energy at the posterior mean.
This inequality yields a sampling algorithm based on simulated annealing with single-site Glauber dynamics.
For every fixed $0\leq\beta<1$ and $M>0$, it achieves TV error $n^{-M}$ in polynomial time with high probability over the interaction matrix.

{
In two concurrent preprints, OpenAI reports a Poincar\'e inequality with spectral gap $\Omega_\beta(n^{-1})$ for Glauber dynamics \cite{OpenAISKSpectralGap26}, and worst-case total-variation cutoff on the $n\log n$ scale for the same dynamics \cite{OpenAISKCutoff26}  throughout $0\le \beta < 1$.
At criticality $\beta=1$, two further preprints report a lower bound of $n^{5/3-o(1)}$ on the worst-case mixing time \cite{OpenAISKCriticalSlowing26} and a matching mixing time of $n^{5/3+o(1)}$ \cite{OpenAISKCriticalMixing26}.
\par}

\subsection{AI Disclosure}
We used AI throughout various stages of the project, both in explorations of different proof strategies and in carrying out a few heavy calculations. We started with GPT 5.5 Pro in trying to explore and understand what it takes to combine~\cite{ElAlaouiMontanariSellke25,Celentano24} with~\cite{DaviesLeeSandhuShi26} to get sampling algorithms in TV distance. GPT assisted with
strengthening the bounds in \cite{ElAlaouiMontanariSellke25,Celentano24} in a number of ways (such as deriving $\E[]{\|m-m^*\|_2^2} = O(1)$), and it is the starting point of the current framework of analyzing {the} TAP residual. These bounds were enough to support the application of Girsanov comparison and {the path-space correction post processing} in~\cite{DaviesLeeSandhuShi26} up to $\beta<1$.  Then we found out the two new advances in the SK model in~\cite{BandeiraElAlaouiRodder26,DaviesLeeSandhuShiWPI26}, and {when} GPT 5.6 Pro was released, we used it to extend our analysis to a weak Poincar\'e inequality throughout $\beta<1$, as well as to the current inverse polynomial scale to support simulated annealing. We supplied high-level guidance throughout the proof developments. GPT 5.6 Pro and Ultra {were} also used extensively in exploring alternative proof strategies. In particular, the square-root interpolation idea in Lemma 4.2, as well as the strategy that sidesteps sub-Gaussian concentration of the TAP residual, which underlies the better concentration control in Section 4.2 and Section 4.3, were selected among a few alternatives found by GPT 5.6 Pro. GPT 5.6 Pro also claimed a false proof of {an} FPRAS extension based on our simulated annealing approach, which GPT 6 Astra Pro managed to {fix}. GPT 6 Astra Pro also found a separate deterministic FPTAS for the SK model, which we had to defer to a separate manuscript in view of {OpenAI's recent announcements}~\cite{OpenAISKSpectralGap26,OpenAISKCutoff26,OpenAISKCriticalSlowing26,OpenAISKCriticalMixing26}. %
We also used Codex and GPT 5.5 xHigh, GPT 5.6 Sol Ultra, and GPT 6 Astra Ultra through various stages of preparing our manuscript, especially in checking for typos and writing. 
We verified all the proofs produced by AI, and we take full responsibility in their correctness.

\subsection{Acknowledgements}
We thank Zongchen Chen and Heng Guo for their helpful discussions during early stages of the work.
We acknowledge the support of the National Natural Science Foundation of China under Grant No.~62472212.

\section{Technical overview}
Our simulated annealing algorithm follows directly from~\cite{HuangMohantyRajaramanWu25} and our weak Poincar\'e inequality (WPI),
thanks to its better dependency on the \emph{defect} $\epsilon$. We give a high level overview around WPI next.

\subsection{Variance decomposition}
The starting point of our weak Poincar\'e inequality is to use the algorithmic stochastic localization~\cite{ElAlaouiMontanariSellke25} as an analytical proxy, similar to~\cite{DaviesLeeSandhuShiWPI26}.
{We fix the interaction matrix $A$.}
{Following the Gaussian-observation representation of stochastic localization~\cite{Eldan13,ChenEldan25,ElAlaouiMontanari22}, we draw $X\sim\mu_A$ and an independent standard Brownian motion $(B_t)_{t\geq0}$ in $\R^n$ with $B_0=0$, and set $Y_t=tX+B_t$.}
{For an external field $y\in\R^n$, we write $\mu_{A,y}(x)\propto\mu_A(x)\exp(\inner{y}{x})$ for the Gibbs distribution tilted with this field.}
By {Bayes' rule}, the posterior distribution of $X$ given $Y_t$ is the tilted distribution $\mu_{A,Y_t}$. As such, observing $Y_t$ at a later time $t$ induces a large external field for $X$.
Davies {et al.}~\cite{DaviesLeeSandhuShiWPI26} then {decompose} the variance by the law of total variance at a sufficiently large fixed time $T$:
\begin{equation}
\label{eq:law-of-total-var}
 \Var[\mu_A]{f}
 =\E[Y_T]{\Var[\mu_{A,Y_T}]{f}}+\Var[Y_T]{\E[\mu_{A,Y_T}]{f}}.
\end{equation}
To establish the weak Poincar\'e inequality, one needs to upper bound the RHS by comparing to the Dirichlet form $\+E_{\mu_A}(f, f)$.
{The two terms  are controlled separately.}

To control the first term, note that $\mu_{A, Y_T}$ is the SK model with the external field $Y_T$.
Recent work~\cite{BandeiraElAlaouiRodder26} analyzed the mixing time of the Glauber dynamics for the SK model under a strong enough external field, which provides a Poincar\'e inequality and can be used to upper bound the first term.

To control the second term of \cref{eq:law-of-total-var}, Davies et al.~\cite{DaviesLeeSandhuShiWPI26} introduced an approximate stochastic localization $\hat{Y_t}$ which is close to $Y_t$ in KL divergence, such that $\hat{Y_t}$ has {a} ``Lipschitz'' property and is easier to control. However, transferring probability masses between $\hat{Y_t}$ and $Y_t$ assuming only control in KL divergence inevitably introduces an exponential loss.
{In comparison, we analyze $Y_t$ through its \emph{innovation-process representation}, whose probability law is governed by Brownian motion. This allows us to control this term directly through Gaussian isoperimetry. }

\subsection{The idealized Lipschitz case}
\label{subsection:idealized}
For any $y\in \mathbb{R}^n$, let $m(A, y) = \E[x\sim \mu_{A, y}]{x}$, {which we call the \emph{posterior mean}}.
We first demonstrate our method by considering a simple case: $m(A, \cdot)$ is globally $L$-Lipschitz, where {$L$ is independent of $n$}.
{{We remark that} \cite{DaviesLeeSandhuShiWPI26} has no loss in this idealized case, but their argument through {a} log-Sobolev inequality~\cite[Lemma~2.12]{DaviesLeeSandhuShiWPI26} does not in itself handle rare failures in Lipschitzness, and their separate argument for transferring small-set probability estimates between probability laws through KL introduces exponential loss~\cite[Lemma~4.3 and Theorem~4.4]{DaviesLeeSandhuShiWPI26} even in the case of $\beta<\frac12$.
Our method closely follows the architecture of~\cite{DaviesLeeSandhuShiWPI26} in this case, except that we sidestep their LSI argument through quantile comparisons.}

Our idea is to upper bound the variance $\Var[Y_T]{\E[\mu_{A,Y_T}]{f}}$ {using quantiles}.
Let $Q_f$ be the quantile function (inverse CDF) of $\E[\mu_{A,Y_T}]{f}$, where the randomness is over $Y_T$.
\begingroup
For any square-integrable real random variable $Z$, we take $q=Q_Z(1/2)$ to be the median and write
\begin{align*}
 \Var{Z}\leq\E{(Z-q)^2}
 =&~\int_0^{1/2}\bigl((Q_Z(v)-q)^2+(Q_Z(1-v)-q)^2\bigr)\,\mathrm d v\\
 \leq&~\int_0^{1/2}[Q_Z(1-v)-Q_Z(v)]^2\,\mathrm d v.
\end{align*}
The last inequality uses $Q_Z(v)\leq q\leq Q_Z(1-v)$ for $0<v<1/2$.
\endgroup
Hence, we have that
\[
\Var[Y_T]{\E[\mu_{A,Y_T}]{f}} \le \int_{0}^{1/2} [Q_f(1-v)-Q_f(v)]^2 \mathrm d v.
\]
Thus, it suffices to upper bound $[Q_f(1-v)-Q_f(v)]^2$.

{We write $C_0([0,T];\R^n)$ for the space of continuous paths $\omega$ with $\omega_0=0$.
The \emph{Wiener measure} on this space is the law of a standard Brownian motion in $\R^n$ starting at zero.}
{Next, observe that $Y_t$ can be realized by Brownian motions (see~\cite[Section~III]{ElAlaouiMontanari22} and \Cref{lem:app-wpi-observation-law}):}
for any $0\le t\le T$, the solution of $Y_t^{\omega} = \omega_t + \int_{0}^t m(A, Y_s^{\omega}) \mathrm d s$, when {$\omega$ is sampled from Wiener measure}, has the same distribution as $Y_t$.
Hence, given $A$, the only randomness needed to study $Y_t$ is the Brownian motion $\omega_t$.
Let $F(\omega):=\E[\mu_{A, Y^{\omega}_T}]{f}$. Since $Y^{\omega}_T\overset{d}{=} Y_T$ under the Wiener measure, we also have $F(\omega)\overset{d}{=} \E[\mu_{A, Y_T}]{f}$ under the Wiener measure.
This is where Gaussian isoperimetry enters the picture.
For illustration, it is helpful to consider a one-dimensional Gaussian.
Two measurable sets with probability $v\le \frac12$ can have distance at most $2\Phi^{-1}(1-v)\leq2\sqrt{2\log(1/v)}$, where $\Phi(\cdot)$ is the CDF of a standard Gaussian distribution.
Borell~\cite{Borell75} and independently Sudakov and Tsirel'son~\cite{SudakovTsirelson78} showed that this Gaussian isoperimetry survives for {an} arbitrary Gauss space.
{We apply it on $C_0([0,T];\R^n)$ with Wiener measure, measuring distances through shifts in the \emph{Cameron--Martin space} $\mathbb H_T$.
This space consists of absolutely continuous paths $h$ with $h_0=0$ and square-integrable derivative, with norm
$ \|h\|_{\mathbb H_T}^2=\int_0^T\left\|\frac{\mathrm d h_t}{\mathrm d t}\right\|_2^2\,\mathrm d t$.
Thus the Brownian paths belong to $C_0([0,T];\R^n)$, whereas the shift between the two paths chosen below belongs to $\mathbb H_T$; see \Cref{subsec:prelim-wiener-space} for the formal definitions.}

We consider the following two sets of ``Brownian paths'', each with Wiener measure at least $v$:
\[
\begin{aligned}
 L(v)&\mathrel{\defeq}\{\omega \in {C_0([0,T];\R^n)}: F(\omega) \le Q_{f}(v)\},\\
 U(v)&\mathrel{\defeq}\{\omega' \in {C_0([0,T];\R^n)}: F(\omega') \ge Q_{f}(1-v)\}.
\end{aligned}
\]
By Markov's inequality, the following set has Wiener measure at least $1-\frac v2$:
\[
G(v)\mathrel{\defeq} \left\{
\omega \in {C_0([0,T];\R^n)}:
\Var[\mu_{A, Y_T^{\omega }}]{f} \le \frac{2}{v}\E[\omega]{\Var[\mu_{A, Y_T^{\omega }}]{f}}
\right\}.
\]
Then, the sets $L(v) \cap G(v)$ and $U(v) \cap G(v)$ each {have} Wiener measure at least $\frac v2$, and Gaussian isoperimetry tells us that they cannot be farther apart than $O(\sqrt{\log(1/v)})$ in the {Cameron--Martin distance} (see \Cref{lem:app-wpi-two-set-gaussian}).
Put differently, there are two paths $\omega \in L(v)\cap G(v)$ and $\omega+h \in U(v)\cap G(v)$ with $h\in\mathbb H_T$ and $\|h\|_{\mathbb H_T}=O(\sqrt{\log(1/v)})$.
Then, by the definition of $L(v)$ and $U(v)$, we have 
\[
[Q_f(1-v)-Q_f(v)]^2  \le \inbr{ \E[\mu_{A, Y^{\omega + h}_T}]{f} - \E[\mu_{A, Y^{\omega}_T}]{f} }^2.
\]
Thus, Gaussian isoperimetry reduces the quantile comparison to a pair of posteriors generated by nearby Brownian paths.

{{We note that}
\[
Y_t^{\omega+h}-Y_t^\omega
=
h_t+
\int_0^t
\bigl[m(A,Y_s^{\omega+h})-m(A,Y_s^\omega)\bigr]ds.
\]
{By the assumed $L$-Lipschitz property of $m(A,\cdot)$, Gronwall's inequality (\Cref{lem:gronwall-path-stability}) shows that the fields $Y_t$ under nearby Brownian paths are also close:}}
\begin{equation}
\label{eq:Lipschitz-1}
\|Y_T^{\omega+h}-Y_T^\omega\|_2\le\sqrt T\,\exp(LT)\|h\|_{\mathbb H_T}.
\end{equation}
The same Cameron--Martin stability estimate is used in~\cite[Lemma~6.4]{DaviesLeeSandhuShiWPI26}, but for the approximate rather than the exact stochastic localization process.
{This suggests comparing the two posterior expectations through the distance between the fields $Y_T$.}

{We formalize this comparison in \Cref{lem:app-wpi-posterior-comparison}: for any two fields $y,z\in\R^n$, we have}
\begin{align*}
 \inbr{\E[\mu_{A, y}]{f}-\E[\mu_{A, z}]{f}}^2
 \leq&~2\bigl(\exp([D(\mu_{A, y}\Vert\mu_{A, z})
                         +D(\mu_{A, z}\Vert\mu_{A, y})]/2)-1\bigr)
             \tp{\Var[\mu_{A, y}]{f}+\Var[\mu_{A, z}]{f}}.
\end{align*}

By the assumed $L$-Lipschitzness of $m(A,\cdot)$, we have that
\begin{equation}
\label{eq:Lipschitz-2}
 D(\mu_{A,y}\Vert\mu_{A,z})+D(\mu_{A,z}\Vert\mu_{A,y})
 =\inner{y-z}{m(A,y)-m(A,z)}
 \le L\|y-z\|_2^2.
\end{equation}
{Taking $y=Y_T^\omega$ and $z=Y_T^{\omega+h}$, we bound $\|y-z\|_2$ using \cref{eq:Lipschitz-1} and the Cameron--Martin norm of $h$.}
Combined with the fact that $\omega$ and $\omega+h$ are both in $G(v)$, we have that for some  $\Gamma=\Gamma(L,T)>0$,
\[
[Q_f(1-v)-Q_f(v)]^2
\le
\frac{C_{L,T}}{v^{1+\Gamma}}\E[\omega]{\Var[\mu_{A, Y_T^{\omega }}]{f}}.\]

Additionally, we also have the trivial bound $Q_f(1-v)-Q_f(v)\le \operatorname{osc}(f)$ for $v$ that are too small.
Finally, by truncating the integration at $\varepsilon/2$, we have that 
\[
\Var[Y_T]{\E[\mu_{A,Y_T}]{f}} \le \frac{\varepsilon}{2} \operatorname{osc}(f)^2 + 2C_{L,T} \E[\omega]{\Var[\mu_{A, Y_T^{\omega }}]{f}}  \int_{\varepsilon/2}^{1/2} \frac{1}{{v}^{1+\Gamma}} \mathrm{d} v \le \frac{\varepsilon}{2} \operatorname{osc}(f)^2 + \frac{C'_{L,T}}{\varepsilon^{\Gamma}} \E[\omega]{\Var[\mu_{A, Y_T^{\omega }}]{f}}  .
\]

Finally,  $\E[\omega]{\Var[\mu_{A, Y_T^{\omega }}]{f}}$ can be converted back to the original Dirichlet form by a Poincar\'e inequality under sufficiently large external fields~\cite{BandeiraElAlaouiRodder26}, concluding the WPI with defect $\varepsilon$.

\subsection{A locally Lipschitz approximation to the posterior mean}
A global Lipschitz bound for the posterior mean $m(A,\cdot)$, with a constant independent of $n$, is not known to hold with high probability throughout the full range $\beta<1$.
Instead, we approximate $m$ by a {locally} Lipschitz $m^*$ and prove that the cumulative failure of Lipschitzness is rarely destructive to our argument.
We remark that this is conceptually similar to~\cite{DaviesLeeSandhuShiWPI26}, but crucially, our Lipschitz argument is much more robust than their LSI-based argument in handling the rare failures of Lipschitzness.
At a high level, the Lipschitz property only enters our argument exactly twice:
firstly in \cref{eq:Lipschitz-1}, which is to translate the closeness in ``Brownian paths'' to closeness in the fields $Y_t$; and secondly in \cref{eq:Lipschitz-2}, which is to translate the closeness in the fields $Y_t$ into closeness in symmetric KL divergence.
These two uses do not require \emph{global} Lipschitzness of $m${;} they only concern a local Lipschitzness at the chosen pair of ``Brownian paths'' $\omega$ and $\omega+h$.
It is also helpful to recall how we chose a ``good set of Brownian paths'' $G(v)$ in the last subsection: we can safely exclude a bad set of paths of Wiener measure smaller than $v$, and our Lipschitz argument is robust to that. In essence, this suggests that we only have to show {that violations} of the Lipschitzness of $m$ that are cumulatively severe are rare in the Wiener measure.

Our idea is to relax the Lipschitz property by a cost function  $\overline{\mathfrak e}_{n,\beta} \approx \norm{m-m^*}_2^2$. 
 This is formalized in the next lemma: part~(i) allows errors measured by the cost function, and part~(ii) bounds the cumulative costs of violating Lipschitzness along the path and at the endpoint $T$.

\begin{lemma}[Lipschitz property with cost functions]
\label{lem:app-wpi-branches}
Let $\R^{n\times n}_{\mathrm{sym},0}$ be the space of real symmetric zero-diagonal matrices.  
Fix $0<\beta_0<\beta_1<1$ and a constant $T>0$.
There is a constant $L$ such that, for every $n$ and every
$\beta\in[\beta_0,\beta_1]$, there is a jointly Borel function
\begin{equation*}
 \overline{\mathfrak e}_{n,\beta}:
 \R^{n\times n}_{\mathrm{sym},0}\times[0,T]
 \times\R^n\longrightarrow[0,4n]
\end{equation*}
with the following properties.
\begin{enumerate}[label=\textup{(\roman*)}]
\item For every $A,t$ and $y_1,y_2\in\R^n$,
\begin{align*}
 \|m(A,y_1)-m(A,y_2)\|_2
 \le{}&L\|y_1-y_2\|_2
 +\sqrt{\overline{\mathfrak e}_{n,\beta}(A,t;y_1)}
 +\sqrt{\overline{\mathfrak e}_{n,\beta}(A,t;y_2)}.
\end{align*}

\item
{For every fixed $R>0$, there are $C_R,n_R>0$ such that,
for every $n\ge n_R$,
\begin{align*}
 \sup_{\beta\in[\beta_0,\beta_1]}
 \mathbb P_{0,n,\beta}^{\mathrm{aug}}
 \left\{
 \int_0^T
 \overline{\mathfrak e}_{n,\beta}
       (A,t;Y_t)\,\mathrm d t
 >C_R\log n
 \right\}
 &\le n^{-R},
\end{align*}
and
\begin{align*}
 \sup_{\beta\in[\beta_0,\beta_1]}
 \mathbb P_{0,n,\beta}^{\mathrm{aug}}
 \left\{
 \overline{\mathfrak e}_{n,\beta}
       (A,T;Y_T)
 >C_R\log n
 \right\}
 &\le n^{-R}.
\end{align*}
Here $\mathbb P_{0,n,\beta}^{\mathrm{aug}}$ is the joint law where one draws the matrix $A\sim \mathbb P_{0,n}$, then draws $X \sim \mu_A$, then {sets} $Y_t=tX+B_t$ for an independent Brownian motion $B_t, t\geq0$.}
\end{enumerate}
\end{lemma}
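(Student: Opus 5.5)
The plan is to build the cost function from an explicit Lipschitz surrogate $m^*(A,t;y)$ for the posterior mean, following the AMP-plus-TAP construction of \cite{ElAlaouiMontanariSellke25,Celentano24}. Given $(A,t,y)$, we run a fixed number $k$ of AMP iterations adapted to the field $y$ to obtain a point $\hat m$. We then define $m^*$ as the unique stationary point of the TAP free energy
\[
\mathcal F_{\mathrm{TAP}}(m;A,y)=-\tfrac{\beta}{2}\langle m,Am\rangle-\langle y,m\rangle-\sum_i H(m_i)-\tfrac{n\beta^2}{4}(1-\|m\|_2^2/n)^2
\]
in a ball $\mathsf B(\hat m,c\sqrt n)$ on which the Hessian is at least $\kappa I$. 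Where no such point exists, or the region fails to be strongly convex, we set $m^*=0$.

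We then define
\[
\overline{\mathfrak e}_{n,\beta}(A,t;y)=\min\{4n,\ \|m(A,y)-m^*(A,t;y)\|_2^2\ \text{(or a TAP-residual proxy)}\}
\]
on the good event, and $4n$ off it. This is Borel and lies in $[0,4n]$, because $\|m-m^*\|_2^2\le 4n$.

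\textbf{Part (i).} On the good event, $m^*$ is the minimizer of a $\kappa$-strongly convex function whose gradient in $y$ is $-I$. The implicit function theorem gives $\|m^*(y_1)-m^*(y_2)\|_2\le\kappa^{-1}\|y_1-y_2\|_2$, provided both points lie in a common convex region. To handle pairs whose AMP centers differ, we would choose the region as a deterministic-radius neighbourhood that depends Lipschitz-continuously on $y$. Alternatively, we can argue that if $\|y_1-y_2\|_2\ge c'\sqrt n$, then (i) holds trivially with $L=2/c'$, since $\|m\|_2\le\sqrt n$. The triangle inequality then gives (i) with $L=\max\{\kappa^{-1},2/c'\}$. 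Off the good event, $\sqrt{4n}$ dominates the left side, so (i) holds there as well.

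\textbf{Part (ii).} This is the main obstacle. By Nishimori/planting, $\mathbb P^{\mathrm{aug}}$ with $Y_t$ is the law of the planted model, so $m(A,Y_t)$ is the true posterior mean. The strategy has three steps:
\begin{enumerate}
\item Show that the local strong convexity and AMP events fail with probability at most $n^{-R}$, using Gaussian concentration of the Sudakov--Fernique bounds of \cite{Celentano24}.
\item Bound $\|m-m^*\|_2^2\lesssim\kappa^{-2}\|\nabla\mathcal F_{\mathrm{TAP}}(m(A,Y_t))\|_2^2$ by strong convexity, once $m$ itself lies in the convex region.
\item Show that the TAP gradient residual at the true posterior mean has all moments of order $O_p(1)$, with a tail of order $\log n$ at polynomial level.
\end{enumerate}
Step 3 is where Talagrand's problem arises. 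Following the abstract, we would condition on a deletion of coordinates (a cavity step). Each coordinate of the residual is a cavity-field fluctuation, so the residual is a sum of nearly independent terms, and Bernstein-type concentration then yields the $\log n$ tail. We would transfer this back through stability of the TAP gradient under deletion of coordinates.

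For the time integral, we use $\int_0^T\overline{\mathfrak e}\,dt\le T\sup_{t}\overline{\mathfrak e}$ on a grid of mesh $n^{-K}$ plus continuity in $t$, together with a union bound over polynomially many grid points. Alternatively, a Markov inequality on a high moment of the integral gives the same conclusion. The bound at the endpoint $T$ is the special case $t=T$.
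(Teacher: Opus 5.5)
Your cost construction and your part~(i) argument (the second alternative, combined with Lipschitzness of fixed-depth AMP in $y$ on $\{\|A^{\mathrm{full}}\|_{\mathrm{op}}\le 4\}$) match the paper. Part~(ii), however, has a genuine gap at its core.

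\textbf{The core gap: Step~3.} The coordinates $r_i$ are not ``nearly independent'' in any quantified sense. Each depends on the whole Gibbs measure, and what one can actually prove is $\|r_i\|_{L^p}\le C_pn^{-1/2}$, with no control on how $C_p$ grows with $p$. A Bernstein or martingale bound for $\|\mathbf r\|_2^2$ at level $\log n$, with failure probability $n^{-H}$, needs exactly that sub-exponential control. This is the variant of Talagrand's open problem that the paper sidesteps.

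The missing idea is to isolate the one quantity whose $p$-dependence is explicit: the conditional second moments $e_i=\mathbb E[r_i^2\mid\mathcal F^{\setminus i}]$. These satisfy $e_i\le C\ell^2(1+\operatorname{Tr}[(\Sigma^{\setminus i})^2])+4\ell$. The overlap exponential moment, from the square-root Gaussian interpolation, then gives $\|e_i\|_{L^p}\le Cp/n$. By Minkowski alone, with no independence, this yields $\mathbb E\exp(\kappa\sum_ie_i)\le 2$.

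Deletion stability, $\|\,\|\mathbf r\|_2-\|\mathbf r^{\setminus i}\|_2\,\|_{L^{4p}}\le C_pn^{-1/4}$, is then needed at only one fixed order $p=p(H)$. Its role is to replace $\{\|\mathbf r\|_2\ge s\}$ by the $\mathcal F^{\setminus i}$-measurable event $\{\|\mathbf r^{\setminus i}\|_2\ge s-\eta\}$, on which $r_i^2$ can be traded for $e_i$. This gives the recursion $s^2\mathbb P(\|\mathbf r\|_2\ge s)\le a\,\mathbb P(\|\mathbf r\|_2\ge s-2\eta)+O(n^{-H-3})$, with $a\asymp\log n$ and $\eta=n^{-1/8}$, which is then iterated.

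You also give no route to the ingredients themselves. The stability estimate needs the pinned/cavity recombination, the residual SDE along the localization path $y_{-i}+a_s$, and $\|\partial\mathbf r/\partial y\|_{L^p}\le C_pM^{1/4}$ via Gaussian smoothing. The per-coordinate bound needs the Gaussian cancellation along the cavity interpolation.

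\textbf{Other missing or incorrect steps.}
\begin{enumerate}
\item[(a)] Step~2 requires $m(A,Y_t)$ to lie in the convex ball with failure probability $n^{-R}$. Gaussian concentration cannot give this, since the posterior mean is not known to be Lipschitz in the disorder. The paper expands $\|m-\widehat m\|_2^2/n$ into three scalars:
$\|m\|_2^2/n$, controlled by the overlap MGF;
$\|\widehat m\|_2^2/n$;
and $\mathbb E[\langle X,\widehat m\rangle\mid A^{\mathrm{full}},Y_t]/n$, controlled by Nishimori, Gaussian concentration of the Lipschitz AMP output, and Jensen.
It then bounds the mean by the AMP mean-square estimate.

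\item[(b)] The augmented null law is \emph{not} the planted law. They share the conditional law of $(X,Y)$ given $A$, but their $A$-marginals differ. To move $n^{-2R}$ planted bounds to $n^{-R}$ null bounds you need the quantitative comparison $\mathbb P^{\mathrm{aug}}_{0,n,\beta}(E)\le(1-\beta^2)^{-1/4}\mathbb P^{\mathrm{pl,aug}}_{n,\beta}(E)^{1/2}$. That requires a bound on $\mathbb E_{\mathbb P_{0,n}}[L_n^{-1}]$, not just asymptotic contiguity.

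\item[(c)] Your time-integral step fails as written. The cost jumps to $4n$ whenever certification fails, so it is not continuous in $t$, and a grid union bound does not control the supremum. Markov on a fixed moment of the integral gives no polynomial decay at a threshold of order $\log n$. The fix is to write $\overline{\mathfrak e}\le C\log n+4n\mathbf 1_{\mathrm{bad}}$ and apply Tonelli and Markov to the Lebesgue measure of bad times. This needs only fixed-time bounds uniform in $(\beta,t)$.

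\item[(d)] That uniformity, and a single $L$ in part~(i), require covering $[\beta_0,\beta_1]\times[t_0,T]$ by finitely many rectangles using stability of AMP in $(\beta,t)$. You also need a separate global-convexity argument for $t\le t_0$.

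\item[(e)] The cited local convexity is for the prescribed-overlap free energy with $q=q_\beta(t)$ held fixed. Your $\mathcal F_{\mathrm{TAP}}$ has the extra Hessian term $-2\beta^2mm^\top/n$. You should use the prescribed version and control the difference of residuals, $\beta^2(Q_n(m)-q_\beta(t))m$, by the overlap MGF.
\end{enumerate}
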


Next, we explain how this lemma replaces the Lipschitzness assumption used in our arguments.  Part~\textup{(ii)} of the lemma is what defines a good set of Brownian paths. For a fixed $A$, denote the two errors by
\begin{equation*}
 U_A(\omega)
 =\int_0^T\overline{\mathfrak e}_{n,\beta}(A,t;Y_t^\omega)\,\mathrm d t , \qquad \qquad E_A(\omega)
 =\overline{\mathfrak e}_{n,\beta}(A,T;Y_T^\omega).
\end{equation*}
Then we can define the set of ``good Brownian paths'' by
\[
 H_R=\{\omega:U_A(\omega)\le C_R\log n,\ E_A(\omega)\le C_R\log n\}.
\]
Part~\textup{(ii)} of~\Cref{lem:app-wpi-branches} tells us that, for all but a negligible fraction of matrices $A \sim \mathbb P_{0,n}$, the set $H_R$ will have Wiener measure at least $1-n^{-r}$ for every $r$ (by taking $R$ sufficiently large). 

Then,  we follow the same quantile comparison as before, but applying Gaussian isoperimetry to the sets $L(v)\cap G(v)\cap H_R$ and $U(v)\cap G(v)\cap H_R$, replacing $L(v)\cap G(v)$ and $U(v)\cap G(v)$. Assuming $v \ge n^{-M}$ for some $M$, the same argument would still give two nearby ``Brownian paths'' $\omega$ and $\omega+h$ such that  $\|h\|_{\mathbb H_T}=O(\sqrt{\log(1/v)})$, but with the additional property that they are both in $H_R$.

The first use of Lipschitzness is replaced by the path cost.
Subtracting the two localization equations and applying
part~\textup{(i)} to $Y_t^\omega$ and $Y_t^{\omega+h}$,
we obtain the following bound by Cauchy--Schwarz and Gronwall's inequality:
\begin{equation*}
 \|Y_T^{\omega+h}-Y_T^\omega\|_2
 \le \sqrt T\,\exp(LT)
 \bigl(\|h\|_{\mathbb H_T}
       +\sqrt{U_A(\omega+h)}+\sqrt{U_A(\omega)}\bigr) = O\bigl(\sqrt{\log(1/v)+\log n}\bigr),
\end{equation*}
where the {last estimate} uses the fact that both $\omega$ and $\omega+h$ are in $H_R$.
This is what replaces \cref{eq:Lipschitz-1}.

The second use of Lipschitzness is replaced by the cost at the endpoint $T$.
The symmetric KL identity part is unchanged:
\begin{equation*}
 D(\mu_{A,y}\Vert\mu_{A,z})+D(\mu_{A,z}\Vert\mu_{A,y})
 =\inner{y-z}{m(A,y)-m(A,z)}.
\end{equation*}
Taking $y=Y_T^\omega$ and $z=Y_T^{\omega+h}$ and applying part~\textup{(i)} of \Cref{lem:app-wpi-branches} at time $T$ gives
\begin{align}
\label{eq:overview-cost-kl}
 D(\mu_{A,y}\Vert\mu_{A,z})+D(\mu_{A,z}\Vert\mu_{A,y})
 \le{}&L\|y-z\|_2^2
 +\bigl(\sqrt{E_A(\omega)}+\sqrt{E_A(\omega+h)}\bigr)\|y-z\|_2.
\end{align}
Consequently, on the paths $\omega$ and $\omega+h$,
\begin{equation*}
 D(\mu_{A,y}\Vert\mu_{A,z})+D(\mu_{A,z}\Vert\mu_{A,y})
 =O\bigl(\log n+\log(1/v)\bigr).
\end{equation*}
This replaces \cref{eq:Lipschitz-2}, and it gets exponentiated in the idealized argument, which loses only by a factor $n^{O(1)}v^{-O(1)}$.
Combining with the bound on variances from $G(v)$ yields that, for constants $B,\Gamma>0$,
\begin{equation*}
 [Q_f(1-v)-Q_f(v)]^2
 \le Cn^Bv^{-1-\Gamma}
 \E[\omega]{\Var[\mu_{A,Y_T^\omega}]{f}}.
\end{equation*}
Integrating over $v\ge\varepsilon/2$ and using the trivial bound below $\varepsilon/2$ gives the same polynomial-in-$1/\varepsilon$ WPI as in the idealized case, now with additional polynomial factors in $n$.

This is the reason why we need a strong tail bound on $U_A$ and $E_A$, as the right hand side of \cref{eq:Lipschitz-2} will get exponentiated. {The existence of a locally Lipschitz $m^*$ alone is insufficient; we also need} to control its cumulative errors along the path and at the endpoint $T$.

\subsection{TAP stationary points are locally Lipschitz}
We construct the locally Lipschitz $m^*$ and the cost function using the approximate message passing (AMP) and {Thouless--Anderson--Palmer} (TAP) geometry developed in~\cite{ElAlaouiMontanariSellke25,Celentano24}.
Given {$Y_t$}, a fixed number of AMP iterations (see formal definition in \Cref{subsec:prelim-tap-amp}) outputs a rough approximation $\hat m$ of the posterior mean.
With high probability over $A$ and the observed field $Y_t$, a sufficiently late AMP iterate lies in a neighborhood on which the TAP free energy is uniformly strongly convex and contains a unique local stationary point, which we denote by $m^*(Y_t)$ as a function of the field $Y_t$ for a fixed $A$.
\cite{ElAlaouiMontanariSellke25,Celentano24} used this stationary point to approximate the localization scheme, giving rise to the \emph{algorithmic stochastic localization}.
What we need is the strong convexity part of the stationary point, as it implies that $z\mapsto m^*(z)$ is locally Lipschitz: Let $F$ be differentiable and $\kappa$-strongly convex on a convex set $K$, and $m^\star\in K$ be a stationary point {of} $F$. Then for any $u\in K$,
 it holds that $\kappa\norm{u-m^\star}_2^2\leq\langle\nabla F(u)-\nabla F(m^\star),u-m^\star\rangle\leq\norm{\nabla F(u)}_2\norm{u-m^\star}_2$, so
$ \norm{u-m^\star}_2\leq\kappa^{-1}\norm{\nabla F(u)}_2$.
{Choosing $F$ to be the TAP free energy and applying the inequality} at two stationary points from two different fields $y$ and $z$ gives $\norm{m^\star(y)-m^\star(z)}_2\leq\kappa^{-1}\norm{y-z}_2$, as long as both stationary points lie in a common strongly convex region. Therefore, the local Lipschitzness of $m^*$ essentially comes from the local strong convexity established in~\cite{ElAlaouiMontanariSellke25,Celentano24}.
The key missing piece is quantitative: 

\begin{center}
\emph{How well {does} {the} TAP stationary point $m^\star$ approximate the true posterior mean $m$?}
\end{center}

{The results of \cite{ElAlaouiMontanariSellke25,Celentano24} suggest} that $\E[]{\|m(A,Y_t)-m^*(Y_t)\|_2^2}=o(n)$,
while we need that $\|m(A,Y_t)-m^*(Y_t)\|_2^2 =O(\log n)$ with high probability.
{We also need the constants in this fixed-time probability bound to be uniform in the temperature $\beta$ and the localization time $t$.} Uniformity in the temperature is needed as the simulated annealing algorithm runs across different {temperatures}, and uniformity in the time $t$ is used to integrate along the stochastic localization path and control the cumulative error.

We give a high level sketch of the argument in~\cite{ElAlaouiMontanariSellke25,Celentano24}. They run AMP for a fixed number of iterations to locate the local stationary point $m^*$ of TAP free energy, and then use it to approximate the posterior mean $m$. Let $\widehat{m}$ be the output of AMP.
A good event $\mathcal{G}$ consists of three geometric properties: (1) the residual at $\widehat{m}$ is small, (2) there is a local convex basin for the TAP free energy around $\widehat{m}$, and (3) $\widehat{m}$ is close to the posterior mean $m$. They show that $\mathcal{G}$ holds with probability at least $1 - o_n(1)$.

Next, to show that this stationary point is close to the posterior mean conditioned on $\mathcal{G}$, they compare both with the AMP output $\widehat{m}$.
Let $F_{\mathrm{TAP}}$ be the TAP free energy {and let $\kappa$ be the local curvature lower bound. Their argument is}
\begin{equation*}
 \|m-m^*\|_2
 \leq \|m-\widehat m\|_2+\|\widehat m-m^*\|_2
 \leq \|m-\widehat m\|_2+\kappa^{-1}\|\nabla F_{\mathrm{TAP}}(\widehat m)\|_2.
\end{equation*}
The first term is the AMP approximation error; the second is controlled by the small TAP gradient $\|\nabla F_{\mathrm{TAP}}(\widehat m)\|_2$ at the AMP iterate.
For any fixed $\rho>0$, they show that  there is a sufficiently large fixed AMP depth such that $\|m-m^*\|_2^2\leq\rho n$, with the required geometric properties $\mathcal{G}$ holding with probability $1-o_n(1)$.

One may be tempted to run both AMP and the natural gradient descent longer to reduce the error, but existing AMP error analysis {relies} on state evolution with the number of iterations fixed as $n\to\infty$. {This is because the Lipschitz bound of AMP may get worse} as one increases the number of iterations.

Here we take a different approach: we bound the TAP gradient (also known as TAP residual) directly at the posterior mean.
Intuitively, $\nabla F_{\mathrm{TAP}}(\widehat m)$ measures how well AMP solves the TAP optimization problem, while we are more concerned about how well $m$ approximates the TAP stationary point $m^*$ which is measured by $\nabla F_{\mathrm{TAP}}(m)$.
Once $m$ and $m^*$ lie in a common $\kappa$-strongly convex region, we upper bound the $\|m-m^*\|_2^2$ by the TAP gradient at $m$ directly,
\begin{equation*}
 \|m-m^*\|_2^2\leq\kappa^{-2}\|\nabla F_{\mathrm{TAP}}(m)\|_2^2.
\end{equation*}

We only use AMP to locate the locally strongly convex region, and the residual bound at the posterior mean $m$ implies our $O(\log n)$ approximation error. The next subsections will explain the two components we need: the residual bound at the posterior mean and the inverse polynomial probability bound for the geometric conditions $\mathcal{G}$. An important note is that their constants are independent of time so that we can integrate the resulting cost estimates as above.

We close this subsection by explaining how controlling $\|m-m^*\|_2$ yields \Cref{lem:app-wpi-branches}.
Part~\textup{(i)} comes from a triangle inequality: by inserting $m^*(y_1)$ and $m^*(y_2)$ between the two posterior means, we have that 
\[
\|m(A,y_1)-m(A, y_2)\|_2 \le \norm{m(A, y_1) - m^*(y_1)}_2 + \norm{m^* (y_1) - m^*(y_2)}_2 + \norm{m^* (y_2) - m(A, y_2)}_2.
\]
{When both fields satisfy the geometric conditions, nearby AMP outputs place the stationary points in a common strongly convex region.
If instead the AMP outputs are far apart, the Lipschitz property of fixed-depth AMP forces the fields to be separated by order $\sqrt n$, so the trivial bound $\|m^*(y_1)-m^*(y_2)\|_2\le 2\sqrt n$ suffices.
Thus, for a constant $L$, we obtain}
\[
\|m(A,y_1)-m(A, y_2)\|_2 \le L \norm{y_1 - y_2}_2  + \norm{m(A, y_1) - m^*(y_1)}_2 + \norm{m^* (y_2) - m(A, y_2)}_2.
\]
We therefore construct $\overline{\mathfrak e}_{n,\beta}(A,t;y)$ from the squared approximation error $\|m(A,y)-m^*(y)\|_2^2$ at the same field $y$ when the geometric conditions $\mathcal{G}$ at time $t$ hold, and use the trivial bound $4n$ otherwise.

For part~\textup{(ii)} of \Cref{lem:app-wpi-branches}, we first bound the cost at each fixed time.
After the averaging in the cost construction and the transfer from the planted law to the original augmented law by the contiguity (see the proof of \Cref{lem:scalar-reference-path-error} for details), the two guarantees imply that, for every $H>0$, there are constants $C_H,n_H$ such that, for $n\ge n_H$,
\begin{equation*}
 \sup_{\beta\in[\beta_0,\beta_1]}\sup_{t\in[0,T]}
 \mathbb P_{0,n,\beta}^{\mathrm{aug}}
 \left\{\overline{\mathfrak e}_{n,\beta}(A,t;Y_t)>C_H\log n\right\}
 \le C_Hn^{-H}.
\end{equation*}
To control the time integral, we separate the times when the cost is at most $C_H\log n$ from the remaining times.
The former contribute at most $TC_H\log n$.
For the remaining contribution, we use the trivial bound $4n$ and apply Fubini-Tonelli's theorem to obtain
\begin{equation*}
 \E[\mathbb P_{0,n,\beta}^{\mathrm{aug}}]{
 \int_0^T\overline{\mathfrak e}_{n,\beta}(A,t;Y_t)
 \mathbf 1_{\{\overline{\mathfrak e}_{n,\beta}(A,t;Y_t)>C_H\log n\}}\,\mathrm d t}
 \le 4TC_Hn^{1-H}.
\end{equation*}
By Markov's inequality, this integral is at most $\log n$ except with probability at most $4TC_Hn^{1-H}/\log n$.
On the {event that this integral is at most $\log n$}, the cumulative path error is at most $(TC_H+1)\log n$.
Taking $H>R+2$ proves the path-error bound for all sufficiently large $n$.
For the endpoint error $\overline{\mathfrak e}_{n,\beta}
       (A,T;Y_T)$, we apply the same fixed-time estimate at $t=T$.
Neither step requires the approximation bound to hold simultaneously at every time.

The formal construction and proof of \Cref{lem:app-wpi-branches} are in \Cref{sec:app-inverse-poly-geometry-proof}.
We  explain the TAP gradient (TAP residual) bound next, and then the probability boosting for the geometric properties.

\subsection{From post-AMP geometry to the TAP residual bound}
\label{subsec:technical-overview-wpi}

\begingroup

{
As explained earlier,  the crucial missing piece is to control $\|m-m^*\|_2$.
By running AMP, we can already locate the TAP stationary point $m^*$ in a locally strongly convex basin. Once $m$ and $m^*$ belong to a common region of $\kappa$-strong convexity, we use strong monotonicity of the gradient, leading to $\|m-m^*\|_2\leq\kappa^{-1}\|\nabla F_{\mathrm{TAP}}(m)\|_2$.
Thus we focus on analyzing the TAP residual, $\nabla F_{\mathrm{TAP}}(m)$, using the exact Bayesian structure of the posterior mean to prove that it nearly satisfies the same TAP equation.
\par}

{
\subsubsection{The TAP residual}
We prove the residual bound under the planted model.
By contiguity, one can transfer properties back to the standard SK model.
We first define the planted model.
Let $X\sim\operatorname{Unif}(\{\pm1\}^n)$ and let $W$ be a GOE matrix, with $W_{ij}=W_{ji}\sim N(0,1/n)$ for $i<j$ and $W_{ii}\sim N(0,2/n)$.
We take $X$, $W$, and a standard Brownian motion $(B_t)_{t\geq0}$ in $\mathbb R^n$, with $B_0=0$, to be mutually independent.
We set
\begin{equation*}
 A^{\mathrm{full}}=W+\frac{\beta}{n}XX^\top,
 \qquad
 Y_t=tX+B_t,
\end{equation*}
and let $A$ be the matrix obtained from  $A^{\mathrm{full}}$ by replacing the diagonal entries with zeros, and write $d$ for the diagonal vector of $A^{\mathrm{full}}$.
And we use $\mathbb{P}_{n,\beta}^{\mathrm{pl,aug}}$ to denote the joint distribution of $(X,A,d,Y)$.
We write $m=m(A,Y_t)$ for the exact posterior mean.
For a scalar overlap parameter $q$, we define the TAP residual by
\begin{equation*}
 R_{\mathrm{TAP}}(m;A^{\mathrm{full}},Y_t,q)
 =\operatorname{arctanh}(m)-Y_t-\beta A^{\mathrm{full}}m+\beta^2(1-q)m.
\end{equation*}
We remark that $\nabla F_{\mathrm{TAP}} = R_{\mathrm{TAP}}$, and the deterministic $q_\beta(t)$ is the scalar-channel fixed point from \Cref{subsec:prelim-tap-amp}.
Our target is the following:
\par}

\begin{theorem}[TAP residual bound]
\label{thm:polynomial-tap-residual}
Fix $0<\beta_*<1$ and $T>0$.
For every $H>0$, there are $C_H,n_H>0$, depending only on $(\beta_*,T,H)$, such that, for $n\ge n_H$,
\begin{equation*}
 \sup_{\beta\in[0,\beta_*]}\sup_{t\in[0,T]}
 \mathbb P_{n,\beta}^{\mathrm{pl,aug}}
 \left\{
 \left\|R_{\mathrm{TAP}}\bigl(m(A,Y_t);A^{\mathrm{full}},Y_t,q_\beta(t)\bigr)\right\|_2^2
 >C_H\log n
 \right\}
 \le n^{-H}.
\end{equation*}
\end{theorem}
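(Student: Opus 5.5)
Our approach is the cavity method, applied coordinatewise to the residual in the planted model. Under $\mathbb P^{\mathrm{pl,aug}}_{n,\beta}$ the posterior of $X$ given $(A,Y_t)$ is exactly $\mu_{A,Y_t}$, since the diagonal $d$ is independent of $X$. So the residual's $i$-th coordinate is
\[
 \operatorname{arctanh}(m_i)-Y_{t,i}-\beta\sum_{j}A^{\mathrm{full}}_{ij}m_j+\beta^2(1-q)m_i .
\]
We would write $\operatorname{arctanh}(m_i)=Y_{t,i}+\log\bigl(\langle \cosh$-type cavity ratio$\rangle\bigr)$ and use the exact cavity identity: $m_i=\tanh\bigl(Y_{t,i}+h_i\bigr)$, where $h_i$ is the effective field obtained by integrating $\exp(\beta x_i\sum_j A_{ij}x_j)$ against the cavity measure $\mu^{(i)}$ on $[n]\setminus\{i\}$. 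The task is then to compare $h_i$ with $\beta\sum_j A_{ij}m_j-\beta^2(1-q)m_i$ (the $d_i m_i$ diagonal term is $O(1/n)$ per coordinate and harmless).

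The key steps, in order, are as follows.
\begin{enumerate}
\item Conditionally on the cavity system, the row $(A_{ij})_{j\ne i}$ is, up to the rank-one planted part, an independent Gaussian vector. The local field $\sum_j A_{ij}x_j$ under $\mu^{(i)}$ is therefore approximately Gaussian with mean $\sum_j A_{ij}m^{(i)}_j$ and variance $\approx 1-\|m^{(i)}\|^2/n\approx 1-q$. This yields $h_i=\beta\sum_jA_{ij}m^{(i)}_j+\delta_i$, where $\delta_i$ is controlled by the overlap concentration of the cavity measure.
\item We then pass from $m^{(i)}$ to $m$. The Onsager reaction $\sum_jA_{ij}(m_j-m^{(i)}_j)\approx\beta(1-q)m_i$ is the source of the $\beta^2(1-q)m$ term. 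The error is expressed through cavity covariances $\mathrm{Cov}^{(i)}$ tested against the Gaussian row.
\item We bound $\sum_i\delta_i^2$ and the Onsager errors in high moments. Each is a quadratic form in the independent row $A_{i\cdot}$ with coefficients given by the cavity covariance matrix. Its typical size is $\|\mathrm{Cov}^{(i)}\|_F^2/n$, which is $O(1)$ in total once $\|\mathrm{Cov}\|_{\mathrm{op}}=O(1)$, available at high temperature with field $Y_t$.
\item Finally we take uniformity in $(\beta,t)\in[0,\beta_*]\times[0,T]$ by continuity plus a union bound over a polynomial grid. We transfer overlap concentration ($\|m\|^2/n\to q_\beta(t)$ at rate $O(\sqrt{\log n/n})$) from the replica-symmetric fixed point, contributing $n\cdot(\log n/n)$ to the squared residual.
\end{enumerate}

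The main obstacle is step 3 at \emph{polynomial} probability $n^{-H}$ rather than in expectation. We would need high-moment (order $H$) bounds for $\|R_{\mathrm{TAP}}\|_2^2$, which naively requires sub-Gaussian concentration of the residual, essentially Talagrand's open problem flagged in the abstract. Our first attempt would avoid it by conditioning. We would introduce an intermediate model in which a block of $k\sim\log n$ coordinates is deleted, so that their rows are genuinely independent of the remaining system. We would prove that the TAP gradient is stable under deletion of coordinates, with changes bounded by $O(k)$ in squared norm using the covariance operator bound. Then Gaussian chaos (Hanson--Wright) tail bounds, applied conditionally on the cavity system, give the $n^{-H}$ tail. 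Summing over blocks and transferring back through the deletion-stability estimate would give the $C_H\log n$ threshold.
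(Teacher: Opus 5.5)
You locate the difficulty correctly, and you pick the right raw ingredients: cavity sigma-fields and deletion stability. But the mechanism you propose for the polynomial tail has a genuine gap.

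\textbf{The missing input.} Both Step~3 and your deletion-stability claim rest on $\|\mathrm{Cov}\|_{\mathrm{op}}=O(1)$, which you call ``available at high temperature with field $Y_t$.'' It is not available in the form you need. You would need it
\begin{itemize}
\item in the planted model with the random field $Y_t$,
\item for every $\beta<1$,
\item uniformly over the tilted cavity measures $\mu_\gamma$, $|\gamma|\le1$,
\item with failure probability $n^{-H}$.
\end{itemize}
This is close in spirit to the global Lipschitz bound for $y\mapsto m(A,y)$, which the paper notes is not known for all $\beta<1$. The only covariance input actually available is the Frobenius-type bound $\|\operatorname{Tr}\Sigma^2\|_{L^p}\le Knp$ coming from overlap concentration.

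\textbf{Why the Frobenius bound alone does not rescue your route.}
\begin{itemize}
\item $r_i$ is not a Gaussian chaos in the deleted row. The term $\tfrac12(\log Z_{+1}-\log Z_{-1})$ is nonlinear, and its Lipschitz constant in the row is of order $\sqrt{\ell}\,\sup_\gamma\|\Sigma_\gamma\|_{\mathrm{op}}$. So Hanson--Wright does not apply.
\item The natural conditional scale $\ell^2\operatorname{Tr}[(\Sigma^{\setminus i})^2]+\ell$ is itself only $O(\log n/n)$ with probability $1-n^{-H}$.
\item The coordinatewise conditional tails live on different cavity sigma-fields. Combining them by a union bound gives only $\|\mathbf r\|_2^2\lesssim\log^2 n$.
\item Making a block version amortize the logarithm would need a conditional Lipschitz (sub-Gaussian) bound for the nonlinear block residual in the $k$ deleted rows. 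That again comes down to $\|\Sigma\|_{\mathrm{op}}$, together with $\sqrt p$-growth of the $L^p$ constants, which cannot be controlled here.
\end{itemize}
A $\log^2 n$ cost is fatal downstream, because it gets exponentiated in the WPI and yields $n^{O(\log n)}$.

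\textbf{The missing idea: never take tails coordinatewise.}
\begin{itemize}
\item Replace $r_i^2$ by $e_i=\mathbb E[r_i^2\mid\mathcal F^{\setminus i}]\le C\ell^2(1+\operatorname{Tr}[(\Sigma^{\setminus i})^2])+4\ell$. This needs only the Frobenius input: $\|e_i\|_{L^p}\le Cp/n$.
\item By Minkowski, $\|\mathcal E\|_{L^p}\le Cp$ for $\mathcal E=\sum_ie_i$, hence $\mathbb E\,e^{\kappa\mathcal E}\le2$. The sum is formed before any tail is taken, and that is what saves the logarithm.
\item Use one-coordinate deletion stability only in moments: $\bigl\|\,\|\mathbf r\|_2-\|\mathbf r^{\setminus i}\|_2\bigr\|_{L^{4p}}\le C_pn^{-1/4}$. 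The paper proves this via a localization SDE from $y_{-i}$ to $y_{-i}+bJ_{-i,i}$ and the smoothing bound $\|\partial\mathbf r/\partial y\|_{L^p}\le C_pM^{1/4}$, with no operator-norm input.
\item With $\eta_n=n^{-1/8}$, this lets you replace the event $\{\|\mathbf r\|_2\ge s\}$ by the $\mathcal F^{\setminus i}$-measurable event $\{\|\mathbf r^{\setminus i}\|_2\ge s-\eta_n\}$, on which $r_i^2$ may be swapped for $e_i$.
\item The result is the recursion $s^2\,\mathbb P(\|\mathbf r\|_2\ge s)\le a\,\mathbb P(\|\mathbf r\|_2\ge s-2\eta_n)+O(n^{-H-4})$ with $a\asymp\log n$, iterated $O(\log n)$ times.
\end{itemize}

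\textbf{Smaller points.}
\begin{itemize}
\item Your Step~4 grid/continuity argument is unnecessary: the statement bounds each fixed-$(\beta,t)$ probability, so uniform constants suffice.
\item The sub-Gaussian overlap bound at probability $n^{-H}$ in the planted model with field has to be proved, not transferred. The paper proves it with a square-root interpolation.
\item The diagonal term $\beta d_im_i$ is $O(n^{-1/2})$ per coordinate, not $O(1/n)$, though it is still harmless.
\end{itemize}
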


The bound holds for each fixed $(\beta,t)$ with uniform constants.
The $O(\log n)$ threshold gives the logarithmic comparison cost needed for the WPI.

Talagrand's classical TAP estimates \cite{Talagrand11}, combined with his overlap concentration bound \cite[Theorem~13.7.1]{Talagrand11VolumeII}, also show that the exact magnetization approximately satisfies the TAP equations, by bounding fixed-order coordinate moments in a different coordinate in the original SK model.
To state his estimate in our notation, we take $A\sim\mathbb P_{0,n}$, a fixed homogeneous field $y=h\mathbf1$, and $m=m(A,y)$ in the original SK model.
Here $q=\E[Z\sim N(0,1)]{\tanh^2(\beta\sqrt q\,Z+h)}$ is the scalar-channel fixed point for this homogeneous field.
These estimates give the following moment bound for the \emph{bounded TAP defect}: for every $i\in[n]$ and fixed integer $k\geq1$,
\[
 \E[A\sim\mathbb P_{0,n}]{\left|m_i-\tanh\bigl(y_i+\beta(Am)_i-\beta^2(1-q)m_i\bigr)\right|^{2k}}\leq C_kn^{-k}
 \qquad(\beta<1).
\]
Subsequent dynamical cavity arguments, based on Brownian interpolation and It\^o calculus, also establish bounded TAP defect for TAP equations with an empirical-overlap correction \cite[Corollary~1.4]{AdhikariBrenneckeVonSoostenYau2021}.

However, our theorem requires control of the TAP gradient residual in the planted model, before applying $\tanh$.
Moreover, since $(\operatorname{arctanh})'(u)=(1-u^2)^{-1}$ is unbounded near $\pm1$, the bounded-defect estimate does not directly yield this gradient-residual bound.

Even for the gradient residual, $L^p$ bounds of the form $C_p/\sqrt n$ give only inverse powers of $\log n$ via Markov's inequality at threshold $C_H\log n$.
A direct moment argument for failure probability $n^{-H}$ would require quantitative control of how $C_p$ grows with $p$.

To obtain this tail bound, we give a self-contained cavity analysis directly on the TAP residual.
We now return to the planted model and fix $0\le\beta\le\beta_*<1$ and $t\in[0,T]$, writing $y=Y_t$, $m=m(A,y)$, and $\ell=\beta^2/n$.
We first analyze $\mathbf r=R_{\mathrm{TAP}}(m;A^{\mathrm{full}},y,Q_n(m))$, with the \emph{empirical overlap} $Q_n(m)=\|m\|_2^2/n$ in the Onsager coefficient.
The \emph{prescribed residual} $R_{\mathrm{TAP}}(m;A^{\mathrm{full}},y,q_\beta(t))$ differs from $\mathbf r$ by $\beta^2(Q_n(m)-q_\beta(t))m$.
The overlap exponential moment in \Cref{lem:tap-overlap-mgf} controls this correction and transfers the logarithmic tail bound for $\mathbf r$ to the prescribed residual.

We first control one coordinate $r_i$ of the residual $\*r$.
Let $\mathcal F^{\setminus i}$ be the sigma-field generated by $(A^{\mathrm{full}}_{-i,-i},y_{-i})$, the observations after deleting spin $i$.
Informally, we first fix this randomness and then average over the remaining randomness.
We bound the conditional moment $\E{|r_i|^p\mid\mathcal F^{\setminus i}}$ using the strengthened overlap concentration (\Cref{lem:tap-complete-row,cor:tap-covariance}).
For every fixed $p\ge2$, averaging the conditional estimate and using $\ell=\beta^2/n$, we obtain
\[
\|r_i\|_{L^p}\le C_pn^{-1/2}.
\]

Here expectations and $L^p$ norms refer to $\mathbb P_{n,\beta}^{\mathrm{pl,aug}}$, $C_p$ depends only on $\beta_*$ and $p$.
Taking $p=2$ and summing over the coordinates, we obtain $\E{\norm{\*r}_{2}^2}=O(1)$.
However, it is insufficient to give an inverse-polynomial tail bound for $\norm{\*r}_{2}^2$.
Now, we give an overview of how we upgrade the above coordinate estimates to an $O(\log n)$ bound for $\norm{\*r}_2^2$ with failure probability $n^{-H}$ for every fixed $H>0$.

We also use the conditional estimate at order two and set $e_i\defeq\E{|r_i|^2\mid\mathcal F^{\setminus i}}$.
Taking $L^p$ norms and applying \Cref{cor:tap-covariance}-\textup{(ii)} to the deleted system, for every $p\ge1$ we obtain
\begin{equation*}
 \|e_i\|_{L^p}\le\frac{C\cdot p}{n}.
\end{equation*}

We set $\mathcal E\defeq\sum_{i=1}^ne_i$.
By Minkowski's inequality, we have that $\|\mathcal E\|_{L^p}\le C\cdot p$.
Since $\mathcal E\ge0$, a standard argument of Taylor series give the next exponent moment bound of $\+E$:
\begin{equation*}
 \E{\exp(\kappa\mathcal E)}
 =1+\sum_{p=1}^{\infty}\frac{\kappa^p\E{\mathcal E^p}}{p!}
 \le1+\sum_{p=1}^{\infty}\frac{(\kappa C\cdot p)^p}{p!}
 \le1+\sum_{p=1}^{\infty}(\kappa C\mathrm e)^p
 \le2,
\end{equation*}
where in the second inequality, we use $p!\ge(p/\mathrm e)^p$ and choose $0<\kappa\le(2C\mathrm e)^{-1}$.

Indeed, each $e_i$ averages the deleted-row fluctuation in $r_i^2$, and the tower property gives $\E{\mathcal E}=\E{\|\mathbf r\|_2^2}$, but this equality only concerns first moments.
We first consider an ideal case.
If the event $\{\|\mathbf r\|_2\ge s\}$ were $\mathcal F^{\setminus i}$-measurable for every $i$, then we can prove that $\E{\|\mathbf r\|_2^2\1_{\{\|\mathbf r\|_2\ge s\}}}$ is equal to $\E{\mathcal E\1_{\{\|\mathbf r\|_2\ge s\}}}$.
We could then bound the latter expectation using the exponential moment of $\mathcal E$.
Then we can lower bound $\|\mathbf r\|_2^2$ by $s^2$ and take $s^2=C_H\log n$ to obtain the desired polynomial tail bound.

However, the event $\{\|\mathbf r\|_2\ge s\}$ is not generally $\mathcal F^{\setminus i}$-measurable, so we instead approximate it using the residual of a model removing the spin $i$.
For each $i$, let $\mathbf r^{\setminus i}$ be the empirical residual in this deleted system.
We show that for every fixed integer $p\ge1$,
\begin{equation}
\label{eq:overview-stability}
 \max_i\left\|\,\|\mathbf r\|_2-\|\mathbf r^{\setminus i}\|_2\,\right\|_{L^{4p}}
 \le C_pn^{-1/4},
\end{equation}
and later we provide the intuition of this stability.
We first use this stability to give a tail bound of $\norm{\*r}_2^2$.

Set $\eta=n^{-1/8}$.
Let $E_i^{\mathrm{del}}=\{\bigl|\,\|\mathbf r\|_2-\|\mathbf r^{\setminus i}\|_2\,\bigr|>\eta\}$ be the bad event that the stability fails.
Also note that the above stability gives a tail bound of the above bad event by Markov's inequality.
Outside the bad event $E_i^{\mathrm{del}}$, $\{\|\mathbf r\|_2\ge s\}$ implies $\{\|\mathbf r^{\setminus i}\|_2\ge s-\eta\}$, and the latter event is $\mathcal F^{\setminus i}$-measurable.
We may therefore condition on $\mathcal F^{\setminus i}$ to replace $r_i^2$ by $e_i$ on this event; summing over $i$ gives that
\begin{equation*}
 s^2\Pr{\|\mathbf r\|_2\ge s}
 \le \E{\mathcal E\1_{\{\|\mathbf r\|_2\ge s-2\eta\}}}+\epsilon_n.
\end{equation*}
Here $\epsilon_n\defeq\sum_i\E{(r_i^2+e_i)\1_{E_i^{\mathrm{del}}}}$.
The error $\epsilon_n$ involves both the probabilities of deletion failures and the sizes of $r_i^2+e_i$ on those events.
The moment bounds above control these weights; combining them with the deletion-stability estimate \cref{eq:overview-stability} at a sufficiently large fixed moment order gives $\epsilon_n\le n^{-H-3}$ for sufficiently large $n$.
We now cut off $\mathcal E$ at $a=C_H\log n$.
On $\{\mathcal E\le a\}$ the expectation above is at most $a\Pr{\|\mathbf r\|_2\ge s-2\eta}$, while the exponential moment of $\mathcal E$ makes the contribution from $\{\mathcal E>a\}$ at most $n^{-H-3}$.
We therefore obtain and iterate the recursion
\begin{equation*}
 s^2\Pr{\|\mathbf r\|_2\ge s}
 \le a\Pr{\|\mathbf r\|_2\ge s-2\eta}+O(n^{-H-3}).
\end{equation*}
For $s^2\ge2a$, each iteration reduces the preceding tail probability by a factor at most $1/2$, while $O_H(\log n)$ iterations increase the threshold by only $O_H(n^{-1/8}\log n)=o(1)$.
It follows that
\begin{equation*}
 \Pr{\|\mathbf r\|_2^2>C_H\log n}\le n^{-H}.
\end{equation*}
Recall that by the overlap concentration, we can replace the empirical overlap by the prescribed overlap and give a similar tail bound of the prescribed residual.
\endgroup

{\subsubsection{Stability under deletion of one coordinate}

It remains to justify \cref{eq:overview-stability}.
The intuition is that we decompose the residual $\*r$ by pinning the spin $i$ as $+1$ and $-1$ respectively.
Let $\*r^{b}_{\mathrm{pin}}$ be the residual condition the spin of $i$ is $b\in \set{\pm 1}$.
Then $\*r$ can be expressed as a linear combination of $\*r^{+1}_{\mathrm{pin}}$ and $\*r^{-1}_{\mathrm{pin}}$ together with some small remnants.
It suffices to upper bound the difference between $\*r^{+1}_{\mathrm{pin}}$ and $\*r^{\setminus i}$ and $\*r^{-1}_{\mathrm{pin}}$ can be handled similarly.

Once spin $i$ is pinned, its interaction with the other spins can be regarded as an extra external field.
We can therefore remove spin $i$ and work with the resulting $(n-1)$-spin model.
Hence, one can imagine an interpolation process starting with the external field $y_t$, then we gradually add new external fields, and the terminal distribution is equivalent to the model with the pinning on $i$.
And it suffices to control the gradient of the residual in order to upper bound the difference of two endpoint residuals.

We formalize this intuition by designing an SDE and show that this random process is identical to the algorithmic stochastic localization process in~\Cref{lem:tap-residual-sde}.
It suffices to upper bound the $L^p$ norm of $\partial \*r / \partial y$ along the algorithmic stochastic localization process. 
We first consider the full $n$-spin system.
We write $m=m(A,y)$ and let $\Sigma$ be the covariance matrix of $\mu_{A,y}$.
Keeping $A^{\mathrm{full}}$ and $\ell$ fixed, the empirical TAP Hessian is
\begin{equation*}
 H_{\mathrm{TAP}}^{(Q)}=\operatorname{diag}((1-m_j^2)^{-1})-\beta A^{\mathrm{full}}+\ell(n-\|m\|_2^2)I-2\ell mm^\top.
\end{equation*}
Since $\partial m/\partial y=\Sigma$, differentiating the empirical TAP residual gives
$ \frac{\partial\mathbf r}{\partial y}=H_{\mathrm{TAP}}^{(Q)}\Sigma-I$.
The same formula applies to the deleted system, using its own mean and covariance, replacing $A^{\mathrm{full}}$ by $A^{\mathrm{full}}_{-i,-i}$ and $n$ by $n-1$, while keeping $\beta$ and $\ell=\beta^2/n$ unchanged.

\cite[Theorem~4.10]{DaviesLeeSandhuShi26} show that $\E{\|H_{\mathrm{TAP}}^{(Q)}\Sigma-I\|_{\mathrm F}^2}=O(1)$ when $\beta<1/2$.
Their sharp second-moment estimate was then used to approximate covariance in a sampling process\footnote{In their subsequent work, which is concurrent to ours, Potential Hessian Ascent IV \cite[Theorem~B.3]{LeeSandhuShiPHAIV26} extends this dimension-independent second-moment bound to every fixed $0<\beta<1$.}.
In our case, we need to control higher moments for better tail bounds.
For our concentration bounds, however, a looser upper bound on $\|\partial\*r/\partial y\|_{L^p}$ suffices.
By Gaussian smoothing, we show that for any $p$, we have $\left\|\frac{\partial\mathbf r}{\partial y}\right\|_{L^p}
 \leq C_p n^{1/4}$ in \Cref{lem:tap-susceptibility}.
{
Restoring one deleted row corresponds to a Gaussian observation of duration $\ell=\beta^2/n$.
Thus the Burkholder--Davis--Gundy inequality bounds the martingale increment in $L^p$ by
$ C_p\sqrt\ell\,n^{1/4}=O(n^{-1/4})$.
The integrated drift and the errors from recombining the two pinned residuals are $O(n^{-1/2})$ in $L^p$.
Together these estimates prove \cref{eq:overview-stability}.
}

}

\subsubsection{Probability boosting for geometric properties}
Recall that we need to boost success probabilities for the following geometric properties: (1) the residual at $\widehat{m}$ is small, (2) there is a local convex basin around $\widehat{m}$, and (3) $\widehat{m}$ is close to the posterior mean $m$.
\cite{ElAlaouiMontanariSellke25,Celentano24} showed that they hold with probability $1-o_n(1)$.

{For the first two properties, we boost their probability through stability and Gaussian concentration.}
We first extract the Gaussian randomness part.
We fix $\beta$ and $t$ and condition on $X=x$.
Recall that the planted model is given by $A^{\mathrm{full}}=W+\beta xx^\top/n$ and $Y_t=tx+\sqrt t\,G$.
Hence the randomness here contains $W$ which is GOE, and $G\sim N(0,I_n)$ which is independent of $W$.
Let $\xi$ collect the independent standard Gaussian variables generating these data.
\begingroup

\begin{lemma}[{\cite[Theorem~3.1]{Borell75}}]
\label{lem:gaussian-enlargement-boosting}
Let $N\geq1$ be an integer, let $L>0$, and let $\xi\sim N(0,I_N)$.
Let $F:\R^N\to\R$ be an $L$-Lipschitz function, and let $q$ be a median of $F(\xi)$.
Then, for every $u\geq0$,
\begin{equation*}
 \Pr{|F(\xi)-q|>u}\leq2\Phi(-u/L),
\end{equation*}
where $\Phi$ is the standard Gaussian distribution function.
Consequently, for every integer $p\geq1$, there is a constant $C_p>0$ depending only on $p$ such that
\begin{equation*}
 \E{|F(\xi)-\E{F(\xi)}|^{2p}}\leq C_pL^{2p}.
\end{equation*}
\end{lemma}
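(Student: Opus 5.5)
The concentration statement is the classical Borell--Sudakov--Tsirel'son inequality, so I will cite the isoperimetric statement in the form: for any Borel $S\subseteq\R^N$ with $\Pr{\xi\in S}\geq1/2$, we have $\Pr{\xi\in S_u}\geq\Phi(u)$, where $S_u$ is the Euclidean $u$-enlargement. The proof then goes in two steps. First, derive the median tail bound from this isoperimetric inequality. Second, integrate the tail bound to obtain the moment bound.

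\emph{Step 1 (median tail).} Take $S=\{z:F(z)\leq q\}$. Since $q$ is a median, $\Pr{\xi\in S}\geq1/2$. If $z\in S_u$, there is $z'\in S$ with $\|z-z'\|_2\leq u$, and $L$-Lipschitzness gives $F(z)\leq q+Lu$. Therefore
\[
\Pr{F(\xi)>q+Lu}\leq1-\Phi(u)=\Phi(-u).
\]
Applying the same argument to $-F$, with $S'=\{F\geq q\}$, gives $\Pr{F(\xi)<q-Lu}\leq\Phi(-u)$. A union bound, with $u$ replaced by $u/L$, yields $\Pr{|F(\xi)-q|>u}\leq2\Phi(-u/L)$.

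\emph{Step 2 (moments).} The layer-cake formula together with the Gaussian tail bound $\Phi(-s)\leq e^{-s^2/2}$ gives
\[
\E{|F(\xi)-q|^{2p}}=\int_0^\infty 2p\,u^{2p-1}\Pr{|F(\xi)-q|>u}\,\mathrm du\leq 4p\int_0^\infty u^{2p-1}e^{-u^2/(2L^2)}\,\mathrm du=C'_pL^{2p}.
\]
The integrability here also shows that $F(\xi)$ has finite moments of all orders, so $\E{F(\xi)}$ is well defined. It remains to pass from the median to the mean. By Jensen, $|\E{F(\xi)}-q|\leq\E{|F(\xi)-q|}\leq (C'_1)^{1/2}L$, using the $p=1$ moment bound. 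Then $|a+b|^{2p}\leq2^{2p-1}(|a|^{2p}+|b|^{2p})$ gives $\E{|F(\xi)-\E{F(\xi)}|^{2p}}\leq C_pL^{2p}$, with $C_p$ depending only on $p$.

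Once the isoperimetric theorem is cited, there is essentially no obstacle. The only subtle point is measurability of the sets $S$ and $S'$, which holds because $F$ is continuous. If a self-contained proof of the isoperimetric step were required, I would instead use the Ehrhard/Bobkov functional inequality, or accept the slightly weaker constant $e^{-u^2/(2L^2)}$ from the Gaussian Poincaré/Herbst argument. That weaker form suffices for the moment consequence used later.
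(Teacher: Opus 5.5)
Your proof is correct and takes the route the paper has in mind: the paper states this lemma only as a citation of Borell's Gaussian isoperimetric theorem and gives no argument of its own. Your derivation of the two-sided median tail from the enlargement bound, followed by integrating the tail and passing from the median to the mean via Jensen's inequality to get the $C_pL^{2p}$ moment bound, is the standard way to obtain the stated form from that citation.
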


In our application, let $\mathcal G_x$ consist of all Gaussian parameters for which the first two geometric properties hold, and we take the $1$-Lipschitz function $F_x(z)=\operatorname{dist}(z,\mathcal G_x)$.
For all sufficiently large $n$, it can be verified through standard results in~\cite{ElAlaouiMontanariSellke25,Celentano24} and symmetry, that $\Pr{\xi\in\mathcal G_x}\ge1/2$ for every $x$, so zero is a median of $F_x(\xi)$.
The stability of the residual at $\widehat m$ and the curvature around $\widehat m$ shows that the two properties with slightly weaker margins hold whenever $F_x(\xi)\leq\rho\sqrt n$ for some constant $\rho>0$ independent of $x$.
Then, for each $x$, applying \Cref{lem:gaussian-enlargement-boosting} to $F_x$ yields an exponentially small failure probability.
\endgroup

\begingroup

For the third property that $\norm{m - \widehat{m}}_2$ is small, we boost the probability through overlap concentration and Gaussian concentration.
{We let $m$ be the posterior mean and $\widehat m$ be the output of AMP, and set $D_n=\|m-\widehat m\|_2^2/n$.}
We have that
\begin{equation*}
 \frac{\|m-\widehat m\|_2^2}{n}
 =\frac{\|m\|_2^2}{n}+\frac{\|\widehat m\|_2^2}{n}
 -\frac2n\E{\inner{X}{\widehat m}\mid A^{\mathrm{full}},Y_t}.
\end{equation*}
For the first term, we use overlap concentration. 
We remark that the Lipschitz property of $m$ is hard to establish. 
Also recall that by~\Cref{subsection:idealized}, if $m$ is Lipschitz, then it directly leads to a weak Poincar\'e inequality.
\begingroup

For the other two terms, we again use the Lipschitz property of $\widehat{m}$. 
{We condition on $X$ and use stability, symmetry, and \Cref{lem:gaussian-enlargement-boosting} to control these scalar terms; after averaging over $X$ and applying Jensen's inequality, we obtain $\E{|D_n-\E{D_n}|^p}\leq C_pn^{-p/2}$ for every fixed $p\geq1$.}
\endgroup

It remains to bound the expectation of $D_n$.
{By a standard result of AMP, we can choose a fixed depth $K$ so that $\E{D_n}\leq r^2/2$ for all sufficiently large $n$, where $r\sqrt n$ is the radius of the convex basin (see \Cref{prop:fixed-time-amp-tap-input}-\textup{(ii)}).}
We then apply Markov's inequality to obtain
\begin{equation*}
 \Pr{D_n>r^2}
 \le \left(\frac{2}{r^2}\right)^p\E{|D_n-\E{D_n}|^p}
 \le C_{p,r}n^{-p/2}.
\end{equation*}
We formalize this argument in~\Cref{lem:inverse-poly-positive-posterior}.

\subsection{From WPI to simulated annealing}
Given WPI, it is standard to use simulated annealing or a warm start to perform approximate sampling~\cite{HuangMohantyRajaramanWu25}.
We remark that our WPI is strong enough to perform simulated annealing.
In contrast, the WPI of Davies et al.\ for $\beta<1/2$ is not strong enough to give a polynomial-time guarantee for simulated annealing, so they instead construct an approximate warm start for the original Gibbs distribution~\cite{DaviesLeeSandhuShiWPI26}.
They first use Langevin dynamics with a TAP-based drift to generate an auxiliary field $y$, then run Glauber dynamics for the localized distribution $\mu_{A,y}$ to obtain this warm start.

\endgroup

\section{Preliminaries}
\label{sec:preliminaries}

Throughout this paper, we use $\log$ for the logarithm with base $\mathrm e$ and indicate other bases explicitly.

\subsection{The planted model}
\label{subsec:prelim-planted-model}

For inverse temperature $0\leq\beta<1$, a symmetric matrix $A\in\R^{n\times n}$ with zeros on the diagonal, and a field $y\in\R^n$, we define the Gibbs distribution $\mu_{A,y}$ and its partition function $Z_A(y)$ by
\begin{equation*}
 Z_A(y)=\sum_{x\in\{\pm1\}^n}
 \exp(\beta\inner{x}{Ax}/2+\inner{y}{x}),\qquad
 \mu_{A,y}(x)=Z_A(y)^{-1}
 \exp(\beta\inner{x}{Ax}/2+\inner{y}{x}).
\end{equation*}
We suppress $\beta$ in this notation, write $\mu_A=\mu_{A,0}$, and define the mean and covariance of $\mu_{A,y}$ by
\begin{equation*}
 m(A,y)=\E[\sigma\sim\mu_{A,y}]{\sigma},\qquad
 \Sigma(A,y)=\E[\sigma\sim\mu_{A,y}]{(\sigma-m(A,y))(\sigma-m(A,y))^\top}.
\end{equation*}

Next we discuss the distribution of the interaction matrix $A$ and the external field $y$ in our models.
In the two definitions below, $(B_t)_{t\geq0}$ denotes a standard Brownian motion in $\R^n$ with $B_0=0$, drawn independently of $(X,A,d)$ in each construction.

\begin{definition}[Original model]
\label{def:original-matrix-law}
For $n\geq1$ and $0\leq\beta<1$, let $\mathbb P_{0,n}$ be the law of a symmetric zero-diagonal matrix $A\in\R^{n\times n}$ with independent entries $A_{ij}\sim N(0,1/n)$ for $i<j$.
Let $\mathbb P_{n,\beta}^{\mathrm{diag}}$ be the product distribution on $\R^n$ with independent coordinates $N(\beta/n,2/n)$.
We draw $A\sim\mathbb P_{0,n}$, $X\sim\mu_A$
and draw $d\sim\mathbb P_{n,\beta}^{\mathrm{diag}}$ independently of $(A,X)$ and set $Y_t=tX+B_t$ for $t\geq0$.

We denote the joint law of $(X,A,d,Y)$ by $\mathbb P_{0,n,\beta}^{\mathrm{aug}}$, on the product space of spins $X\in\{\pm1\}^n$, zero-diagonal symmetric matrices $A\in\R^{n\times n}$, diagonals $d\in\R^n$, and paths $Y\in C([0,\infty);\R^n)$.
Its $A$-marginal is $\mathbb P_{0,n}$.
\end{definition}

\begin{definition}[Planted model]
\label{def:tap-planted-model}
For $n\geq1$ and $0\leq\beta<1$, we draw $X\sim\operatorname{Unif}(\{\pm1\}^n)$ and a GOE matrix $W$, independently.
The entries $W_{ij}=W_{ji}\sim N(0,1/n)$ for $i<j$ and $W_{ii}\sim N(0,2/n)$ for $i\in[n]$ are independent over $i\leq j$.
Set $A^{\mathrm{full}}=W+\beta XX^\top/n$, let $A$ be its zero-diagonal part, write $d$ for its diagonal, and set $Y_t=tX+B_t$ for $t\geq0$.

We denote the joint law of $(X,A,d,Y)$ on the same product space as in \Cref{def:original-matrix-law} by $\mathbb P_{n,\beta}^{\mathrm{pl,aug}}$.
We denote its $A$-marginal by $\mathbb P_{n,\beta}^{\mathrm{pl}}$.
\end{definition}

For every $x\in\{\pm1\}^n$, the diagonal contributes the same term $\sum_i d_i/2$ to $\inner{x}{A^{\mathrm{full}}x}/2$, so it does not change the Gibbs distribution.
Moreover, in both models, the diagonal vector $d$ is independent of $(X,A,Y)$ and has i.i.d.\ $N(\beta/n,2/n)$ coordinates; we denote its distribution by $\mathbb P_{n,\beta}^{\mathrm{diag}}$.

By Bayes' rule in the planted model and by construction in the original model, we have $\operatorname{Law}(X\mid A)=\mu_A$ under both augmented laws.
Given $(X,A,d)$, both models use the same Brownian observation process $Y_t=tX+B_t$, $t\geq0$.
Thus the two augmented laws have the same conditional distribution of $(X,d,Y)$ given $A$, and their likelihood ratio depends only on $A$:
\begin{equation}
  \frac{\mathrm d\mathbb P_{n,\beta}^{\mathrm{pl,aug}}}
       {\mathrm d\mathbb P_{0,n,\beta}^{\mathrm{aug}}}(X,A,d,Y)
  =\frac{\mathrm d\mathbb P_{n,\beta}^{\mathrm{pl}}}
       {\mathrm d\mathbb P_{0,n}}(A)=:L_n(A).
  \label{eq:planting-density-ratio}
\end{equation}
The following lemma gives quantitative probability comparisons between the two augmented laws.

\begin{lemma}
\label{lem:planting-likelihood}
For $0\leq\beta<1$, we have
\begin{equation*}
  \sup_n\E[A\sim\mathbb P_{0,n}]{L_n\tp{A}^2}
  \leq\tp{1-\beta^2}^{-1/2},
  \qquad
  \sup_n\E[A\sim\mathbb P_{0,n}]{L_n\tp{A}^{-1}}
  \leq\tp{1-\beta^2}^{-1/2}.
\end{equation*}
For every $n\geq1$ and every event $E$ determined by $(X,A,d,Y)$, we have
\begin{equation*}
  \mathbb P_{n,\beta}^{\mathrm{pl,aug}}\tp{E}
  \leq(1-\beta^2)^{-1/4}\mathbb P_{0,n,\beta}^{\mathrm{aug}}\tp{E}^{1/2},
\qquad
  \mathbb P_{0,n,\beta}^{\mathrm{aug}}\tp{E}
  \leq(1-\beta^2)^{-1/4}\mathbb P_{n,\beta}^{\mathrm{pl,aug}}\tp{E}^{1/2}.
\end{equation*}
\end{lemma}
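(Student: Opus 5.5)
The plan is to compute $L_n$ explicitly, bound its second moment by a Rademacher--Gaussian comparison, and reduce the event inequalities to Cauchy--Schwarz. The one estimate I cannot yet close is the bound on $\E[A\sim\mathbb P_{0,n}]{L_n(A)^{-1}}$; this is the main obstacle and is discussed in the last paragraph.

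\textbf{Explicit form of $L_n$.} Conditional on $X=x$, the planted off-diagonal entries are independent $N(\beta x_ix_j/n,1/n)$, while under $\mathbb P_{0,n}$ they are centered. The Gaussian density ratio of one entry is $\exp(\beta x_ix_jA_{ij}-\beta^2/(2n))$. Multiplying over $i<j$ and averaging over uniform $x$ gives
\begin{equation*}
 L_n(A)=\E[x]{\exp\bigl(\beta\inner{x}{Ax}/2-\beta^2(n-1)/4\bigr)}=2^{-n}Z_A(0)\,\mathrm e^{-\beta^2(n-1)/4}.
\end{equation*}

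\textbf{Second moment.} Expanding $L_n^2$ over two independent replicas $x,x'$ and integrating the Gaussian entries gives
\begin{equation*}
 \E[A\sim\mathbb P_{0,n}]{L_n(A)^2}=\E[x,x']{\exp\Bigl(\tfrac{\beta^2}{n}\sum_{i<j}x_ix_jx'_ix'_j\Bigr)}=\mathrm e^{-\beta^2/2}\,\E{\exp\bigl(\beta^2S^2/(2n)\bigr)},
\end{equation*}
where $S=\sum_i x_ix_i'$ is a sum of $n$ independent Rademacher variables. I will linearize the square with an independent $g\sim N(0,1)$, using $\mathrm e^{\lambda S^2/(2n)}=\E[g]{\mathrm e^{\sqrt{\lambda/n}\,gS}}$. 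Then $\cosh u\le \mathrm e^{u^2/2}$ gives $\E{\mathrm e^{\lambda S^2/(2n)}}\le\E[g]{\mathrm e^{\lambda g^2/2}}=(1-\lambda)^{-1/2}$ for $\lambda=\beta^2<1$. Since $\mathrm e^{-\beta^2/2}\le1$, this yields the first bound uniformly in $n$.

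\textbf{Event comparisons.} By \cref{eq:planting-density-ratio} and Cauchy--Schwarz,
\begin{equation*}
 \mathbb P^{\mathrm{pl,aug}}_{n,\beta}(E)=\E[\mathbb P^{\mathrm{aug}}_{0,n,\beta}]{L_n\1_E}\le\E[]{L_n^2}^{1/2}\,\mathbb P^{\mathrm{aug}}_{0,n,\beta}(E)^{1/2},
\end{equation*}
where the second moment is over $A\sim\mathbb P_{0,n}$, since $L_n$ depends only on $A$. Symmetrically,
\begin{equation*}
 \mathbb P^{\mathrm{aug}}_{0,n,\beta}(E)=\E[\mathbb P^{\mathrm{pl,aug}}_{n,\beta}]{L_n^{-1}\1_E}\le\E[\mathbb P^{\mathrm{pl}}_{n,\beta}]{L_n^{-2}}^{1/2}\,\mathbb P^{\mathrm{pl,aug}}_{n,\beta}(E)^{1/2},
\end{equation*}
and $\E[\mathbb P^{\mathrm{pl}}_{n,\beta}]{L_n^{-2}}=\E[\mathbb P_{0,n}]{L_n^{-1}}$. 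Both event inequalities therefore follow once the two moment bounds are in place.

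\textbf{Main obstacle: the bound on $\E[\mathbb P_{0,n}]{L_n^{-1}}$.} This requires lower-tail control of the normalized partition function $2^{-n}Z_A(0)\mathrm e^{-\beta^2(n-1)/4}$. The naive Jensen step $1/\E[x]{\mathrm e^{U_x}}\le\E[x]{\mathrm e^{-U_x}}$, with $U_x=\beta\inner{x}{Ax}/2$, loses a factor $\mathrm e^{\Theta(n)}$ and is useless. Gaussian concentration of $\log Z_A$ is also insufficient, because its gradient in $A$ has norm of order $\sqrt n$.

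The first thing I would try is the Aizenman--Lebowitz--Ruelle cycle (cluster) expansion of $\log L_n$. At $\beta<1$ this writes $\log L_n$ as $\sum_{k\ge3}\beta^k C_k/(2k)$ minus its compensator, where $C_k$ are the normalized cycle counts of $A$. These are asymptotically independent Gaussians, so $L_n$ converges to $\exp(N(-\sigma^2/2,\sigma^2))$ with $\sigma^2=-\tfrac12\log(1-\beta^2)-\beta^2/2$. The limit of $\E[]{L_n^{-1}}$ is then $\mathrm e^{\sigma^2}=(1-\beta^2)^{-1/2}\mathrm e^{-\beta^2/2}$, which is below the target.

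What remains, and where the real work lies, is a non-asymptotic version of this uniform in $n$. My plan is to control the contribution of atypically small $Z_A$ by exponential moments of the cycle counts. If that cannot be made uniform in $n$, I would instead look for a direct change-of-measure identity for $\E[\mathbb P^{\mathrm{pl}}_{n,\beta}]{L_n^{-2}}$, analogous to the replica computation in the second-moment step.
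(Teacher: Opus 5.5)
\textbf{What matches the paper.} Your explicit formula for $L_n$, the second-moment bound via Hubbard--Stratonovich and $\cosh u\le \mathrm e^{u^2/2}$, and the two Cauchy--Schwarz event comparisons (including $\E[\mathbb P^{\mathrm{pl}}_{n,\beta}]{L_n^{-2}}=\E[\mathbb P_{0,n}]{L_n^{-1}}$) all agree with the paper.

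\textbf{The gap.} The bound $\E[\mathbb P_{0,n}]{L_n^{-1}}\le(1-\beta^2)^{-1/2}$ is never established. Without it you lose the reverse event comparison, which is the direction the paper needs to transfer planted estimates back to $\mathbb P^{\mathrm{aug}}_{0,n,\beta}$. Your suggested routes do not close it.
\begin{itemize}
\item The cycle expansion gives only convergence in law of $L_n$. It says nothing for each fixed $n$, which the $\sup_n$ requires.
\item Even the limit of $\E{L_n^{-1}}$ would need uniform integrability of $L_n^{-1}$. That is exactly the lower-tail control of $Z_A(0)$ you are missing.
\item There is no replica identity for $\E[\mathbb P^{\mathrm{pl}}_{n,\beta}]{L_n^{-2}}$, because $L_n^{-1}$ is the reciprocal of a sum over configurations.
\end{itemize}

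\textbf{The missing idea.} Compare the negative moment with the positive ones through convexity, not concentration. Set $g=(\sqrt n A_{ij})_{i<j}$ and $F(g)=\log L_n(A)$. This is a log-sum-exp of affine functions of the standard Gaussian vector $g$, so it is convex and globally Lipschitz, and all its exponential moments are finite. For such $F$, \cite[Theorem~1]{Chen23} (the result the paper uses) states that $\Theta(\lambda)=\lambda^{-1}\log\E{\mathrm e^{\lambda F}}$, with $\Theta(0)=\E{F}$, is convex on $\R$. Since $\E[\mathbb P_{0,n}]{L_n}=1$, we have $\Theta(1)=0$. Monotonicity of secant slopes at $-1<0<1<2$ then gives
\[
-\Theta(-1)=\bigl(\Theta(0)-\Theta(-1)\bigr)+\bigl(\Theta(1)-\Theta(0)\bigr)\le 2\bigl(\Theta(2)-\Theta(1)\bigr)=2\Theta(2).
\]
This says $\log\E[\mathbb P_{0,n}]{L_n^{-1}}\le\log\E[\mathbb P_{0,n}]{L_n^{2}}$. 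So for every $n$, the inverse moment is bounded by the second moment you already computed.

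This argument never uses the size of the Lipschitz constant, which is of order $\sqrt n$ as you noted. So your objection to Gaussian concentration does not affect it.
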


We prove \Cref{lem:planting-likelihood} in \Cref{sec:planted-contiguity}.

\subsection{Wiener space and Gaussian estimates}
\label{subsec:prelim-wiener-space}

We collect the path comparison and Gaussian estimates used in the localization argument and the TAP estimates.
Fix $T>0$ and let $C_0([0,T];\R^n)$ be the space of continuous paths $\omega$ with $\omega_0=0$, equipped with the uniform norm $\|\omega\|_\infty=\sup_{0\leq t\leq T}\|\omega_t\|_2$.
Wiener measure $\gamma$ is the law on this space of a standard Brownian motion in $\R^n$ starting at zero.
The Cameron--Martin space $\mathbb H_T$ consists of absolutely continuous paths $h$ with $h_0=0$ and square-integrable derivative, with norm
\begin{equation*}
 \|h\|_{\mathbb H_T}^2
 =\int_0^T\left\|\frac{\mathrm d h_t}{\mathrm d t}\right\|_2^2\,\mathrm d t.
\end{equation*}
The Cameron--Martin distance between $\omega$ and $\omega'$ is $\|\omega'-\omega\|_{\mathbb H_T}$ if $\omega'-\omega\in\mathbb H_T$, and is $+\infty$ otherwise.
By Cauchy--Schwarz, every $h\in\mathbb H_T$ satisfies
\begin{equation}
 \|h\|_\infty\leq\sqrt T\|h\|_{\mathbb H_T}.
 \label{eq:cameron-martin-uniform-bound}
\end{equation}

We use Gronwall's inequality to transfer a bound on the driving paths to a bound on solutions of an integral equation.
\begin{lemma}
\label{lem:gronwall-path-stability}
Let $a,L\geq0$ and let $d:[0,T]\to[0,\infty)$ be continuous.
If $d(t)\leq a+L\int_0^t d(s)\,\mathrm d s$ for every $t\in[0,T]$, then $d(t)\leq a\exp(Lt)$.
Consequently, for a globally $L$-Lipschitz function $b:\R^n\to\R^n$, the solutions of
\begin{equation*}
 Y_t^\omega=\omega_t+\int_0^t b(Y_s^\omega)\,\mathrm d s
\end{equation*}
satisfy, for all $\omega,\omega'\in C_0([0,T];\R^n)$ and $h\in\mathbb H_T$,
\begin{equation*}
 \|Y^{\omega'}-Y^\omega\|_\infty
 \leq\exp(LT)\|\omega'-\omega\|_\infty,
 \qquad
 \|Y^{\omega+h}-Y^\omega\|_\infty
 \leq\sqrt T\exp(LT)\|h\|_{\mathbb H_T}.
\end{equation*}
\end{lemma}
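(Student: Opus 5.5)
The plan is to prove the scalar Gronwall inequality first and then apply it twice, once for each pathwise comparison. For the scalar statement, set $\Psi(t)=a+L\int_0^t d(s)\,\mathrm d s$. Since $d$ is continuous, $\Psi$ is $C^1$ with $\Psi'(t)=L\,d(t)\le L\,\Psi(t)$, using the hypothesis $d\le\Psi$. Then
\[
\frac{\mathrm d}{\mathrm dt}\bigl(\e^{-Lt}\Psi(t)\bigr)=\e^{-Lt}\bigl(\Psi'(t)-L\Psi(t)\bigr)\le0,
\]
so $\e^{-Lt}\Psi(t)\le\Psi(0)=a$. Hence $d(t)\le\Psi(t)\le a\exp(Lt)$. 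The degenerate case $L=0$ is immediate.

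For the path comparison, take $\omega,\omega'\in C_0([0,T];\R^n)$. The solutions exist, are unique and are continuous by the standard Picard iteration, since $b$ is globally Lipschitz; I will take this as given. Set $d(t)=\|Y_t^{\omega'}-Y_t^\omega\|_2$, which is continuous in $t$. Subtracting the two integral equations, applying the triangle inequality, and using the Lipschitz bound on $b$ gives
\[
d(t)\le\|\omega'_t-\omega_t\|_2+\int_0^t\|b(Y_s^{\omega'})-b(Y_s^\omega)\|_2\,\mathrm ds\le\|\omega'-\omega\|_\infty+L\int_0^t d(s)\,\mathrm ds.
\]
The scalar inequality with $a=\|\omega'-\omega\|_\infty$ then yields $d(t)\le\exp(Lt)\|\omega'-\omega\|_\infty\le\exp(LT)\|\omega'-\omega\|_\infty$. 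Taking the supremum over $t\in[0,T]$ proves the first bound.

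For the second bound, take $\omega'=\omega+h$ with $h\in\mathbb H_T$, which lies in $C_0([0,T];\R^n)$ because $h$ is absolutely continuous with $h_0=0$. The first bound gives $\exp(LT)\|h\|_\infty$. Then \eqref{eq:cameron-martin-uniform-bound}, which follows from Cauchy--Schwarz via $\|h_t\|_2\le\int_0^t\|\dot h_s\|_2\,\mathrm ds\le\sqrt t\,\|h\|_{\mathbb H_T}$, gives the factor $\sqrt T$.

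No step is a real obstacle. The only point needing care is that $d$ must be continuous so that $\Psi$ is differentiable. This holds because the solutions are continuous in $t$; alternatively, the integrated form of the Gronwall argument needs only measurability and boundedness of $d$.
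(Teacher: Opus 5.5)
Your proof is correct and follows the paper's argument: the same auxiliary function $a+L\int_0^t d(s)\,\mathrm ds$ with $\Psi'\le L\Psi$ for the scalar inequality, and then the Cameron--Martin uniform bound for the $\sqrt T$ factor. The only difference is that you apply Gronwall to the pointwise distance $\|Y_t^{\omega'}-Y_t^\omega\|_2$ rather than the running supremum $\sup_{s\le t}\|Y_s^{\omega'}-Y_s^\omega\|_2$ used in the paper. This does not matter, since you take the supremum over $t$ at the end anyway.
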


\begin{proof}
Set $g(t)=a+L\int_0^t d(s)\,\mathrm d s$.
Since $g'(t)=Ld(t)\leq Lg(t)$ and $g(0)=a$, integration gives $d(t)\leq g(t)\leq a\exp(Lt)$.
For the path comparison, subtract the two integral equations and apply this bound to $d(t)=\sup_{s\leq t}\|Y_s^{\omega'}-Y_s^\omega\|_2$ with $a=\|\omega'-\omega\|_\infty$.
The second estimate follows from \cref{eq:cameron-martin-uniform-bound}.
\end{proof}

For a real random variable $Z$, we define its quantile function by $Q_Z(v)=\inf\{a\in\R:\Pr{Z\leq a}\geq v\}$ for $0<v<1$.
The events $\{Z\leq Q_Z(v)\}$ and $\{Z\geq Q_Z(1-v)\}$ each have probability at least $v$.

The following consequence of Gaussian isoperimetry bounds the distance between two sets of positive Wiener measure.
\begin{lemma}[{\cite[Theorem~3.1]{Borell75}}]
\label{lem:app-wpi-two-set-gaussian}
Let $0<v\leq1/2$ and let $A_-,A_+$ be Borel subsets of $C_0([0,T];\R^n)$ with $\gamma(A_-),\gamma(A_+)>v/4$.
There are $\omega\in A_-$ and $h\in\mathbb H_T$ such that $\omega+h\in A_+$ and $\|h\|_{\mathbb H_T}\leq2\sqrt{2\log(4/v)}$.
In particular, for any Borel function $F:C_0([0,T];\R^n)\to\R$, writing $Q_F$ for its quantile function under $\gamma$, there are $\omega$ and $h\in\mathbb H_T$ such that
\begin{equation*}
 F(\omega)\leq Q_F(v),\qquad
 F(\omega+h)\geq Q_F(1-v),\qquad
 \|h\|_{\mathbb H_T}\leq2\sqrt{2\log(4/v)}.
\end{equation*}
\end{lemma}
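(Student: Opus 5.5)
This is the Gaussian isoperimetric inequality of Borell and Sudakov--Tsirel'son, applied on the abstract Wiener space $(C_0([0,T];\R^n),\mathbb H_T,\gamma)$. The plan is to enlarge one set in the Cameron--Martin directions until its measure passes $1/2$, do the same for the other, and then intersect.

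For a Borel set $A$ and $r\geq0$, write $A+r\mathcal B$ with $\mathcal B=\{h\in\mathbb H_T:\|h\|_{\mathbb H_T}\leq1\}$. Borell's inequality \cite[Theorem~3.1]{Borell75} states that
\begin{equation*}
 \gamma_*(A+r\mathcal B)\geq\Phi\bigl(\Phi^{-1}(\gamma(A))+r\bigr),
\end{equation*}
with inner measure $\gamma_*$ to avoid measurability issues. Alternatively, one may note that $\omega\mapsto\inf\{\|h\|_{\mathbb H_T}:\omega-h\in A\}$ is $1$-Lipschitz in Cameron--Martin directions and apply Gaussian concentration.

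\textbf{Choice of radius.} Take $r=\Phi^{-1}(1-v/4)$. Since $\gamma(A_\pm)>v/4$, we have $\Phi^{-1}(\gamma(A_\pm))>-r$, so both enlargements satisfy $\gamma_*(A_\pm+r\mathcal B)>1/2$. Two sets of inner measure strictly greater than $1/2$ must intersect, so we can pick a point $\omega_0\in(A_-+r\mathcal B)\cap(A_++r\mathcal B)$. This gives $\omega\in A_-$ and $\omega'\in A_+$ with $\omega_0-\omega\in r\mathcal B$ and $\omega'-\omega_0\in r\mathcal B$. Setting $h=\omega'-\omega\in\mathbb H_T$ yields $\|h\|_{\mathbb H_T}\leq2r$.

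\textbf{The numerical bound.} The Gaussian tail bound $1-\Phi(u)\leq e^{-u^2/2}$ gives $\Phi^{-1}(1-v/4)\leq\sqrt{2\log(4/v)}$, so $\|h\|_{\mathbb H_T}\leq2\sqrt{2\log(4/v)}$ as claimed.

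\textbf{The quantile statement.} Take $A_-=\{F\leq Q_F(v)\}$ and $A_+=\{F\geq Q_F(1-v)\}$. These are Borel because $F$ is Borel. By the quantile remark preceding the lemma, each has measure at least $v>v/4$, so the first part applies directly.

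\textbf{Main obstacle.} The only delicate point is measurability of the enlargements $A+r\mathcal B$ in the infinite-dimensional path space. I would handle this either through inner measure as above, or by reducing to finite-dimensional Gaussian projections: approximate $A_\pm$ from inside by compact sets (Wiener measure is Radon), and use that $K+r\mathcal B$ is closed, hence Borel, for compact $K$, because $\mathcal B$ is compact in $C_0$.
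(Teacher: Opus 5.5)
Your proof is correct and is essentially the paper's argument: both apply Borell's inequality to Cameron--Martin enlargements in inner Wiener measure, and then use the same quantile sets $\{F\le Q_F(v)\}$ and $\{F\ge Q_F(1-v)\}$. The only difference is cosmetic. The paper enlarges $A_-$ alone by $2\sqrt{2\log(4/v)}$, reaching inner measure at least $\Phi(\sqrt{2\log(4/v)})>1-v/4$ so that it must meet $A_+$, while you enlarge both sets by $\Phi^{-1}(1-v/4)$ past $1/2$ and meet in the middle via the triangle inequality in $\mathbb H_T$, which gives the same radius. (Your aside about ``Gaussian concentration'' would reproduce this constant only in the sharp median form with tail $\Phi(-t)$, which is isoperimetry itself, not with the usual $e^{-t^2/2}$ tail around the mean.)
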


\begin{proof}
Let $\Phi$ be the standard Gaussian distribution function and set $r=\sqrt{2\log(4/v)}$, so that $\Phi(-r)<v/4$.
By \cite[Theorem~3.1]{Borell75}, the set $\{\omega+h:\omega\in A_-,\ h\in\mathbb H_T,\ \|h\|_{\mathbb H_T}\leq2r\}$ has inner Wiener measure at least $\Phi(r)>1-v/4$.
It therefore intersects $A_+$, which gives the first assertion.
For the quantile assertion, we take $A_-=\{F\leq Q_F(v)\}$ and $A_+=\{F\geq Q_F(1-v)\}$, each of Wiener measure at least $v$.
\end{proof}

Finally, we use the Burkholder--Davis--Gundy inequality to control stochastic integrals in the cavity residual estimates.
\begin{lemma}[{\cite[Theorem~1.1]{MarinelliRockner16}}]
\label{lem:bdg-stochastic-integral}
Let $p\geq2$, let $(B_t)_{0\leq t\leq T}$ be a standard Brownian motion in $\R^n$ with respect to a filtration $(\mathcal F_t)$, and let $G_t\in\R^{k\times n}$ be predictable with respect to this filtration.
If $\E{(\int_0^T\|G_t\|_{\mathrm F}^2\,\mathrm d t)^{p/2}}<\infty$, then
\begin{equation*}
 \left(\E{\left\|\int_0^T G_t\,\mathrm d B_t\right\|_2^p}\right)^{1/p}
 \leq C_p\left(\E{\left(\int_0^T\|G_t\|_{\mathrm F}^2\,\mathrm d t\right)^{p/2}}\right)^{1/p},
\end{equation*}
where $C_p$ depends only on $p$.
\end{lemma}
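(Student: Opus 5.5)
This is the Burkholder--Davis--Gundy inequality in the vector-valued form of \cite{MarinelliRockner16}. The plan is to reduce it to the scalar BDG inequality for a real continuous local martingale, via a convexity argument on $\|\cdot\|_2^p$.

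First I localize. Let $\tau_N=\inf\{t:\int_0^t\|G_s\|_{\mathrm F}^2\,\mathrm ds\geq N\}\wedge T$. The stopped integral $M_t=\int_0^{t\wedge\tau_N}G_s\,\mathrm dB_s$ is then an $L^2$-bounded continuous $\R^k$-valued martingale with quadratic variation trace $\langle M\rangle_t=\int_0^{t\wedge\tau_N}\|G_s\|_{\mathrm F}^2\,\mathrm ds$.

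Next I treat $N_t=\|M_t\|_2^2$. Itô's formula gives
\begin{equation*}
 \mathrm d\|M_t\|_2^2=2\langle M_t,G_t\,\mathrm dB_t\rangle+\|G_t\|_{\mathrm F}^2\,\mathrm dt .
\end{equation*}
Rather than working with $N$ directly, I apply Itô to $f(M_t)=\|M_t\|_2^p$, which is $C^2$ for $p\geq2$. Its Hessian satisfies $\|\nabla^2 f(x)\|_{\mathrm{op}}\leq p(p-1)\|x\|_2^{p-2}$. Taking expectations, the martingale part vanishes thanks to the localization and the finite moments, and we get
\begin{equation*}
 \E{\|M_T\|_2^p}\leq \tfrac{p(p-1)}{2}\,\E{\int_0^T\|M_s\|_2^{p-2}\|G_s\|_{\mathrm F}^2\,\mathrm ds}\leq \tfrac{p(p-1)}{2}\,\E{(M^*_T)^{p-2}\langle M\rangle_T},
\end{equation*}
where $M^*_T=\sup_{t\leq T}\|M_t\|_2$. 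The submartingale $\|M_t\|_2$ satisfies Doob's inequality, so $\E{(M^*_T)^p}\leq (p/(p-1))^p\E{\|M_T\|_2^p}$. Applying Hölder with exponents $p/(p-2)$ and $p/2$ to the right-hand side then gives
\begin{equation*}
 \E{(M^*_T)^p}\leq c_p\,\E{(M^*_T)^p}^{(p-2)/p}\,\E{\langle M\rangle_T^{p/2}}^{2/p}.
\end{equation*}
Dividing by the finite quantity $\E{(M^*_T)^p}^{(p-2)/p}$ (it is finite because of the localization and $\langle M\rangle\leq N$) yields $\E{(M^*_T)^p}\leq c_p^{p/2}\E{\langle M\rangle_T^{p/2}}$. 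For $p=2$ this is just the Itô isometry.

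Finally I let $N\to\infty$. The hypothesis makes $\int_0^T\|G_t\|_{\mathrm F}^2\,\mathrm dt$ a.s. finite, so $\tau_N\to T$ and $M_T\to\int_0^TG\,\mathrm dB$ in probability. Fatou's lemma then transfers the bound, and the right-hand side increases monotonically to the stated quantity.

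The one delicate step is making sure that the division is legitimate and that the local-martingale term has zero expectation. Both are handled by the stopping times $\tau_N$.
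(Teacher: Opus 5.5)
Your proof is correct, but it takes a different route from the paper, which does not prove this inequality itself. The paper applies the upper bound of the vector-valued Burkholder--Davis--Gundy inequality \cite[Theorem~1.1]{MarinelliRockner16} to the continuous $\R^k$-valued martingale $M_t=\int_0^tG_s\,\mathrm dB_s$. It notes that the scalar quadratic variation is $\int_0^T\|G_t\|_{\mathrm F}^2\,\mathrm dt$ and uses $\|M_T\|_2\le\sup_{t\le T}\|M_t\|_2$.

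You instead reprove the $p\ge2$ upper bound directly by the classical argument: It\^o's formula for the $C^2$ function $\|x\|_2^p$, Doob's maximal inequality for the submartingale $\|M_t\|_2$, and H\"older.

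\begin{itemize}
\item The citation gives a general BDG inequality, not tied to continuity or to $p\ge2$, at no cost.
\item Your route is self-contained and yields the explicit constant $C_p=\bigl((p/(p-1))^p\,p(p-1)/2\bigr)^{1/2}$.
\item It also makes visible the property the paper actually needs later: $C_p$ is independent of $k$ and $n$, since the lemma is applied in dimension $n-1$ with $n\to\infty$. Each ingredient you use is dimension-free: the Hessian bound $p(p-1)\|x\|_2^{p-2}$, the trace inequality $\operatorname{Tr}(\nabla^2 f\,GG^\top)\le\|\nabla^2 f\|_{\mathrm{op}}\|G\|_{\mathrm F}^2$ for the It\^o correction, and Doob's inequality.
\end{itemize}

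One justification should be tightened. Your $\tau_N$ caps only the quadratic variation. Two steps therefore do not follow from the localization alone:
\begin{itemize}
\item the finiteness of $\mathbb E[(M_T^*)^p]$, which you need in order to divide;
\item the true-martingale property of the term $\int p\|M_s\|_2^{p-2}\langle M_s,G_s\,\mathrm dB_s\rangle$.
\end{itemize}
Both are true, because bounded quadratic variation gives sub-Gaussian tails via the exponential martingale. But that is an extra fact you did not invoke.

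It is simpler to also stop when $\|M_t\|_2\ge N$. Then everything is bounded and both steps are immediate. The limit $N\to\infty$ is unaffected, because $M$ has continuous paths and so $\sup_{t\le T}\|M_t\|_2<\infty$ almost surely. Separately, you use the letter $N$ both for the stopping level and for $\|M_t\|_2^2$; rename one of them.
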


\begin{proof}
We apply the upper bound in \cite[Theorem~1.1]{MarinelliRockner16} to the $\R^k$-valued continuous martingale $M_t=\int_0^tG_s\,\mathrm dB_s$.
Its scalar quadratic variation is $[M,M]_T=\int_0^T\|G_t\|_{\mathrm F}^2\,\mathrm d t$, and $\|M_T\|_2\leq\sup_{t\leq T}\|M_t\|_2$, giving the stated bound with a constant depending only on $p$.
\end{proof}

\subsection{Stochastic localization}
\label{subsec:prelim-stochastic-localization}

For a fixed $A$ and a fixed constant $T>0$, we use the Wiener space and measure $\gamma$ from \Cref{subsec:prelim-wiener-space}.
For every $\omega\in C_0([0,T];\R^n)$, let $Y^\omega$ be defined by
\begin{equation}
 Y_t^\omega=\omega_t+\int_0^t m(A,Y_s^\omega)\,\mathrm d s,
 \qquad 0\leq t\leq T.
 \label{eq:localization-path}
\end{equation}

Differentiating the Gibbs weights with $A$ fixed gives $\partial m(A,y)/\partial y=\Sigma(A,y)$.
Since $\Sigma(A,y)$ is positive semidefinite, we have
$
 \|\Sigma(A,y)\|_{\mathrm{op}}
 \leq \operatorname{Tr}\Sigma(A,y)
 =\sum_{i=1}^n\bigl(1-m_i(A,y)^2\bigr)\leq n$.
Thus $m(A,\cdot)$ is globally $n$-Lipschitz, so \cref{eq:localization-path} has a unique solution for every continuous driving path $\omega$.
Applying \Cref{lem:gronwall-path-stability} with $L=n$ shows that $\omega\mapsto Y^\omega$ is continuous in the uniform norm, hence Borel measurable.

\begin{lemma}[{\cite[Section~III]{ElAlaouiMontanari22}}]
\label{lem:app-wpi-observation-law}
For a fixed $A$, let $X\sim\mu_A$, and let $(B_t)_{0\leq t\leq T}$ be a standard Brownian motion in $\R^n$ with $B_0=0$, independent of $X$; set $Y_t=tX+B_t$.
Let $(\mathcal F_t)$ be the usual augmentation of the filtration $\sigma((Y_s)_{0\leq s\leq t})$.
\begin{enumerate}[label=\textup{(\roman*)}]
\item For every $t\in[0,T]$, we have
\begin{equation*}
 \operatorname{Law}(X\mid\mathcal F_t)
 =\operatorname{Law}(X\mid Y_t)=\mu_{A,Y_t}.
\end{equation*}
In particular, $m(A,Y_t)=\E{X\mid\mathcal F_t}$ is a bounded martingale.
\item The innovation process $\widetilde B_t=Y_t-\int_0^t m(A,Y_s)\,\mathrm d s$ is a standard Brownian motion with respect to $(\mathcal F_t)$, and the field satisfies
\begin{equation*}
 \mathrm dY_t=m(A,Y_t)\,\mathrm d t+\mathrm d\widetilde B_t.
\end{equation*}
The posterior mean satisfies
\begin{equation*}
 \mathrm d m(A,Y_t)=\Sigma(A,Y_t)\,\mathrm d\widetilde B_t.
\end{equation*}
\end{enumerate}
Consequently, $Y$ has the same path distribution as $Y^\omega$ under $\omega\sim\gamma$.
All conclusions also hold for any full-support distribution $\mu$ on $\{\pm1\}^n$, replacing $\mu_A$ by $\mu$, $\mu_{A,y}$ by the normalized exponential tilt $\mu_y(x)\propto\mu(x)\exp(\inner{y}{x})$, and $m(A,y)$ and $\Sigma(A,y)$ by the mean and covariance of $\mu_y$, with the same replacement in the drift of \cref{eq:localization-path}.
\end{lemma}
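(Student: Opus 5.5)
The plan is to prove the lemma in three steps: Bayes' rule for part (i), the nonlinear filtering (innovation) theorem for part (ii), and pathwise uniqueness to identify the laws.

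\textbf{Part (i).} Fix $t$. Given $X=x$, the path $(Y_s)_{s\le t}$ is Brownian motion with drift $sx$. By Girsanov, its density with respect to Wiener measure on $C([0,t];\R^n)$ is
\begin{equation*}
\exp\bigl(\inner{x}{Y_t}-t\|x\|_2^2/2\bigr).
\end{equation*}
Because $x\in\{\pm1\}^n$, we have $\|x\|_2^2=n$, so the second factor does not depend on $x$. Hence the density depends on the path only through $Y_t$. Bayes' rule then gives
\begin{equation*}
\Pr{X=x\mid (Y_s)_{s\le t}}\propto\mu_A(x)\exp(\inner{x}{Y_t}),
\end{equation*}
which is $\mu_{A,Y_t}$ and equals $\operatorname{Law}(X\mid Y_t)$. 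Passing to the usual augmentation only adds null sets, so the conditional law is unchanged. It follows that $m(A,Y_t)=\E{X\mid\mathcal F_t}$ is a bounded Doob martingale.

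\textbf{Part (ii).} I would write
\begin{equation*}
\widetilde B_t=B_t+\int_0^t\bigl(X-m(A,Y_s)\bigr)\,\mathrm ds,
\end{equation*}
which is continuous and $(\mathcal F_t)$-adapted. Next I would check that it is an $(\mathcal F_t)$-martingale. For $s<t$ and bounded $\mathcal F_s$-measurable $Z$, the increment $B_t-B_s$ is independent of $\sigma(X,\mathcal F_s)$, so it has conditional mean zero. For the drift, the tower property gives $\E{(X-m(A,Y_u))Z}=0$ for every $u\ge s$, since $Z$ is $\mathcal F_u$-measurable. So the integrated drift also has conditional mean zero, and $\widetilde B$ is a martingale. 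Its quadratic variation equals that of $B$, namely $tI$, so Lévy's characterization shows $\widetilde B$ is a standard $(\mathcal F_t)$-Brownian motion. The equation $\mathrm dY_t=m(A,Y_t)\,\mathrm dt+\mathrm d\widetilde B_t$ is then just this definition rearranged.

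For $\mathrm dm$, the map $y\mapsto m(A,y)$ is smooth, with gradient $\Sigma(A,y)$ and Hessian given by third cumulants. I would apply Itô's formula to $m(A,Y_t)$. The $\mathrm dt$ terms are $\Sigma m\,\mathrm dt$ from the drift and $\tfrac12\sum_j\partial_{y_j}^2m\,\mathrm dt$ from the Itô correction. Since $m$ is already known to be a continuous martingale, its finite-variation part must vanish, so the drift terms cancel. Alternatively, one can verify directly that $\tfrac12\Delta_ye^{\inner{x}{y}}=\tfrac n2e^{\inner{x}{y}}$ and that this cancels against the drift. Either way, $\mathrm dm=\Sigma\,\mathrm d\widetilde B$.

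\textbf{Law identification.} By part (ii), $Y$ solves $Y_t=\widetilde B_t+\int_0^tm(A,Y_s)\,\mathrm ds$ with $\widetilde B$ a standard Brownian motion. The drift is globally $n$-Lipschitz, so the solution map $\omega\mapsto Y^\omega$ from \cref{eq:localization-path} is a well-defined measurable function, and pathwise uniqueness gives $Y=Y^{\widetilde B}$ almost surely. Since $\widetilde B\sim\gamma$, the law of $Y$ is the pushforward of $\gamma$ under this map, which is the law of $Y^\omega$ with $\omega\sim\gamma$. For a general full-support $\mu$ the argument is identical: part (i) only used $\|x\|_2^2=n$ on the hypercube, and the rest used only measurability and the Lipschitz bound $\operatorname{Tr}\Sigma\le n$.

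\textbf{Main obstacle.} The delicate step is the martingale verification of $\widetilde B$ with respect to the augmented filtration. I would handle it with the conditional-expectation argument above, using Fubini to interchange the time integral and the conditional expectation.
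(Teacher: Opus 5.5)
Your proposal is correct and follows the paper's route: the Cameron--Martin likelihood $\exp(\inner{x}{Y_t}-tn/2)$ with Bayes' rule for (i), the innovation process for (ii), and pathwise uniqueness for the $n$-Lipschitz equation \eqref{eq:localization-path} to get $Y=Y^{\widetilde B}$; the only difference is that you prove what the paper cites from El Alaoui--Montanari, using L\'evy's characterization for $\widetilde B$, and It\^o's formula plus the vanishing finite-variation part of the martingale $m(A,Y_t)$ (equivalently $\tfrac12\Delta_y m=-\Sigma m$, from $\Delta_y\log Z_A(y)=n-\|m\|_2^2$) instead of summing the posterior-weight SDE against $x$. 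One small repair: the usual augmentation also takes right limits, not only null sets, but with $\mathcal F^0_t=\sigma(Y_s:s\le t)$, L\'evy's downward theorem and continuity of $y\mapsto\mu_{A,y}$ give $\mathbb P(X=x\mid\mathcal F^0_{t+})=\lim_{s\downarrow t}\mu_{A,Y_s}(x)=\mu_{A,Y_t}(x)$, and the independence of $B_t-B_s$ from $\sigma(X,\mathcal F_s)$ in (ii) likewise needs a right-continuity argument, e.g.\ testing $B_t-B_{s+\epsilon}$ against $Z$ and letting $\epsilon\downarrow0$.
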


\begin{proof}
We apply \cite[Section~III]{ElAlaouiMontanari22} with $Q=I_n$: given $X=x$, the observed-path likelihood relative to Wiener measure is $\exp(\inner{x}{Y_t}-t\|x\|_2^2/2)$, and the innovation representation makes $\widetilde B$ Brownian in $(\mathcal F_t)$.
Since $\|x\|_2^2=n$, Bayes' rule proves part~\textup{(i)}, and the definition of $\widetilde B$ gives the field SDE.
Summing the posterior-weight SDE in the same reference against $x$ gives the mean SDE in part~\textup{(ii)}.
Finally, uniqueness in \cref{eq:localization-path} gives $Y=Y^{\widetilde B}$, so the Brownian distribution of $\widetilde B$ proves the equality of path distributions.
The argument and the covariance bound above apply to every full-support base distribution on $\{\pm1\}^n$.
\end{proof}

Under either augmented law, we have $A^{\mathrm{full}}=A+\operatorname{diag}(d)$, where $d$ is independent of $(X,A,Y)$.
Thus observing the diagonal does not change the posterior distribution of $X$, and by \Cref{lem:app-wpi-observation-law} we obtain, for every fixed $t\geq0$,
\begin{equation}
 \operatorname{Law}(X\mid A^{\mathrm{full}},Y_t)
 =\operatorname{Law}(X\mid A,Y_t)=\mu_{A,Y_t}.
 \label{eq:planted-posterior}
\end{equation}

\subsection{Scalar overlap, TAP free energy, and AMP}
\label{subsec:prelim-tap-amp}

Under $\mathbb P_{n,\beta}^{\mathrm{pl,aug}}$, for a fixed $t\geq0$, we draw $\sigma^1,\sigma^2$ independently from $\mu_{A,Y_t}$ given $(A,Y_t)$ and define their overlap $R_{12}=n^{-1}\inner{\sigma^1}{\sigma^2}$.

For $t\geq0$, we define $q_\beta(t)$ as the unique solution in $[0,1]$ of the following fixed-point equation:
\begin{equation}
 q=f(\beta^2q+t),\qquad
 f(a)=\E[G\sim N(0,1)]{\tanh^2(a+\sqrt a\,G)}.
 \label{eq:compact-rs-fixed-point}
\end{equation}
{For $a\geq0$, the scalar Gaussian channel} $Y=aX+\sqrt a\,G$, with independent $X\sim\operatorname{Unif}(\{\pm1\})$, $G\sim N(0,1)$, has posterior mean $\tanh Y$.
Hence $f(a)=\E{\tanh^2(Y)}$ is the mean overlap of two conditionally independent posterior replicas; in \cref{eq:compact-rs-fixed-point}, we set $a=\beta^2q+t$.

\begin{lemma}[{\cite[Lemmas~5.5--5.6]{ElAlaouiMontanariSellke25}}]
\label{lem:q-beta-regularity}
For every $0\leq\beta<1$ and $t\geq0$, the function $q\mapsto f(\beta^2q+t)$ takes $[0,1]$ into itself and is $\beta^2$-Lipschitz.
Hence \cref{eq:compact-rs-fixed-point} has a unique solution $q_\beta(t)\in[0,1]$.
Moreover, $q_\beta(0)=0$ and, for every $s,t\geq0$,
\begin{equation*}
 \abs{q_\beta(t)-q_\beta(s)}\leq\frac{\abs{t-s}}{1-\beta^2}.
\end{equation*}
In particular, $q_\beta(t)\leq t/(1-\beta^2)$.
\end{lemma}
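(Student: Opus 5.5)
The statement to prove is \Cref{lem:q-beta-regularity}: the map $g_t(q)=f(\beta^2q+t)$ sends $[0,1]$ into itself and is $\beta^2$-Lipschitz, $q_\beta(0)=0$, and $q_\beta$ is $(1-\beta^2)^{-1}$-Lipschitz in $t$. The plan is to derive everything from two facts about the scalar channel function $f$. The first is that $f$ maps $[0,\infty)$ into $[0,1]$. The second is that $0\le f'(a)\le 1$ for all $a\ge 0$.

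The self-map property is immediate, since $\tanh^2\in[0,1]$.

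For the derivative bound, use the channel representation $f(a)=\E{\tanh^2(Y)}=\E{X\tanh Y}$, the latter by the Nishimori identity, and write $\operatorname{mmse}(a)=1-f(a)$. The I-MMSE relation for the Gaussian channel with signal-to-noise ratio $a$ gives
\[
f'(a)=\E{(1-\tanh^2Y)^2},
\]
which is the standard fact $\frac{\mathrm d}{\mathrm da}\operatorname{mmse}(a)=-\E{\Var{X\mid Y}^2}$. Alternatively, one can differentiate $\E{\tanh(a+\sqrt a G)}$ directly using Gaussian integration by parts. Either way the formula shows $f'(a)\in[0,1]$.

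The chain rule then gives $|\partial_q g_t(q)|=\beta^2 f'(\beta^2q+t)\le\beta^2<1$. Since $g_t$ is a contraction of $[0,1]$, Banach's fixed point theorem gives existence and uniqueness of $q_\beta(t)$.

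At $t=0$ the point $q=0$ is a fixed point, because $f(0)=\E{\tanh^2(0)}=0$. By uniqueness, $q_\beta(0)=0$.

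For the Lipschitz bound in $t$, write $q=q_\beta(t)$ and $q'=q_\beta(s)$. Then
\[
|q-q'|=|f(\beta^2q+t)-f(\beta^2q'+s)|\le\beta^2|q-q'|+|t-s|,
\]
using $f'\le1$. Rearranging gives $|q-q'|\le|t-s|/(1-\beta^2)$. Taking $s=0$ yields $q_\beta(t)\le t/(1-\beta^2)$.

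The only step requiring real care is the derivative identity $f'(a)=\E{(1-\tanh^2Y)^2}$. Two points need attention:

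- Differentiation under the expectation must be justified at $a=0$, where $\sqrt a$ is not differentiable. Compute $f'$ for $a>0$, then use continuity of $f$ together with the bound $f'\le1$ on $(0,\infty)$; the Lipschitz estimate extends to $[0,\infty)$ because $f$ is continuous there.
- The Nishimori step requires the symmetry of $X$, so that the law of $Y$ matches the channel law.

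Everything else is elementary.
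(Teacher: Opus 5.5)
Your proof is correct and follows the paper's argument. The paper imports the contraction and uniqueness statements, $q_\beta(0)=0$, and the $1$-Lipschitz property of $f$ from Lemmas~5.5--5.6 of El Alaoui--Montanari--Sellke, then subtracts the two fixed-point equations exactly as you do; the only difference is that you prove the cited ingredient yourself, deriving $0\le f'(a)=\mathbb{E}[(1-\tanh^2 Y)^2]\le 1$ via the MMSE derivative formula or Gaussian integration by parts, and that derivation is valid, including your continuity argument at $a=0$.
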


\begin{proof}
For the SK choice $\xi(q)=q^2/2$, \cite[Remark~2.3]{ElAlaouiMontanariSellke25} gives $\beta_1=1$, and its scalar function $\psi$ equals $f$.
Thus \cite[Lemmas~5.5--5.6]{ElAlaouiMontanariSellke25} gives the contraction and uniqueness statements, $q_\beta(0)=0$, and the $1$-Lipschitz property of $f$.
Subtracting the fixed-point equations at $s$ and $t$, we obtain
\begin{equation*}
 \abs{q_\beta(t)-q_\beta(s)}\leq\beta^2\abs{q_\beta(t)-q_\beta(s)}+\abs{t-s}.
\end{equation*}
We rearrange this inequality to obtain the Lipschitz bound, then take $s=0$ to obtain the final bound.
\end{proof}

{For $u\in[-1,1]^n$, we call $Q_n(u)\defeq\|u\|_2^2/n$ the \emph{empirical overlap}; at the posterior mean $u=m(A,Y_t)$, it equals $\langle R_{12}\rangle$.
In the TAP expressions below, a \emph{prescribed overlap} is a scalar $q\in[0,1]$ held fixed as $u$ varies, and we take $q=q_\beta(t)$ at observation time $t$.}

For $u\in(-1,1)^n\subset\mathbb R^n$, with $\operatorname{arctanh}(u)$ applied coordinatewise, define the prescribed-overlap TAP residual
\begin{equation*}
 R_{\mathrm{TAP}}(u;A^{\mathrm{full}},y,q)
 =\operatorname{arctanh}(u)-y-\beta A^{\mathrm{full}}u
   +\beta^2(1-q)u.
\end{equation*}
The iterates start from $\widehat m_{-1}=0$, $z_0=0$ and satisfy
\begin{equation*}
 \widehat m_k=\tanh(z_k),\qquad
 z_{k+1}=\beta A^{\mathrm{full}}\widehat m_k+y
  -\beta^2\left(1-\|\widehat m_k\|_2^2/n\right)\widehat m_{k-1}.
\end{equation*}
For $u\in[-1,1]$, define $h_{\mathrm{bin}}(u)=-(1+u)\log((1+u)/2)/2-(1-u)\log((1-u)/2)/2$ with the convention $0\log0=0$.
The prescribed-overlap TAP free energy, with $q$ held fixed, is the following function on $[-1,1]^n$:
\begin{equation*}
 F_{\mathrm{TAP}}(u;A^{\mathrm{full}},y,q)
 =-\sum_i h_{\mathrm{bin}}(u_i)
  -\frac\beta2\inner{u}{A^{\mathrm{full}}u}
  -\inner{y}{u}+\frac{\beta^2}{2}(1-q)\|u\|_2^2.
\end{equation*}
For $u\in(-1,1)^n$, we have $\nabla_uF_{\mathrm{TAP}}=R_{\mathrm{TAP}}$.

\begingroup

\section{TAP residual bounds}
\label{sec:tap-residual-bounds}

In this section, we prove \Cref{thm:polynomial-tap-residual}, which gives an unnormalized $O(\log n)$ squared-residual bound with failure probability at most $n^{-H}$ for every fixed $H>0$.
Throughout this section and \Cref{sec:omitted-tap-residual-proofs}, we fix $0\leq\beta_*<1$ as a common upper bound on inverse temperature.
We fix $n$ and $0\leq\beta\leq\beta_*$, and set $\ell=\beta^2/n$.
At a fixed time under the planted model of \Cref{def:tap-planted-model}, we set $J=\beta A^{\mathrm{full}}$ and $y=Y_t$.
Since the diagonal of $A^{\mathrm{full}}$ does not affect the Gibbs distribution, we use $\mu_{J,y}$ as shorthand for $\mu_{A,y}$.
We abbreviate $\mu=\mu_{J,y}$, $m=m(A,y)$, and $\Sigma=\Sigma(A,y)$.

Recall that, for a scalar $q$,
$ R_{\mathrm{TAP}}(u;A^{\mathrm{full}},y,q)
 =\operatorname{arctanh}(u)-y-\beta A^{\mathrm{full}}u+\beta^2(1-q)u$.
Set $Q_n(u)=\norm u_2^2/n$.
{We recall that $q_\beta(t)$ is the deterministic fixed point in \cref{eq:compact-rs-fixed-point}, while $Q_n(m)=\langle R_{12}\rangle$ depends on the observations $(A,Y_t)$.}
We call $Q_n(m)$ the empirical overlap and $q_\beta(t)$ the prescribed overlap, and define
\begin{equation}
 \mathbf r_t^{(Q)}=R_{\mathrm{TAP}}(m;A^{\mathrm{full}},y,Q_n(m)),\qquad
 \mathbf r_t^{(q)}=R_{\mathrm{TAP}}(m;A^{\mathrm{full}},y,q_\beta(t)).
 \label{eq:tap-prescribed-residual}
\end{equation}
{We call $\mathbf r_t^{(Q)}$ the empirical residual and $\mathbf r_t^{(q)}$ the prescribed residual, according to the overlap used in their Onsager coefficients.}
At fixed time we write $\mathbf r=\mathbf r_t^{(Q)}$.
Although \Cref{thm:polynomial-tap-residual} concerns $\mathbf r_t^{(q)}$, we first prove the tail bound for $\mathbf r_t^{(Q)}$
since the two residuals are related by the exact relation
$ \mathbf r_t^{(q)}-\mathbf r_t^{(Q)} =\beta^2\bigl(Q_n(m)-q_\beta(t)\bigr)m$
and the overlap concentration in \Cref{lem:tap-overlap-mgf} shows that this difference is small with high probability.

Unless otherwise indicated, $\E{\cdot}$ and $L^p$ refer to the outer planted law of \Cref{def:tap-planted-model}, while $\langle\cdot\rangle$ averages posterior replicas given the observations; when an integrand contains $X$, the outer law also averages over the plant.
We reserve $C$ for constants independent of the moment order and $C_p$ for constants that may depend on the fixed order $p$.
We use
\[
 \|V\|_{L^p}=\bigl(\E{\norm V_2^p}\bigr)^{1/p},\qquad
 \|K\|_{L^p}=\bigl(\E{\norm K_{\mathrm F}^p}\bigr)^{1/p}
\]
for vector and matrix random variables, respectively, unless another matrix norm is specified.

We next introduce the cavity model and its interpolation.

\begin{definition}[Cavity interpolation model]
\label{def:tap-cavity-pinning}
For any symmetric $J\in\mathbb R^{n\times n}$, external field $y\in\mathbb R^n$,
let $\mu=\mu_{J,y}$ have Gibbs weights proportional to $\exp(\inner{\sigma}{J\sigma}/2+\inner{y}{\sigma})$, and write $m,\Sigma$ for its mean and covariance.
For coordinate $i\in[n]$, write $-i=[n]\setminus\{i\}$ for the surviving
coordinate set, $J^{\setminus i}=J_{-i,-i}$, and $a_i=J_{-i,i}$.
This deletion preserves $\ell$, and the surviving $M=n-1$ spins have inverse temperature $\beta_M=\sqrt{M\ell}=\beta\sqrt{M/n}$.

The cavity model is obtained by deleting coordinate $i$.  Its interaction
and external field are $J^{\setminus i}$ and $y_{-i}$, its Gibbs distribution is
denoted by $\mu^{\setminus i}$, and its mean and covariance are
$m^{\setminus i}$ and $\Sigma^{\setminus i}$.
In the planted model, we set
$
 \mathcal F=\sigma(J,y),\
 \mathcal F^{\setminus i}=\sigma(J^{\setminus i},y_{-i})
$,
and call $\mathcal F^{\setminus i}$ the cavity sigma-field.
Deleting coordinate $i$ discards $a_i$, $J_{ii}$, and $y_i$ without conditioning on the plant.
Under the planted law of \Cref{def:tap-planted-model}, given $\mathcal F^{\setminus i}$, the variables $X_{-i}\sim\mu^{\setminus i}$ and $X_i\sim\operatorname{Unif}(\{\pm1\})$ are independent.
The removed interaction vector satisfies
\begin{equation*}
 a_i=\ell X_iX_{-i}+\sqrt\ell\,G_i,
 \qquad G_i\sim N(0,I_{n-1}),
\end{equation*}
where $G_i$ is independent of $(\mathcal F^{\setminus i},X)$.

We connect the cavity distribution to the conditional laws obtained by fixing $\sigma_i=\pm1$ through the exponential tilt $\exp(\gamma \inner{a_i}{\sigma_{-i}})$.
For the chosen $i$ and $\gamma\in[-1,1]$, define the cavity interpolation by
\begin{equation*}
 Z_\gamma
 \defeq\sum_{\tau\in\{\pm1\}^{n-1}}\mu^{\setminus i}(\tau)
       \exp(\gamma\inner{a_i}\tau),
 \qquad
 \mu_\gamma(\tau)
 \defeq\frac{\mu^{\setminus i}(\tau)\exp(\gamma\inner{a_i}\tau)}
 {Z_\gamma},
 \qquad \tau\in\{\pm1\}^{n-1}.
\end{equation*}
Thus the interaction remains $J^{\setminus i}$, while the external field at
parameter $\gamma$ is $y_{-i}+\gamma a_i$.
We write $m_\gamma$ and $\Sigma_\gamma$ for the mean and covariance under $\mu_\gamma$, and $\langle\cdot\rangle_\gamma$ for expectation under this distribution.
At $\gamma=0$, the interpolated law is the cavity law:
$Z_0=1$, $\mu_0=\mu^{\setminus i}$, $m_0=m^{\setminus i}$, and $\Sigma_0=\Sigma^{\setminus i}$.
For $b\in\{\pm1\}$, the endpoint $\gamma=b$ is the pinned law,
\begin{equation*}
 \mu_b(\tau)
 =\mu(\sigma_{-i}=\tau\mid\sigma_i=b),
 \qquad \tau\in\{\pm1\}^{n-1}.
\end{equation*}
To recover the marginal of $\mu$ on the surviving coordinates, we mix the pinned laws with weights $\pi_b=\mu(\sigma_i=b)=\frac{1+b(m)_i}{2}$.
If $m_{-i}$ denotes the restriction of the full mean $m$ to the
surviving coordinates, then
\[
 m_{-i}=\sum_{b=\pm1}\pi_b m_b,
 \qquad m^{\setminus i}\ne m_{-i}\quad\text{in general}.
\]
Given $i\in[n]$, external field $z\in\mathbb R^{n-1}$, and $u\in(-1,1)^{n-1}$, the empirical residual of the surviving system is
\begin{equation}
 R_{n-1}^{(Q)}(u;J^{\setminus i},z)
 =\operatorname{arctanh}(u)-z-J^{\setminus i}u+\ell\bigl(n-1-\norm u_2^2\bigr)u.
 \label{eq:tap-empirical-residual}
\end{equation}
{For $n\ge2$, the empirical overlap here is $Q_{n-1}(u)=\|u\|_2^2/(n-1)$, and the Onsager coefficient equals $(n-1)\ell(1-Q_{n-1}(u))$.}
Here $\ell$ is held fixed after deletion, and the diagonal of $J^{\setminus i}$ is retained in the residual even though it cancels from the Gibbs distribution.
The cavity and pinned residuals are
\[
 \mathbf r^{\setminus i}
 =R_{n-1}^{(Q)}(m^{\setminus i};J^{\setminus i},y_{-i}),\qquad
 \mathbf r_{\mathrm{pin}}^{i,b}
 =R_{n-1}^{(Q)}(m_b;J^{\setminus i},y_{-i}+ba_i).
\]
For fixed $i$, we abbreviate $\mathbf r_{\mathrm{pin}}^b=\mathbf r_{\mathrm{pin}}^{i,b}$.
\end{definition}

At a fixed time, $r_i$ is coordinate $i$ of $\mathbf r=\mathbf r_t^{(Q)}$, and we set
\[
 \begin{aligned}
 e_i&=\E{r_i^2\mid\mathcal F^{\setminus i}},&
 \mathcal E&=\sum_{i=1}^n e_i.
 \end{aligned}
\]
We view $e_i$ as a smoothed version of $r_i^2$: it averages over the deleted data and, like $\mathbf r^{\setminus i}$, depends only on the surviving observations $(J^{\setminus i},y_{-i})$.
Averaging over all observations, we obtain $\E{\mathcal E}=\sum_i\E{r_i^2}=\E{\norm{\mathbf r}_2^2}$, although $\mathcal E$ need not equal $\norm{\mathbf r}_2^2$ pointwise.

With this notation, we prove \Cref{thm:polynomial-tap-residual} by combining three estimates:
\begin{equation*}
 \sup_i\|r_i\|_{L^p}\leq C_pn^{-1/2},\qquad \E{\exp(\kappa\mathcal E)}\leq 2,\qquad \max_i\bigl\|\,\|\mathbf r\|_2-\|\mathbf r^{\setminus i}\|_2\,\bigr\|_{L^{4p}}\leq C_pn^{-1/4}.
\end{equation*}
For every fixed integer $p\geq2$, the moment bounds hold with $C_p$ independent of $n$, while the second bound holds for some $\kappa>0$ depending only on $\beta_*$.
The first two estimates, \Cref{lem:tap-complete-row}-\textup{(iii)} and \Cref{cor:tap-energy-mgf}, are proved below.
The second estimate yields an exponential tail for $\mathcal E$, whereas the coordinate moment bounds alone do not give the desired residual tail, since we do not control how $C_p$ grows with $p$.
Using the third estimate, \Cref{lem:tap-parent-child}, proved in \Cref{subsec:tap-stability-proofs}, we approximate tail events by events determined by the surviving observations and then replace $r_i^2$ by $e_i$ through conditioning.
In the proof of \Cref{thm:polynomial-tap-residual}, we transfer this tail control to $\norm{\mathbf r}_2^2$, choosing $p$ depending on $H$ but not on $n$.

\subsection{Overlap concentration and covariance bounds}

By \Cref{lem:q-beta-regularity}, for every $0\leq\beta<1$ and $t\geq0$, the equation $q=f(\beta^2q+t)$ has the unique solution $q_\beta(t)$.
To prove the conditional residual estimates in \Cref{lem:tap-complete-row}, we first show that the replica overlaps $R_{ab}=n^{-1}\inner{\sigma^a}{\sigma^b}$ concentrate around this value.
We use the following exponential-moment estimate to derive the required posterior covariance bounds in \Cref{cor:tap-covariance}.

\begin{lemma}[Finite-dimensional overlap MGF and moments]
\label{lem:tap-overlap-mgf}
There are constants $\tau_0,K_0>0$, depending only on $\beta_*$, such that
the following statements hold for every $n\ge1$, $0\le \beta\le \beta_*$, and $t\ge0$.
\begin{enumerate}[label=\textup{(\roman*)}]
\item The squared overlap satisfies
\begin{equation*}
 \E[\mathbb P_{n,\beta}^{\mathrm{pl,aug}}]{\left\langle \exp\left(\tau_0n(R_{12}-q_{\beta}(t))^2\right)\right\rangle}
 \le K_0.
\end{equation*}
The same bound holds with $R_{12}$ replaced by $n^{-1}\sigma^1\cdot X$ or
$Q_n(m)=n^{-1}\norm{m}_2^2$.

\item For every $p\geq1$,
\begin{equation*}
 \E[\mathbb P_{n,\beta}^{\mathrm{pl,aug}}]{
   \left\langle|R_{12}-q_\beta(t)|^{2p}\right\rangle}
 \leq K_0\left(\frac p{\mathrm e\tau_0 n}\right)^p,
\end{equation*}
and the same moment bound holds for the replica--plant overlap and
$Q_n(m)$.
\end{enumerate}
\end{lemma}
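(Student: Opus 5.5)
Part (ii) follows from part (i) by a routine calculation. If $\E{\langle\exp(\tau_0 nZ^2)\rangle}\le K_0$ with $Z=R_{12}-q_\beta(t)$, then for $\lambda=\tau_0 n$ we have $x^{p}\le (p/(\mathrm e\lambda))^p\mathrm e^{\lambda x}$ for $x\ge0$, since $\max_x x^p\mathrm e^{-\lambda x}=(p/(\mathrm e\lambda))^p$. Applying this with $x=Z^2$ gives the stated moment bound. The same holds for the replica--plant overlap and for $Q_n(m)$. For $Q_n(m)=\langle R_{12}\rangle$, Jensen gives $\exp(\tau_0 n(Q_n(m)-q)^2)\le\langle\exp(\tau_0n(R_{12}-q)^2)\rangle$ because $x\mapsto \mathrm e^{\tau_0 nx^2}$ is convex. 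For the replica--plant overlap, the Nishimori identity in the planted model says $(X,\sigma^1)$ has the same law as $(\sigma^2,\sigma^1)$ under the outer law; this holds because $\operatorname{Law}(X\mid A,Y_t)=\mu_{A,Y_t}$ by \cref{eq:planted-posterior}. Hence it suffices to prove (i) for $R_{12}$.

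For (i), I would follow Talagrand's second-moment/quadratic replica-coupling argument in the replica-symmetric regime, adapted to the planted model with Gaussian field $Y_t$. The natural route is a Guerra-type interpolation. Introduce a coupled two-replica system with an extra term $\lambda n(R_{12}-q)^2$ in the exponent. Then show that the free energy of the coupled system, $\frac1n\E{\log\langle\exp(\lambda n(R_{12}-q)^2)\rangle}$, stays bounded for small $\lambda$ uniformly in $\beta\le\beta_*$ and $t$. The planted structure helps. By Nishimori, all replica expectations reduce to plant--replica ones, and the planted model is a Gaussian channel: $A^{\mathrm{full}}$ observes $XX^\top$ with signal-to-noise $\beta^2/n$, and $Y_t$ observes $X$ at SNR $t$. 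Concretely, I would interpolate $s\in[0,1]$ between the actual channel at SNR $(\beta^2 s,\ t+\beta^2(1-s)q)$ and a decoupled scalar channel at SNR $t+\beta^2q$, where overlaps concentrate exponentially around $q_\beta(t)$ by independence and Hoeffding. The fixed point equation \cref{eq:compact-rs-fixed-point} is exactly what makes the first-order term of the derivative vanish. The derivative of $\phi(s)=\frac1n\log\E{\langle\exp(\lambda n(R_{12}-q)^2)\rangle_s}$ by Gaussian integration by parts produces terms of the form $\beta^2\times$(quadratic in $R-q$) with a coefficient comparable to $\beta^2/(1-\beta^2)$-type factors. Choosing $\lambda$ small relative to $1-\beta_*^2$ lets a Gronwall argument close it, as in Talagrand Vol.~I, Sec.~1.4 (the $\E{\langle\exp(\lambda n(R_{12}-q)^2)\rangle}\le K$ theorem for $\beta<1/2$, later extended to $\beta<1$ via the planted/Latala argument).

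In the planted model, the cleaner argument I would actually try is Latała's method extended to $\beta<1$. Write the joint law of $(X,\sigma)$ via Nishimori as an ``annealed'' two-replica system. Then $\E{\langle\exp(\lambda n(R_{1*}-q)^2)\rangle}$ is a ratio of partition functions. Its numerator can be computed by linearizing the square with an auxiliary Gaussian $g$: $\exp(\lambda nx^2)=\E_g{\exp(\sqrt{2\lambda n}\,gx)}$. This turns it into a planted model with an extra overlap field, whose free energy can be compared through the replica-symmetric formula, uniformly in $t$ because $q_\beta(t)$ is $1/(1-\beta^2)$-Lipschitz (\Cref{lem:q-beta-regularity}). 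The main obstacle is uniformity in $t\ge 0$ (unbounded) and the finite-$n$ statement ``for every $n\ge1$''. Small $n$ is trivial: $|R_{12}-q|\le 2$ gives $\exp(4\tau_0 n)$, bounded for $n$ below any fixed threshold. Large $t$ makes $q\to1$ and the channel very informative, which only helps concentration, but the derivative bounds must be checked not to degenerate. The key estimate I expect to fight with is showing that the second-order remainder in $\phi'(s)$ is bounded by $C\lambda\beta^2(\cdot)$ with $C$ uniform in $t$. This requires the strict contraction $\beta_*^2<1$ of the scalar map.
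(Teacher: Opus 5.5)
Your reductions are correct and coincide with the paper's. Part (ii) follows from part (i) via $z^p\le (p/(\mathrm e\tau_0))^p\mathrm e^{\tau_0 z}$, the replica--plant overlap follows from the Nishimori identity, and $Q_n(m)=\langle R_{12}\rangle$ follows from Jensen. You also choose the same interpolation path: matrix signal-to-noise $s$ running from $0$ to $\beta^2$, with field strength $t+(\beta^2-s)q$. And you use the same ingredients at the decoupled endpoint: independent scalar channels, Hoeffding, and Gaussian linearization of the square.

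The gap is the step that carries the bound from the decoupled endpoint to the actual model. That step is the whole content of (i), you do not supply it, and the mechanisms you suggest do not work.
\begin{itemize}
\item Bounding $\frac1n\mathbb E\log\langle\exp(\lambda n(R_{12}-q)^2)\rangle$ is vacuous, since it is at most $4\lambda$ trivially.
\item Order-one control of $\frac1n\log\mathbb E\langle\cdots\rangle$, or a comparison through the replica-symmetric free-energy formula with $o(1)$ error, yields only $\mathrm e^{O(n)}$ or $\mathrm e^{o(n)}$. You need an $n$-independent $K_0$ for every $n$.
\item $\mathbb E\langle\cdot\rangle$ is an expectation of a ratio with a random posterior normalizer, not a ratio of partition functions.
\item A Gronwall argument at fixed $\lambda$ does not close. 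Differentiating in $s$ produces $\tfrac n2(R_{12}-q)^2$ averaged under the tilted measure. That term is proportional to the $\lambda$-derivative of the log of your quantity, not to the quantity itself. The random normalizer also generates further replica terms with no favorable sign.
\end{itemize}

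The paper overcomes this with two specific devices.
\begin{itemize}
\item It works with $\Phi(s)=\mathbb E_{\mathbb P_s}\bigl[\langle\exp(\lambda_s n(R_{12}-q)^2/2)\rangle_s^{1/2}\bigr]$. Since $\mathrm d\mathbb P_s=L_s\,\mathrm d\mathbb Q$ and the conditional moment equals $\mathcal Z_s/L_s^2$, the square root cancels the normalizer exactly. Thus $\Phi(s)=\mathbb E_{\mathbb Q}[\sqrt{\mathcal Z_s}]$ is a smooth functional of a Gaussian process under a fixed reference law.
\item It lets the coupling decrease at unit rate, $\lambda_s=\lambda_f+\beta^2-s$ (Lata{\l}a's trick). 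This cancels the positive $\tfrac n4(R_{12}-q)^2$ term exactly. The result is that $\frac{\mathrm d}{\mathrm ds}\log\Phi(s)$ equals $-\tfrac n4$ times a nonnegative reweighted average of $(R_{13}-q)^2$, for every $q\in[0,1]$.
\end{itemize}

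This exact monotonicity needs no Gronwall step and has no uniformity issue in $t$ or in $n$. The condition $\beta<1$ enters only through $\lambda_0=\lambda_f+\beta^2<1$ at the decoupled endpoint. The fixed-point equation is used only there, to center Hoeffding at $q=f(a)$; it does not cancel terms in the derivative. Finally, two Cauchy--Schwarz steps, one under $\mathbb P_0$ and one under the posterior, undo the square root and produce the exponent $\tau_0=(1-\beta_*^2)/8$.
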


\begin{proof}
We use Gaussian interpolation with a decreasing quadratic coupling, as in Lata{\l}a's argument \cite[Corollary~1.4.7 and Proposition~1.4.8]{Talagrand11}.
In the planted model, we take a square root before averaging over the observations to cancel the posterior normalizing constant.
To compare the original planted model with an independent scalar-channel
model, we interpolate over $s\in[0,\beta^2]$.  Fix any $q\in[0,1]$ and
$\lambda_f\in(0,1-\beta^2)$, and define
\[
 t_s=t+(\beta^2-s)q,\qquad a = t_0,\qquad \lambda_s=\lambda_f+\beta^2-s,
 \qquad 0\le s\le \beta^2.
\]
Notice that
\[
 (s,t_s)=(\beta^2,t)\quad\text{at }s=\beta^2,
 \qquad (s,t_s)=(0,a)\quad\text{at }s=0,
 \qquad sq+t_s=a.
\]
For each $s$, let $\mathbb P_s$ be the law of the observations $(w,z)$
generated by first drawing $X\sim\operatorname{Unif}(\{\pm1\}^n)$, and then setting
\[
 w_{jk}=\sqrt{s/n}\,X_jX_k+G_{jk}\quad(j<k),\qquad
 z_j=\sqrt{t_s}\,X_j+g_j\quad(j\le n),
\]
where $(G_{jk})_{j<k}$ and $(g_j)_{j\le n}$ are mutually independent standard Gaussians, independent of $X$.
Let $\mu_s(\cdot\mid w,z)$ be the posterior distribution of $X$, and let
$\langle\cdot\rangle_s$ denote expectation over independent replicas from
$\mu_s(\cdot\mid w,z)$. Thus $\E[\mathbb P_s]{\cdot}$ averages the observations, while $\langle\cdot\rangle_s$ averages posterior replicas given those observations.

At $s=\beta^2$, we have $w_{jk}=\sqrt n\,A_{jk}$ and $\sqrt t\,z=Y_t$. So the interpolating model coincides with the fixed-time planted model and $\mu_{\beta^2}(\cdot\mid w,z)=\mu_{A,Y_t}$.
Thus, for every fixed-time observable \(F\) of finitely many replicas, \(\langle F\rangle\) is determined by \((A,Y_t)\) alone.
Averaging over $d$ and the other times of $Y$ therefore gives
$\E[\mathbb P_{\beta^2}]{\langle F\rangle_{\beta^2}}=\E[\mathbb P_{n,\beta}^{\mathrm{pl,aug}}]{\langle F\rangle}$.

We now set
\[
 \Phi(s)
 \defeq\E[\mathbb P_s]{
 \left\langle\exp\left(\frac{\lambda_sn}{2}(R_{12}-q)^2\right)
 \right\rangle_s^{1/2}}.
\]
Our goal is to prove that $\Phi(s)$ is nonincreasing for every $q\in[0,1]$, thereby bounding the
original endpoint by the independent endpoint.

\emph{Step 1: cancel the posterior normalizer.}
To cancel the posterior
normalizer, we compare $\mathbb P_s$ with the fixed reference law
$\mathbb Q$, under which all coordinates of $(w,z)$ are independent
standard Gaussians.

For each possible value $x\in\{\pm1\}^n$ of the planted spin $X$, the
standardized observation has, relative to $\mathbb Q$, mean-shift coordinates
\[
 \sqrt{s/n}\,x_jx_k\quad(j<k),\qquad
 \sqrt{t_s}\,x_j\quad(j\le n).
\]
The Gaussian shift formula gives the likelihood factor
\[
 L_{s,x}(w,z)
 \defeq\exp(H_s(x)-v_s/2),
\]
where
\[
 H_s(x)=\sqrt{\frac sn}\sum_{j<k}w_{jk}x_jx_k
             +\sqrt{t_s}\sum_jz_jx_j.
\]
The configuration $x$ is
fixed in this likelihood calculation.  In particular,
$\E[\mathbb Q]{\cdot}$ below averages only
over the reference observation $(w,z)$ and does not average over either
$X$ or $x$.  By independence and centering, all cross terms have expectation zero,
so
\[
 v_s
 \defeq\E[\mathbb Q]{H_s(x)^2}
 =\frac{s}{n}\sum_{j<k}(x_jx_k)^2
   +t_s\sum_jx_j^2
 =\frac{s(n-1)}{2}+nt_s.
\]
This is independent of $x$ because every spin satisfies
$x_j^2=1$: all possible plants produce mean-shift vectors with the
same Euclidean norm.
$L_{s,x}$ is a likelihood factor, not a posterior mass.
Averaging it over the uniform prior $\pi(x)\defeq2^{-n}$, $x\in\{\pm1\}^n$, we set $L_s=\sum_x\pi(x)L_{s,x}$.
The planted observation law defined above therefore satisfies
\[
 \frac{\mathrm d\mathbb P_s}{\mathrm d\mathbb Q}(w,z)=L_s(w,z).
\]
Bayes' formula also gives the explicit posterior distribution
\[
 \mu_s(\sigma\mid w,z)
 =\frac{\pi(\sigma)L_{s,\sigma}(w,z)}{L_s(w,z)}.
\]
We define the unnormalized overlap-tilted two-replica numerator by
\[
 \mathcal Z_s
 \defeq\sum_{\sigma,\tau}\pi(\sigma)\pi(\tau)L_{s,\sigma}L_{s,\tau}
       \exp\left(\lambda_sn(R(\sigma,\tau)-q)^2/2\right).
\]
We define the conditional exponential moment by
$
 \mathcal M_s
 \defeq\left\langle
      \exp\left(\lambda_sn(R_{12}-q)^2/2\right)
    \right\rangle_s
$.
Using the preceding expression for $\mu_s$ in the two-replica sum, we have
\[
 \mathcal M_s
 =\sum_{\sigma,\tau}\mu_s(\sigma\mid w,z)\mu_s(\tau\mid w,z)
      \exp\left(\lambda_sn(R(\sigma,\tau)-q)^2/2\right)
 =\frac{\mathcal Z_s}{L_s^2}.
\]
The square root is chosen specifically to remove the random normalizer:
the outer observation law contributes a factor $L_s$, whereas
$\sqrt{\mathcal M_s}=\sqrt{\mathcal Z_s}/L_s$.  We substitute Bayes'
formula into the change of measure $\mathrm d\mathbb P_s=L_s\mathrm d\mathbb Q$
and cancel $L_s$ to obtain
\begin{equation}
 \E[\mathbb P_s]{\sqrt{\mathcal M_s}}
 =\int L_s\frac{\sqrt{\mathcal Z_s}}{L_s}\mathrm d\mathbb Q
 =\E[\mathbb Q]{\sqrt{\mathcal Z_s}}.
 \label{eq:tap-bayes-cancellation}
\end{equation}

\smallskip
\noindent\emph{Step 2: prove monotonicity of the square-root interpolation.}
We define
$\Psi(s)=\log\E[\mathbb Q]{\sqrt{\mathcal Z_s}}=\log\Phi(s)$ and prove that it is
nonincreasing on $[0,\beta^2]$.  For fixed reference observations, we define
the overlap-tilted replica-pair law
\[
 \rho_s(\sigma,\tau\mid w,z)
 \defeq
 \frac{1}{\mathcal Z_s} \pi(\sigma)\pi(\tau)L_{s,\sigma}L_{s,\tau}
       \exp\left(\lambda_sn(R(\sigma,\tau)-q)^2/2\right).
\]
For a function $F$ of one pair, we set
\[
 \langle F\rangle_{\rho_s}
 \defeq\sum_{\sigma^1,\sigma^2}
       \rho_s(\sigma^1,\sigma^2\mid w,z)F(\sigma^1,\sigma^2).
\]
For a function $F$ of two pairs, we set
\[
 \langle F\rangle_{\rho_s^{\otimes2}}
 \defeq\sum_{\sigma^1,\ldots,\sigma^4}
       \rho_s(\sigma^1,\sigma^2\mid w,z)
       \rho_s(\sigma^3,\sigma^4\mid w,z)
       F(\sigma^1,\sigma^2,\sigma^3,\sigma^4).
\]
Thus the two overlap-tilted pairs are conditionally independent given $(w,z)$.
We include the outer reweighting by defining
\[
 \mathopen{[\![}F\mathclose{]\!]}_s
 \defeq\frac{\E[\mathbb Q]{\sqrt{\mathcal Z_s}
       \langle F\rangle_{\rho_s^{\otimes2}}}}
      {\E[\mathbb Q]{\sqrt{\mathcal Z_s}}}.
\]
Under $\mathbb Q$, the collection $H_s=(H_s(\eta))_{\eta\in\{\pm1\}^n}$
is a centered $2^n$-dimensional Gaussian vector with covariance kernel
$\mathcal K_s(\sigma,\tau)=\E[\mathbb Q]{H_s(\sigma)H_s(\tau)}$.
For $h=(h_\eta)_{\eta\in\{\pm1\}^n}$, we define the deterministic function
\[
 \mathcal Z_s(h)
 \defeq
 \sum_{\sigma,\tau}\pi(\sigma)\pi(\tau)
 \exp\left(h_\sigma+h_\tau-v_s
       +\frac{\lambda_sn}{2}(R(\sigma,\tau)-q)^2\right)
 .
\]
The random variable defined above is its evaluation
$\mathcal Z_s=\mathcal Z_s(H_s)$.
We write $\partial_\eta$ for differentiation with respect to $h_\eta$.
In every coordinate derivative below, we differentiate the corresponding
function of $h$ and then evaluate at $h=H_s$; we suppress this final
substitution from the notation.
Keeping $q$ and $a$ fixed, we apply the standard Gaussian interpolation
differentiation formula to the centered Gaussian vector $H_s$ and the
deterministic function $h\mapsto \sqrt{\mathcal Z_s(h)}$:
\[
 \frac{\mathrm d}{\mathrm d s}\E[\mathbb Q]{\sqrt{\mathcal Z_s}}
 =\E[\mathbb Q]{\frac{\partial}{\partial s}\sqrt{\mathcal Z_s(h)}\big|_{h=H_s}}
  +\frac12\sum_{\eta,\xi}\frac{\mathrm d}{\mathrm d s}\mathcal K_s(\eta,\xi)
       \E[\mathbb Q]{\partial_\eta\partial_\xi\sqrt{\mathcal Z_s}}.
\]
The partial derivative holds $h$ fixed and differentiates only
$v_s$ and $\lambda_s$.  Its contribution is
\[
\frac{1}{\sqrt{\mathcal Z_s}}
 \left.\frac{\partial}{\partial s}
      \sqrt{\mathcal Z_s}\right|_{h=H_s}
 =
   -\frac12\frac{\mathrm d v_s}{\mathrm d s}
   +\frac n4\frac{\mathrm d\lambda_s}{\mathrm d s}
      \langle(R_{12}-q)^2\rangle_{\rho_s}.
\]

For one overlap-tilted pair, we set
$n_\eta=\1_{\{\sigma^1=\eta\}}+\1_{\{\sigma^2=\eta\}}$.
{We differentiate the finite sum defining $\mathcal Z_s$ and its normalizing factor to obtain}
\begin{gather*}
 \partial_\eta\log\mathcal Z_s
 =\langle n_\eta\rangle_{\rho_s},\\
 \partial_\eta\partial_\xi\log\mathcal Z_s
 =\langle n_\eta n_\xi\rangle_{\rho_s}
   -\langle n_\eta\rangle_{\rho_s}\langle n_\xi\rangle_{\rho_s},\\
 \frac{\partial_\eta\partial_\xi\sqrt{\mathcal Z_s}}
      {\sqrt{\mathcal Z_s}}
 =\frac12\langle n_\eta n_\xi\rangle_{\rho_s}
   -\frac14\langle n_\eta\rangle_{\rho_s}
              \langle n_\xi\rangle_{\rho_s}.
\end{gather*}
The last identity follows from
$\partial_\eta\partial_\xi\sqrt{\mathcal Z_s}/\sqrt{\mathcal Z_s}
=\tfrac12\partial_\eta\partial_\xi\log\mathcal Z_s
+\tfrac14(\partial_\eta\log\mathcal Z_s)
            (\partial_\xi\log\mathcal Z_s)$.

Using $\mathcal K_s(\sigma,\sigma)=v_s$, we contract the two Hessian terms as
\begin{gather*}
 \sum_{\eta,\xi}\frac{\mathrm d}{\mathrm d s}\mathcal K_s(\eta,\xi)
       \langle n_\eta n_\xi\rangle_{\rho_s}
 =\left\langle\sum_{a,b=1}^2\frac{\mathrm d}{\mathrm d s}\mathcal K_s(\sigma^a,\sigma^b)\right\rangle_{\rho_s}
  =2\frac{\mathrm d v_s}{\mathrm d s}+2\langle\frac{\mathrm d}{\mathrm d s}\mathcal K_s(\sigma^1,\sigma^2)\rangle_{\rho_s},\\
 \sum_{\eta,\xi}\frac{\mathrm d}{\mathrm d s}\mathcal K_s(\eta,\xi)
       \langle n_\eta\rangle_{\rho_s}\langle n_\xi\rangle_{\rho_s}
 =\left\langle\sum_{a=1}^2\sum_{b=3}^4
       \frac{\mathrm d}{\mathrm d s}\mathcal K_s(\sigma^a,\sigma^b)\right\rangle_{\rho_s^{\otimes2}}
  =4\left\langle
       \frac{\mathrm d}{\mathrm d s}\mathcal K_s(\sigma^1,\sigma^3)\right\rangle_{\rho_s^{\otimes2}}.
\end{gather*}
For each fixed $(w,z)$, $\rho_s(\sigma,\tau\mid w,z)$ is symmetric in
$\sigma,\tau$, which gives the last equality by exchanging the two
coordinates within either pair.
Multiplying by $\sqrt{\mathcal Z_s}$ and averaging under $\mathbb Q$ preserves
this identity.
Thus the second contraction, after outer reweighting, is
$4\mathopen{[\![}\frac{\mathrm d}{\mathrm d s}\mathcal K_s(\sigma^1,\sigma^3)\mathclose{]\!]}_s$.
Dividing by $\E[\mathbb Q]{\sqrt{\mathcal Z_s}}$ and substituting the preceding
identities, we obtain
\begin{equation}
\begin{aligned}
 \frac{\mathrm d\Psi(s)}{\mathrm d s}
 =&~-\frac12\frac{\mathrm d v_s}{\mathrm d s}
   +\frac n4\frac{\mathrm d\lambda_s}{\mathrm d s}
      \mathopen{[\![}(R_{12}-q)^2\mathclose{]\!]}_s\\
 &~+\frac14\left(2\frac{\mathrm d v_s}{\mathrm d s}
       +2\mathopen{[\![}\frac{\mathrm d}{\mathrm d s}\mathcal K_s(\sigma^1,\sigma^2)\mathclose{]\!]}_s\right)
       -\frac18\left(4
       \mathopen{[\![}\frac{\mathrm d}{\mathrm d s}\mathcal K_s(\sigma^1,\sigma^3)\mathclose{]\!]}_s\right)\\
 =&~\frac12\mathopen{[\![}\frac{\mathrm d}{\mathrm d s}\mathcal K_s(\sigma^1,\sigma^2)\mathclose{]\!]}_s
   -\frac12\mathopen{[\![}\frac{\mathrm d}{\mathrm d s}\mathcal K_s(\sigma^1,\sigma^3)\mathclose{]\!]}_s
   +\frac n4\frac{\mathrm d\lambda_s}{\mathrm d s}
      \mathopen{[\![}(R_{12}-q)^2\mathclose{]\!]}_s.
\end{aligned}
 \label{eq:tap-gaussian-pair-derivative}
\end{equation}

We compute the covariance derivative by setting $u_j=\sigma_j\tau_j$.  Then
\[
 \sum_ju_j=nR(\sigma,\tau),\qquad
 \sum_{j<k}u_ju_k
 =\frac12\left(\left(\sum_ju_j\right)^2-n\right).
\]
Since $t_s=a-sq$, we have
\begin{align*}
 \mathcal K_s(\sigma,\tau)
 =&~\frac{s}{n}\sum_{j<k}u_ju_k+t_s\sum_ju_j
 =\frac{sn}{2}R(\sigma,\tau)^2-\frac s2
   +n(a-sq)R(\sigma,\tau),
\end{align*} 
\begin{align*}  
 \frac{\mathrm d}{\mathrm d s}\mathcal K_s(\sigma,\tau)
 =&~\frac n2R(\sigma,\tau)^2-\frac12-nqR(\sigma,\tau)
 =\frac n2(R(\sigma,\tau)-q)^2-\frac{1+nq^2}{2}.
\end{align*}
Since $\frac{\mathrm d\lambda_s}{\mathrm d s}=-1$, we substitute the covariance derivative
into \cref{eq:tap-gaussian-pair-derivative}:
\begin{equation*}
\begin{aligned}
 \Psi'(s)
 =&~\frac12\mathopen{[\![}\tfrac n2(R_{12}-q)^2-\tfrac{1+nq^2}{2}\mathclose{]\!]}_s
   -\frac12\mathopen{[\![}\tfrac n2(R_{13}-q)^2-\tfrac{1+nq^2}{2}\mathclose{]\!]}_s
   -\frac n4\mathopen{[\![}(R_{12}-q)^2\mathclose{]\!]}_s\\
 =&~-\frac n4\mathopen{[\![}(R_{13}-q)^2\mathclose{]\!]}_s\le0.
\end{aligned}
\end{equation*}
All derivatives above are justified on compact subintervals where $s>0$
and $t_s>0$: the sums over spins are finite, and their derivatives are
bounded by a polynomial in $(w,z)$ times $C\exp(C\|(w,z)\|_2)$,
which is integrable under $\mathbb Q$.  If $t_s$ is identically zero, the scalar
Gaussian term is absent.  At the endpoints, the same integrable bound
and continuity of the square-root coefficients justify passage to the
limit.  Thus $\Psi$ is nonincreasing on the entire interpolation interval
for every $q\in[0,1]$.

\smallskip
\noindent\emph{Step 3: use the independent-coordinate endpoint.}
At $s=0$, the coefficient of the matrix term is zero, so the posterior distribution factors over coordinates and only independent scalar channels of strength $a$ remain.
For a scalar observation $z_j=\sqrt a\,X_j+g_j$, the posterior mean is
$\tanh(\sqrt a\,z_j)$.  Thus the conditional mean of the product of two
independent posterior replicas is $\tanh^2(\sqrt a\,z_j)$.  By symmetry,
$X_j\sqrt a\,z_j=a+\sqrt a\,X_jg_j$ has the same distribution as
$a+\sqrt a\,G$ for $G\sim N(0,1)$, and hence
\[
 \E[\mathbb P_0]{\langle\sigma_j^1\sigma_j^2\rangle_0}
 =\E[G\sim N(0,1)]{\tanh^2(a+\sqrt a\,G)}=f(a).
\]
We choose $q=f(a)$.  Together with $a=\beta^2q+t$, this gives
$q=f(\beta^2q+t)$, and hence $q=q_\beta(t)$ by \Cref{lem:q-beta-regularity}.
Since both the observation channel and the posterior distribution factor over coordinates, the products $\sigma_j^1\sigma_j^2$ are independent signs with mean $q$ after the outer planted average.
The variables $\sigma_j^1\sigma_j^2-q$ are independent, centered,
and each lies in an interval of length two.  We factor their exponential
moment and apply Hoeffding's lemma with parameter $z/\sqrt n$ to each factor:
\begin{align*}
 \E[\mathbb P_0]{\left\langle
   \exp\left(\frac z{\sqrt n}\sum_j(\sigma_j^1\sigma_j^2-q)\right)
   \right\rangle_0}
=\prod_{j=1}^n\E[\mathbb P_0]{
   \left\langle\exp\left(\frac z{\sqrt n}(\sigma_j^1\sigma_j^2-q)\right)\right\rangle_0}
 \leq\prod_{j=1}^n\exp\left(\frac{z^2}{2n}\right)
 =\exp\left(\frac{z^2}{2}\right).
\end{align*}
Since $\lambda_0=\lambda_f+\beta^2<1$,
we set $W=\sqrt n(R_{12}-q)$ and introduce an independent
$G_0\sim N(0,1)$.  For every fixed value of $W$, we evaluate the standard
Gaussian moment-generating function (MGF) and obtain
\[
 \E[G_0\sim N(0,1)]{\exp\left(\sqrt{\lambda_0}G_0W\right)}
 =\exp\left(\lambda_0W^2/2\right).
\]
Here the expectation over $W$ averages both the planted observations and the posterior replicas.
Since the integrands are nonnegative, we may average first over $W$ or first over
$G_0$. We condition on $G_0$ and apply the preceding sub-Gaussian estimate $\E[\mathbb P_0]{\langle\exp(zW)\rangle_0}\leq\exp(z^2/2)$ with $z=\sqrt{\lambda_0}G_0$; therefore
\begin{align*}
 \E[\mathbb P_0]{\mathcal M_0}
 =&~\E[\mathbb P_0]{\left\langle
       \exp\left(\lambda_0W^2/2\right)\right\rangle_0}
 =\E[G_0\sim N(0,1)]{
      \E[\mathbb P_0]{\left\langle
       \exp\left(\sqrt{\lambda_0}G_0W\right)\right\rangle_0}}
 \leq~\E[G_0\sim N(0,1)]{\exp\left(\lambda_0G_0^2/2\right)}.
\end{align*}
Finally, we evaluate the Gaussian integral using $0<\lambda_0<1$ and obtain
\begin{equation*}
 \E[G_0\sim N(0,1)]{\exp\left(\lambda_0G_0^2/2\right)}
 =\frac1{\sqrt{2\pi}}\int_{\mathbb R}
      \exp\left(-(1-\lambda_0)x^2/2\right)\mathrm d x
 =(1-\lambda_0)^{-1/2}.
\end{equation*}

At the other endpoint, $\lambda_{\beta^2}=\lambda_f$.
We now transport the estimate from $s=0$ to $s=\beta^2$.  We combine monotonicity
of $\Psi$ with \cref{eq:tap-bayes-cancellation} and obtain
$
 \E[\mathbb P_{\beta^2}]{\sqrt{\mathcal M_{\beta^2}}}
 \le \E[\mathbb P_0]{\sqrt{\mathcal M_0}}
$.
We apply Cauchy--Schwarz under $\mathbb P_0$ and the endpoint bound:
\begin{equation*}
 \E[\mathbb P_0]{\sqrt{\mathcal M_0}}
 \leq\E[\mathbb P_0]{\mathcal M_0}^{1/2}
 \leq(1-\beta^2-\lambda_f)^{-1/4}.
\end{equation*}
On the other hand, for fixed observations we apply
Cauchy--Schwarz under the posterior distribution and obtain
\[
 \left\langle \exp\left(\lambda_fn(R_{12}-q)^2/4\right)\right\rangle_{\beta^2}
 \le\sqrt{\mathcal M_{\beta^2}}.
\]
We combine these estimates and obtain
\[
 \E[\mathbb P_{\beta^2}]{\left\langle \exp\left(\lambda_fn(R_{12}-q)^2/4\right)\right\rangle_{\beta^2}}
 \le \E[\mathbb P_{\beta^2}]{\sqrt{\mathcal M_{\beta^2}}}
 \le (1-\beta^2-\lambda_f)^{-1/4}.
\]
We choose $\lambda_f=(1-\beta^2)/2$.  Then
$\lambda_f+\beta^2=(1+\beta^2)/2<1$, the exponent
$\lambda_f/4=(1-\beta^2)/8$ is at least
$\tau_0=(1-\beta_*^2)/8$, and the right side is bounded by $K_0$.
Because the exponent contains a square, decreasing it from
$\lambda_f/4$ to $\tau_0$ can only decrease the integrand.  Zero channel
strengths follow by continuity.

For completeness, given $(A,Y_t)$, the true spin $X$ and a fresh posterior replica have the same law $\mu_{A,Y_t}$.
We draw $\sigma^1$ independently of $X$ given these observations.  Then
\begin{equation*}
 \E[\mathbb P_{n,\beta}^{\mathrm{pl,aug}}]{\langle F(\sigma^1,X)\rangle\mid A,Y_t}
 =\sum_{\sigma,\tau}\mu_{A,Y_t}(\sigma)\mu_{A,Y_t}(\tau)F(\sigma,\tau)
 =\langle F(\sigma^1,\sigma^2)\rangle.
\end{equation*}
We average over $(A,Y_t)$ to obtain the Nishimori identity
\[
 \E[\mathbb P_{n,\beta}^{\mathrm{pl,aug}}]{\langle F(\sigma^1,X)\rangle}
 =\E[\mathbb P_{n,\beta}^{\mathrm{pl,aug}}]{\langle F(\sigma^1,\sigma^2)\rangle}.
\]
This proves the replica--plant version.  Finally, we use the independence of two replicas given $(A,Y_t)$ to compute
$Q_n(m)=\frac1n\sum_jm_j^2=\langle R_{12}\rangle$.
For fixed $(A,Y_t)$, we apply Jensen's inequality to the convex function
$x\mapsto \exp(\tau_0nx^2)$ and obtain
\[
 \exp\left(\tau_0n(Q_n(m)-q)^2\right)
 \le\left\langle \exp\left(\tau_0n(R_{12}-q)^2\right)\right\rangle,
\]
which proves the posterior-mean version.  The independent auxiliary
diagonal changes none of these posterior identities.

Finally, for $z>0$,
\[
 \frac{\mathrm d}{\mathrm dz}\bigl(z^p\exp(-\tau_0z)\bigr)
 =z^{p-1}\exp(-\tau_0z)(p-\tau_0z).
\]
Thus $z^p\exp(-\tau_0z)$ is maximized at $z=p/\tau_0$, where its
value is $(p/(\mathrm e\tau_0))^p$.  The same bound is trivial at $z=0$,
and hence, for every $z\geq0$,
$
 z^p\leq\left(\frac{p}{\mathrm e\tau_0}\right)^p
 \exp(\tau_0z).
$
We apply this with $z=n(R_{12}-q_\beta(t))^2$, use the exponential bound,
and divide by $n^p$.  The replica--plant and posterior-mean moment bounds
follow from their corresponding exponential bounds in the same way.
\end{proof}

\begin{corollary}[Posterior covariance moment bounds]
\label{cor:tap-covariance}
Consider the covariance $\Sigma$ in the fixed-time representation above, with $n\ell\leq\beta_*^2$ and $t\geq0$.
Given $(J,y)$, let $\sigma^1,\ldots,\sigma^4$ be independent samples from $\mu_{J,y}$ and write $\langle\cdot\rangle$ for their expectation.
We define the four-replica quantity $\mathcal D_{1234}$ and state its relation to the covariance as
\begin{equation*}
 \mathcal D_{1234}=\frac14
 \bigl((\sigma^1-\sigma^2)\cdot(\sigma^3-\sigma^4)\bigr)^2,
 \qquad
 \left\langle\mathcal D_{1234}\right\rangle=\operatorname{Tr}\Sigma^2.
\end{equation*}
There is a constant $K>0$, depending only on $\beta_*$, such that the following statements hold, where all $L^p$ norms and outer expectations are taken over $(J,y)$ under the same planted model.
\begin{enumerate}[label=\textup{(\roman*)}]
\item The four-replica moments satisfy
\begin{equation*}
 \E{\langle\mathcal D_{1234}^{p/2}\rangle}
 \leq(K\sqrt{np})^p\qquad(p\geq2).
\end{equation*}

\item Consequently, the covariance trace moments satisfy
\begin{equation*}
 \|\operatorname{Tr} \Sigma^2\|_{L^p}\le Knp\quad(p\ge1).
\end{equation*}
\end{enumerate}
\end{corollary}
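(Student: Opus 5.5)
First, I will verify the identity $\langle\mathcal D_{1234}\rangle=\operatorname{Tr}\Sigma^2$. Expanding the square gives $\langle((\sigma^1-\sigma^2)\cdot(\sigma^3-\sigma^4))^2\rangle=\sum_{j,k}\langle(\sigma^1-\sigma^2)_j(\sigma^1-\sigma^2)_k\rangle\langle(\sigma^3-\sigma^4)_j(\sigma^3-\sigma^4)_k\rangle$. This uses the independence of the pairs $(1,2)$ and $(3,4)$ given $(J,y)$. Each pair factor equals $2\Sigma_{jk}$, so the sum is $4\operatorname{Tr}\Sigma^2$.

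For (i), the plan is to reduce everything to overlaps centered at the prescribed value $q=q_\beta(t)$. We have
\[
\tfrac1n(\sigma^1-\sigma^2)\cdot(\sigma^3-\sigma^4)=(R_{13}-q)-(R_{14}-q)-(R_{23}-q)+(R_{24}-q).
\]
Hence $\mathcal D_{1234}^{p/2}=2^{-p}n^{p}\,\bigl|\sum\pm(R_{ab}-q)\bigr|^{p}$. By the elementary convexity bound $|\sum_{k=1}^4x_k|^p\le4^{p-1}\sum_k|x_k|^p$ and exchangeability of the replicas, this gives
\[
\E{\langle\mathcal D_{1234}^{p/2}\rangle}\le 2^{p}n^{p}\,\E{\langle|R_{12}-q|^{p}\rangle}.
\]
For $p\ge2$, Jensen (or Hölder) bounds the $p$-th moment by the $2\lceil p/2\rceil$-th moment, and \Cref{lem:tap-overlap-mgf}-(ii) with $p'=\lceil p/2\rceil$ then gives
\[
\E{\langle|R_{12}-q|^{p}\rangle}\le \Bigl(K_0\bigl(\tfrac{p'}{\mathrm e\tau_0n}\bigr)^{p'}\Bigr)^{p/(2p')}\le C^p\bigl(p/n\bigr)^{p/2}.
\]
Alternatively, I will apply $z^{p/2}\le (p/(2\mathrm e\tau_0))^{p/2}\exp(\tau_0 z)$ with $z=n(R_{12}-q)^2$ directly to the MGF in part (i) of the lemma, which avoids the rounding.

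Combining the two displays yields $\E{\langle\mathcal D^{p/2}\rangle}\le 2^pn^pC^p(p/n)^{p/2}=(K\sqrt{np})^p$. One caveat is that \Cref{lem:tap-overlap-mgf} is stated for the planted model at $(n,\beta,t)$. The corollary only requires $n\ell\le\beta_*^2$, which is the same as $\beta\le\beta_*$, so the lemma applies verbatim. When the corollary is later used for the cavity system with $M=n-1$ spins and inverse temperature $\beta_M\le\beta_*$, the same lemma applies in dimension $M$. The constants are uniform, so this is harmless.

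For (ii), since $\operatorname{Tr}\Sigma^2=\langle\mathcal D_{1234}\rangle\ge0$, Jensen's inequality for the posterior average gives $(\operatorname{Tr}\Sigma^2)^p\le\langle\mathcal D_{1234}^p\rangle$. Applying (i) with exponent $2p\ge2$ then yields $\E{(\operatorname{Tr}\Sigma^2)^p}\le(K\sqrt{2np})^{2p}$, that is, $\|\operatorname{Tr}\Sigma^2\|_{L^p}\le 2K^2np$; we then rename the constant.

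The only step needing care is tracking the $p$-dependence so that it comes out exactly as $\sqrt p$ in (i) and linearly in (ii). This dependence is what later feeds the exponential moment of $\mathcal E$, so no argument losing extra powers of $p$ would suffice. The main obstacle is therefore this bookkeeping, and using the MGF form of \Cref{lem:tap-overlap-mgf}-(i) directly is the cleanest route.
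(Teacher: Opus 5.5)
Your proposal is correct and follows essentially the same route as the paper. Both arguments use the rectangle identity from independence of the two replica pairs, rewrite $\mathcal D_{1234}$ as $\tfrac{n^2}{4}$ times the squared alternating sum of overlaps centered at $q_\beta(t)$, apply the $4^{r-1}$ convexity bound together with \Cref{lem:tap-overlap-mgf}-\textup{(ii)} at order $r=p/2$, and for part \textup{(ii)} use Jensen under the posterior plus part \textup{(i)} at order $2p$. The paper applies the overlap moment bound directly at the possibly non-integer order $p/2$, which is your MGF alternative, so the rounding to $\lceil p/2\rceil$ is unnecessary, though harmless.
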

\begin{proof}
For fixed observations, we have
$\Sigma_{jk}=\langle(\sigma_j-m_j)(\sigma_k-m_k)\rangle$.  Two independent
replicas satisfy
\[
 \left\langle(\sigma_j^1-\sigma_j^2)
                 (\sigma_k^1-\sigma_k^2)\right\rangle
 =2\langle\sigma_j\sigma_k\rangle-2m_jm_k=2\Sigma_{jk}.
\]
We use the independence of the two replica pairs given the observations to compute
\begin{equation}
 \langle\mathcal D_{1234}\rangle
 =\frac14\sum_{j,k}
 \langle(\sigma_j^1-\sigma_j^2)(\sigma_k^1-\sigma_k^2)\rangle
 \langle(\sigma_j^3-\sigma_j^4)(\sigma_k^3-\sigma_k^4)\rangle
 =\sum_{j,k}\Sigma_{jk}^2=\operatorname{Tr}\Sigma^2.
\label{eq:tap-covariance-rectangle}
\end{equation}
Since $R_{ab}=n^{-1}\sigma^a\cdot\sigma^b$,
$\mathcal D_{1234}=\frac{n^2}{4}
(R_{13}-R_{14}-R_{23}+R_{24})^2$.
We write $q=q_{\beta}(t)$.
The common center $q$ cancels in this alternating sum, so taking the $r$th power gives the first identity below.
We then combine the elementary inequality $|x_1+\cdots+x_4|^r\le4^{r-1}\sum_{\alpha=1}^4|x_\alpha|^r$ and the overlap-moment conclusion of \Cref{lem:tap-overlap-mgf}-\textup{(ii)} to obtain, for $r\ge1$,
\begin{align*}
 \E{\langle\mathcal D_{1234}^{r}\rangle}
 =&~\frac{n^{2r}}{4^r}\E{
 \langle|(R_{13}-q)-(R_{14}-q)
 -(R_{23}-q)+(R_{24}-q)|^{2r}\rangle}\\
 \leq&~\frac{n^{2r}}{4^r}4^{2r-1}
 \sum_{(a,b)\in\{(1,3),(1,4),(2,3),(2,4)\}}
 \E{\langle|R_{ab}-q|^{2r}\rangle}\\
 \leq&~4^rK_0\left(\frac{nr}{\mathrm e\tau_0}\right)^r
 \leq(C n r)^r.
\end{align*}
Taking $r=p/2$ for $p\geq2$ proves part~\textup{(i)}:
\begin{equation*}
 \E{\langle\mathcal D_{1234}^{p/2}\rangle}
 \leq(C\sqrt{np})^p.
\end{equation*}
For part~\textup{(ii)}, we fix the observations, apply Jensen's inequality with exponent $p\geq1$, and then average the observations:
\[
 \E{(\operatorname{Tr}\Sigma^2)^p}
 =\E{\langle\mathcal D_{1234}\rangle^p}
 \le\E{\langle\mathcal D_{1234}^p\rangle},
\]
and the bound above with $r=p$ yields $\|\operatorname{Tr}\Sigma^2\|_{L^p}\le C n p$.
\end{proof}

\subsection{Conditional residual moments}

Recall that for the cavity interpolation of \Cref{def:tap-cavity-pinning}, we fix $i$ and write $a=a_i$, $G=G_i$, and $U=X_iX_{-i}$.
Given $(\mathcal F^{\setminus i},U)$, $\mu_0=\mu^{\setminus i}$ is fixed and the only randomness in $a=\ell U+\sqrt\ell G$ comes from $G$.
The next lemma controls the cancellation between the row contribution and the Onsager correction.
The Gaussian cancellation after conditioning on the couplings outside one row also appears in Chatterjee's TAP analysis \cite[Lemma~1.4 and equation~(44)]{Chatterjee2010}; here we bound the row residual in $L^p$ under the planted model using the Gaussian inequalities in \Cref{lem:tap-gaussian-poincare-p}.

\begin{lemma}
\label{lem:tap-gaussian-cancellation}
For the cavity interpolation of \Cref{def:tap-cavity-pinning}, assume $M=n-1\geq1$ and $0\leq\ell M\leq1$.
We condition on $(\mathcal F^{\setminus i},U)$ and let $x$ be any $[-1,1]$-valued random variable that may depend on $G$ and additional randomness.
Under this conditioning, $\E[\mathrm{row}]{\cdot}$ and $L^p(\mathrm{row})$ average over $G$ and the additional randomness used to construct $x$, while $\E[G\sim N(0,I_M)]{\cdot}$ and $L^p(G)$ average $G$ alone.
We define
\[
 \overline m_x=\tfrac{1+x}{2}m_{+1}+\tfrac{1-x}{2}m_{-1},
\]
\[
r_{\mathrm{row}}
 =\tfrac12(\log Z_{+1}-\log Z_{-1})-a\cdot \overline m_x
       +\ell(M-\norm{\overline m_x}_2^2)x.
\]
Then, for every $p\geq2$, there is a constant $C_p$, depending only on $p$, such that
\begin{equation*}
 \|r_{\mathrm{row}}\|_{L^p(\mathrm{row})}\leq C_p\ell\sup_{|\gamma|\leq1}\norm{\Sigma_\gamma}_{L^p(G)}.
\end{equation*}
\end{lemma}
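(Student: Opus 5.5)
The plan is to express everything through the one-parameter family $\gamma\mapsto\mu_\gamma$ and Taylor-expand in $\gamma$. Set $\varphi(\gamma)=\log Z_\gamma$. Then $\varphi'(\gamma)=a\cdot m_\gamma$ and $\varphi''(\gamma)=a^\top\Sigma_\gamma a$, and $\partial_\gamma m_\gamma=\Sigma_\gamma a$. Since $Z_0=1$, this gives
\[
 \tfrac12(\log Z_{+1}-\log Z_{-1})=a\cdot m_0+\tfrac12\int_{-1}^{1}\!\int_0^\gamma a^\top\Sigma_s a\,\mathrm ds\,\mathrm d\gamma,
 \qquad
 m_b=m_0+\int_0^b\Sigma_s a\,\mathrm ds .
\]
Substituting into $a\cdot\overline m_x$, the zeroth-order terms $a\cdot m_0$ cancel. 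The remaining part of $r_{\mathrm{row}}$ is a combination, with coefficients bounded by absolute constants (depending on $x\in[-1,1]$), of the quadratic forms $a^\top\Sigma_s a$ for $|s|\le1$, plus the Onsager term $\ell(M-\|\overline m_x\|_2^2)x$.

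Next I would replace each quadratic form by its "Gaussian mean" $\ell\operatorname{Tr}\Sigma_s$. Recall that $\operatorname{Tr}\Sigma_s=M-\|m_s\|_2^2$ because the spins are $\pm1$. After this replacement, the leading contribution should be exactly $-x\,\ell(M-\|m_0\|_2^2)$ up to corrections, which cancels the Onsager term to leading order. This is the usual TAP cancellation, as in Chatterjee's row argument. The error terms to be bounded are then of three types.

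\begin{enumerate}
\item[(a)] The fluctuation $a^\top\Sigma_s a-\ell\operatorname{Tr}\Sigma_s$. Writing $a=\ell U+\sqrt\ell G$, this splits into $\ell(G^\top\Sigma_sG-\operatorname{Tr}\Sigma_s)$, a cross term $2\ell^{3/2}U^\top\Sigma_sG$, and $\ell^2U^\top\Sigma_sU$. The last is at most $\ell^2M\|\Sigma_s\|_{\mathrm{op}}\le\ell\|\Sigma_s\|_{\mathrm F}$, using $\ell M\le1$.
\item[(b)] The drift in the trace, $\ell(\operatorname{Tr}\Sigma_s-\operatorname{Tr}\Sigma_0)=-\ell(\|m_s\|_2^2-\|m_0\|_2^2)$, and likewise the difference between $\|\overline m_x\|_2^2$ and $\|m_0\|_2^2$.
\item[(c)] Terms involving $x$, which may depend on $G$. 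Since $x$ enters only through bounded deterministic coefficients, I take a supremum over $x\in[-1,1]$ pointwise in $G$ before averaging.
\end{enumerate}

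For (a), the difficulty is that $\Sigma_s$ depends on $G$, so Hanson--Wright does not apply directly. I would apply the $L^p$ Gaussian Poincar\'e inequality of \Cref{lem:tap-gaussian-poincare-p} to $F(G)=G^\top\Sigma_sG-\operatorname{Tr}\Sigma_s$, after first bounding its mean via Gaussian integration by parts. The gradient of $F$ is $2\Sigma_sG$ plus terms with $\partial_G\Sigma_s=\sqrt\ell\,s\,(\text{third cumulant})\cdot a$-type contractions. Third cumulants of $\pm1$ spins are controlled by $\|\Sigma_s\|_{\mathrm{op}}$-type quantities, and these come with an extra factor $\sqrt\ell\|a\|_2\lesssim\sqrt{\ell M}\,\sqrt\ell\le\sqrt\ell$. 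The outcome should be $\|F\|_{L^p(G)}\le C_p\sup_{|\gamma|\le1}\|\Sigma_\gamma\|_{L^p(G)}$, hence a contribution $C_p\ell\sup_\gamma\|\Sigma_\gamma\|_{L^p(G)}$ after multiplying by $\ell$. The cross term is handled similarly.

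For (b), I expand $\|m_s\|_2^2-\|m_0\|_2^2=2m_0\cdot(m_s-m_0)+\|m_s-m_0\|_2^2$. Here $m_s-m_0=\int_0^s\Sigma_ra\,\mathrm dr$, so $\|m_s-m_0\|_2^2\le\|\Sigma\|_{\mathrm{op}}^2\|a\|_2^2$, which after multiplying by $\ell$ is of the required order. The linear term $\ell\,m_0\cdot\Sigma_r a$ is a Gaussian-type quantity in $G$ with scale $\ell\sqrt\ell\|\Sigma_rm_0\|_2\le\ell\sqrt{\ell M}\|\Sigma_r\|_{\mathrm{op}}$, which is again acceptable.

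I expect the main obstacle to be (a): getting the Poincar\'e/integration-by-parts bound with $G$-dependent $\Sigma_s$ uniformly in $s$. The key point there is that every derivative in $G$ brings a factor $\sqrt\ell$ that compensates the $\sqrt M$ from $\|G\|_2$, so the bound closes in terms of $\sup_\gamma\|\Sigma_\gamma\|_{L^p(G)}$ alone.
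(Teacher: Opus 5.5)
Your algebraic skeleton matches the paper's, and your treatment of $x$ in (c) is fine. Since $\int_0^\gamma a^\top\Sigma_s a\,\mathrm ds=a\cdot(m_\gamma-m_0)$, what remains after your leading-order cancellation is built from $\Delta_\gamma=a\cdot(m_\gamma-m_0)-\gamma\ell\operatorname{Tr}\Sigma_0$, exactly as in the paper's formulas for $A_i$ and $B_i$.

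The gap is in the estimates. You bound the integrand $a^\top\Sigma_sa-\ell\operatorname{Tr}\Sigma_0$ pointwise in $s$, and in both (a) and (b) you treat $G$-dependent tilted quantities as if they were independent of $G$.

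In (a), $\partial_{G_k}\Sigma_s=s\sqrt\ell\,\kappa^{(s)}_3(\cdot,\cdot,e_k)$, where $\kappa^{(s)}_3$ is the third central-moment tensor of $\mu_s$; there is no factor $a$. So the Poincar\'e gradient of $G^\top\Sigma_sG$ contains $s\sqrt\ell\,\kappa^{(s)}_3(G,G,\cdot)$. This term has two factors of $G$ against a single $\sqrt\ell$, so your compensation count does not apply. The natural bound is $2\sqrt{\ell M}\,G^\top\Sigma_sG$, which is of the order of $\operatorname{Tr}\Sigma_s$ and can be of order $M$ while $\|\Sigma_s\|_{\mathrm F}$ is of order $\sqrt M$. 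Third cumulants of spins are also not controlled by $\|\Sigma\|_{\mathrm{op}}$. For instance, for $\mu_0=(1-\epsilon)\delta_{\mathbf 1}+\epsilon\delta_{-\mathbf 1}$ with $\epsilon\le1/4$, one has $\|\kappa_3(\cdot,\cdot,\mathbf 1/\sqrt M)\|_{\mathrm{op}}\asymp\sqrt M\,\|\Sigma\|_{\mathrm{op}}$. Decoupling $G$ from $\kappa^{(s)}_3$ requires differentiating again, which produces fourth-order moments, and so on. As written, the sketch does not close.

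In (b), the bound $\|m_s-m_0\|_2\le\|\Sigma\|_{\mathrm{op}}\|a\|_2$ is quadratic in $\Sigma$ and loses up to a factor $\sqrt M$. In the same example with $\epsilon=M^{-1/2}$ and $\ell M=1$, one has $\sup_\gamma\|\Sigma_\gamma\|_{L^p(G)}$ of order $\sqrt M$. So the target is of order $\ell\sqrt M$, while $\ell\|\Sigma\|_{\mathrm{op}}^2\|a\|_2^2\asymp\ell M$. The trivial bound $\|m_s-m_0\|_2^2\le4M$ does no better. Your claim that $\ell^{3/2}m_0\cdot\Sigma_rG$ is Gaussian with scale $\ell^{3/2}\|\Sigma_rm_0\|_2$ likewise presumes that $\Sigma_r$ does not depend on $G$.

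The missing idea is to keep the $s$-integral and estimate $\Delta_\gamma$ at the level of the means, where the $G$-Jacobian involves only covariances. For $F_\gamma=m_\gamma-m_0$ one has exactly $D_GF_\gamma=\gamma\sqrt\ell\,\Sigma_\gamma$. Hence $G\cdot F_\gamma=\delta_GF_\gamma+\gamma\sqrt\ell\operatorname{Tr}\Sigma_\gamma$, and $\Delta_\gamma=\ell U\cdot F_\gamma+\sqrt\ell\,\delta_GF_\gamma+\gamma\ell(\operatorname{Tr}\Sigma_\gamma-\operatorname{Tr}\Sigma_0)$. Meyer's inequality, \Cref{lem:tap-gaussian-poincare-p}-\textup{(ii)}, then bounds $\delta_GF_\gamma$ by $\|F_\gamma\|_{L^p(G)}+|\gamma|\sqrt\ell\|\Sigma_\gamma\|_{L^p(G)}$. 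This is the rigorous form of the Gaussian cancellation with $G$-dependent coefficients that you were after, and no higher cumulants appear.

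The remaining input is the mean-displacement bound $\|m_\gamma-m_0\|_{L^p(G)}\le C_p|\gamma|\sqrt\ell\sup_{|r|\le1}\|\Sigma_r\|_{L^p(G)}$ of \Cref{lem:tap-tilted-mean-displacement}. It follows from Gaussian integration by parts, $\frac{\mathrm d}{\mathrm d\gamma}\mathbb E_G m_\gamma=\ell\,\mathbb E_G[\Sigma_\gamma(U-2\gamma m_\gamma)]$, which uses the spin identity $\sum_k\partial_{G_k}(\Sigma_\gamma)_{jk}=-2\gamma\sqrt\ell\,(\Sigma_\gamma m_\gamma)_j$. The fluctuation is then handled by the $L^p$ Gaussian Poincar\'e inequality.

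With this bound, the trace drift is handled by the crude estimate $|\operatorname{Tr}\Sigma_\gamma-\operatorname{Tr}\Sigma_0|\le2\sqrt M\|F_\gamma\|_2$. It costs only a factor $\ell\sqrt M\cdot\sqrt\ell\le\ell$ in front of $\sup_r\|\Sigma_r\|_{L^p(G)}$. The same estimate controls $\|\overline m_x\|_2^2-\|m_0\|_2^2$ via $\|\overline m_x-m_0\|_2\le\|F_1\|_2+\|F_{-1}\|_2$.
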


We restate and prove this lemma in \Cref{subsec:tap-row-proofs}.
Once $r_{\mathrm{row}}$ is controlled, we obtain conditional moment bounds for $r_i$ given $\mathcal F^{\setminus i}$ by bounding the remaining cubic and diagonal terms and averaging over $U$.

\begin{lemma}
\label{lem:tap-complete-row}
For $\beta=\sqrt{n\ell}\leq\beta_*$ and $t\geq0$, the following statements hold.
\begin{enumerate}[label=\textup{(\roman*)}]
\item The conditional second moment of $r_i$ satisfies, almost surely,
\begin{equation*}
 e_i
 \le C\ell^2\bigl(1+\operatorname{Tr}[(\Sigma^{\setminus i})^2]\bigr)+4\ell.
\end{equation*}

\item For every $p\ge1$,
\begin{equation*}
 \sup_i\|e_i\|_{L^p}\le\frac{C\cdot p}{n}.
\end{equation*}

\item For every fixed $p\ge2$,
\begin{equation*}
 \sup_i\|r_i\|_{L^p}\le C_pn^{-1/2}.
\end{equation*}
\end{enumerate}
Here $C$ depends only on $\beta_*$, and $C_p$ depends only on $p,\beta_*$.
\end{lemma}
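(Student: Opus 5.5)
We would start from an exact algebraic decomposition of $r_i$ into the row residual of \Cref{lem:tap-gaussian-cancellation} plus two elementary remainders. Write $x=m_i$. Pinning $\sigma_i=b$ in $\mu$ turns the interaction with spin $i$ into the field $y_{-i}+ba_i$ on the surviving spins, and the diagonal term $J_{ii}\sigma_i^2/2$ is constant. Hence $\mu(\sigma_i=b)\propto\exp(by_i)Z_b$, which gives
\[
\operatorname{arctanh}(m_i)=y_i+\tfrac12(\log Z_{+1}-\log Z_{-1}).
\]
Moreover $m_{-i}=\pi_{+1}m_{+1}+\pi_{-1}m_{-1}=\overline m_x$, and $(Jm)_i=a_i\cdot\overline m_x+J_{ii}x$. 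For the Onsager term we split $\|m\|_2^2=x^2+\|\overline m_x\|_2^2$ and $n=M+1$, which yields
\[
 r_i=r_{\mathrm{row}}-J_{ii}x+\ell(1-x^2)x .
\]
Here $J_{ii}=\beta d_i$ with $d_i\sim N(\beta/n,2/n)$, independent of everything else. The cubic term is deterministically at most $\ell$. The diagonal term satisfies $\E{J_{ii}^2}\le 2\beta^2/n+\beta^4/n^2\le 3\ell$, and more generally $\|J_{ii}\|_{L^p}\le C_p\sqrt\ell$.

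For part (i), we condition on $\mathcal F^{\setminus i}$ and average over $U=X_iX_{-i}$ (uniform sign times $X_{-i}\sim\mu^{\setminus i}$), over $G$, and over $d_i$. Applying $(u+v+w)^2\le(1+\delta)u^2+C_\delta(v^2+w^2)$, or simply tracking constants so that the diagonal and cubic contributions give at most $4\ell$ after absorbing $\ell^2\le\ell$, it remains to show
\[
\E{r_{\mathrm{row}}^2\mid\mathcal F^{\setminus i}}\le C\ell^2\bigl(1+\operatorname{Tr}[(\Sigma^{\setminus i})^2]\bigr).
\]
By \Cref{lem:tap-gaussian-cancellation} with $p=2$, this reduces to bounding $\E[G]{\|\Sigma_\gamma\|_{\mathrm F}^2}$ uniformly in $|\gamma|\le1$ and in $U$ by $C(1+\operatorname{Tr}\Sigma_0^2)$.

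This comparison of the tilted covariance with the cavity covariance is the step we expect to be the main obstacle. Our first attempt is a change of measure. We write $\operatorname{Tr}\Sigma_\gamma^2=\langle\mathcal D_{1234}\rangle_\gamma$. The four-replica tilt density relative to $\mu_0^{\otimes4}$ is $\prod_{k}\exp(\gamma a\cdot\tau^k)/Z_\gamma^4$. We would control the numerator under $\E[G]{\cdot}$ by Gaussian integration: since $\ell M\le1$ and $\ell\|U\|_2^2\le1$, we get $\E[G]{\exp(c\gamma\sqrt\ell\,G\cdot\tau)}\le e^{c^2/2}$. For the denominator, Jensen gives $Z_\gamma\ge\exp(\gamma a\cdot m_0)$, whose inverse moments in $G$ are bounded for the same reason. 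Combining these with H\"older, however, only yields a bound by higher posterior moments of $\mathcal D_{1234}$ under $\mu_0$, not by $\operatorname{Tr}\Sigma_0^2$ itself. If that route cannot be closed pointwise, we would fall back on a differential argument in $\gamma$. Here $\partial_\gamma\Sigma_\gamma$ is the third cumulant contracted with $a$. We would bound the growth of $\E[G]{\operatorname{Tr}\Sigma_\gamma^2}$ through Gaussian integration by parts in $G$, since $a$ enters linearly with variance $\ell\le 1/M$, aiming for a Gronwall-type inequality over $|\gamma|\le1$.

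Parts (ii) and (iii) should then follow from moment bounds on the cavity system. Given $\mathcal F^{\setminus i}$, the cavity model is a planted model on $M=n-1$ spins at inverse temperature $\beta_M=\beta\sqrt{M/n}\le\beta_*$ with the same $\ell$, observed at time $t$. Hence \Cref{cor:tap-covariance}-\textup{(ii)} applies to it and gives $\|\operatorname{Tr}[(\Sigma^{\setminus i})^2]\|_{L^p}\le Knp$. Inserting this into (i) and using $\ell=\beta^2/n$ gives $\|e_i\|_{L^p}\le C\ell^2(1+Knp)+4\ell\le Cp/n$, which is (ii). For (iii) we use the same decomposition in $L^p$. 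By \Cref{lem:tap-gaussian-cancellation}, $\|r_{\mathrm{row}}\|_{L^p}\le C_p\ell\sup_\gamma\|\Sigma_\gamma\|_{L^p}$. The same tilt comparison, now at order $p$ and combined with \Cref{cor:tap-covariance}-\textup{(i)} for the cavity system, gives $\|\Sigma_\gamma\|_{L^p}\le C_p\sqrt n$, so $\|r_{\mathrm{row}}\|_{L^p}\le C_pn^{-1/2}$. The diagonal and cubic terms are $O_p(\sqrt\ell)=O_p(n^{-1/2})$ and $O(\ell)$ in $L^p$, which completes (iii).
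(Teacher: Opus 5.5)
Your decomposition $r_i=r_{\mathrm{row}}-J_{ii}x+\ell(1-x^2)x$ agrees with the paper's. Writing $J_{ii}=\ell+\zeta_i$, it becomes $r_{\mathrm{row}}-\ell x^3-\zeta_ix$. Your reduction of (i) to \Cref{lem:tap-gaussian-cancellation} at $p=2$, and your derivation of (ii) and (iii) from \Cref{cor:tap-covariance} for the $(n-1)$-spin cavity system, also match the paper.

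The gap is exactly the step you flag as the main obstacle. You never prove $\E[G\sim N(0,I_M)]{\|\Sigma_\gamma\|_{\mathrm F}^2}\le C\operatorname{Tr}[(\Sigma^{\setminus i})^2]$ uniformly in $|\gamma|\le1$ and $U$, nor its order-$p$ analogue used in (iii). You drop the change-of-measure route on the mistaken grounds that it only yields higher posterior moments. The fallback through $\partial_\gamma\Sigma_\gamma$ and Gronwall is only a sketch: it would require controlling third cumulants contracted with $a$, which $\operatorname{Tr}\Sigma_\gamma^2$ does not obviously bound. As written, neither (i) nor the covariance input to (iii) is established.

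The missing idea is to combine the Jensen bound with the numerator pointwise, rather than separating them and applying H\"older.
\begin{itemize}
\item Fix a four-replica configuration and set $S=\sigma^1+\cdots+\sigma^4$, $W=S-4m_0$. Then $Z_\gamma^{-4}e^{\gamma a\cdot S}\le e^{\gamma a\cdot W}$.
\item Neither $m_0$ nor the replicas depend on $G$, and $|U\cdot W|\le 8M$, $\|W\|_2^2\le 64M$. The Gaussian MGF therefore gives $\E[G\sim N(0,I_M)]{e^{\gamma a\cdot W}}=\exp(\gamma\ell U\cdot W+\gamma^2\ell\|W\|_2^2/2)\le e^{40\ell M}$, uniformly in the configuration.
\item Interchanging the finite replica sum with the $G$-expectation gives $\E[G\sim N(0,I_M)]{\langle F\rangle_{\mu_\gamma^{\otimes4}}}\le e^{40\ell M}\langle F\rangle_{\mu_0^{\otimes4}}$ for every nonnegative replica function $F$ not depending on $G$.
\end{itemize}
With $F=\mathcal D_{1234}$ this reads $\E[G\sim N(0,I_M)]{\operatorname{Tr}\Sigma_\gamma^2}\le e^{40\ell M}\operatorname{Tr}[(\Sigma^{\setminus i})^2]$, which closes (i). With $F=\mathcal D_{1234}^{p/2}$, combined with Jensen's $\|\Sigma_\gamma\|_{\mathrm F}^p\le\langle\mathcal D_{1234}^{p/2}\rangle_{\mu_\gamma^{\otimes4}}$, it gives the bound that \Cref{cor:tap-covariance}-\textup{(i)} turns into $\|\Sigma_\gamma\|_{L^p}\le C_p\sqrt n$ for (iii). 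This is the paper's \Cref{lem:tap-weak-tilt}. A comparison that loses to a fixed higher posterior moment of $\mathcal D_{1234}$ would still give (ii) and (iii), since \Cref{cor:tap-covariance}-\textup{(i)} controls all such moments, but it would not give (i) as stated.
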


\begin{proof}
We write $J_{ii}=\ell+\zeta_i$, where $\zeta_i\sim N(0,2\ell)$ is independent of the plant and all other observations.
When $n=1$, we have
$m_1=\tanh y_1$ and $r_1=-\ell m_1^3-\zeta_1m_1$; independence
of $\zeta_1$ gives all three bounds directly.  We now assume $n\geq2$.
We set $M=n-1$, $x=m_i$, and $a=a_i$, and use the cavity interpolation of \Cref{def:tap-cavity-pinning}.
For $b\in\{\pm1\}$, we obtain
\[
 \tfrac{1+bx}{2}=\mu(\sigma_i=b)\propto\exp(by_i)Z_b.
\]
Their ratio is $(1+x)/(1-x)=\exp(2y_i)Z_{+1}/Z_{-1}$; taking half its logarithm, we obtain
\[
 \operatorname{arctanh}\tp{x}-y_i
 =\tfrac12(\log Z_{+1}-\log Z_{-1}).
\]
By the mixture identity in \Cref{def:tap-cavity-pinning}, $\overline m_x=m_{-i}$, with $\overline m_x$ as in \Cref{lem:tap-gaussian-cancellation}.
Substituting $n=M+1$, $\norm{m}_2^2=x^2+\norm{\overline m_x}_2^2$, and $J_{ii}=\ell+\zeta_i$ into the {empirical residual (with $q=Q_n(m)=\|m\|_2^2/n$)} in \cref{eq:tap-prescribed-residual}, we obtain
\[
 r_i
 =\tfrac12(\log Z_{+1}-\log Z_{-1})
   -a\cdot\overline m_x-(\ell+\zeta_i)x
   +\ell(M+1-x^2-\norm{\overline m_x}_2^2)x.
\]
Using $r_{\mathrm{row}}$ from \Cref{lem:tap-gaussian-cancellation}, we obtain
\begin{equation}
 r_i=r_{\mathrm{row}}-\ell x^3-\zeta_i x.
 \label{eq:tap-complete-row-identity}
\end{equation}

For every fixed $p\geq2$, we condition on $(\mathcal F^{\setminus i},U)$, apply \Cref{lem:tap-gaussian-cancellation} with $x=m_i$, and then apply \Cref{lem:tap-weak-tilt} to its covariance supremum.
The dependence of $x$ on $G$ and the removed scalar observation is allowed by \Cref{lem:tap-gaussian-cancellation}; since $\ell M\leq1$ and $\mu_0=\mu^{\setminus i}$, we obtain, almost surely,
\[
 \E[\mathbb P_{n,\beta}^{\mathrm{pl,aug}}]{
 |r_{\mathrm{row}}|^p\mid\mathcal F^{\setminus i},U}
 \leq C_p\ell^p
 \left\langle\mathcal D_{1234}^{p/2}\right\rangle_{(\mu^{\setminus i})^{\otimes4}}.
\]
We first use this estimate at $p=2$, when $\langle\mathcal D_{1234}\rangle_{(\mu^{\setminus i})^{\otimes4}}=\operatorname{Tr}[(\Sigma^{\setminus i})^2]$.

The auxiliary diagonal contributes only a constant to the Hamiltonian, so $x$ and $r_{\mathrm{row}}$ do not depend on $\zeta_i$.
By independence, the cross term involving $\zeta_i$ in the square of \cref{eq:tap-complete-row-identity} has conditional expectation zero.
Using $|x|\le1$ and $\E[\zeta_i\sim N(0,2\ell)]{\zeta_i^2}=2\ell$, we obtain
\begin{align*}
 \E[\mathbb P_{n,\beta}^{\mathrm{pl,aug}}]{r_i^2\mid\mathcal F^{\setminus i},U}
 =&~\E[\mathbb P_{n,\beta}^{\mathrm{pl,aug}}]{(r_{\mathrm{row}}-\ell x^3)^2
                 \mid\mathcal F^{\setminus i},U}
   +\E[\mathbb P_{n,\beta}^{\mathrm{pl,aug}}]{\zeta_i^2x^2\mid\mathcal F^{\setminus i},U}\\
 \le&~2\E[\mathbb P_{n,\beta}^{\mathrm{pl,aug}}]{r_{\mathrm{row}}^2
                 \mid\mathcal F^{\setminus i},U}
      +2\E[\mathbb P_{n,\beta}^{\mathrm{pl,aug}}]{\ell^2 x^6\mid\mathcal F^{\setminus i},U}
      +2\ell\\
 \le&~C\ell^2
       \operatorname{Tr}[(\Sigma^{\setminus i})^2]+2\ell^2+2\ell.
\end{align*}
The upper bound is $\mathcal F^{\setminus i}$-measurable and independent of $U$, so we average over $U$ and use $\ell\leq1$ to obtain
\[
 e_i = \E[\mathbb P_{n,\beta}^{\mathrm{pl,aug}}]{r_i^2\mid\mathcal F^{\setminus i}}\leq C\ell^2\operatorname{Tr}[(\Sigma^{\setminus i})^2]+2\ell^2+2\ell
 \leq C\ell^2\bigl(1+\operatorname{Tr}[(\Sigma^{\setminus i})^2]\bigr)+4\ell.
\]
By \Cref{def:tap-cavity-pinning}, the cavity system has dimension $M$, inverse temperature $\sqrt{M\ell}\leq\beta_*$, and time $t$.
Taking $L^p$ norms, applying \Cref{cor:tap-covariance}-\textup{(ii)} to this system, and using $\ell\leq\beta_*^2/n$, we obtain, for every $p\geq1$,
\[
 \|e_i\|_{L^p}
 \le C\ell^2(1+np)+4\ell
 \le\frac{C\cdot p}{n}.
\]

For fixed $p\geq2$, we average the conditional $r_{\mathrm{row}}$ estimate above and apply \Cref{cor:tap-covariance}-\textup{(i)} to the same cavity system; the factor $\sqrt p$ is absorbed into $C_p$.
We obtain
\[
 \norm{r_{\mathrm{row}}}_{L^p}
 \leq C_p\ell\sqrt n
 \leq C_pn^{-1/2}.
\]
Applying Minkowski to
\cref{eq:tap-complete-row-identity} and using the pointwise bounds
$|x^3|\leq1$ and $|\zeta_i x|\leq|\zeta_i|$, we obtain
\begin{align*}
 \|r_i\|_{L^p}
 &\leq\|r_{\mathrm{row}}\|_{L^p}
       +\ell\|x^3\|_{L^p}+\|\zeta_i x\|_{L^p}\\
 &\leq\|r_{\mathrm{row}}\|_{L^p}+\ell
       +\sqrt{2\ell}\,\E[G\sim N(0,1)]{|G|^p}^{1/p}\\
 &\leq C_p\ell\sqrt n+\ell+C_p\sqrt\ell
 \leq C_pn^{-1/2},
\end{align*}
where the last step uses $\ell\leq\beta_*^2/n$.
\end{proof}

The $L^p$ bound for each $e_i$ in \Cref{lem:tap-complete-row}-\textup{(ii)} grows linearly in $p$; summing these bounds yields an exponential moment for $\mathcal E$.

\begin{corollary}[Exponential moment of the sum of conditional second moments]
\label{cor:tap-energy-mgf}
Under the assumptions of \Cref{lem:tap-complete-row}, there are positive constants $C,\kappa$, depending only on $\beta_*$, such that
\begin{equation*}
 \|\mathcal E\|_{L^p}\le C\cdot p\quad(p\ge1),\qquad
 \E[\mathbb P_{n,\beta}^{\mathrm{pl,aug}}]{\exp(\kappa\mathcal E)}\le 2.
\end{equation*}
\end{corollary}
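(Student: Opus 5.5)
The plan is to derive both bounds directly from \Cref{lem:tap-complete-row}-\textup{(ii)}, which gives $\sup_i\|e_i\|_{L^p}\le Cp/n$ for every $p\ge1$ with $C$ depending only on $\beta_*$.

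For the first bound, we apply Minkowski's inequality in $L^p$ to the finite sum $\mathcal E=\sum_{i=1}^n e_i$. This gives
\[
 \|\mathcal E\|_{L^p}\le\sum_{i=1}^n\|e_i\|_{L^p}\le n\cdot\frac{Cp}{n}=Cp
\]
for every $p\ge1$. The constant is the same as in \Cref{lem:tap-complete-row}-\textup{(ii)}, and hence depends only on $\beta_*$.

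For the exponential moment, we use that $\mathcal E\ge0$. The series for $\exp(\kappa\mathcal E)$ therefore has nonnegative terms, and Tonelli's theorem lets us interchange expectation and summation:
\[
 \E{\exp(\kappa\mathcal E)}=1+\sum_{p\ge1}\frac{\kappa^p\E{\mathcal E^p}}{p!}\le1+\sum_{p\ge1}\frac{(\kappa Cp)^p}{p!}.
\]
We insert $\E{\mathcal E^p}=\|\mathcal E\|_{L^p}^p\le(Cp)^p$ and the elementary bound $p!\ge(p/\mathrm e)^p$. The $p$th term is then at most $(\kappa C\mathrm e)^p$. Choosing $\kappa=(2C\mathrm e)^{-1}$, the resulting geometric series sums to $1$, so $\E{\exp(\kappa\mathcal E)}\le2$. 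The constant $\kappa$ depends only on $\beta_*$.

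There is no serious obstacle; the substantive work is in \Cref{lem:tap-complete-row}-\textup{(ii)}. Two points need checking. First, the constant $C$ there must be uniform in $i$, $n$ and $t$; this holds because it comes from \Cref{cor:tap-covariance} applied to the cavity system. Second, $\mathcal E$ must be integrable so that the series manipulation is legitimate. This follows from the finite moments above, or from the almost sure bound in \Cref{lem:tap-complete-row}-\textup{(i)} together with $\operatorname{Tr}[(\Sigma^{\setminus i})^2]\le n^2$.
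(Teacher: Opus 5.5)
Your proposal is correct and matches the paper's proof step for step. Both arguments apply Minkowski to $\sum_i e_i$ using \Cref{lem:tap-complete-row}-\textup{(ii)}, then expand $\exp(\kappa\mathcal E)$ via Tonelli with $p!\ge(p/\mathrm e)^p$ and take $\kappa=(2C\mathrm e)^{-1}$.
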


\begin{proof}
We use the moment bound for each $e_i$ in
\Cref{lem:tap-complete-row}-\textup{(ii)} and Minkowski's inequality to obtain
\[
 \|\mathcal E\|_{L^p}
 =\left\|\sum_i e_i\right\|_{L^p}
 \le\sum_i\|e_i\|_{L^p}\le C\cdot p.
\]
We choose $0<\kappa\le(2C\mathrm e)^{-1}$ to convert the linear moment bound into an exponential moment.
Since $\mathcal E\ge0$, its exponential series has nonnegative terms, so we apply Tonelli's theorem and $p!\ge(p/\mathrm e)^p$ to obtain
\[
 \E[\mathbb P_{n,\beta}^{\mathrm{pl,aug}}]{\exp(\kappa\mathcal E)}
 =1+\sum_{p=1}^\infty
        \frac{\kappa^p\E[\mathbb P_{n,\beta}^{\mathrm{pl,aug}}]{\mathcal E^p}}{p!}
 \le 1+\sum_{p=1}^\infty\frac{(\kappa C\cdot p)^p}{p!}
 \le 1+\sum_{p=1}^\infty(\kappa C\mathrm e)^p\le 2.
\]
\end{proof}

\subsection{From conditional moments to residual tails}

By \Cref{cor:tap-energy-mgf}, $\mathcal E=\sum_i e_i$ has an exponential moment, while the desired tail bound concerns $\norm{\mathbf r}_2^2=\sum_i r_i^2$.
For any $\mathcal F^{\setminus i}$-measurable event $A$, the definition $e_i=\E{r_i^2\mid\mathcal F^{\setminus i}}$ gives $\E{r_i^2\1_A}=\E{e_i\1_A}$.
However, we cannot apply this identity with the tail event for $\norm{\mathbf r}_2$, since it also depends on the deleted data.
In contrast, the tail event for $\|\mathbf r^{\setminus i}\|_2$ depends only on the surviving observations, so we use it in the conditional expectation and then return to the full-norm event.
The following lemma bounds the norm difference needed for these event replacements.

\begin{lemma}
\label{lem:tap-parent-child}
Under the planted model of \Cref{def:tap-planted-model}, set $J=\beta A^{\mathrm{full}}$.
If $n\geq2$ and $n\ell\leq\beta_*^2$, let $\mathbf r$ and
$\mathbf r^{\setminus i}$ be the residual before and after
deleting coordinate $i$.  Each $\mathbf r^{\setminus i}$ is
$\mathcal F^{\setminus i}$-measurable.  For every fixed integer $p\geq1$,
there is a constant $C_p$, depending only on $p,\beta_*,T$, such that,
for every $t\in[0,T]$,
\begin{equation*}
 \max_i\bigl\|\,\|\mathbf r\|_2-\|\mathbf r^{\setminus i}\|_2\,\bigr\|_{L^{4p}}\le C_pn^{-1/4}.
\end{equation*}
\end{lemma}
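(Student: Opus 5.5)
The plan follows the route sketched in the technical overview: decompose the full residual through the two pinned laws, compare each pinned residual with the cavity residual by a Gaussian-observation interpolation, and control the interpolation by a susceptibility bound together with Burkholder--Davis--Gundy.

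\emph{Step 1: reduce to the surviving coordinates.} Write $\mathbf r=(r_i,\mathbf r_{-i})$. Since $\bigl|\|\mathbf r\|_2-\|\mathbf r^{\setminus i}\|_2\bigr|\le|r_i|+\|\mathbf r_{-i}-\mathbf r^{\setminus i}\|_2$, and \Cref{lem:tap-complete-row}-\textup{(iii)} already gives $\|r_i\|_{L^{4p}}\le C_pn^{-1/2}$, it suffices to bound $\|\mathbf r_{-i}-\mathbf r^{\setminus i}\|_{L^{4p}}$.

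\emph{Step 2: relate $\mathbf r_{-i}$ to the pinned residuals.} Using $m_{-i}=\sum_b\pi_bm_b$ and the mixture identity of \Cref{def:tap-cavity-pinning}, I will expand $\mathbf r_{-i}$. Its surviving coordinates carry the extra terms $-J_{-i,i}m_i=-a_im_i$, a change of Onsager coefficient from $\ell(n-\|m\|^2)$ to $\ell(M-\|m_{-i}\|^2)$, and nonlinearity of $\operatorname{arctanh}$ under the mixture. The linear parts match exactly: $\sum_b\pi_b(y_{-i}+ba_i)=y_{-i}+m_ia_i$. Hence
\[
\mathbf r_{-i}=\sum_b\pi_b\mathbf r_{\mathrm{pin}}^b+\boldsymbol\delta,
\]
where $\boldsymbol\delta$ collects three contributions: the Jensen gap $\operatorname{arctanh}(\sum_b\pi_bm_b)-\sum_b\pi_b\operatorname{arctanh}(m_b)$, the Onsager mismatch of order $\ell\|m\|_2\le\ell\sqrt n$, and the quadratic overlap corrections.

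For the Jensen gap, note that $m_{+1}-m_{-1}=\Sigma_{-i,i}\cdot(\text{const})/(1-m_i^2)$ as the pinning difference. The gap is then quadratic in $m_{+1}-m_{-1}$, weighted by $(1-m_j^2)^{-1}$. I expect to bound it via the interpolation $\partial_\gamma m_\gamma=\Sigma_\gamma a_i$, which gives $\|m_{+1}-m_{-1}\|\le2\sup_\gamma\|\Sigma_\gamma a_i\|$. That quantity is $O(\sqrt\ell\,\|\Sigma\|)$ in $L^p$ by Gaussian conditioning on $G_i$ and \Cref{cor:tap-covariance}. The danger is the unbounded $\operatorname{arctanh}'$ factor, which is why the paper works with residual differences rather than with $\operatorname{arctanh}$ directly.

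\emph{Step 3: compare each pinned residual with the cavity residual.} Both $\mathbf r_{\mathrm{pin}}^b$ and $\mathbf r^{\setminus i}$ are values of the same map $z\mapsto R^{(Q)}_{n-1}(m(J^{\setminus i},z);J^{\setminus i},z)$, at fields $y_{-i}+ba_i$ and $y_{-i}$ respectively. Since $a_i=\ell bX_iX_{-i}+\sqrt\ell G_i$, conditionally on $\mathcal F^{\setminus i}$ this field shift is a Gaussian observation of duration $\ell$ of the (pinned) planted spins. I will realize it as an additional stochastic-localization segment of length $\ell$, which is the SDE of \Cref{lem:tap-residual-sde}. Along it,
\[
\mathrm d\mathbf r=(H^{(Q)}_{\mathrm{TAP}}\Sigma-I)\,\mathrm d\widetilde B+\text{drift}.
\]
BDG (\Cref{lem:bdg-stochastic-integral}) then gives a martingale part bounded by $C_p\sqrt\ell\sup\|H\Sigma-I\|_{L^{4p}}$, and \Cref{lem:tap-susceptibility} gives $\|H\Sigma-I\|_{L^p}\le C_pn^{1/4}$, so the martingale part is $O(n^{-1/4})$. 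The drift is of order $\ell\cdot O(\sqrt n)$, giving $O(n^{-1/2})$. The mismatch between the sign $b$ and the true $X_i$ is handled by the mixing weights $\pi_b$ and Nishimori symmetry, which transfers moments from the planted interpolation to each pinned endpoint at the cost of a factor $2$ via $\pi_b\le1$ and change of measure.

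\emph{Step 4: combine.} Combining the three pieces with Minkowski yields the claimed $C_pn^{-1/4}$ bound, uniformly in $i$ and in $t\le T$, since all the constants used depend only on $\beta_*,T,p$. The main obstacle is Step 3: making the pinned endpoint genuinely a localization path, so that BDG applies with the exact Bayesian structure, while keeping the susceptibility bound valid uniformly along the short segment.
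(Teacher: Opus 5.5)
Your Steps 1, 3 and 4 follow the paper's proof. This covers the reduction to $\mathbf r_{-i}-\mathbf r^{\setminus i}$ and the identification of $ba_i$ under $X_i=b$ with a length-$\ell$ localization segment started from $\mu^{\setminus i}$. It also covers BDG with \Cref{lem:tap-susceptibility} for the martingale part, the $O(\ell\sqrt n)$ drift, and the transfer to the unconditional law. That transfer should be stated as the exact identity $\E{\pi_bF_b}=\E{\1_{\{X_i=b\}}F_b}$ for $\mathcal F$-measurable $F_b$, plus Jensen over the two-point mixture.

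The gap is in Step 2, in the Jensen term $\Delta_{\mathrm{ent}}=\operatorname{arctanh}(m_{-i})-\sum_b\pi_b\operatorname{arctanh}(m_b)$. Your only input there is the vector bound $\|m_{+1}-m_{-1}\|_2=O(1)$ in $L^p$. That bound does not follow from ``conditioning on $G_i$'' alone, since $\Sigma_\gamma$ depends on $G_i$; the paper needs Gaussian integration by parts and the tilt comparison of \Cref{lem:tap-tilted-mean-displacement,lem:tap-weak-tilt}. More seriously, the term is coordinatewise quadratic, $|(\Delta_{\mathrm{ent}})_j|\lesssim(m_{+1}-m_{-1})_j^2\,w_j$, so $\|\Delta_{\mathrm{ent}}\|_2^2\lesssim\sum_j(m_{+1}-m_{-1})_j^4w_j^2$. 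An $O(1)$ Euclidean bound allows the whole increment to sit on one coordinate, giving only $\|\Delta_{\mathrm{ent}}\|_2=O(1)$ rather than $O(n^{-1/4})$. Exchangeability does not help, since it yields only $\E{(m_{+1}-m_{-1})_j^4}=O(1/n)$.

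What is missing is a delocalized bound $\E{|(m_b)_j-m^{\setminus i}_j|^q\mid X_i=b}^{1/q}\le C_qn^{-1/2}$ for each $j$. The paper obtains it in \Cref{lem:tap-mean-increments} from the same length-$\ell$ segment, now applied to the means, $\mathrm dm_j=\Sigma_{j,\cdot}\,\mathrm d\widetilde B$. It uses BDG with a \emph{row} bound $\|\Sigma_{j,\cdot}\|_{L^q}\le C_q$ along the path. That row bound comes from $\Sigma_{ij}=(1-m_i^2)\delta^{(i)}_j$ together with the vector bound you have (\Cref{lem:tap-influence}).

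The second missing ingredient is the weight $w_j$. The second derivative of $\operatorname{arctanh}$ is $2z(1-z^2)^{-2}$, not something of order $(1-m_j^2)^{-1}$. So the remainder carries $(1-z^2)^{-2}$ at points between the cavity and pinned means, and nothing in your plan makes this integrable. Your remark that the paper sidesteps $\operatorname{arctanh}$ by working with residual differences is not accurate: the Jensen gap is forced by the decomposition and must be bounded directly.

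The paper does this as follows:
\begin{itemize}
\item It expands each $\operatorname{arctanh}$ around $m^{\setminus i}_j$ so that the linear terms cancel exactly.
\item It bounds the remainder $f(a+u)-f(a)-f'(a)u$, with $f=\operatorname{arctanh}$, by $u^2((1-a^2)^{-2}+(1-(a+u)^2)^{-2})$.
\item It applies H\"older with uniform inverse moments $\E{(1-m_j^2)^{-q}}\le C_q$ for times in $[0,T+1]$ (\Cref{lem:tap-inverse-marginals}). These are proved by writing $\operatorname{arctanh}(m_j)=y_j+\phi(a_j)$, with $\phi$ odd and $\sqrt M$-Lipschitz in the removed row, and using Gaussian concentration.
\item It treats the pinned endpoint under $X_i=b$, where $(J^{\setminus i},y_{-i}+bJ_{-i,i})$ is planted at time $t+\ell$.
\end{itemize}
This last point is exactly where the $T$-dependence of $C_p$ enters, which your Step 4 asserts but does not locate. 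With both ingredients, each coordinate remainder is $O(n^{-1})$ in $L^p$ and $\|\Delta_{\mathrm{ent}}\|_{L^p}=O(n^{-1/2})$. The Onsager mismatch and the quadratic overlap corrections are then fine with the vector bounds you already have.
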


We restate and prove this lemma in \Cref{subsec:tap-stability-proofs}.
We now combine it with the conditional-moment bounds to prove \Cref{thm:polynomial-tap-residual}.

\begin{proof}[Proof of \Cref{thm:polynomial-tap-residual}]
We fix $H>0$.
At $\beta=0$, the posterior distribution factors over coordinates, so $m(A,Y_t)=\tanh(Y_t)$ and both TAP residuals vanish:
\[
 \mathbf r_t^{(q)}=\mathbf r_t^{(Q)}
 =\operatorname{arctanh}(m(A,Y_t))-Y_t=0.
\]
We now fix $0<\beta\leq\beta_*$ and $t\in(0,T]$, and take $\mathbf r=\mathbf r_t^{(Q)}$.
We write $\mathbb P=\mathbb P_{n,\beta}^{\mathrm{pl,aug}}$ for the outer law in this proof.
By \Cref{cor:tap-energy-mgf}, we have $\E[\mathbb P]{\exp(\kappa\mathcal E)}\le 2$ for some $\kappa>0$ depending only on $\beta_*$.
We apply \Cref{lem:tap-complete-row}-\textup{(iii)} with $p=4$ to obtain $\sum_i\|r_i\|_{L^4}^2\le C_4^2$.
By \Cref{lem:tap-parent-child}, for every fixed integer $p\geq1$, the $L^{4p}$ norm of $\|\mathbf r\|_2-\|\mathbf r^{\setminus i}\|_2$ is at most $C_pn^{-1/4}$ for every $i$, with $C_p$ depending only on $p,\beta_*,T$.
Here $\mathbf r^{\setminus i}$ and $e_i$ use the same sigma-field $\mathcal F^{\setminus i}$.
We first prove the tail bound for $\mathbf r_t^{(Q)}$ with failure probability at most $n^{-(H+1)}$; the extra power of $n^{-1}$ leaves room for the final union bound with the overlap error.

\emph{Step 1: transfer the tail conditionally.}
We set $\eta_n=n^{-1/8}$ and define the events
\[
 E_i^{\mathrm{del}}=\{\bigl|\,\|\mathbf r\|_2-\|\mathbf r^{\setminus i}\|_2\,\bigr|>\eta_n\}.
\]
Outside $E_i^{\mathrm{del}}$, the two norms differ by at most $\eta_n$.
We set $\epsilon_n\defeq\sum_i\E[\mathbb P]{(r_i^2+e_i)\1_{E_i^{\mathrm{del}}}}$.
For every $s>0$, we obtain
\begin{align}
 s^2\mathbb P(\norm{\mathbf r}_2\ge s)
 \le&~\sum_i\E[\mathbb P]{r_i^2\1_{\{\norm{\mathbf r}_2\ge s\}}}\notag\\
 \le&~\sum_i\E[\mathbb P]{r_i^2
            \1_{\{\|\mathbf r^{\setminus i}\|_2\ge s-\eta_n\}}}
                  +\sum_i\E[\mathbb P]{r_i^2\1_{E_i^{\mathrm{del}}}}\notag\\
 =&~\sum_i\E[\mathbb P]{e_i
            \1_{\{\|\mathbf r^{\setminus i}\|_2\ge s-\eta_n\}}}
                  +\sum_i\E[\mathbb P]{r_i^2\1_{E_i^{\mathrm{del}}}}\notag\\
 \le&~\sum_i\E[\mathbb P]{e_i\1_{\{\norm{\mathbf r}_2\ge s-2\eta_n\}}}
       +\sum_i\E[\mathbb P]{(r_i^2+e_i)\1_{E_i^{\mathrm{del}}}}\notag\\
 =&~\E[\mathbb P]{\mathcal E\,
            \1_{\{\norm{\mathbf r}_2\ge s-2\eta_n\}}}+\epsilon_n.
 \label{eq:tap-tail-transfer}
\end{align}
For the first equality in \cref{eq:tap-tail-transfer}, $\{\|\mathbf r^{\setminus i}\|_2\ge s-\eta_n\}\in\mathcal F^{\setminus i}$ and $\E[\mathbb P]{r_i^2\mid\mathcal F^{\setminus i}}=e_i$; thus conditioning leaves the indicator unchanged and replaces $r_i^2$ by $e_i$.

\smallskip
\noindent\emph{Step 2: control the deletion error.}
By \Cref{lem:tap-parent-child} and Markov's inequality, we obtain
\begin{equation*}
 \max_i\mathbb P(E_i^{\mathrm{del}})\leq\eta_n^{-4p}C_p^{4p}n^{-p}=C_p'n^{-p/2}.
\end{equation*}
We condition on $\mathcal F^{\setminus i}$ and apply Jensen's inequality to obtain $\E[\mathbb P]{e_i^2}\leq\E[\mathbb P]{r_i^4}$, so $\|e_i\|_{L^2}\leq\|r_i\|_{L^4}^2$.
We apply Cauchy--Schwarz and the fourth-moment bound above to obtain
\begin{align*}
 \epsilon_n \le&~\sum_i\bigl(\|r_i\|_{L^4}^2+\|e_i\|_{L^2}\bigr)
             \mathbb P(E_i^{\mathrm{del}})^{1/2}\\
 \le&~2\left(\sum_i\|r_i\|_{L^4}^2\right)
             \max_i\mathbb P(E_i^{\mathrm{del}})^{1/2}\\
 \le&~2C_4^2\max_i\mathbb P(E_i^{\mathrm{del}})^{1/2}
 \le C_p''n^{-p/4}.
\end{align*}
We choose a fixed integer $p>4(H+5)$ before choosing the lower bound on $n$.
Then $\epsilon_n\le n^{-(H+4)}$ for all sufficiently large $n$.

\smallskip
\noindent\emph{Step 3: truncate and iterate the threshold.}
By \Cref{cor:tap-energy-mgf} and Markov's inequality, we have $\mathbb P(\mathcal E>x)\le 2\exp(-\kappa x)$ for every $x\geq0$.
We integrate this tail bound to obtain
\begin{align}
 \E[\mathbb P]{\mathcal E\,\1_{\{\mathcal E>a\}}}
 =&~a\mathbb P(\mathcal E>a)
       +\int_a^\infty\mathbb P(\mathcal E>x)\d x \le 2(a+\kappa^{-1})\exp(-\kappa a).
 \label{eq:tap-energy-cutoff}
\end{align}
We take $a=(H+5)\kappa^{-1}\log n$.
The last expression is at most $n^{-(H+4)}$ for large $n$.
We split $\mathcal E$ at $a$ to bound the expectation in \cref{eq:tap-tail-transfer} by
\begin{align*}
 \E[\mathbb P]{\mathcal E\1_{\{\norm{\mathbf r}_2\ge s-2\eta_n\}}}
 \le&~ a\mathbb P(\norm{\mathbf r}_2\ge s-2\eta_n)
       +\E[\mathbb P]{\mathcal E\1_{\{\mathcal E>a\}}}.
\end{align*}
We bound $\E[\mathbb P]{\mathcal E\1_{\{\mathcal E>a\}}}$ by
\cref{eq:tap-energy-cutoff} and substitute this estimate and
$\epsilon_n\le n^{-(H+4)}$ into \cref{eq:tap-tail-transfer} to obtain
\begin{equation}
 s^2\mathbb P(\norm{\mathbf r}_2\ge s)
 \le a\mathbb P(\norm{\mathbf r}_2\ge s-2\eta_n)+2n^{-(H+4)}.
 \label{eq:tap-threshold-recursion}
\end{equation}

To iterate \cref{eq:tap-threshold-recursion}, we set
\[
 s_0=\sqrt{2a},\qquad s_j=s_0+2j\eta_n,\qquad
 L=\lceil(H+3)\log_2n\rceil.
\]
For $j\ge1$, we divide \cref{eq:tap-threshold-recursion} by $s_j^2\ge2a$
and use $s_j-2\eta_n=s_{j-1}$ to obtain
\[
 \mathbb P(\norm{\mathbf r}_2\ge s_j)
 \le\tfrac12\mathbb P(\norm{\mathbf r}_2\ge s_{j-1})+n^{-(H+4)}/a.
\]
Starting from $\mathbb P(\norm{\mathbf r}_2\geq s_0)\leq1$, we iterate $L$ times
and sum the geometric error terms to obtain
\[
 \mathbb P(\norm{\mathbf r}_2\ge s_L)
 \le2^{-L}+2n^{-(H+4)}/a\le n^{-(H+1)}
\]
for large $n$.  Since $L\eta_n=O_H(n^{-1/8}\log n)=o(1)$,
$s_L^2\le4a$ eventually, so
\begin{equation*}
 \mathbb P\left\{\norm{\mathbf r_t^{(Q)}}_2^2>4(H+5)\kappa^{-1}\log n\right\}\le n^{-(H+1)}.
\end{equation*}

\smallskip
\noindent\emph{Step 4: restore the prescribed overlap.}
We write $Q=Q_n(m_t)=\langle R_{12}\rangle$.
{We replace this empirical overlap $Q=\|m_t\|_2^2/n$ by the deterministic prescribed overlap $q_\beta(t)$ from \cref{eq:compact-rs-fixed-point}.}
The identity $\mathbf r_t^{(q)}-\mathbf r_t^{(Q)}=\beta^2(Q-q_\beta(t))m_t$ and $\|m_t\|_2^2\le n$ yield
\begin{equation}
 \norm{\mathbf r_t^{(q)}}_2^2
    \le 2\norm{\mathbf r_t^{(Q)}}_2^2
          +2\beta^4n(Q-q_\beta(t))^2.
 \label{eq:tap-restore-q}
\end{equation}
We apply the posterior-mean conclusion of \Cref{lem:tap-overlap-mgf} at dimension $n$, inverse temperature $\beta$, and time $t$ to obtain
\[
 \E[\mathbb P]{\exp(\tau_0n(Q-q_\beta(t))^2)}\le K_0.
\]
By Markov's inequality applied to this exponential, we have
\[
 \mathbb P\left\{n(Q-q_\beta(t))^2>
               \frac{H+2}{\tau_0}\log n\right\}
 \le K_0n^{-(H+2)}\le n^{-(H+1)}
\]
for large $n$.
We combine the two tail bounds with \cref{eq:tap-restore-q} and choose $C_H$ to absorb the two logarithmic thresholds.
The union bound is $2n^{-(H+1)}\le n^{-H}$ for $n\ge2$.

The constants in the overlap and conditional second-moment bounds depend only on $\beta_*$, not on $\beta\in[0,\beta_*]$ or $t\geq0$.
Once $H$ is fixed, all moment orders used to bound the residual coordinates and compare the cavity, pinned, and full residuals depend only on $H$.
The inverse-marginal estimates used in the deletion comparison are evaluated only at $t$ and $t+\ell\leq T+1$, while deleting coordinates keeps $\ell=\beta^2/n$ unchanged.
Thus the remaining stability constants depend only on $\beta_*$ and $T$.
We can therefore choose $C_H,n_H$ independently of $\beta,t$ and take both suprema in the theorem.
At $t=0$, we have $m_t=0$ by spin-flip symmetry, so the residual vanishes.
\end{proof}

\endgroup

\section{Lipschitz property with cost functions}
\label{sec:app-inverse-poly-geometry-proof}

In this section, we prove \Cref{lem:app-wpi-branches}.
{For a fixed constant $T>0$, we use the original augmented law $\mathbb P_{0,n,\beta}^{\mathrm{aug}}$ from \Cref{def:original-matrix-law} and consider the restriction of $Y$ to $[0,T]$.
We write $A^{\mathrm{full}}=A+\operatorname{diag}(d)$ and $m_t=m(A,Y_t)$.}
Intuitively, to prove the above lemma, we need to approximate the posterior mean $m(A, Y_t)$.
{To do this, we use approximate message passing (AMP) and the TAP free energy.}
We follow the use of AMP to approximate posterior means in sampling
algorithms for the SK model~\cite{ElAlaouiMontanariSellke25},
together with the local convexity analysis of the TAP free energy
around AMP iterates~\cite{Celentano24}.
{AMP locates a region where the TAP free energy is strongly convex.}
{We show that both $m(A,Y_t)$ and a local TAP stationary point lie in this region; the curvature lower bound then controls their distance by the TAP residual at $m(A,Y_t)$.}
{For each fixed $(\beta,t)$ with $t>0$ and every $H>0$, we establish these geometric conditions except with probability at most $C_Hn^{-H}$.}

We first state the AMP good events and show how they imply
\Cref{lem:app-wpi-branches}.  Their probability estimates are proved
separately in \Cref{sec:amp-good-events-proof}.

\begin{definition}[{AMP and TAP conditions}]
\label{def:amp-tap-geometry}
Fix $0<\beta<1$, $t>0$, and an integer $K\geq0$.
Let $A^{\mathrm{full}}$ be a real symmetric matrix.
{Let $A$ be its zero-diagonal part.}
{We use the depth-$K$ AMP output $\widehat m_K(y)$, computed with matrix $A^{\mathrm{full}}$, external field $y$, and inverse temperature $\beta$, as the center of the balls below.}
{We use the following conditions on $(A^{\mathrm{full}},y)$.}
\begin{enumerate}[label=\textup{(\roman*)}]
\item For $\epsilon>0$, \emph{$\epsilon$-small TAP residual} means
\begin{equation*}
 \norm{R_{\mathrm{TAP}}\tp{\widehat m_K(y);A^{\mathrm{full}},y,q_\beta(t)}}_2
 \leq\epsilon\sqrt n.
\end{equation*}
\item For $r,\kappa>0$, \emph{$(r,\kappa)$-local strong convexity}
means that, for every $u\in B(\widehat m_K(y),r\sqrt n)\cap(-1,1)^n$,
\begin{equation*}
 \nabla_u^2F_{\mathrm{TAP}}\tp{u;A^{\mathrm{full}},y,q_\beta(t)}
 \succeq\kappa I_n.
\end{equation*}
\item For $r>0$, \emph{$r$-posterior-mean proximity} means
\begin{equation*}
 \norm{m(A,y)-\widehat m_K(y)}_2\leq r\sqrt n.
\end{equation*}
\end{enumerate}
\end{definition}

The next lemma collects the probability estimates needed to prove
\Cref{lem:app-wpi-branches}.  At small times we use strong convexity
on the whole cube.  At large times we use the three properties in
\Cref{def:amp-tap-geometry}, with the same AMP depth and geometric
constants throughout each parameter rectangle.

\begin{lemma}[Probability of the AMP good events]
\label{lem:amp-good-events}
{Fix $0<\beta_0<\beta_1<1$ and a constant $T>0$.}
There exist $t_0\in(0,T]$, $\kappa_0>0$, a finite collection of
closed rectangles $I_1,\ldots,I_{\ell_{\max}}$ covering
$[\beta_0,\beta_1]\times[t_0,T]$, and constants
$K_\ell\in\mathbb Z_{\geq0}$ and $r_\ell,\kappa_\ell>0$.
For every $H>0$, there are $C_H,n_H$ such that, for all $n\geq n_H$,
the following hold.
\begin{enumerate}[label=\textup{(\roman*)}]
\item For every $\beta\in[\beta_0,\beta_1]$, with
$\mathbb P_{n,\beta}^{\mathrm{pl,aug}}$-probability at least
$1-C_Hn^{-H}$, we have $\norm{A^{\mathrm{full}}}_{\mathrm{op}}\leq4$
and
\begin{equation*}
 I_n-\beta A^{\mathrm{full}}+\beta^2(1-q_\beta(t))I_n
 \succeq4\kappa_0I_n
 \qquad\text{for every }t\in[0,t_0].
\end{equation*}
\item For every $\ell$ and every
$(\beta,t)\in I_\ell\cap([\beta_0,\beta_1]\times[t_0,T])$, with
$\mathbb P_{n,\beta}^{\mathrm{pl,aug}}$-probability at least
$1-C_Hn^{-H}$, the pair $(A^{\mathrm{full}},Y_t)$ satisfies
$(\kappa_\ell r_\ell/8)$-small TAP residual,
$(4r_\ell,4\kappa_\ell)$-local strong convexity, and
$r_\ell$-posterior-mean proximity at inverse temperature $\beta$,
time $t$, and AMP depth $K_\ell$.
On the same event,
$\norm{A^{\mathrm{full}}}_{\mathrm{op}}\leq4$ and
$\norm{Y_t}_\infty\leq n^2$.
\end{enumerate}
\end{lemma}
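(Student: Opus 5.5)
The plan is to treat part~(i) and part~(ii) separately, and in each case to boost an $1-o_n(1)$ statement from \cite{ElAlaouiMontanariSellke25,Celentano24} to failure probability $C_Hn^{-H}$. The tools are Gaussian concentration, \Cref{lem:gaussian-enlargement-boosting}, and the overlap bounds of \Cref{lem:tap-overlap-mgf}.

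\emph{Part~(i).} Write $A^{\mathrm{full}}=W+\beta XX^\top/n$. Then $\|A^{\mathrm{full}}\|_{\mathrm{op}}\le\|W\|_{\mathrm{op}}+\beta$. Since $\|W\|_{\mathrm{op}}$ is a $\sqrt{2/n}$-Lipschitz function of the standard Gaussians with mean at most $2+o(1)$, we get $\|W\|_{\mathrm{op}}\le 2.5$ except with probability $e^{-cn}$. This gives $\|A^{\mathrm{full}}\|_{\mathrm{op}}\le4$. For the curvature, the rank-one part contributes $-\beta^2XX^\top/n\succeq-\beta^2 I$ only in one direction, which is too crude. Instead, use $\lambda_{\max}(\beta A^{\mathrm{full}})\le\beta(2+\delta)+o(1)$, because a BBP outlier appears only for $\beta>1$. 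A cleaner route is $\lambda_{\max}(W+\beta xx^\top/n)\le \beta+1/\beta$ for $\beta<1$ up to $\delta$, and that eigenvalue concentrates exponentially by Lipschitz concentration of $\lambda_{\max}$ in $W$. Then at $t=0$, where $q_\beta(0)=0$,
\[
1-\beta\lambda_{\max}+\beta^2\ \ge\ 1+\beta^2-\beta(2+\delta)=(1-\beta)^2-\beta\delta .
\]
This is positive uniformly on $[\beta_0,\beta_1]$ once $\delta$ is small. By \Cref{lem:q-beta-regularity}, $q_\beta(t)\le t/(1-\beta^2)$, so choosing $t_0$ small keeps the bound above $4\kappa_0$ for all $t\le t_0$ simultaneously, since only $q_\beta(t)$ depends on $t$.

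\emph{Part~(ii).} First fix the covering. For each $(\beta,t)$ in the compact set $[\beta_0,\beta_1]\times[t_0,T]$, the results of \cite{ElAlaouiMontanariSellke25,Celentano24} give $r,\kappa$ and a depth $K$ such that, with probability $1-o_n(1)$, the TAP residual at $\widehat m_K$ is at most $(\kappa r/16)\sqrt n$, the Hessian is $\succeq8\kappa I$ on the ball of radius $8r\sqrt n$, and $\E{D_n}\le r^2/2$ (\Cref{prop:fixed-time-amp-tap-input}). By continuity of state evolution and of $q_\beta(t)$ in $(\beta,t)$, these margins persist on a small closed rectangle around that point with halved constants. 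Compactness then yields finitely many rectangles $I_\ell$.

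Next, boost the residual and convexity conditions. Condition on $X=x$ and view the data as $\xi=(W,G)$. Let $\mathcal G_x$ be the set of $\xi$ with the stronger-margin residual and convexity properties together with $\|W\|_{\mathrm{op}}\le3$. By symmetry ($x\mapsto$ gauge transform), $\Pr{\xi\in\mathcal G_x}$ does not depend on $x$, so it is at least $1/2$ for large $n$. If $\mathrm{dist}(\xi,\mathcal G_x)\le\rho\sqrt n$, the fixed-depth AMP map is $C_K$-Lipschitz on the event $\|W\|_{\mathrm{op}}\le3$, which moves $\widehat m_K$ by $O(\rho\sqrt n)$. It moves the residual by $O(\rho\sqrt n)$ as well; the $\operatorname{arctanh}$ term is handled because $\operatorname{arctanh}(\widehat m_K)=z_K$ is Lipschitz. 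The Hessian moves by $O(\rho)$ in operator norm, since the $\operatorname{diag}((1-u^2)^{-1})$ part does not depend on $\xi$, and a shifted ball of radius $4r\sqrt n$ stays inside the original one of radius $8r\sqrt n$. Hence the weaker-margin properties stated in the lemma hold. \Cref{lem:gaussian-enlargement-boosting} then gives failure probability at most $2\Phi(-\rho\sqrt n)$. The bound $\|Y_t\|_\infty\le n^2$ is a trivial Gaussian tail bound.

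For posterior-mean proximity, follow the overview. Expand $D_n=\|m\|_2^2/n+\|\widehat m\|_2^2/n-2\langle X,\widehat m\rangle/n$, using Nishimori to replace $\langle\sigma,\widehat m\rangle$ by $\langle X,\widehat m\rangle$ inside the expectation. This identity holds only in conditional expectation, so the cross term must be handled carefully. The first term concentrates at rate $n^{-1/2}$ in every $L^p$ by \Cref{lem:tap-overlap-mgf}. The last two are $O(n^{-1/2})$-Lipschitz functions of $\xi$ given $x$ on the operator-norm event, so they concentrate with all moments. This gives $\E{|D_n-\E D_n|^p}\le C_pn^{-p/2}$, and Markov's inequality with $p=2H$ finishes. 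Uniformity over $(\beta,t)\in I_\ell$ comes from uniform constants.

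The main obstacle is the cross term $\langle m,\widehat m\rangle$. It is not Lipschitz in $\xi$, because $m$ is not Lipschitz. I would write it as $\langle X,\widehat m\rangle$ plus a centered fluctuation $\langle m-X,\widehat m\rangle$, and control that fluctuation's $p$th moments by conditional Gaussian concentration of $X$ given $\widehat m$. Alternatively, I would apply Cauchy--Schwarz with the overlap concentration of $\sigma^1\cdot X$.
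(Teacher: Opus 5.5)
The architecture matches the paper's. Part~(i) works by a different route: you use BBP plus Lipschitz concentration given $X$. The paper instead bounds $\|A^{\mathrm{full}}\|_{\mathrm{op}}\le2+\epsilon$ under the null law, where $A^{\mathrm{full}}$ is GOE plus $(\beta/n)I_n$, and transfers this with \Cref{lem:planting-likelihood}, so it needs no spiked-matrix input. Drop your aside $\lambda_{\max}\le\beta+1/\beta$: in your display it gives curvature $1+\beta^2-\beta(\beta+1/\beta)=0$.

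\textbf{Main gap: the cross term.} You flag the cross term in posterior-mean proximity but do not close it, so $\mathbb E|D_n-\mathbb E D_n|^p\le C_pn^{-p/2}$ is unproved. Neither of your remedies works.
\begin{itemize}
\item Given $(A^{\mathrm{full}},Y_t)$, $X$ has the Gibbs law $\mu_{A,Y_t}$, not a Gaussian law. So there is no conditional Gaussian concentration to invoke for $\langle X-m,\widehat m\rangle$ along the data-dependent direction $\widehat m$. The covariance estimates only give $\operatorname{Tr}\Sigma^2\lesssim n$.
\item Cauchy--Schwarz gives only $|\langle m-X,\widehat m\rangle|/n\le\|m-X\|_2/\sqrt n$. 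By the overlap estimates this is near $\sqrt{1-q_\beta(t)}$, not near $0$.
\end{itemize}
The missing idea is the tower property plus conditional Jensen. Since $\widehat m$ is $(A^{\mathrm{full}},Y_t)$-measurable and $m=\mathbb E[X\mid A^{\mathrm{full}},Y_t]$, the cross term is exactly $\mathbb E[S_3\mid A^{\mathrm{full}},Y_t]$ with $S_3=\langle X,\widehat m\rangle/n$. Conditional Jensen then gives $\mathbb E|\mathbb E[S_3\mid A^{\mathrm{full}},Y_t]-\mathbb E S_3|^{2p}\le\mathbb E|S_3-\mathbb E S_3|^{2p}$. So the Gaussian concentration of the Lipschitz quantity $S_3$ transfers directly to the cross term. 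For $S_2$ and $S_3$ themselves you also need two further steps. First, use the sign symmetry so that $\mathbb E[S_j\mid X=x]$ does not depend on $x$. Second, extend the Lipschitz function off the operator-norm event before applying \Cref{lem:gaussian-enlargement-boosting}.

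\textbf{Smaller gap: uniformity in $(\beta,t)$.} ``Continuity of state evolution'' gives, at each fixed $(\beta,t)$, only a $1-o_n(1)$ statement whose $n$-threshold depends on the point. You need $\mathbb P(\xi\in\mathcal G_x)\ge1/2$ and $\mathbb E D_n\le r^2/2$ with a single threshold for a whole rectangle, since $n_H$ must be uniform. The paper gets this deterministically. It couples all nearby parameters through a common $(X,W,G)$ and uses \Cref{lem:inverse-poly-fixed-depth-stability}, so the anchor event with stronger margins implies the event at $(\beta,t)$ and the anchor's exponential bound transfers verbatim. It controls $\mathbb E D_n$ on the rectangle through $q_\beta(t)+\mathbb E S_2-2\mathbb E S_3$, each term Lipschitz in $(\beta,t)$ uniformly in $n$. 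Here $\mathbb E\|m\|_2^2/n$ is replaced by $q_\beta(t)$ using overlap concentration; state evolution says nothing about the posterior mean.
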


We prove \Cref{lem:amp-good-events} in
\Cref{sec:amp-good-events-proof}.  We next show how it
implies \Cref{lem:app-wpi-branches}, using the residual estimate from
\Cref{thm:polynomial-tap-residual}.

We first bound the change of the AMP center with the field following the proof of \cite[Lemma~5.17]{ElAlaouiMontanariSellke25}, then transfer this bound to the local TAP stationary point.
\begin{lemma}
\label{lem:amp-spatial-lipschitz}
For every fixed integer $K\geq0$, there is a constant $L_K$ such that,
for every $n$, $0<\beta<1$, and symmetric matrix $A^{\mathrm{full}}$
with $\norm{A^{\mathrm{full}}}_{\mathrm{op}}\leq4$,
\begin{equation*}
 \norm{\widehat m_K(y)-\widehat m_K(z)}_2
 \leq L_K\norm{y-z}_2
 \qquad(y,z\in\R^n).
\end{equation*}
\end{lemma}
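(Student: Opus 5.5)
We argue by induction on $k$, tracking simultaneously the Lipschitz constants of $y\mapsto z_k(y)$, $y\mapsto\widehat m_k(y)$ and $y\mapsto\widehat m_{k-1}(y)$. Write $\Delta z_k=z_k(y)-z_k(z)$ and similarly $\Delta\widehat m_k$. The base cases are trivial: $z_0=0$ and $\widehat m_{-1}=0$ do not depend on the field, and $\widehat m_0=\tanh(0)=0$. Since $\tanh$ is $1$-Lipschitz coordinatewise, $\|\Delta\widehat m_k\|_2\le\|\Delta z_k\|_2$. It therefore suffices to bound $\|\Delta z_{k+1}\|_2$ linearly in $\|y-z\|_2$, $\|\Delta\widehat m_k\|_2$ and $\|\Delta\widehat m_{k-1}\|_2$, with constants independent of $n$ and $\beta$.

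Subtracting the recursions gives
\[
\Delta z_{k+1}=\beta A^{\mathrm{full}}\Delta\widehat m_k+(y-z)-\beta^2\Bigl[\bigl(1-Q_n(\widehat m_k(y))\bigr)\widehat m_{k-1}(y)-\bigl(1-Q_n(\widehat m_k(z))\bigr)\widehat m_{k-1}(z)\Bigr],
\]
where $Q_n(u)=\|u\|_2^2/n$. The first term is at most $4\|\Delta\widehat m_k\|_2$, because $\beta<1$ and $\|A^{\mathrm{full}}\|_{\mathrm{op}}\le4$. For the Onsager bracket, add and subtract $\bigl(1-Q_n(\widehat m_k(y))\bigr)\widehat m_{k-1}(z)$. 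Two facts are used here. First, $0\le1-Q_n\le1$ because all iterates lie in $[-1,1]^n$. Second, $\|\widehat m_{k-1}(z)\|_2\le\sqrt n$. These give
\[
\bigl|Q_n(\widehat m_k(y))-Q_n(\widehat m_k(z))\bigr|\le\frac{\|\widehat m_k(y)-\widehat m_k(z)\|_2\,\|\widehat m_k(y)+\widehat m_k(z)\|_2}{n}\le\frac{2\|\Delta\widehat m_k\|_2}{\sqrt n}.
\]
Multiplying this by $\|\widehat m_{k-1}(z)\|_2\le\sqrt n$ shows the bracket is at most $\|\Delta\widehat m_{k-1}\|_2+2\|\Delta\widehat m_k\|_2$. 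Hence
\[
\|\Delta z_{k+1}\|_2\le\|y-z\|_2+6\|\Delta\widehat m_k\|_2+\|\Delta\widehat m_{k-1}\|_2.
\]

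By induction this yields $L_{k+1}\le1+6L_k+L_{k-1}$, with $L_{-1}=L_0=0$. This defines a finite constant $L_K$ that depends only on $K$, and in particular not on $n$, $\beta$ or $A^{\mathrm{full}}$.

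The only delicate point is the Onsager coefficient. The empirical overlap $\|\widehat m_k\|_2^2/n$ is a nonlinear function of the iterate, and its change must be paired with the $\sqrt n$ size of $\widehat m_{k-1}$ so that the resulting factor of $n$ cancels. This is exactly the argument in the proof of \cite[Lemma~5.17]{ElAlaouiMontanariSellke25}, and it is the step where the normalization must be tracked carefully. Everything else is a direct consequence of the operator-norm bound on $A^{\mathrm{full}}$.
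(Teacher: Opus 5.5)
Your proof is correct and follows the paper's argument essentially step for step. You use the same induction on depth, the same add-and-subtract of the Onsager coefficient at $y$ against $\widehat m_{k-1}(z)$, the same bounds $4\|\Delta\widehat m_k\|_2$, $\|\Delta\widehat m_{k-1}\|_2$ and $2\|\Delta\widehat m_k\|_2$, and the same recursion $L_{k+1}=6L_k+L_{k-1}+1$ with $L_{-1}=L_0=0$.
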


\begin{proof}
Fix $y,z\in\R^n$.  We compare the two AMP recursions and prove the
Lipschitz property by induction on their depth.
Recall that $\widehat m_k(y)=\tanh(z_k(y))$, with
$\widehat m_{-1}(y)=z_0(y)=0$ and
\begin{equation*}
 z_{k+1}(y)=\beta A^{\mathrm{full}}\widehat m_k(y)+y
 -\beta^2\left(1-\frac{\norm{\widehat m_k(y)}_2^2}{n}\right)
   \widehat m_{k-1}(y).
\end{equation*}
The recursion at $z$ has the same matrix and inverse temperature.
Subtracting the two updates, and adding and subtracting the Onsager
coefficient at $y$ times $\widehat m_{k-1}(z)$, gives
\begin{align*}
 z_{k+1}(y)-z_{k+1}(z)
 =&~\beta A^{\mathrm{full}}(\widehat m_k(y)-\widehat m_k(z))+(y-z)
 ~-\beta^2\left(1-\frac{\norm{\widehat m_k(y)}_2^2}{n}\right)
       (\widehat m_{k-1}(y)-\widehat m_{k-1}(z))\\
 &~+\frac{\beta^2}{n}
       (\norm{\widehat m_k(y)}_2^2-\norm{\widehat m_k(z)}_2^2)
       \widehat m_{k-1}(z).
\end{align*}
We bound the three terms other than $y-z$ separately.
The matrix term satisfies
\begin{equation*}
 \norm{\beta A^{\mathrm{full}}(\widehat m_k(y)-\widehat m_k(z))}_2
 \leq4\beta\norm{\widehat m_k(y)-\widehat m_k(z)}_2.
\end{equation*}
Since every coordinate of $\widehat m_k(y)$ lies in $[-1,1]$,
$0\leq1-\norm{\widehat m_k(y)}_2^2/n\leq1$.  Hence
\begin{align*}
 \left\|\beta^2\left(1-\frac{\norm{\widehat m_k(y)}_2^2}{n}\right)
       (\widehat m_{k-1}(y)-\widehat m_{k-1}(z))\right\|_2
 \leq \beta^2\norm{\widehat m_{k-1}(y)-\widehat m_{k-1}(z)}_2.
\end{align*}
For the last term, the difference of squared norms satisfies
\begin{align*}
 \abs{\norm{\widehat m_k(y)}_2^2-\norm{\widehat m_k(z)}_2^2}
 =&~\abs{\inner{\widehat m_k(y)+\widehat m_k(z)}
                   {\widehat m_k(y)-\widehat m_k(z)}}\\
 \leq&~(\norm{\widehat m_k(y)}_2+\norm{\widehat m_k(z)}_2)
          \norm{\widehat m_k(y)-\widehat m_k(z)}_2\\
 \leq&~2\sqrt n\norm{\widehat m_k(y)-\widehat m_k(z)}_2.
\end{align*}
Together with $\norm{\widehat m_{k-1}(z)}_2\leq\sqrt n$, this gives
\begin{align*}
 \left\|\frac{\beta^2}{n}
       (\norm{\widehat m_k(y)}_2^2-\norm{\widehat m_k(z)}_2^2)
       \widehat m_{k-1}(z)\right\|_2
 \leq \frac{\beta^2}{n}\,2\sqrt n
       \norm{\widehat m_k(y)-\widehat m_k(z)}_2\,\sqrt n
 =2\beta^2\norm{\widehat m_k(y)-\widehat m_k(z)}_2.
\end{align*}
Taking norms in the update difference and using these three bounds,
we obtain
\begin{align*}
 \norm{z_{k+1}(y)-z_{k+1}(z)}_2
 \leq&~(4\beta+2\beta^2)\norm{\widehat m_k(y)-\widehat m_k(z)}_2
       +\beta^2\norm{\widehat m_{k-1}(y)-\widehat m_{k-1}(z)}_2
       +\norm{y-z}_2\\
 \leq&~6\norm{\widehat m_k(y)-\widehat m_k(z)}_2
       +\norm{\widehat m_{k-1}(y)-\widehat m_{k-1}(z)}_2
       +\norm{y-z}_2.
\end{align*}
The last inequality uses $0<\beta<1$.

Set $L_{-1}=L_0=0$ and $L_{k+1}=6L_k+L_{k-1}+1$.
We claim that, for every $k\geq-1$,
$\norm{\widehat m_k(y)-\widehat m_k(z)}_2\leq L_k\norm{y-z}_2$.
For $k=-1,0$, both AMP outputs are zero, so the claim holds.
Suppose it holds at depths $k-1$ and $k$.
Since $\tanh$ is one-Lipschitz, the preceding recursion implies
\begin{align*}
 \norm{\widehat m_{k+1}(y)-\widehat m_{k+1}(z)}_2
 \leq&~\norm{z_{k+1}(y)-z_{k+1}(z)}_2
 \leq 6L_k\norm{y-z}_2+L_{k-1}\norm{y-z}_2+\norm{y-z}_2\\
 =&~L_{k+1}\norm{y-z}_2.
\end{align*}
This proves the claim at depth $k+1$.  Taking $k=K$ gives the
asserted bound, with $L_K$ depending only on $K$.
\end{proof}

{The small-residual and local strong-convexity conditions in \Cref{def:amp-tap-geometry} give a local TAP stationary point.
We combine them with the Lipschitz property of the AMP center to compare stationary points at different fields.}
\begin{lemma}
\label{lem:tap-stationary-point}
Fix $0<\beta<1$, $t>0$, an integer $K\geq0$, a symmetric matrix
$A^{\mathrm{full}}$, and $r,\kappa>0$.  Write
$F_y(u)=F_{\mathrm{TAP}}(u;A^{\mathrm{full}},y,q_\beta(t))$.
We have that
\begin{enumerate}[label=\textup{(\roman*)}]
\item Fix $y\in\R^n$.  Suppose $(A^{\mathrm{full}},y)$ satisfies
$(\kappa r/8)$-small TAP residual and
$(4r,4\kappa)$-local strong convexity
at time $t$ and AMP depth $K$, as defined in
\Cref{def:amp-tap-geometry}.
Then $F_y$ has a unique stationary point $m^\star$ in
$B(\widehat m_K(y),4r\sqrt n)\cap(-1,1)^n$,
with $\norm{m^\star-\widehat m_K(y)}_2\leq r\sqrt n/32$.
For every $u\in B(\widehat m_K(y),4r\sqrt n)\cap(-1,1)^n$,
\begin{equation*}
 \norm{u-m^\star}_2\leq(4\kappa)^{-1}\norm{\nabla F_y(u)}_2.
\end{equation*}
\item Let $D\subset\R^n$ and suppose the AMP center
$y\mapsto\widehat m_K(y)$ is $L$-Lipschitz on $D$.
Suppose the two properties in part~\textup{(i)} hold for every $y\in D$.
Write $m^\star(y)$ for its unique stationary point.
Then, for every $y,z\in D$,
\begin{equation*}
 \norm{m^\star(y)-m^\star(z)}_2
 \leq\max((4\kappa)^{-1},\,2L/(3r))\norm{y-z}_2.
\end{equation*}
\end{enumerate}
\end{lemma}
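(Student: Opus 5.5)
For part~(i), write $c=\widehat m_K(y)$ and $\Omega=B(c,4r\sqrt n)\cap(-1,1)^n$. This set is convex, and $F_y$ is $4\kappa$-strongly convex on it by hypothesis. Since $h_{\mathrm{bin}}$ has derivative $-\operatorname{arctanh}$, which blows up at $\pm1$, $F_y$ extends continuously to the closed set $\overline\Omega=\overline B(c,4r\sqrt n)\cap[-1,1]^n$ and remains convex there. Let $m^\star$ be a minimizer of $F_y$ over $\overline\Omega$. I will locate it in three steps.

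\begin{enumerate}[label=\textup{(\alph*)}]
\item \emph{Localize the minimizer.} For $u\in\Omega$, strong convexity along the segment from $c$ gives
\[
F_y(u)\ge F_y(c)+\inner{\nabla F_y(c)}{u-c}+2\kappa\|u-c\|_2^2 .
\]
By the small-residual hypothesis, $\|\nabla F_y(c)\|_2\le\kappa r\sqrt n/8$. Hence $F_y(u)>F_y(c)$ whenever $\|u-c\|_2>\|\nabla F_y(c)\|_2/(2\kappa)$, and in particular whenever $\|u-c\|_2> r\sqrt n/16$. By continuity the same holds on $\overline\Omega$. So $m^\star$ lies within distance $r\sqrt n/16$ of $c$, strictly inside the ball.
\item \emph{Sharpen the distance.} Once $m^\star$ is known to be a stationary point, strong monotonicity gives
\[
4\kappa\|c-m^\star\|_2^2\le\inner{\nabla F_y(c)}{c-m^\star},
\]
so $\|c-m^\star\|_2\le(4\kappa)^{-1}\cdot\kappa r\sqrt n/8=r\sqrt n/32$.
\item \emph{Interiority in the cube.} One must check $m^\star\in(-1,1)^n$. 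Take a minimizer with some $m^\star_i=\pm1$ and move its coordinate $i$ inward. The directional derivative of $-h_{\mathrm{bin}}$ is $-\infty$ in that direction, while the remaining terms are finite, so $F_y$ decreases, which is a contradiction. Since the minimizer is interior to both the ball and the cube, $\nabla F_y(m^\star)=0$.
\end{enumerate}

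For uniqueness and the error bound, take $u\in\Omega$. Strong monotonicity of $\nabla F_y$ on the convex set $\Omega$ gives
\[
4\kappa\|u-m^\star\|_2^2\le\inner{\nabla F_y(u)-\nabla F_y(m^\star)}{u-m^\star}\le\|\nabla F_y(u)\|_2\,\|u-m^\star\|_2 .
\]
This yields the stated bound. Applying it to a second stationary point $u$ gives $u=m^\star$, so the stationary point is unique.

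For part~(ii), fix $y,z\in D$ and split into two cases according to $\|y-z\|_2$ against $3r\sqrt n/(2L)$.

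\textbf{Near case:} $\|y-z\|_2\le3r\sqrt n/(2L)$. The Lipschitz hypothesis gives $\|\widehat m_K(y)-\widehat m_K(z)\|_2\le 3r\sqrt n/2$. Together with $\|m^\star(z)-\widehat m_K(z)\|_2\le r\sqrt n/32$, this places $m^\star(z)$ inside $B(\widehat m_K(y),4r\sqrt n)\cap(-1,1)^n$. Apply the bound of part~(i) for $F_y$ at $u=m^\star(z)$. The field enters $F$ only through the term $-\inner{y}{u}$, so
\[
\nabla F_y(m^\star(z))=\nabla F_z(m^\star(z))+(z-y)=z-y .
\]
This gives $\|m^\star(y)-m^\star(z)\|_2\le(4\kappa)^{-1}\|y-z\|_2$.

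\textbf{Far case:} $\|y-z\|_2>3r\sqrt n/(2L)$. Both stationary points lie in $[-1,1]^n$, so the crude bound is $\|m^\star(y)-m^\star(z)\|_2\le2\sqrt n$. In this case $2\sqrt n<(4/(3r))L\|y-z\|_2$. This falls short of the stated constant $2L/(3r)$ by a factor of~$2$. To obtain the stated constant I will instead use a tighter containment or an intermediate center. Writing
\[
\|m^\star(y)-m^\star(z)\|_2\le\|m^\star(y)-\widehat m_K(y)\|_2+\|\widehat m_K(y)-\widehat m_K(z)\|_2+\|\widehat m_K(z)-m^\star(z)\|_2
\]
bounds the difference by $L\|y-z\|_2+r\sqrt n/16$. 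In the far case $r\sqrt n/16<L\|y-z\|_2/24$, which gives roughly $L\|y-z\|_2$, and $L\le 2L/(3r)$ when $r\le 2/3$. If the exact constant fails I would adjust the split threshold, since any constant depending only on $(\kappa,r,L)$ suffices downstream. The main obstacle I expect is part~(i): establishing existence of an interior stationary point, that is, the boundary behaviour of the entropy at $\pm1$ combined with the ball constraint.
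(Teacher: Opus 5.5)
Your part (i) is correct and follows the paper's route. You minimize the continuous extension of $F_y$ over the closed ball intersected with the closed cube, push the minimizer off both boundaries, and then use $4\kappa$-strong monotonicity. Your localization step (a), via the quadratic lower bound from the center, is a slightly more direct way than the paper's boundary-sphere contradiction to keep the minimizer inside the ball. It also shows that the inward coordinate move in (c) stays feasible. Your near case in part (ii), using $\nabla F_y(m^\star(z))=z-y$, is also correct.

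The gap is the far case of part (ii). As you noticed, splitting at $\|y-z\|_2=3r\sqrt n/(2L)$ only yields $\|m^\star(y)-m^\star(z)\|_2\le 2\sqrt n<\frac{4L}{3r}\|y-z\|_2$, which is twice the claimed constant, and neither of your repairs closes this. The triangle-inequality route gives $\frac{25}{24}L\|y-z\|_2$, which is at most $\frac{2L}{3r}\|y-z\|_2$ only when $r\le 16/25$, and no hypothesis bounds $r$. The remark that any constant suffices downstream is true for the later use, but it does not prove the lemma as stated.

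The fix is that your near case wastes half the available room. Split instead at $\|y-z\|_2=3r\sqrt n/L$, or equivalently, as the paper does, according to whether $\|\widehat m_K(y)-\widehat m_K(z)\|_2\le 3r\sqrt n$.

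In the near case, $\|m^\star(z)-\widehat m_K(y)\|_2\le 3r\sqrt n+r\sqrt n/32<4r\sqrt n$. Your argument then goes through unchanged and gives the factor $(4\kappa)^{-1}$. In the far case, $3r\sqrt n<L\|y-z\|_2$, so $\|m^\star(y)-m^\star(z)\|_2\le 2\sqrt n=\frac{2}{3r}\cdot 3r\sqrt n<\frac{2L}{3r}\|y-z\|_2$. This is exactly the stated constant, with no condition on $r$.
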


\begin{proof}
For part~\textup{(i)}, $F_y$ extends continuously to $[-1,1]^n$,
using $0\log0=0$ in $h_{\mathrm{bin}}$.
It attains its minimum on the nonempty compact set
$\overline B(\widehat m_K(y),4r\sqrt n)\cap[-1,1]^n$.
Let $u$ be a minimizer.  {We first show that $u$ lies in the interior of the cube $[-1,1]^n$.}
To the contrary, suppose $u_i=1$ for some $i$.  Decreasing only this coordinate by
$\epsilon>0$ keeps the point in the cube, and
\begin{equation*}
 \norm{u-\epsilon e_i-\widehat m_K(y)}_2^2-\norm{u-\widehat m_K(y)}_2^2
 =-2\epsilon(1-\widehat m_{K,i}(y))+\epsilon^2<0{.}
\end{equation*}
Here $0<\epsilon<\min(2,2(1-\widehat m_{K,i}(y)))$, so the perturbed point is feasible.
Keeping all other coordinates fixed, the derivative with respect
to the $i$th coordinate, at its value $s\in(-1,1)$, is
\begin{equation*}
 \operatorname{arctanh}(s)-y_i
 -\beta\sum_{j\ne i}A^{\mathrm{full}}_{ij}u_j
 +(\beta^2(1-q_\beta(t))-\beta A^{\mathrm{full}}_{ii})s.
\end{equation*}
This tends to $+\infty$ as $s\uparrow1$, so the one-variable
function is strictly increasing sufficiently near $1$.
By continuity at $1$, replacing $u_i=1$ by $1-\epsilon$ strictly
decreases $F_y$ for sufficiently small $\epsilon$, a contradiction.
If $u_i=-1$, increasing it by $\epsilon$ changes the squared distance
to $\widehat m_K(y)$ by $-2\epsilon(1+\widehat m_{K,i}(y))+\epsilon^2<0$ for small $\epsilon$.
The same derivative tends to $-\infty$ as $s\downarrow-1$, so this
feasible move also strictly decreases $F_y$.
Hence no coordinate of a minimizer is on the cube boundary.

Next, we show that $u$ is not on the boundary of $B(\widehat m_K(y),4r\sqrt n)$.
Suppose $\norm{u-\widehat m_K(y)}_2=4r\sqrt n$.
For $0<s<1$, the segment from $\widehat m_K(y)$ to $\widehat m_K(y)+s(u-\widehat m_K(y))$ lies in
$B(\widehat m_K(y),4r\sqrt n)\cap(-1,1)^n$, so the Hessian bound gives
\begin{equation*}
 F_y(\widehat m_K(y)+s(u-\widehat m_K(y)))-F_y(\widehat m_K(y))
 \geq s\inner{\nabla F_y(\widehat m_K(y))}{u-\widehat m_K(y)}
       +2\kappa s^2\norm{u-\widehat m_K(y)}_2^2.
\end{equation*}
Letting $s\uparrow1$ and using continuity, we obtain
\begin{align*}
 F_y(u)-F_y(\widehat m_K(y))
 \geq&~\inner{\nabla F_y(\widehat m_K(y))}{u-\widehat m_K(y)}+2\kappa\norm{u-\widehat m_K(y)}_2^2\\
 \geq&~-\frac{\kappa r\sqrt n}{8}(4r\sqrt n)
          +2\kappa(4r\sqrt n)^2>0.
\end{align*}
Since $\widehat m_K(y)$ is feasible, this contradicts the assumption that $u$ is the minimizer.
Thus a minimizer lies in $B(\widehat m_K(y),4r\sqrt n)\cap(-1,1)^n$.
{We denote it by $m^\star$; then} $\nabla F_y(m^\star)=0$.

For every $u\in B(\widehat m_K(y),4r\sqrt n)\cap(-1,1)^n$, integrating the Hessian bound
along the segment from $m^\star$ to $u$ and using the stationarity of $m^\star$ gives
\begin{equation*}
 4\kappa\norm{u-m^\star}_2^2
 \leq\inner{\nabla F_y(u)-\nabla F_y(m^\star)}{u-m^\star}
 =\inner{\nabla F_y(u)}{u-m^\star}
 \leq\norm{\nabla F_y(u)}_2\norm{u-m^\star}_2.
\end{equation*}
Taking $u=\widehat m_K(y)$ gives
$\norm{m^\star-\widehat m_K(y)}_2\leq(4\kappa)^{-1}\norm{\nabla F_y(\widehat m_K(y))}_2
\leq r\sqrt n/32$.

For part~\textup{(ii)}, fix $y,z\in D$.
If $\norm{\widehat m_K(y)-\widehat m_K(z)}_2\leq3r\sqrt n$,
both stationary points and their connecting segment lie in
$B(\widehat m_K(y),4r\sqrt n)\cap(-1,1)^n$.
{Since $\nabla F_y(m^\star(y))=0$ and $\nabla F_y(m^\star(z))=z-y$, the $4\kappa$-strong monotonicity of $\nabla F_y$ on this ball gives
\begin{equation*}
4\kappa\norm{m^\star(y)-m^\star(z)}_2^2
\leq\inner{y-z}{m^\star(y)-m^\star(z)}
\leq\norm{y-z}_2\norm{m^\star(y)-m^\star(z)}_2.
\end{equation*}
Thus $4\kappa\norm{m^\star(y)-m^\star(z)}_2\leq\norm{y-z}_2$.}

Otherwise, $\norm{\widehat m_K(y)-\widehat m_K(z)}_2>3r\sqrt n$.
Since $y\mapsto\widehat m_K(y)$ is $L$-Lipschitz,
we have
\begin{equation*}
3r\sqrt n<\norm{\widehat m_K(y)-\widehat m_K(z)}_2\leq L\norm{y-z}_2.
\end{equation*}
Both stationary points belong to $(-1,1)^n$, so each coordinate
of their difference has absolute value at most two.  Hence
$\norm{m^\star(y)-m^\star(z)}_2 \leq2\sqrt n \leq\frac{2L}{3r}\norm{y-z}_2$.
Then part~(ii) is proved by combining the above two cases.
\end{proof}

\subsection{Local stationary points and their path and terminal costs}
\label{subsec:certified-local-roots}

We fix $0<\beta_0<\beta_1<1$ and a constant $T>0$.
Take $t_0,\kappa_0$, the rectangles $I_\ell$, and the constants
$K_\ell,r_\ell,\kappa_\ell$ from \Cref{lem:amp-good-events}.  Fix
$\beta\in[\beta_0,\beta_1]$.  For $t>t_0$, let $\ell_t$ be the least
index such that $(\beta,t)\in I_{\ell_t}$, and write
$\widehat m_\ell(A^{\mathrm{full}},y)$ for the depth-$K_\ell$ AMP
center.

For $t>t_0$, let $\mathcal D_t$ be the set of pairs
$(A^{\mathrm{full}},y)$ satisfying
\begin{equation*}
 \|A^{\mathrm{full}}\|_{\mathrm{op}}\le4,
 \qquad \|y\|_\infty\le n^2,
\end{equation*}
together with $(\kappa_{\ell_t}r_{\ell_t}/8)$-small TAP residual and $(4r_{\ell_t},4\kappa_{\ell_t})$-local strong convexity at the depth $K_{\ell_t}$ AMP center.
{For every pair in $\mathcal D_t$, \Cref{lem:tap-stationary-point}-\textup{(i)} supplies a unique stationary point in $B(\widehat m_{\ell_t}(A^{\mathrm{full}},y),4r_{\ell_t}\sqrt n)\cap(-1,1)^n$; we denote it by $a_t(A^{\mathrm{full}},y)$.}
For $t\le t_0$, let $\mathcal D_t$ consist of all pairs $(A^{\mathrm{full}},y)$ for which
\begin{equation*}
 \|A^{\mathrm{full}}\|_{\mathrm{op}}\le4,
 \qquad
 I_n-\beta A^{\mathrm{full}}+\beta^2(1-q_\beta(t))I_n
 \succeq4\kappa_0I_n.
\end{equation*}
The condition is independent of $y$.  On this domain the TAP free energy
is strongly convex on the whole cube, and $a_t(A^{\mathrm{full}},y)$
denotes its unique stationary point.
For fixed $A^{\mathrm{full}}$ and $t$, we call a field $y$
\emph{certified} if $(A^{\mathrm{full}},y)\in\mathcal D_t$.

\begin{lemma}
\label{lem:local-root-measurable-lipschitz}
There is a constant $L$, depending only on $(\beta_0,\beta_1,T)$, with the following properties.
The set $\{(A^{\mathrm{full}},t,y):t\in[0,T], (A^{\mathrm{full}},y)\in\mathcal D_t\}$ is Borel, and the local stationary point $(A^{\mathrm{full}},t,y)\mapsto a_t(A^{\mathrm{full}},y)$ is Borel on this domain.
For every fixed $A^{\mathrm{full}},t$ and every two certified fields $y,z$,
\begin{equation*}
 \|a_t(A^{\mathrm{full}},y)-a_t(A^{\mathrm{full}},z)\|_2
 \le L\|y-z\|_2.
\end{equation*}
\end{lemma}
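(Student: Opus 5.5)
The proof has two parts. First we establish measurability, then the Lipschitz bound, treating small and large times separately.

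\emph{Measurability.} The AMP center $\widehat m_\ell(A^{\mathrm{full}},y)$ is a finite composition of continuous maps, so it is continuous in $(A^{\mathrm{full}},y)$. For fixed $(\beta,t)$, the index map $t\mapsto\ell_t$ is Borel, because each rectangle $I_\ell$ is closed and ``least index'' is a finite Boolean combination of closed conditions. The individual conditions are:
\begin{enumerate}[label=\textup{(\alph*)}]
\item the operator-norm bound, the sup-norm bound on $y$, and the small-residual inequality, all of which are closed conditions in $(A^{\mathrm{full}},t,y)$, since $q_\beta$ is continuous by \Cref{lem:q-beta-regularity};
\item the local strong convexity condition, which says that $\nabla^2F_{\mathrm{TAP}}(u)\succeq4\kappa I_n$ for all $u$ in the ball around a continuous center.
\end{enumerate}
For (b), intersect over a countable dense set of $u$. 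Using continuity of the Hessian in $(u,A^{\mathrm{full}},t)$ on the open cube, together with openness of the ball, this gives a Borel (indeed $G_\delta$-type) set. The small-time domain is closed.

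For the stationary point, we characterise $a_t$ on the domain as the minimiser of $F_y$ over the compact set $\overline B\cap[-1,1]^n$, which is unique by part (i) of \Cref{lem:tap-stationary-point}. A unique minimiser of a jointly continuous function over a compact set whose center varies continuously has closed graph, hence is Borel on a Borel domain (by the measurable-selection / closed-graph argument). Alternatively, one can write $a_t$ as the pointwise limit of finitely many projected gradient steps, each of which is continuous. On the small-time domain, the full cube plays the role of the ball and the same argument applies.

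\emph{Lipschitz bound, $t>t_0$.} Fix $A^{\mathrm{full}}$ and $t$, so $\ell=\ell_t$ is fixed. Take $D$ to be the set of certified fields. Since $\|A^{\mathrm{full}}\|_{\mathrm{op}}\le4$ there, \Cref{lem:amp-spatial-lipschitz} makes the AMP center $L_{K_\ell}$-Lipschitz on $D$. \Cref{lem:tap-stationary-point}-(ii) then gives the constant $\max((4\kappa_\ell)^{-1},2L_{K_\ell}/(3r_\ell))$. Taking the maximum over the finitely many $\ell$ yields a constant depending only on $(\beta_0,\beta_1,T)$.

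\emph{Lipschitz bound, $t\le t_0$.} The Hessian satisfies
\[
\nabla^2F=\operatorname{diag}((1-u_j^2)^{-1})-\beta A^{\mathrm{full}}+\beta^2(1-q)I\succeq I-\beta A^{\mathrm{full}}+\beta^2(1-q)I\succeq4\kappa_0I
\]
on the whole open cube, so $F_y$ is $4\kappa_0$-strongly convex there. The stationary points lie in the interior by the boundary argument of \Cref{lem:tap-stationary-point}-(i). Strong monotonicity, using $\nabla F_y(a_t(z))=z-y$, gives $\|a_t(y)-a_t(z)\|_2\le(4\kappa_0)^{-1}\|y-z\|_2$. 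Set $L$ to be the maximum of all these constants.

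The main obstacle is the measurability step, specifically the uncountable quantifier in local strong convexity and showing that the argmin map is Borel. I would handle the quantifier by countable density, and the argmin map by the unique-minimiser closed-graph argument.
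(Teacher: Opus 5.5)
Your proof is correct. The Borel-set argument and both Lipschitz cases coincide with the paper's. Those are the rational-point intersection for local strong convexity, the $(4\kappa_0)^{-1}$ strong-monotonicity bound at small times, and \Cref{lem:amp-spatial-lipschitz} with \Cref{lem:tap-stationary-point}-\textup{(ii)} at large times, maximized over the finitely many indices $\ell$.

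You take a different route only for the Borel measurability of $(A^{\mathrm{full}},t,y)\mapsto a_t(A^{\mathrm{full}},y)$.

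\emph{The paper's route.} It proves continuity directly and quantitatively on each piece of the finite partition (small times, and each large-time index $\ell$). For certified $(A_k^{\mathrm{full}},t_k,y_k)\to(A^{\mathrm{full}},t,y)$, the limiting root $u$ eventually lies in the strongly convex ball around the perturbed center $c_k$. Since $R_{\mathrm{TAP}}(u;A^{\mathrm{full}},y,q_\beta(t))=0$, this gives $\|u_k-u\|_2\le(4\kappa_\ell)^{-1}\|R_{\mathrm{TAP}}(u;A_k^{\mathrm{full}},y_k,q_\beta(t_k))\|_2\to0$. This reuses the same monotonicity estimate as the Lipschitz bound and yields an explicit modulus of continuity.

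\emph{Your route.} It is softer, a Berge-type argument. You identify $a_t$ with the unique minimizer of the jointly continuous $F_{\mathrm{TAP}}$ over $\overline B(c,4r_\ell\sqrt n)\cap[-1,1]^n$. Uniqueness holds because, by the proof of \Cref{lem:tap-stationary-point}-\textup{(i)}, every constrained minimizer is an interior stationary point. A closed graph with values in the compact cube then gives continuity on each piece. This needs only uniqueness and continuity, not strong convexity at the perturbed data.

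There is one step you left implicit and should supply. Passing minimality to the limit requires approximating every competitor $w$ in the limiting constraint set by competitors in the perturbed sets. For example, $(1-\epsilon)w+\epsilon c$ lies in the open ball around $c_k$ for large $k$, because $c=\widehat m_\ell\in(-1,1)^n$ is an interior point of the cube. Then let $\epsilon\to0$ using continuity of $F_{\mathrm{TAP}}$ up to the boundary of the cube. You should also state explicitly that this is done separately on each $\ell$-piece, since $K_\ell$ and $r_\ell$ change between pieces.

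Your alternative via projected-gradient iterates is not needed. It would also require real care, since $\nabla F_{\mathrm{TAP}}$ is unbounded near the boundary of the cube.
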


\begin{proof}
We first prove that the set in the statement is Borel.
For $t\le t_0$, recall that $(A^{\mathrm{full}},y)\in\mathcal D_t$ means
\begin{equation*}
 \|A^{\mathrm{full}}\|_{\mathrm{op}}\le4,\qquad
 I_n-\beta A^{\mathrm{full}}+\beta^2(1-q_\beta(t))I_n\succeq4\kappa_0I_n.
\end{equation*}
Since $q_\beta(t)$, the operator norm, and the least eigenvalue are continuous, these inequalities together with $t\le t_0$ define a Borel set of triples $(A^{\mathrm{full}},t,y)$.

For large times $t>t_0$, the least covering index $\ell_t$ gives a finite Borel partition, and on a part with index $\ell$ the depth-$K_\ell$ AMP center and its TAP residual are continuous in $(A^{\mathrm{full}},t,y)$.
{The local strong-convexity condition is the only condition that must be checked at every point of a ball.
For each fixed $v\in\mathbb Q^n\cap(-1,1)^n$, this condition requires the following implication:}
\begin{equation*}
 \|v-\widehat m_\ell(A^{\mathrm{full}},y)\|_2<4r_\ell\sqrt n
 \quad\Longrightarrow\quad
 \nabla_v^2F_{\mathrm{TAP}}(v;A^{\mathrm{full}},y,q_\beta(t))
 \succeq4\kappa_\ell I_n.
\end{equation*}
For fixed $v$, this implication is a Borel condition on $(A^{\mathrm{full}},t,y)$.
The rational points are dense in the AMP-centered ball intersected with $(-1,1)^n$, and the Hessian is continuous in $v$, so the countable intersection of these conditions is exactly the required local strong convexity.
The matrix and field cutoffs $\|A^{\mathrm{full}}\|_{\mathrm{op}}\le4, \|y\|_\infty\le n^2$ and the AMP residual condition are Borel as well.
So the set is Borel.

We next show that $(A^{\mathrm{full}},t,y)\mapsto a_t(A^{\mathrm{full}},y)$ is Borel on the certified domain.
It suffices to prove continuity relative to each part of the finite Borel partition above.
Fix a large-time part with index $\ell$ and $(A_k^{\mathrm{full}},t_k,y_k)\to(A^{\mathrm{full}},t,y)$ in this part.
We set $c_k=\widehat m_\ell(A_k^{\mathrm{full}},y_k)$, $c=\widehat m_\ell(A^{\mathrm{full}},y)$, $u_k=a_{t_k}(A_k^{\mathrm{full}},y_k)$, and $u=a_t(A^{\mathrm{full}},y)$.
By \Cref{lem:tap-stationary-point}-\textup{(i)}, we have $\|u_k-c_k\|_2,\|u-c\|_2\le r_\ell\sqrt n/32$, while continuity of the AMP center gives $c_k\to c$.
{For all sufficiently large $k$, $\|u-c_k\|_2\le\|u-c\|_2+\|c-c_k\|_2<4r_\ell\sqrt n$, so $u\in B(c_k,4r_\ell\sqrt n)\cap(-1,1)^n$.}

{We apply the residual estimate in \Cref{lem:tap-stationary-point}-\textup{(i)} to $u$ in this ball.}
Since $R_{\mathrm{TAP}}(u;A^{\mathrm{full}},y,q_\beta(t))=0$ and $\|u\|_2\le\sqrt n$, subtracting the two residuals gives
\begin{align*}
 \|u_k-u\|_2
 \leq&~(4\kappa_\ell)^{-1}
   \|R_{\mathrm{TAP}}(u;A_k^{\mathrm{full}},y_k,q_\beta(t_k))\|_2\\
 \leq&~(4\kappa_\ell)^{-1}
   \bigl(\|y_k-y\|_2+\beta\sqrt n\|A_k^{\mathrm{full}}-A^{\mathrm{full}}\|_{\mathrm{op}}
          +\beta^2\sqrt n|q_\beta(t_k)-q_\beta(t)|\bigr)\longrightarrow0.
\end{align*}
{At small times, the TAP free energy is continuous on the compact cube $[-1,1]^n$ and therefore attains a minimum.}
Its derivative in coordinate $i$ tends to $+\infty$ at $u_i=1$ and to $-\infty$ at $u_i=-1$, while the other terms remain bounded, so every minimizer lies in $(-1,1)^n$.
At this interior minimizer, $R_{\mathrm{TAP}}=\nabla F_{\mathrm{TAP}}=0$; the matrix condition defining $\mathcal D_t$ makes this zero unique, and we denote it by $a_t(A^{\mathrm{full}},y)$.
For certified small-time triples $(A_k^{\mathrm{full}},t_k,y_k)\to(A^{\mathrm{full}},t,y)$ with a certified limit, let $u_k=a_{t_k}(A_k^{\mathrm{full}},y_k)$ and $u=a_t(A^{\mathrm{full}},y)$.
{We integrate the Hessian lower bound $4\kappa_0I_n$ along the segment from $u_k$ to $u$ and use $R_{\mathrm{TAP}}(u_k;A_k^{\mathrm{full}},y_k,q_\beta(t_k))=0$ to obtain}
\begin{equation*}
 \|u_k-u\|_2
 \le(4\kappa_0)^{-1}\|R_{\mathrm{TAP}}(u;A_k^{\mathrm{full}},y_k,q_\beta(t_k))\|_2
 \longrightarrow0.
\end{equation*}
{The convergence follows from continuity of the residual at the limiting stationary point $u$.}
Hence the local stationary-point function is continuous relative to each part of the finite Borel partition and is therefore Borel on the certified domain.

We now prove the Lipschitz property.
Fix $A^{\mathrm{full}},t$ and two certified fields $y,z$, and set $u=a_t(A^{\mathrm{full}},y)$ and $v=a_t(A^{\mathrm{full}},z)$.

At small times, stationarity gives $R_{\mathrm{TAP}}(u;A^{\mathrm{full}},y,q_\beta(t))=0$ and $R_{\mathrm{TAP}}(v;A^{\mathrm{full}},y,q_\beta(t))=z-y$.
We apply the $4\kappa_0$-strong monotonicity of the first residual to these points and obtain
\begin{equation*}
 4\kappa_0\|u-v\|_2^2
 \le\inner{y-z}{u-v}
 \le\|y-z\|_2\|u-v\|_2.
\end{equation*}
Thus $\|u-v\|_2\le(4\kappa_0)^{-1}\|y-z\|_2$.

At large times, we set $D=\{w\in\R^n:(A^{\mathrm{full}},w)\in\mathcal D_t\}$ and take $\ell=\ell_t$.
By \Cref{lem:amp-spatial-lipschitz}, the depth-$K_\ell$ AMP center is $L_{K_\ell}$-Lipschitz on $D$, and every $w\in D$ satisfies the hypotheses of \Cref{lem:tap-stationary-point}-\textup{(i)} with $r=r_\ell$ and $\kappa=\kappa_\ell$.
We apply \Cref{lem:tap-stationary-point}-\textup{(ii)} and obtain
\begin{equation*}
 \|u-v\|_2
 \le\max\left((4\kappa_\ell)^{-1},\frac{2L_{K_\ell}}{3r_\ell}\right)\|y-z\|_2.
\end{equation*}
Combined, the lemma follows by setting  $L=\max\left\{(4\kappa_0)^{-1},\max_\ell\max\left((4\kappa_\ell)^{-1},\frac{2L_{K_\ell}}{3r_\ell}\right)\right\}$.
\end{proof}

For $d\in\R^n$, set $A^{\mathrm{full}}=A+\operatorname{diag}(d)$.
For a time $s\in[0,T]$ and an external field $y\in\R^n$, we define the cost function
\begin{equation*}
 \mathfrak e_{\beta,A,d,s}(y)
 =\begin{cases}
    \|m(A,y)-a_s(A^{\mathrm{full}},y)\|_2^2,
       &\text{if }(A^{\mathrm{full}},y)\in\mathcal D_s,\\
    4n,&\text{otherwise}.
   \end{cases}
\end{equation*}
We average this quantity over the auxiliary diagonal:
\begin{equation*}
 \overline{\mathfrak e}_{n,\beta}(A,s;y)
 =\E[d\sim\mathbb P_{n,\beta}^{\mathrm{diag}}]{
       \mathfrak e_{\beta,A,d,s}(y)}.
\end{equation*}

\begin{lemma}
\label{lem:local-root-lipschitz-with-costs}
For every $A\in\R^{n\times n}_{\mathrm{sym},0}$, $s\in[0,T]$, and $y_1,y_2\in\R^n$, we have
\begin{align*}
 \|m(A,y_1)-m(A,y_2)\|_2
 \le{}&L\|y_1-y_2\|_2
 +\sqrt{\overline{\mathfrak e}_{n,\beta}(A,s;y_1)}
 +\sqrt{\overline{\mathfrak e}_{n,\beta}(A,s;y_2)}.
\end{align*}
\end{lemma}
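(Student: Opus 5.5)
Fix $A$, $s$, $y_1,y_2$. The plan is to prove the inequality pointwise in the auxiliary diagonal $d$, with the costs $\mathfrak e_{\beta,A,d,s}(y_k)$ in place of the averaged costs, and then average over $d\sim\mathbb P_{n,\beta}^{\mathrm{diag}}$.

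For fixed $d$, write $A^{\mathrm{full}}=A+\operatorname{diag}(d)$ and $e_k=\mathfrak e_{\beta,A,d,s}(y_k)$. We split into cases according to how many of $y_1,y_2$ are certified.

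\emph{Both certified.} Insert the stationary points $a_s(A^{\mathrm{full}},y_k)$ and use the triangle inequality:
\begin{equation*}
 \|m(A,y_1)-m(A,y_2)\|_2\le\sqrt{e_1}+\|a_s(A^{\mathrm{full}},y_1)-a_s(A^{\mathrm{full}},y_2)\|_2+\sqrt{e_2}.
\end{equation*}
The middle term is at most $L\|y_1-y_2\|_2$ by \Cref{lem:local-root-measurable-lipschitz}.

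\emph{At least one uncertified.} Then $e_k=4n$ for that index, so $\sqrt{e_k}=2\sqrt n$. Since $m(A,\cdot)\in[-1,1]^n$, we have $\|m(A,y_1)-m(A,y_2)\|_2\le2\sqrt n$, and the inequality holds trivially.

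In both cases we obtain
\begin{equation*}
 \|m(A,y_1)-m(A,y_2)\|_2\le L\|y_1-y_2\|_2+\sqrt{\mathfrak e_{\beta,A,d,s}(y_1)}+\sqrt{\mathfrak e_{\beta,A,d,s}(y_2)}
\end{equation*}
for every $d$. The left side does not depend on $d$, because the diagonal does not affect the Gibbs measure.

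Take the expectation over $d$. By Jensen's inequality for the concave square root, $\E[d]{\sqrt{\mathfrak e_{\beta,A,d,s}(y_k)}}\le\sqrt{\overline{\mathfrak e}_{n,\beta}(A,s;y_k)}$, which gives the claim.

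Measurability of $d\mapsto\mathfrak e_{\beta,A,d,s}(y)$, which the expectation requires, follows from the Borel statements in \Cref{lem:local-root-measurable-lipschitz}. The only point needing care is the Jensen direction, and it goes the right way; there is no real obstacle.
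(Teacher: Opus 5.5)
Your proposal is correct and follows the paper's proof essentially step for step. You prove the inequality pointwise in $d$, using the stationary points and the Lipschitz bound of \Cref{lem:local-root-measurable-lipschitz} when both fields are certified and the trivial $2\sqrt n$ bound otherwise, then average over $d$ and bound $\E[d]{\sqrt{\mathfrak e}}\le\sqrt{\overline{\mathfrak e}}$; the paper calls that last step Cauchy--Schwarz, and your Jensen argument is equivalent.
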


\begin{proof}
We fix $d\in\R^n$ and set $A^{\mathrm{full}}=A+\operatorname{diag}(d)$.
If $(A^{\mathrm{full}},y_1)$ and $(A^{\mathrm{full}},y_2)$ belong to $\mathcal D_s$, then by \Cref{lem:local-root-measurable-lipschitz} we obtain
\begin{align*}
 \|m(A,y_1)-m(A,y_2)\|_2
 \leq&~\|m(A,y_1)-a_s(A^{\mathrm{full}},y_1)\|_2
       +\|a_s(A^{\mathrm{full}},y_1)-a_s(A^{\mathrm{full}},y_2)\|_2\\
 &+\|a_s(A^{\mathrm{full}},y_2)-m(A,y_2)\|_2\\
 \leq&~\sqrt{\mathfrak e_{\beta,A,d,s}(y_1)}
       +L\|y_1-y_2\|_2
       +\sqrt{\mathfrak e_{\beta,A,d,s}(y_2)}.
\end{align*}
If either field $y_i$ is uncertified, its cost is $4n$, and its square root $2\sqrt n$ bounds the left side because both posterior means lie in $[-1,1]^n$.
Thus the inequality holds for every $d$.
Its left side and $L\|y_1-y_2\|_2$ do not depend on $d$.
We take expectation over $d$ and apply Cauchy--Schwarz to obtain
\begin{equation*}
 \E[d\sim\mathbb P_{n,\beta}^{\mathrm{diag}}]{\sqrt{\mathfrak e_{\beta,A,d,s}(y_i)}}
 \le\sqrt{\E[d\sim\mathbb P_{n,\beta}^{\mathrm{diag}}]{\mathfrak e_{\beta,A,d,s}(y_i)}}
 =\sqrt{\overline{\mathfrak e}_{n,\beta}(A,s;y_i)}.\qedhere
\end{equation*}
\end{proof}

{
For the localization path $Y$, we define the path cost $U_{n,\beta,A}(Y)$ and terminal cost $E_{n,\beta,A}(Y)$ by
\begin{equation*}
 U_{n,\beta,A}(Y)
 =\int_0^T \overline{\mathfrak e}_{n,\beta}(A,t;Y_t)\,\mathrm d t,\quad
 E_{n,\beta,A}(Y)
 =\overline{\mathfrak e}_{n,\beta}(A,T;Y_T).
\end{equation*}
For every fixed $R>0$, each cost is at most $C_R\log n$ with failure probability at most $n^{-R}$.

\begin{lemma}
\label{lem:scalar-reference-path-error}
For every $R>0$, there are constants $C_R,n_R>0$ such that, for every $n\ge n_R$,
\begin{equation*}
 \sup_{\beta\in[\beta_0,\beta_1]}
 \mathbb P_{0,n,\beta}^{\mathrm{aug}}
 \{U_{n,\beta,A}(Y)>C_R\log n\}\le n^{-R}.
\end{equation*}
Moreover,
\begin{equation*}
 \sup_{\beta\in[\beta_0,\beta_1]}
 \mathbb P_{0,n,\beta}^{\mathrm{aug}}
 \{E_{n,\beta,A}(Y)>C_R\log n\}\le n^{-R}.
\end{equation*}
\end{lemma}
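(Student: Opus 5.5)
The plan is to prove a fixed-time tail bound first and then integrate it in time. Concretely, I will show that for every $H>0$ there are $C_H,n_H$ with
\begin{equation*}
 \sup_{\beta\in[\beta_0,\beta_1]}\sup_{t\in[0,T]}
 \mathbb P_{0,n,\beta}^{\mathrm{aug}}\{\overline{\mathfrak e}_{n,\beta}(A,t;Y_t)>C_H\log n\}\le C_Hn^{-H}.
\end{equation*}
The fixed-time bound is established under the planted law and transferred back at the end. Throughout, the diagonal $d$ is treated as part of the augmented sample, since it has law $\mathbb P_{n,\beta}^{\mathrm{diag}}$ independent of $(X,A,Y)$ under both laws. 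Then $\overline{\mathfrak e}_{n,\beta}(A,t;Y_t)=\E{\mathfrak e_{\beta,A,d,t}(Y_t)\mid X,A,Y}$.

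\emph{Bounding the unaveraged cost.} First I bound the unaveraged cost $\mathfrak e_{\beta,A,d,t}(Y_t)$ under $\mathbb P_{n,\beta}^{\mathrm{pl,aug}}$, splitting into two time regimes.
\begin{itemize}
\item[] For $t\le t_0$: on the event of \Cref{lem:amp-good-events}-(i), the pair is certified. The TAP free energy is then $4\kappa_0$-strongly convex on the whole cube, and $m(A,Y_t)\in(-1,1)^n$, so strong monotonicity gives $\|m-a_t\|_2\le(4\kappa_0)^{-1}\|R_{\mathrm{TAP}}(m;A^{\mathrm{full}},Y_t,q_\beta(t))\|_2$.
\item[] For $t>t_0$: on the event of \Cref{lem:amp-good-events}-(ii) for the rectangle $I_{\ell_t}$, the pair is certified, again with $\|A^{\mathrm{full}}\|_{\mathrm{op}}\le4$ and $\|Y_t\|_\infty\le n^2$. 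By posterior-mean proximity, $\|m-\widehat m\|_2\le r\sqrt n<4r\sqrt n$, so $m$ lies in the strongly convex ball. \Cref{lem:tap-stationary-point}-(i) then gives $\|m-a_t\|_2\le(4\kappa_{\ell_t})^{-1}\|R_{\mathrm{TAP}}(m)\|_2$.
\end{itemize}
In both regimes, \Cref{thm:polynomial-tap-residual} with $\beta_*=\beta_1$ bounds $\|R_{\mathrm{TAP}}\|_2^2\le C\log n$ except with probability $n^{-H}$. Hence $\mathfrak e\le C'_H\log n$ outside an event of planted probability $C_Hn^{-H}$, uniformly in $(\beta,t)$ because there are finitely many rectangles.

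\emph{Averaging over $d$.} Since $\mathfrak e\le4n$, the conditional expectation splits as
\begin{equation*}
 \overline{\mathfrak e}\le C'_H\log n+4n\,\Pr{\mathrm{bad}\mid X,A,Y}.
\end{equation*}
By Markov's inequality, $\Pr{\mathrm{bad}\mid X,A,Y}>n^{-1}$ holds with probability at most $C_Hn^{1-H}$. Off that event, $\overline{\mathfrak e}\le C'_H\log n+4$. I then transfer this bound to $\mathbb P_{0,n,\beta}^{\mathrm{aug}}$ using \Cref{lem:planting-likelihood}, at the cost of squaring the probability exponent: a planted failure probability $n^{-2H'}$ becomes at most $(1-\beta_1^2)^{-1/4}n^{-H'}$. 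Since $H$ is arbitrary, this yields the fixed-time claim.

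\emph{The two statements.} The terminal bound is the fixed-time claim at $t=T$. For the path cost, joint Borel measurability of $\overline{\mathfrak e}$, from \Cref{lem:local-root-measurable-lipschitz} and Fubini, lets me write
\begin{equation*}
 U\le TC_H\log n+\int_0^T\overline{\mathfrak e}\,\1_{\{\overline{\mathfrak e}>C_H\log n\}}\,\mathrm dt.
\end{equation*}
By Tonelli, the expectation of the last integral is at most $4nT\,C_Hn^{-H}$. Markov's inequality at level $\log n$, with $H>R+2$, finishes the proof, giving the constant $(TC_H+1)\log n$.

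The main obstacle is the large-time step. It needs $m$ itself, and not only the AMP center, to lie in the convex ball, and it needs the good events of \Cref{lem:amp-good-events} and the residual bound to hold at a common $(\beta,t)$ with constants uniform over the rectangles. The proximity condition is exactly what places $m$ in the ball; the remaining points are bookkeeping of the conditioning on $d$ and of the contiguity transfer.
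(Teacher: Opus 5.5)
Your proposal is correct and follows essentially the same route as the paper. The paper also proves a planted fixed-time bound from \Cref{lem:amp-good-events}, \Cref{thm:polynomial-tap-residual} and strong monotonicity in the two time regimes. It then averages over the diagonal by applying Markov's inequality to the conditional failure probability, transfers to $\mathbb P_{0,n,\beta}^{\mathrm{aug}}$ via \Cref{lem:planting-likelihood}, and integrates in time with Tonelli and Markov. The only cosmetic difference is that you integrate the truncated cost $\overline{\mathfrak e}\,\mathbf 1_{\{\overline{\mathfrak e}>C_H\log n\}}$ directly, whereas the paper integrates the occupation time $\Lambda_H$ of the bad set; these are equivalent since $\overline{\mathfrak e}\le 4n$.
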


\begin{proof}
We fix $R>0$ and set $H=2R+8$.
We first prove a fixed-time bound for the diagonal cost.
By \Cref{lem:amp-good-events,thm:polynomial-tap-residual}, after increasing a constant $K_H\ge1$, for every $\beta\in[\beta_0,\beta_1]$ and $t\in[0,T]$, outside an event of $\mathbb P_{n,\beta}^{\mathrm{pl,aug}}$-probability at most $K_Hn^{-H}$, the following properties hold.
If $t\le t_0$, then $(A^{\mathrm{full}},Y_t)\in\mathcal D_t$ and
\begin{equation*}
 \|R_{\mathrm{TAP}}(m(A,Y_t);A^{\mathrm{full}},Y_t,q_\beta(t))\|_2^2\le K_H\log n.
\end{equation*}
If $t>t_0$, then $(A^{\mathrm{full}},Y_t)\in\mathcal D_t$, the posterior-mean proximity condition holds at the index $\ell_t$, and the same residual bound holds.
For $t\le t_0$, strong convexity on the whole cube therefore gives
\begin{equation*}
 \|m(A,Y_t)-a_t(A^{\mathrm{full}},Y_t)\|_2^2
 \le (4\kappa_0)^{-2}\|R_{\mathrm{TAP}}(m(A,Y_t);A^{\mathrm{full}},Y_t,q_\beta(t))\|_2^2.
\end{equation*}
For $t>t_0$, posterior-mean proximity puts $m(A,Y_t)$ in the strongly convex ball from \Cref{lem:tap-stationary-point}-\textup{(i)}, and hence
\begin{equation*}
 \|m(A,Y_t)-a_t(A^{\mathrm{full}},Y_t)\|_2^2
 \le (4\kappa_{\ell_t})^{-2}\|R_{\mathrm{TAP}}(m(A,Y_t);A^{\mathrm{full}},Y_t,q_\beta(t))\|_2^2.
\end{equation*}
Since there are finitely many indices $\ell$, we can choose a constant $C_H$ so that
\begin{equation}
\label{eq:fixed-time-raw-diagonal-cost}
 \sup_{\beta\in[\beta_0,\beta_1]}\sup_{t\in[0,T]}
 \mathbb P_{n,\beta}^{\mathrm{pl,aug}}
 \left\{\mathfrak e_{\beta,A,d,t}(Y_t)>C_H\log n\right\}
 \le C_Hn^{-H}.
\end{equation}
We next average over the auxiliary diagonal.
For fixed $(A,Y)$ and $t\in[0,T]$, set
\begin{equation*}
 p_t(A,Y)
 =\mathbb P_{d\sim\mathbb P_{n,\beta}^{\mathrm{diag}}}
 \left\{\mathfrak e_{\beta,A,d,t}(Y_t)>C_H\log n\right\}.
\end{equation*}
Under $\mathbb P_{n,\beta}^{\mathrm{pl,aug}}$, the diagonal $d$ is independent of $(A,Y)$, so \cref{eq:fixed-time-raw-diagonal-cost} gives
\begin{equation*}
 \sup_{\beta\in[\beta_0,\beta_1]}\sup_{t\in[0,T]}
 \E[(A,d,Y)\sim\mathbb P_{n,\beta}^{\mathrm{pl,aug}}]{p_t(A,Y)}
 \le C_Hn^{-H}.
\end{equation*}
Using $0\le\mathfrak e_{\beta,A,d,t}\le4n$, we split the diagonal expectation according to the event defining $p_t(A,Y)$ and obtain
\begin{equation*}
 \overline{\mathfrak e}_{n,\beta}(A,t;Y_t)
 \le C_H\log n+4n\,p_t(A,Y).
\end{equation*}
We apply Markov's inequality to $p_t(A,Y)$ and obtain, for all sufficiently large $n$,
\begin{equation*}
 \sup_{\beta\in[\beta_0,\beta_1]}\sup_{t\in[0,T]}
 \mathbb P_{n,\beta}^{\mathrm{pl,aug}}
 \left\{\overline{\mathfrak e}_{n,\beta}(A,t;Y_t)>2C_H\log n\right\}
 \le4C_Hn^{-(H-1)}.
\end{equation*}
We apply the reverse probability comparison in \Cref{lem:planting-likelihood} and use $\beta\le\beta_1<1$ to obtain a constant $\widetilde C_H>0$ such that
\begin{equation}
\label{eq:fixed-time-averaged-diagonal-cost}
 \sup_{\beta\in[\beta_0,\beta_1]}\sup_{t\in[0,T]}
 \mathbb P_{0,n,\beta}^{\mathrm{aug}}
 \left\{\overline{\mathfrak e}_{n,\beta}(A,t;Y_t)>\widetilde C_H\log n\right\}
 \le \widetilde C_Hn^{-(H-1)/2}.
\end{equation}

We now integrate this fixed-time estimate.
By \Cref{lem:local-root-measurable-lipschitz} and the definition of the diagonal average, $\overline{\mathfrak e}_{n,\beta}$ is jointly Borel.
Together with the continuity of $Y$, this implies that the function below is measurable.
We define
\begin{equation*}
 \Lambda_H(A,Y)
 =\int_0^T\mathbf 1_{\{\overline{\mathfrak e}_{n,\beta}(A,t;Y_t)>\widetilde C_H\log n\}}\,\mathrm d t.
\end{equation*}
By \cref{eq:fixed-time-averaged-diagonal-cost} and Tonelli's theorem, we obtain
\begin{align*}
 \E[(A,d,Y)\sim\mathbb P_{0,n,\beta}^{\mathrm{aug}}]{\Lambda_H(A,Y)}
 =\int_0^T
 \mathbb P_{0,n,\beta}^{\mathrm{aug}}
 \left\{\overline{\mathfrak e}_{n,\beta}(A,t;Y_t)>\widetilde C_H\log n\right\}\,\mathrm d t
 \le T\widetilde C_Hn^{-(H-1)/2}.
\end{align*}
We apply Markov's inequality to $\Lambda_H(A,Y)$ and obtain
\begin{equation*}
 \mathbb P_{0,n,\beta}^{\mathrm{aug}}
 \left\{\Lambda_H(A,Y)>\frac{\log n}{n}\right\}
 \le\frac{T\widetilde C_H}{\log n}n^{-(H-3)/2}.
\end{equation*}
On the event $\{\Lambda_H(A,Y)\le\log n/n\}$, we use $\overline{\mathfrak e}_{n,\beta}\le4n$ to obtain $U_{n,\beta,A}(Y)\le T\widetilde C_H\log n+4n\Lambda_H(A,Y)\le(T\widetilde C_H+4)\log n$.
Since $H=2R+8$, after increasing $C_R$ and $n_R$ we obtain
\begin{equation*}
 \sup_{\beta\in[\beta_0,\beta_1]}
 \mathbb P_{0,n,\beta}^{\mathrm{aug}}
 \{U_{n,\beta,A}(Y)>C_R\log n\}\le n^{-R}.
\end{equation*}
Finally, we apply \cref{eq:fixed-time-averaged-diagonal-cost} at $t=T$ and increase $C_R,n_R$ if necessary to obtain
\begin{equation*}
 \sup_{\beta\in[\beta_0,\beta_1]}
 \mathbb P_{0,n,\beta}^{\mathrm{aug}}
 \{E_{n,\beta,A}(Y)>C_R\log n\}\le n^{-R}.
\end{equation*}
\end{proof}
}

Now, we are ready to prove \Cref{lem:app-wpi-branches}.
\begin{proof}[Proof of \Cref{lem:app-wpi-branches}]
We take the cost function $\overline{\mathfrak e}_{n,\beta}$ defined above and the constant $L$ from \Cref{lem:local-root-measurable-lipschitz}.
The Lipschitz property with cost functions in part~\textup{(i)} follows from \Cref{lem:local-root-lipschitz-with-costs}.
The continuous-time path-cost and terminal-cost bounds in part~\textup{(ii)} follow from \Cref{lem:scalar-reference-path-error}.
\end{proof}

\section{Weak Poincar\'e inequalities and simulated annealing}
\label{sec:app-wpi-annealing}

We prove the WPI of \Cref{thm:app-wpi-poly}, extend it to a prescribed {inverse-temperature} schedule, and then prove \Cref{thm:app-wpi-annealing,cor:intro-counting}.
We first introduce some useful notations.
For a reversible Markov kernel $P$ with stationary law $\mu$, write
$\inner{f}{g}_\mu=\E[X\sim\mu]{f(X)g(X)}$ and
$\Diri{\mu}{f,f}=\inner{f}{(I-P)f}_\mu$.
For random-scan heat-bath dynamics on $\{\pm1\}^n$,
\begin{equation*}
 \Diri{\mu}{f,f}=\frac1n\sum_{i=1}^n
       \E[X\sim\mu]{\Var[X\sim\mu]{f(X)\mid X_{-i}}}.
\end{equation*}
We also write $\operatorname{osc}(f)=\sup f-\inf f$.
We state the WPI criterion for a full-support distribution $\mu$ on $\{\pm1\}^n$, writing $\mu_y$ for its exponential tilt and $m(y)$ for the mean of $\mu_y$:
\begin{equation*}
 \mu_y(x)\defeq\frac{\mu(x)\exp(\inner{y}{x})}{\E[\sigma\sim\mu]{\exp(\inner{y}{\sigma})}},\qquad
 m(y)\defeq\E[\sigma\sim\mu_y]{\sigma}.
\end{equation*}
For a fixed constant $T>0$, we use the Wiener measure $\gamma$ and Cameron--Martin space $\mathbb H_T$ from \Cref{subsec:prelim-wiener-space}, and define the localization paths $Y^\omega$ by \cref{eq:localization-path} with $m$ in place of $m(A,\cdot)$.
The Gaussian observation representation in \Cref{lem:app-wpi-observation-law} applies to this $\mu$.
\subsection{Establishing {a} weak Poincar\'e inequality}
{The next lemma combines a Lipschitz property with cost functions for the posterior mean and a bound on the averaged terminal posterior variance to obtain a WPI.}

{
\begin{lemma}
      \label{lem:app-wpi-two-path-criterion}
Fix constants $T,L,C_0,r>0$.
For each $n\ge 2$, let $\mu$ be a full-support distribution on
$\{\pm1\}^n$.
Suppose there is a jointly Borel function
$\overline{\mathfrak e}:
    [0,T]\times\mathbb R^n
    \longrightarrow [0,4n]$
such that, for every $t\in[0,T]$ and every
$y_1,y_2\in\mathbb R^n$,
\[
    \|m(y_1)-m(y_2)\|_2
    \le
    L\|y_1-y_2\|_2
    +
    \sqrt{\overline{\mathfrak e}(t;y_1)}
    +
    \sqrt{\overline{\mathfrak e}(t;y_2)}.
\]

For the localization paths defined above from $\mu$, define the
continuous-time path cost and terminal cost by
\[
    U(\omega)
    \defeq
    \int_0^T
    \overline{\mathfrak e}
    \bigl(t;Y_t^\omega\bigr)\,\mathrm d t,
    \qquad
    E(\omega)
    \defeq
    \overline{\mathfrak e}
    \bigl(T;Y_T^\omega\bigr).
\]
Assume that
\[
    \mathbb P_{\omega\sim\gamma}
    \left[
        U(\omega)>C_0\log n
        \ \text{or}\
        E(\omega)>C_0\log n
    \right]
    \le n^{-r}.
\]

Suppose there are numbers $a_n,\delta_n\ge0$ such that, for every
$f:\{\pm1\}^n\to\mathbb R$,
\[
    \mathbb E_{\omega\sim\gamma}
    \left[
        \operatorname{Var}_{\mu_{Y_T^\omega}}[f]
    \right]
    \le
    a_n\Diri{\mu}{f,f}
    +
    \delta_n\operatorname{osc}(f)^2.
\]
Then there are constants $\Gamma>0$, depending only on $T,L$, and
$C,B>0$, depending only on $T,L,C_0$, such that, for every $n\ge2$,
every $f:\{\pm1\}^n\to\mathbb R$, and every $8n^{-r}\le s\le \frac12$,
we have
\[
    \operatorname{Var}_\mu[f]
    \le
    \bigl(1+C n^B s^{-\Gamma}\bigr)
    a_n\Diri{\mu}{f,f}
    +
    \left(
        s+
        \bigl(1+C n^B s^{-\Gamma}\bigr)\delta_n
    \right)
    \operatorname{osc}(f)^2.
\]
\end{lemma}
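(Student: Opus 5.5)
We follow the quantile scheme of \Cref{subsection:idealized}, with the good-path set $H=\{U\le C_0\log n,\ E\le C_0\log n\}$ inserted. First, by \Cref{lem:app-wpi-observation-law} and the law of total variance,
\[
\Var[\mu]{f}=\E[\omega\sim\gamma]{\Var[\mu_{Y_T^\omega}]{f}}+\Var[\gamma]{F},\qquad F(\omega)\defeq\E[\mu_{Y_T^\omega}]{f}.
\]
The first term is already at most $a_n\Diri{\mu}{f,f}+\delta_n\operatorname{osc}(f)^2$; this produces the leading ``$1\cdot$'' summands. Write $V\defeq\E[\gamma]{\Var[\mu_{Y_T^\omega}]{f}}$. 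For the second term we use the median bound
\[
\Var{F}\le\int_0^{1/2}[Q_F(1-v)-Q_F(v)]^2\,\mathrm dv.
\]
For $v\le s/2$ we bound the integrand by $\operatorname{osc}(f)^2$, which contributes at most $\frac s2\operatorname{osc}(f)^2$.

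Now fix $v\in[s/2,1/2]$, so $v\ge4n^{-r}$. Set $G(v)=\{\Var[\mu_{Y_T^\omega}]{f}\le (8/v)V\}$, which has measure at least $1-v/8$ by Markov. Also $\gamma(H^c)\le n^{-r}\le v/4$. Hence the sets $\{F\le Q_F(v)\}\cap G\cap H$ and $\{F\ge Q_F(1-v)\}\cap G\cap H$ each have measure at least $v-v/8-v/4>v/4$. By \Cref{lem:app-wpi-two-set-gaussian} we get $\omega$ and $\omega+h$ in these respective sets with $\|h\|_{\mathbb H_T}\le2\sqrt{2\log(4/v)}$.

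\textbf{Path comparison.} Let $d(t)=\|Y_t^{\omega+h}-Y_t^\omega\|_2$ and $\Delta_t=\overline{\mathfrak e}(t;Y^{\omega+h}_t)$, $\Delta'_t=\overline{\mathfrak e}(t;Y^\omega_t)$. The hypothesis gives
\[
d(t)\le\|h_t\|_2+L\int_0^t d+\int_0^t\bigl(\sqrt{\Delta_u}+\sqrt{\Delta'_u}\bigr)\mathrm du .
\]
By Cauchy--Schwarz, the last integral is at most $\sqrt T(\sqrt{U(\omega+h)}+\sqrt{U(\omega)})$, and $\|h_t\|_2\le\sqrt T\|h\|_{\mathbb H_T}$. \Cref{lem:gronwall-path-stability} then yields
\[
\delta\defeq d(T)\le\sqrt T e^{LT}\bigl(\|h\|_{\mathbb H_T}+2\sqrt{C_0\log n}\bigr).
\]
Measurability of $t\mapsto\overline{\mathfrak e}(t;Y_t)$ follows from joint Borelness and continuity of paths.

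\textbf{Posterior comparison.} Let $y=Y_T^\omega$ and $z=Y_T^{\omega+h}$. The symmetric KL equals $\inner{y-z}{m(y)-m(z)}$. Using the hypothesis at $t=T$, it is at most $L\delta^2+2\sqrt{C_0\log n}\,\delta$, hence at most
\[
K:=c_1\bigl(\log(4/v)+\log n\bigr),\qquad c_1\ \text{depending on } T,L,C_0.
\]
The squared-mean comparison
\[
(\E_{\mu_y}f-\E_{\mu_z}f)^2\le 2(e^{K/2}-1)(\Var_{\mu_y}f+\Var_{\mu_z}f)
\]
is \Cref{lem:app-wpi-posterior-comparison}. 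If it is unavailable, I would prove it via the standard $\chi^2$ change of measure, writing $\E_{\mu_y}f-\E_{\mu_z}f$ as a covariance against a likelihood ratio. The two variances are at most $16V/v$ on $G$. Thus
\[
[Q_F(1-v)-Q_F(v)]^2\le C n^{c_1/2}(4/v)^{c_1/2}v^{-1}V.
\]
Writing $\Gamma_0=c_1/2$, we have
\[
\int_{s/2}^{1/2}v^{-1-\Gamma_0}\,\mathrm dv\lesssim s^{-\Gamma_0}/\Gamma_0 .
\]

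\textbf{Constant dependence.} To keep $\Gamma$ depending only on $T,L$, I split $c_1$. The $\log(1/v)$ coefficient comes only from $\|h\|^2$ terms in $L\delta^2$ and the cross term. I bound $2\sqrt{C_0\log n}\,\delta\le\delta^2+C_0\log n$, so the $v$-dependence has coefficient $\Gamma\sim(L+1)Te^{2LT}$, while $C_0$ enters only $B$.

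\textbf{Conclusion.} Summing, $\Var_\gamma F\le\frac s2\operatorname{osc}^2+Cn^Bs^{-\Gamma}V$. Substituting the assumed bound on $V$ gives the claim. The main obstacle is the bookkeeping in the measure estimates needed to apply \Cref{lem:app-wpi-two-set-gaussian} (the $v/4$ thresholds and $s\ge8n^{-r}$), together with separating the exponents so that $\Gamma$ is independent of $C_0$.
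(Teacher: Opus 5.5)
Your proposal is correct and follows essentially the same route as the paper's proof. Both arguments use the law of total variance and the median/quantile integral. Both discard high-cost paths and high-variance paths, then apply Gaussian isoperimetry. Both use Gronwall with the path cost and the symmetric-KL/Hellinger comparison. Both absorb $2\sqrt{C_0\log n}\,\delta\le\delta^2+C_0\log n$ so that $\Gamma\propto(L+1)Te^{2LT}$ is independent of $C_0$.

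The only differences are cosmetic. You truncate at $s/2$ with Markov threshold $8/v$, whereas the paper truncates at $s$ with threshold $4/v$. Your choice even yields the slightly smaller defect $\tfrac{s}{2}\operatorname{osc}(f)^2$.
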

}

We use \Cref{lem:app-wpi-two-set-gaussian} to find two nearby paths in sets of positive Wiener measure.
The next two estimates control their endpoints and the corresponding posterior expectations.

\begin{lemma}
\label{lem:app-wpi-path-distance}
For the {localization paths} $Y^\omega$ and path cost $U$ defined above, for every $\omega\in C_0([0,T];\R^n)$ and $h\in\mathbb H_T$ we have that
\begin{equation*}
 \|Y_T^{\omega+h}-Y_T^\omega\|_2
 \le\sqrt T\,\exp(LT)
 \bigl(\|h\|_{\mathbb H_T}
       +\sqrt{U(\omega+h)}+\sqrt{U(\omega)}\bigr).
\end{equation*}
\end{lemma}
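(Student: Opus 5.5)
We subtract the two localization equations and run a Gronwall argument in which the Lipschitz hypothesis with costs replaces global Lipschitzness. Fix $\omega$ and $h$, and set
\[
 D(t)=\|Y_t^{\omega+h}-Y_t^\omega\|_2 .
\]
By \cref{eq:localization-path} with drift $m$,
\[
 Y_t^{\omega+h}-Y_t^\omega=h_t+\int_0^t\bigl[m(Y_s^{\omega+h})-m(Y_s^\omega)\bigr]\,\mathrm d s .
\]
Applying the hypothesis at time $s$ with $y_1=Y_s^{\omega+h}$ and $y_2=Y_s^\omega$ gives, for every $t\in[0,T]$,
\[
 D(t)\le\|h_t\|_2+L\int_0^tD(s)\,\mathrm d s+\int_0^t\Bigl(\sqrt{\overline{\mathfrak e}(s;Y_s^{\omega+h})}+\sqrt{\overline{\mathfrak e}(s;Y_s^\omega)}\Bigr)\mathrm d s .
\]

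Next we bound the inhomogeneous terms uniformly in $t\le T$. By \cref{eq:cameron-martin-uniform-bound}, $\|h_t\|_2\le\sqrt T\|h\|_{\mathbb H_T}$. By Cauchy--Schwarz,
\[
 \int_0^t\sqrt{\overline{\mathfrak e}(s;Y_s^\omega)}\,\mathrm d s\le\sqrt t\Bigl(\int_0^T\overline{\mathfrak e}(s;Y_s^\omega)\,\mathrm d s\Bigr)^{1/2}\le\sqrt T\sqrt{U(\omega)},
\]
and the same bound holds with $\omega+h$ in place of $\omega$. Hence $D(t)\le a+L\int_0^tD(s)\,\mathrm d s$ with
\[
 a=\sqrt T\bigl(\|h\|_{\mathbb H_T}+\sqrt{U(\omega+h)}+\sqrt{U(\omega)}\bigr).
\]

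Before invoking Gronwall, we need two facts. First, $D$ is continuous and finite: $m$ is globally $n$-Lipschitz, so both localization paths exist and are continuous. Second, $s\mapsto\overline{\mathfrak e}(s;Y_s^\omega)$ is measurable, which follows from joint Borel measurability of $\overline{\mathfrak e}$ and continuity of the paths. The first part of \Cref{lem:gronwall-path-stability} then gives $D(t)\le a\exp(Lt)$, and taking $t=T$ yields the claim.

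The only delicate point is that the cost term is not of the form $L\,D(s)$. It must be absorbed into the constant $a$ rather than into the Gronwall exponent, and Cauchy--Schwarz is what makes this possible. If $U(\omega)=\infty$ or $U(\omega+h)=\infty$, the inequality is trivial; in fact this cannot happen, since $\overline{\mathfrak e}\le4n$.
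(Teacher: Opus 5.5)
Your proof is correct and is essentially the paper's argument: subtract the localization equations, apply the cost-Lipschitz hypothesis at each time, use Cauchy--Schwarz to absorb the two cost integrals into the constant $\sqrt T\bigl(\|h\|_{\mathbb H_T}+\sqrt{U(\omega+h)}+\sqrt{U(\omega)}\bigr)$, and close with \Cref{lem:gronwall-path-stability}. The only difference is cosmetic. You apply Gronwall to the pointwise distance $\|Y_t^{\omega+h}-Y_t^\omega\|_2$ rather than the running supremum $\sup_{s\le t}\|Y_s^{\omega+h}-Y_s^\omega\|_2$. This is equally valid, since that lemma only needs a continuous nonnegative function.
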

{
\begin{proof}
For $0\le t\le T$, we subtract the defining equations for $Y^{\omega+h}$ and $Y^\omega$ to obtain
\[
Y_t^{\omega+h}-Y_t^\omega
=
h_t
+
\int_0^t
\bigl(
m(Y_u^{\omega+h})-m(Y_u^\omega)
\bigr)\,\mathrm d u .
\]
By \cref{eq:cameron-martin-uniform-bound},
$\|h\|_\infty \le \sqrt T\,\|h\|_{\mathbb H_T}$.
For each $u\in[0,T]$, we apply the Lipschitz property with cost functions assumed in \Cref{lem:app-wpi-two-path-criterion} at time $u$, taking $(y_1,y_2)=(Y_u^{\omega+h},Y_u^\omega)$.
We obtain
\begin{align*}
\|m(Y_u^{\omega+h})-m(Y_u^\omega)\|_2
\le{}
L\|Y_u^{\omega+h}-Y_u^\omega\|_2 
+
\sqrt{
\overline{\mathfrak e}
\bigl(u;Y_u^{\omega+h}\bigr)
}
+
\sqrt{
\overline{\mathfrak e}
\bigl(u;Y_u^\omega\bigr)
}.
\end{align*}
Thus, we have
\begin{align*}
\sup_{0\le s\le t}
\|Y_s^{\omega+h}-Y_s^\omega\|_2
\le{}&
\sqrt T\,\|h\|_{\mathbb H_T}
+
L\int_0^t  \sup_{0\le s\le u}
\|Y_s^{\omega+h}-Y_s^\omega\|_2\,\mathrm d u
+
\int_0^t
\sqrt{
\overline{\mathfrak e}
\bigl(u;Y_u^{\omega+h}\bigr)
}\,\mathrm d u \\
&+
\int_0^t
\sqrt{
\overline{\mathfrak e}
\bigl(u;Y_u^\omega\bigr)
}\,\mathrm d u .
\end{align*}
By Cauchy--Schwarz and the definition of $U$, we have that 
\[
\int_0^t
\sqrt{
\overline{\mathfrak e}
\bigl(u;Y_u^{\omega+h}\bigr)
}\,\mathrm d u
\le
\sqrt{T\,U(\omega+h)},
\]
and similarly
\[
\int_0^t
\sqrt{
\overline{\mathfrak e}
\bigl(u;Y_u^\omega\bigr)
}\,\mathrm d u
\le
\sqrt{T\,U(\omega)}.
\]
Therefore
\[
\sup_{0\le s\le t}
\|Y_s^{\omega+h}-Y_s^\omega\|_2
\le
\sqrt T
\Bigl(
\|h\|_{\mathbb H_T}
+\sqrt{U(\omega+h)}
+\sqrt{U(\omega)}
\Bigr)
+
L\int_0^t \sup_{0\le s\le u}
\|Y_s^{\omega+h}-Y_s^\omega\|_2\,\mathrm d u .
\]
By \Cref{lem:gronwall-path-stability}, we have that
\[
\|Y_T^{\omega+h}-Y_T^\omega\|_2
\le
\sup_{0\le s\le T}
\|Y_s^{\omega+h}-Y_s^\omega\|_2
\le
\sqrt T\mathrm{e}^{LT}
\Bigl(
\|h\|_{\mathbb H_T}
+\sqrt{U(\omega+h)}
+\sqrt{U(\omega)}
\Bigr).
\]
\end{proof}
}

\begin{lemma}
\label{lem:app-wpi-posterior-comparison}
For the posterior distribution $\mu_y$ and cost function $\overline{\mathfrak e}$ defined above, for every $y,z\in\R^n$ and $f:\{\pm1\}^n\to\R$ we have that
\begin{align*}
 |\E[\mu_y]{f}-\E[\mu_z]{f}|^2
 \leq&~2\bigl(\exp([D(\mu_y\Vert\mu_z)
                         +D(\mu_z\Vert\mu_y)]/2)-1\bigr)
             (\Var[\mu_y]{f}+\Var[\mu_z]{f}),
\end{align*}
\begin{align*}
 D(\mu_y\Vert\mu_z)+D(\mu_z\Vert\mu_y)
 \leq&~L\|y-z\|_2^2+
 \bigl(\sqrt{\overline{\mathfrak e}(T;y)}+\sqrt{\overline{\mathfrak e}(T;z)}\bigr)\|y-z\|_2.
\end{align*}
\end{lemma}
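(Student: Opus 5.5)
The lemma has two parts, and I would prove them separately: the first is a general comparison of expectations under two mutually absolutely continuous laws, the second an exact identity for exponential tilts followed by the Lipschitz-with-cost hypothesis.

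\emph{First inequality.} Write $\nu=\mu_y$, $\rho=\mu_z$, and $g=f-c$ for a constant $c$ to be chosen; the left side is unchanged by this shift. Using the likelihood ratio $\rho/\nu$,
\[
\E[\rho]{g}-\E[\nu]{g}=\E[\nu]{g\,(\rho/\nu-1)}.
\]
Applying Cauchy--Schwarz gives $|\E[\rho]{g}-\E[\nu]{g}|^2\le \E[\nu]{g^2}\,\chi^2(\rho\Vert\nu)$. A symmetric version is better: use the midpoint density $\pi=(\nu+\rho)/2$ and write
\[
\E[\rho]{g}-\E[\nu]{g}=\E[\pi]{g\,(\rho-\nu)/\pi}.
\]
Then Cauchy--Schwarz gives $\E[\pi]{g^2}\,\E[\pi]{((\rho-\nu)/\pi)^2}$. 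Choosing $c=(\E[\nu]{f}+\E[\rho]{f})/2$, we get
\[
\E[\pi]{g^2}=\tfrac12\bigl(\Var[\nu]{f}+\Var[\rho]{f}\bigr)+\tfrac14\,\Delta^2,
\qquad \Delta=\E[\rho]{f}-\E[\nu]{f}.
\]
This suggests a cleaner route. Use the Hellinger-type ratio $\sqrt{\rho/\nu}$, i.e.\ write $\rho-\nu=(\sqrt\rho-\sqrt\nu)(\sqrt\rho+\sqrt\nu)$, so that
\[
\Delta=\sum (f-c)(\sqrt\rho-\sqrt\nu)(\sqrt\rho+\sqrt\nu),
\]
and Cauchy--Schwarz gives
\[
\Delta^2\le 2H^2\cdot 2\bigl(\E[\rho]{g^2}+\E[\nu]{g^2}\bigr),
\qquad H^2=\tfrac12\sum(\sqrt\rho-\sqrt\nu)^2=1-\mathrm{BC}.
\]
Then bound $1-\mathrm{BC}$ with Jensen:
\[
\mathrm{BC}=\E[\nu]{\sqrt{\rho/\nu}}\ge \exp\bigl(-D(\nu\Vert\rho)/2\bigr),
\]
and symmetrically with the other measure. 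Averaging the two logarithms gives $\mathrm{BC}\ge \exp(-\bar D/2)$ with $\bar D=[D(\nu\Vert\rho)+D(\rho\Vert\nu)]/2$, so $2H^2\le 2(1-\exp(-\bar D/2))\le 2(\exp(\bar D/2)-1)$. It remains to replace $\E[\rho]{g^2}+\E[\nu]{g^2}=\Var[\nu]{f}+\Var[\rho]{f}+\Delta^2/2$ by the variances alone. Absorb the $\Delta^2$ term: if $\theta=2H^2\le1$, then $\Delta^2(1-\theta)\le 2\theta(\Var+\Var)$, with $\theta=1-\mathrm{BC}^2$-type factors. I would tune the constants, possibly using $H^2(2-H^2)=1-\mathrm{BC}^2$ and $\mathrm{BC}^{-2}-1\le \exp(\bar D)-1$. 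Matching exactly the stated constant $2(\exp(\bar D)-1)$ in the paper's normalization, where $\bar D$ there is the sum over two, is the fiddly step. I expect this bookkeeping of constants, rather than any conceptual point, to be the main obstacle.

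\emph{Second inequality.} Since $\log(\mu_y/\mu_z)(x)=\inner{y-z}{x}-\log Z(y)+\log Z(z)$, the normalizers cancel in the symmetrized divergence, giving exactly
\[
D(\mu_y\Vert\mu_z)+D(\mu_z\Vert\mu_y)=\inner{y-z}{m(y)-m(z)}.
\]
Cauchy--Schwarz then bounds this by $\|y-z\|_2\|m(y)-m(z)\|_2$, and the Lipschitz-with-cost hypothesis of \Cref{lem:app-wpi-two-path-criterion} at time $t=T$ yields the claimed bound
\[
L\|y-z\|_2^2+\bigl(\sqrt{\overline{\mathfrak e}(T;y)}+\sqrt{\overline{\mathfrak e}(T;z)}\bigr)\|y-z\|_2.
\]
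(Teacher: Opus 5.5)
\textbf{Second inequality.} This part is correct and is exactly the paper's argument: the normalizers cancel, leaving $\langle y-z,m(y)-m(z)\rangle$, and then you apply Cauchy--Schwarz and the Lipschitz-with-cost hypothesis at $t=T$.

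\textbf{First inequality.} Here there is a genuine gap, and it is not constant bookkeeping. You center both halves at the common point $c=(\mathbb E_\nu f+\mathbb E_\rho f)/2$. Your Cauchy--Schwarz step then produces $\mathbb E_\rho g^2+\mathbb E_\nu g^2=V+\Delta^2/2$, where $V=\operatorname{Var}_\nu(f)+\operatorname{Var}_\rho(f)$. With $a=\mathrm{BC}$ this gives only $(2a-1)\Delta^2\le 4(1-a)V$.
\begin{itemize}
\item For $a\le 1/2$ this says nothing at all.
\item For $a\in(1/2,1)$ the constant $4(1-a)/(2a-1)$ is strictly larger than the claimed $2(a^{-2}-1)$; the comparison reduces to $a\ge 1$.
\end{itemize}
The regime $a\le1/2$ cannot be set aside. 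The inequality is asserted for all $y,z$, and in \Cref{lem:app-wpi-two-path-criterion} the symmetrized divergence is of order $\log n+\log(1/v)$. Nothing there forces $a>1/2$, and one needs a bound that degrades only like $a^{-2}\le\exp([D(\nu\Vert\rho)+D(\rho\Vert\nu)]/2)$. No trivial bound covers this regime either: two tilts concentrated near different points have tiny variances but $|\Delta|$ of order $\operatorname{osc}(f)$.

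\textbf{The missing idea.} Center each half at its own mean and keep the factor $a$ on the left:
\[
 a\Delta=\sum_x\sqrt{\rho\nu}\,\bigl(f-\mathbb E_\nu f\bigr)-\sum_x\sqrt{\rho\nu}\,\bigl(f-\mathbb E_\rho f\bigr),
\]
so no $\Delta^2$ term ever appears on the right. Then proceed as follows.
\begin{itemize}
\item In the first sum write $\sqrt{\rho\nu}=\nu\sqrt{\rho/\nu}$. Since $\sum_x\nu(f-\mathbb E_\nu f)=0$, subtract $a\,\nu(f-\mathbb E_\nu f)$ before applying Cauchy--Schwarz. This bounds the square of the first sum by $\operatorname{Var}_\nu(f)\sum_x\nu(\sqrt{\rho/\nu}-a)^2=(1-a^2)\operatorname{Var}_\nu(f)$.
\item The second sum is handled symmetrically.
\item Then $(u-v)^2\le2u^2+2v^2$ gives $\Delta^2\le2(a^{-2}-1)V$.
\end{itemize}
Your Jensen step, $a\ge\exp(-[D(\nu\Vert\rho)+D(\rho\Vert\nu)]/4)$, then finishes the proof with exactly the stated constant.

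Subtracting $\nu(f-\mathbb E_\nu f)$ instead of $a\,\nu(f-\mathbb E_\nu f)$ corresponds to own-mean centering inside your $(\sqrt\rho-\sqrt\nu)$ factorization. It already gives the correct $a^{-2}$ scaling, $\Delta^2\le4(1-a)a^{-2}V$, but exceeds the stated constant by the factor $2/(1+a)$.
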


\begin{proof}
We first prove the first part of this lemma.
{We take full-support distributions $P,Q$ on the cube and set $a=\sum_x\sqrt{P(x)Q(x)}>0$, their Hellinger affinity.}
We have that
\begin{equation*}
 a(\E[P]{f}-\E[Q]{f})
 =\sum_x\sqrt{P(x)Q(x)}(f(x)-\E[Q]{f})
  -\sum_x\sqrt{P(x)Q(x)}(f(x)-\E[P]{f}){.}
\end{equation*}
For the first sum, we use
$\sqrt{P(x)Q(x)}=Q(x)\sqrt{P(x)/Q(x)}$ and
$\sum_xQ(x)(f(x)-\E[Q]{f})=0$.  Thus we may subtract
$aQ(x)(f(x)-\E[Q]{f})$ and apply Cauchy--Schwarz:
\begin{equation*}
 \left|\sum_x\sqrt{P(x)Q(x)}(f(x)-\E[Q]{f})\right|^2
 \le\Var[Q]{f}\sum_xQ(x)
       \left(\sqrt{P(x)/Q(x)}-a\right)^2
 =(1-a^2)\Var[Q]{f},
\end{equation*}
where the last equality follows from $\sum_xP(x)=\sum_xQ(x)=1$ and $\sum_x\sqrt{P(x)Q(x)}=a$.
Going through the calculation for the other centered sum with $P,Q$ exchanged and using $(u-v)^2\le2u^2+2v^2$, we obtain
\begin{equation*}
 |\E[P]{f}-\E[Q]{f}|^2
 \le2(a^{-2}-1)(\Var[P]{f}+\Var[Q]{f}).
\end{equation*}
Next, we provide a lower bound on $a$.
By Jensen's inequality, we have that
\begin{equation*}
 \log a=\log\E[P]{\sqrt{Q(X)/P(X)}}
 \ge\E[P]{\log\sqrt{Q(X)/P(X)}}=-D(P\Vert Q)/2,
\end{equation*}
and the same argument under $Q$ gives
$\log a\ge-D(Q\Vert P)/2$.
Taking the mean of these two lower bounds yields
$a^{-2}\le\exp([D(P\Vert Q)+D(Q\Vert P)]/2)$.
By setting $P = \mu_y$ and $Q = \mu_z$, we finish the proof of the first part.

For the second part, note that for
each $x\in\{\pm1\}^n$,
\begin{equation*}
 \log\frac{\mu_y(x)}{\mu_z(x)}
 =\inner{y-z}{x}
  +\log\frac{\E[X\sim\mu]{\exp(\inner zX)}}
                  {\E[X\sim\mu]{\exp(\inner yX)}}.
\end{equation*}
Hence, we have that
\begin{equation*}
 D(\mu_y\Vert\mu_z)
 =\E[X\sim\mu_y]{\log\frac{\mu_y(X)}{\mu_z(X)}}
 =\inner{y-z}{m(y)}
  +\log\frac{\E[X\sim\mu]{\exp(\inner zX)}}
                  {\E[X\sim\mu]{\exp(\inner yX)}}.
\end{equation*}
Thus, we have that
\begin{equation*}
 D(\mu_y\Vert\mu_z)+D(\mu_z\Vert\mu_y)
 =\inner{y-z}{m(y)-m(z)}.
\end{equation*}
We apply the Lipschitz property with cost functions in \Cref{lem:app-wpi-two-path-criterion} at time $T$ to $\|m(y)-m(z)\|_2$ and use Cauchy--Schwarz to obtain the second bound.
\end{proof}
\begin{proof}[Proof of \Cref{lem:app-wpi-two-path-criterion}]
We fix $f$.
By \Cref{lem:app-wpi-observation-law} and the law of total variance, we have that
\begin{equation*}
 \Var[\mu]{f}
 =\E[\omega\sim\gamma]{\Var[\mu_{Y_T^\omega}]{f}}
  +\Var[\gamma]{\E[\mu_{Y_T^\omega}]{f}}.
\end{equation*}
The assumed bound on the averaged posterior variance controls the first term.
For the second term, we will prove that it is at most
\begin{equation*}
 \Var[\gamma]{\E[\mu_{Y_T^\omega}]{f}}\le Cn^Bs^{-\Gamma}\E[\omega\sim\gamma]{\Var[\mu_{Y_T^\omega}]{f}}
 +s\operatorname{osc}(f)^2{.}
\end{equation*}
{Here $\Gamma$ depends only on $T,L$, while $C,B$ may also depend on $C_0$.}

{For $0<v<1$, we define the quantile of $\E[\mu_{Y_T^\omega}]{f}$ under $\omega\sim\gamma$ by
\begin{equation*}
 Q_f(v)=\inf\left\{a\in\R:\Pr[\omega\sim\gamma]{\E[\mu_{Y_T^\omega}]{f}\le a}\ge v\right\}.
\end{equation*}
}
For $s\le v\le1/2$, we claim that
\begin{equation*}
 [Q_f(1-v)-Q_f(v)]^2
 \le 16\cdot4^\Gamma n^Bv^{-1-\Gamma}
      \E[\omega\sim\gamma]{\Var[\mu_{Y_T^\omega}]{f}}.
\end{equation*}
Later, we will integrate this estimate over $v\in[s,1/2]$; the remaining
interval $(0,s)$ contributes at most $s\operatorname{osc}(f)^2$.

To prove the above claim, we use Gaussian isoperimetry to choose a lower-quantile path and an upper-quantile path whose Cameron--Martin distance is small.
The lower event
$\{\E[\mu_{Y_T^\omega}]{f}\le Q_f(v)\}$ and the upper event $\{\E[\mu_{Y_T^\omega}]{f}\ge Q_f(1-v)\}$ each have Wiener measure at least $v$.
{We discard paths for which $U>C_0\log n$ or $E>C_0\log n$.}
{By the cost bound, the discarded paths have Wiener measure at most $n^{-r}\le v/8$.}
{We also discard paths for which}
\begin{equation*}
 \Var[\mu_{Y_T^\omega}]{f}>
 \frac4v\,\E[\omega'\sim\gamma]{\Var[\mu_{Y_T^{\omega'}}]{f}}.
\end{equation*}
{By Markov's inequality, these paths have Wiener measure at most $v/4$.}
Combined, each of the lower and upper quantile sets still has measure at least $v-v/8-v/4=5v/8$.

These events are Borel because $\omega\mapsto Y^\omega$ is continuous, the cost functions are Borel, and the posterior expectations and variances are continuous. By
\Cref{lem:app-wpi-two-set-gaussian}, there are a retained lower path
$\omega$ and a retained upper path $\omega+h$ with
$\|h\|_{\mathbb H_T}\le2\sqrt{2\log(4/v)}$.

Let $y=Y_T^\omega$ and $z=Y_T^{\omega+h}$.
Note that by the above construction, both paths have $U\le C_0\log n$.  
By \Cref{lem:app-wpi-path-distance} and
$\|h\|_{\mathbb H_T}\le2\sqrt{2\log(4/v)}$, we have that
\begin{equation*}
 \|y-z\|_2
 \le\sqrt T\exp(LT)
   \bigl(2\sqrt{2\log(4/v)}+2\sqrt{C_0\log n}\bigr),\quad
 \|y-z\|_2^2
 \le T\exp(2LT)\bigl(16\log(4/v)+8C_0\log n\bigr).
\end{equation*}
\begingroup
We set $\Gamma=8(L+1)T\exp(2LT)$ and $B=(4(L+1)T\exp(2LT)+1/2)C_0$.
By \Cref{lem:app-wpi-posterior-comparison} and $2\sqrt{C_0\log n}\|y-z\|_2\leq C_0\log n+\|y-z\|_2^2$, we obtain
\begin{equation*}
 D(\mu_y\Vert\mu_z)+D(\mu_z\Vert\mu_y)
 \le(L+1)T\exp(2LT)
   \bigl(16\log(4/v)+8C_0\log n\bigr)+C_0\log n
 =2\Gamma\log(4/v)+2B\log n.
\end{equation*}
Thus $\Gamma$ is independent of the cost threshold $C_0$.
\endgroup
Also by the above construction, we have that
\begin{equation*}
 \max(\Var[\mu_y]{f},\Var[\mu_z]{f})
 \le\frac4v\E[\omega'\sim\gamma]{\Var[\mu_{Y_T^{\omega'}}]{f}}.
\end{equation*}
Combined with  \Cref{lem:app-wpi-posterior-comparison}, we have that
\begingroup
\begin{align*}
 [Q_f(1-v)-Q_f(v)]^2
 \leq&~\frac{16}{v}\exp(\Gamma\log(4/v)+B\log n)
      \E[\omega\sim\gamma]{\Var[\mu_{Y_T^\omega}]{f}}\\
 =&~16\cdot4^\Gamma n^Bv^{-1-\Gamma}
      \E[\omega\sim\gamma]{\Var[\mu_{Y_T^\omega}]{f}}{.}
\end{align*}
\endgroup

We set $q\defeq Q_f(1/2)$ and let $V\sim\operatorname{Unif}(0,1)$.
The random variable $Q_f(V)$ has the same law as $\E[\mu_{Y_T^\omega}]{f}$ under $\omega\sim\gamma$, so using $Q_f(u)\leq q\leq Q_f(1-u)$ for $0<u<1/2$, we obtain
\begin{align*}
 \Var[\gamma]{\E[\mu_{Y_T^\omega}]{f}}
 \leq&~ \E[\omega\sim\gamma]{\bigl(\E[\mu_{Y_T^\omega}]{f}-q\bigr)^2}
 = \int_0^{1/2}\bigl((Q_f(u)-q)^2+(Q_f(1-u)-q)^2\bigr)\,\mathrm d u\\
 \leq&~ \int_0^{1/2}[Q_f(1-u)-Q_f(u)]^2\,\mathrm d u.
\end{align*}
The part with $u<s$ is at most $s\operatorname{osc}(f)^2$.
On $[s,1/2]$ we have that 
\begingroup
\begin{equation*}
 \int_s^{1/2}u^{-1-\Gamma}\,\mathrm d u
 =\frac{s^{-\Gamma}-2^\Gamma}{\Gamma}
 \le\frac1\Gamma s^{-\Gamma}.
\end{equation*}
Taking $C=16\cdot4^\Gamma/\Gamma$, we obtain
\endgroup
\begin{equation*}
 \Var[\gamma]{\E[\mu_{Y_T^\omega}]{f}}
 \le Cn^Bs^{-\Gamma}
       \E[\omega\sim\gamma]{\Var[\mu_{Y_T^\omega}]{f}}
       +s\operatorname{osc}(f)^2.
\end{equation*}
Hence, we have that
\begin{align*}
 \Var[\mu]{f}
 =&~\E[\omega\sim\gamma]{\Var[\mu_{Y_T^\omega}]{f}}
    +\Var[\gamma]{\E[\mu_{Y_T^\omega}]{f}}\\
 \leq&~(1+Cn^Bs^{-\Gamma})
          \E[\omega\sim\gamma]{\Var[\mu_{Y_T^\omega}]{f}}
          +s\operatorname{osc}(f)^2\\
 \leq&~(1+Cn^Bs^{-\Gamma})a_n\Diri{\mu}{f,f}
    +(s+(1+Cn^Bs^{-\Gamma})\delta_n)
          \operatorname{osc}(f)^2.
\end{align*}
\end{proof}

To apply \Cref{lem:app-wpi-two-path-criterion}, we need to bound the averaged terminal posterior variance.
We prove the following lemma in \Cref{app:terminal-posterior-variance} using the strong-field estimates of Bandeira, El Alaoui, and R\"odder~\cite{BandeiraElAlaouiRodder26}.
\begin{lemma}
\label{lem:app-wpi-two-path-terminal}
Fix $0<\beta_0<\beta_1<1$.
There are constants $T_0,C,c,N>0$, depending only on $(\beta_0,\beta_1)$, such that, for every $n\geq N$, there is an event $\Omega_n^{\mathrm{end}}$ of $\mathbb P_{0,n}$-probability at least $1-C\exp(-cn)$ with the following property.
For every $A\in\Omega_n^{\mathrm{end}}$ and every $\beta\in[\beta_0,\beta_1]$, let $X\sim\mu_{\beta,A}$, let $(B_t)_{0\leq t\leq T_0}$ be an independent standard Brownian motion in $\R^n$ started at $0$, and set $Y_t=tX+B_t$.
Then, for every $f:\{\pm1\}^n\to\R$,
\begin{equation*}
 \E[X\sim\mu_{\beta,A},\,B]{\Var[\mu_{A,Y_{T_0}}]{f}}
 \leq Cn^7\Diri{\mu_{\beta,A}}{f,f}
       +\tfrac14\exp(-cn)\operatorname{osc}(f)^2,
\end{equation*}
where $\mu_{A,y}$ is the Gibbs distribution at inverse temperature $\beta$ and external field $y$.
\end{lemma}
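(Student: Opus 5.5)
The plan is to reduce the averaged terminal variance to a Poincar\'e inequality for the strongly tilted measures $\mu_{A,y}$ and then compare Dirichlet forms back to $\mu_{\beta,A}$. First, for a strong field we invoke the strong-field Poincar\'e estimate of \cite{BandeiraElAlaouiRodder26}. It should give constants $h_0,C_1,c$ depending on $(\beta_0,\beta_1)$ and an event of $\mathbb P_{0,n}$-probability $1-C\exp(-cn)$ (we intersect with $\norm{A}_{\mathrm{op}}\le 3$) on which the following holds for every $\beta\in[\beta_0,\beta_1]$ and every field $y$ that is ``typical'' with strength $T_0$:
\[
\Var[\mu_{A,y}]{f}\le C_1 n^{a}\,\Diri{\mu_{A,y}}{f,f}.
\]
Here ``typical'' means, for instance, that the empirical distribution of $y_iX_i$ is close to $N(T_0,T_0)$, or that $|y_i|\ge h_0$ for all but $\epsilon n$ coordinates. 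The precise hypothesis must be matched to their theorem. We choose $T_0$ large so that $Y_{T_0}=T_0X+B_{T_0}$ is typical except with probability $\exp(-cn)$, uniformly in $X$, which follows from a Chernoff bound on the independent Gaussian coordinates. On the atypical event we bound $\Var{f}\le\operatorname{osc}(f)^2/4$ and absorb it into $\tfrac14\exp(-cn)\operatorname{osc}(f)^2$, shrinking $c$ as needed.

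Second, we average the conditional Dirichlet forms. By \Cref{lem:app-wpi-observation-law}, given $Y_{T_0}$ the law of $X$ is $\mu_{A,Y_{T_0}}$, and the marginal of $X$ is $\mu_{\beta,A}$. The key identity is
\[
\E[Y_{T_0}]{\E[X\sim\mu_{A,Y_{T_0}}]{\Var{f(X)\mid X_{-i}}}}=\E[X\sim\mu_{\beta,A}]{\E[Y]{\Var[\mu_{A,Y}]{f\mid X_{-i}}}},
\]
so we need to compare the single-site conditional variance under the tilt with the untilted one. For a single site, $\Var{f\mid X_{-i}}=p(1-p)(f^+-f^-)^2$. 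Tilting by $y_i$ changes $p/(1-p)$ by the factor $e^{2y_i}$, so
\[
p_y(1-p_y)\le e^{2|y_i|}\,p(1-p).
\]
This is where the loss appears. The direct bound gives $\E[\mu_{A,y}]{\Var[\mu_{A,y}]{f\mid X_{-i}}}$ in terms of a $\mu_{A,y}$-expectation rather than a $\mu_{\beta,A}$-expectation. Combining the two via the joint law, the needed quantity is
\[
\E{\Var[\mu_{A,Y}]{f(X)\mid X_{-i}}}
\]
under the joint law of $(X,Y)$. Here $\Var{\cdot\mid X_{-i}}$ is computed with the tilted single-site conditional at the sampled $X_{-i}$, and that tilted conditional is bounded by $e^{2|Y_i|}$ times the untilted conditional variance. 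The factor $e^{2|Y_i|}$ correlates with $X_i$, but $\Var{f\mid X_{-i}}$ does not depend on $X_i$. Bounding $\E{e^{2|Y_{T_0,i}|}\mid X_{-i}}$ is a Gaussian computation that yields a constant depending on $T_0$, and it must use the conditional law of $X_i$ given $X_{-i}$, which has bounded odds since $|\beta(AX)_i|\le C\sqrt n$ and so carries at worst an $e^{O(\sqrt n)}$ factor. Because that factor is too large, we instead truncate: on $\{|(AX)_i|\le C\log n\}$ the odds are polynomial, and the complement has small probability, which we again absorb into the oscillation term.

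The main obstacle is the first step: stating the exact form of the strong-field Poincar\'e inequality from \cite{BandeiraElAlaouiRodder26}, including its polynomial constant (which presumably accounts for the final $n^7$), its uniformity over $\beta$ and over random fields, and its failure probability in $A$. If their result is a mixing-time bound, we convert it to a spectral gap with polynomial loss. Assembling the pieces with $T_0$ fixed then gives the claimed $Cn^7\Diri{\mu_{\beta,A}}{f,f}+\tfrac14 e^{-cn}\operatorname{osc}(f)^2$.
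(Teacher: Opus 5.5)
Your architecture matches the paper's proof. Split on a typical-field event (the paper uses $\mathcal G=\{\||Y_{T_0}|-T_0\ones_n\|_2\le2\sqrt{T_0n}\}$, whose complement has probability at most $e^{-(3/2-\log2)n}$). On $\mathcal G$ apply a strong-field Poincar\'e inequality, on $\mathcal G^c$ bound the posterior variance by $\operatorname{osc}(f)^2/4$, and then compare the averaged Dirichlet form of $\mu_{A,Y_{T_0}}$ with $\Diri{\mu_{\beta,A}}{f,f}$.

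The genuine problem is in that last comparison. The $e^{O(\sqrt n)}$ obstacle you raise does not exist. Since $|X_i|=1$, you have $|Y_{T_0,i}|\le T_0+|B_{T_0,i}|$ with $B$ independent of $X$, so $\E{e^{2|Y_{T_0,i}|}\mid X}\le 2e^{4T_0}$. Your pointwise bound $p_y(1-p_y)\le e^{2|y_i|}p(1-p)$ therefore already closes the step with a constant loss, and the law of $X_i$ given $X_{-i}$ never enters.

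The truncation you switch to would actually break the statement. The local field $(AX)_i$ is an order-one quantity, so $\mu_{\beta,A}(|(AX)_i|>C\log n)$ is not exponentially small in $n$; even for a product measure it is of order $e^{-c\log^2 n}$. After multiplication by the $Cn^7$ from the Poincar\'e step, this cannot be absorbed into $\tfrac14e^{-cn}\operatorname{osc}(f)^2$.

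The paper needs no comparison constant at all. Since $\operatorname{Law}(X\mid Y_{T_0})=\mu_{A,Y_{T_0}}$, the single-site conditional variance of $\mu_{A,Y_{T_0}}$ at the sampled $X_{-i}$ is exactly $\Var{f(X)\mid X_{-i},Y_{T_0}}$. The law of total variance (conditioning on a finer $\sigma$-field) then gives
\[
 \E{\Diri{\mu_{A,Y_{T_0}}}{f,f}}
 =\frac1n\sum_{i=1}^n\E{\Var{f(X)\mid X_{-i},Y_{T_0}}}
 \le\frac1n\sum_{i=1}^n\E{\Var{f(X)\mid X_{-i}}}
 =\Diri{\mu_{\beta,A}}{f,f}.
\]
This is one line away from your own ``key identity''.

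Second, the strong-field Poincar\'e inequality is left as an unmatched black box, and pinning down its precise form is most of the work. You rightly ask for one $A$-event on which it holds deterministically for all $\beta\in[\beta_0,\beta_1]$ and all typical fields, but you do not produce it. The paper obtains it in four steps:
(a) the gauge change $S=\operatorname{diag}(\operatorname{sgn}y_i)$, $A\mapsto SAS$, $y\mapsto|y|$, which preserves variances and the Glauber Dirichlet form and turns $Y_{T_0}$ into an $\ell_2$-perturbation of $T_0\ones_n$ of norm at most $2\sqrt{T_0n}$;
(b) the sign-invariant event $\Omega_n^{\mathrm{end}}=\{\|A\|_{\mathrm{op}}\le3,\ \max_{|I|\le\delta n}\|\beta_1A_{I,I}\|_{\mathrm{op}}\le1/5\}$, obtained from Gaussian concentration and a union bound over small subsets (the operator-norm bound you intersect with is not enough on its own);
(c) the correlation bound $\le6$ of \cite{BandeiraElAlaouiRodder26} under every pinning $\tau$ of the spins outside a set $I$ with $|I|\ge\delta n$, where the pinned spins contribute the extra field $\beta(SAS)_{I,I^c}\tau$ of norm $O(\sqrt n)$;
(d) a variance iteration that reveals spins until fewer than $\delta n$ remain, followed by \cite{EldanKoehlerZeitouni22} on the remaining block, which is where $\tfrac53n^7$ comes from.
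Your candidate notions of typicality (the empirical law of $y_iX_i$, or a count of coordinates with $|y_i|\ge h_0$) are not of the $\ell_2$-perturbation form that these correlation bounds require.
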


\begin{proof}[Proof of \Cref{thm:app-wpi-poly}]
At $\beta=0$, tensorization for product distributions gives $\Var[\mu_A]{f}\le n\Diri{\mu_A}{f,f}$.

{We fix $0<\beta<1$ and $D,M_0>0$, and choose $0<\beta_0<\beta<\beta_1<1$ depending only on $\beta$.}
We take $T=T_0$ from \Cref{lem:app-wpi-two-path-terminal} on $[\beta_0,\beta_1]$.
We will apply \Cref{lem:app-wpi-two-path-criterion} on a high-probability event of interaction matrices.
We next verify each assumption of this lemma.

By \Cref{lem:app-wpi-branches} with $T$ and
$R=M_0+D+6$, we take the cost function
$\overline{\mathfrak e}_{n,\beta}$.
For fixed $A$, define
\[
    \overline{\mathfrak e}(t;y)
    \defeq
    \overline{\mathfrak e}_{n,\beta}(A,t;y),
    \qquad
    t\in[0,T],\quad y\in\R^n.
\]
By part~\textup{(i)} of \Cref{lem:app-wpi-branches},
$\overline{\mathfrak e}$ satisfies the Lipschitz property with cost functions
assumed in \Cref{lem:app-wpi-two-path-criterion}.
Let
\[
    U_{\beta,A}(\omega)
    \defeq
    \int_0^T
    \overline{\mathfrak e}_{n,\beta}
    \bigl(A,t;Y_t^\omega\bigr)\,\mathrm d t, \quad
    E_{\beta,A}(\omega)
    \defeq
    \overline{\mathfrak e}_{n,\beta}
    \bigl(A,T;Y_T^\omega\bigr).
\]

{Next, we bound the probability that either $U_{\beta,A}$ or $E_{\beta,A}$ exceeds $C_R\log n$, as required by \Cref{lem:app-wpi-two-path-criterion}.}
By \Cref{lem:app-wpi-branches}, we have
\begin{equation*}
 \E[A\sim\mathbb P_{0,n}]{
  \Pr[\omega\sim\gamma]{U_{\beta,A}(\omega)>C_R\log n
                 \text{ or }E_{\beta,A}(\omega)>C_R\log n}}
 \le2n^{-R}.
\end{equation*}
Taking $r=M_0+3$, by Markov's inequality, we have that
\begin{equation*}
 \mathbb P_{0,n}\left\{A:
  \Pr[\omega\sim\gamma]{U_{\beta,A}(\omega)>C_R\log n
                 \text{ or }E_{\beta,A}(\omega)>C_R\log n}>n^{-r}\right\}
 \le2n^{r-R}.
\end{equation*}
Finally, on the event $\Omega_n^{\mathrm{end}}$, we apply \Cref{lem:app-wpi-two-path-terminal,lem:app-wpi-observation-law} to obtain the averaged terminal posterior variance bound in \Cref{lem:app-wpi-two-path-criterion} with $a_n=Cn^7$ and $\delta_n=\exp(-cn)/4$.
For every $\epsilon\in[n^{-M_0},1/2]$, we take $s=\epsilon/2$.
Since $r=M_0+3$, for all sufficiently large $n$,
$8n^{-r}\le\epsilon/2\le1/4$.
\begingroup
By \Cref{lem:app-wpi-two-path-criterion} with $a_n=Cn^7$ and $\delta_n=\exp(-cn)/4$, there are constants $C,b,\Gamma>0$ such that
\begin{equation*}
 \Var[\mu_A]{f}
 \le Cn^7(1+Cn^b(\epsilon/2)^{-\Gamma})
       \Diri{\mu_A}{f,f}
 +\left(\frac\epsilon2
       +\frac{1+Cn^b(\epsilon/2)^{-\Gamma}}4
         \exp(-cn)\right)\operatorname{osc}(f)^2.
\end{equation*}
Here $\Gamma=8(L+1)T\exp(2LT)$ depends only on $\beta$, since $T,L$ depend only on the chosen interval $[\beta_0,\beta_1]$.
Since $\epsilon\ge n^{-M_0}$, for sufficiently large $n$, we have that 
\begin{equation*}
 \frac{1+C2^\Gamma n^{b+M_0\Gamma}}4
       \exp(-cn)
 \le\frac12n^{-M_0}\le\frac\epsilon2.
\end{equation*}
Also, by $n\ge2$ and $\epsilon\le1/2$, we have $Cn^7(1+Cn^b(\epsilon/2)^{-\Gamma})\leq C'n^{7+b}\epsilon^{-\Gamma}$.
Taking $B=b/\Gamma$ gives the coefficient in \Cref{thm:app-wpi-poly}, with $C,B,N$ depending only on $(\beta,D,M_0)$.
\endgroup
Finally, the union bound gives a failure probability of at most $2n^{r-R}+C\exp(-cn)=2n^{-D-3}+C\exp(-cn)\le n^{-D}$ for all sufficiently large $n$.

\end{proof}

\subsection{Simulated annealing}

{For simulated annealing, we need the WPI at every inverse temperature in the schedule. We formalize {this in} the next corollary.}

\begin{corollary}
\label{cor:app-wpi-schedule}
Fix $0\leq\beta_*<1$ and $D,M_0,C_{\mathrm{sch}}>0$.
{There are constants $\Gamma>0$, depending only on $\beta_*$, and $C,a,N>0$, depending only on $(\beta_*,D,M_0,C_{\mathrm{sch}})$, such that every deterministic}
$\mathcal B_n\subseteq [0,\beta_*]$ with
$|\mathcal B_n|\leq n^{C_{\mathrm{sch}}}$ has an event of
$\mathbb P_{0,n}$-probability at least $1-n^{-D}$ on which
\begin{equation*}
 \Var[\mu_{\beta,A}]{f}
 \leq Cn^a\epsilon^{-\Gamma}\Diri{\mu_{\beta,A}}{f,f}
       +\epsilon\operatorname{osc}(f)^2
\end{equation*}
for all $\beta\in\mathcal B_n$, all
$\epsilon\in[n^{-M_0},1/2]$, and all $f:\{\pm1\}^n\to\R$, when $n\geq N$.
\end{corollary}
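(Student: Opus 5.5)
The plan is to run the proof of \Cref{thm:app-wpi-poly} separately at each $\beta\in\mathcal B_n$ and take a union bound over the schedule. To make this affordable, the failure probability at each fixed $\beta$ must be raised to $n^{-D-C_{\mathrm{sch}}-1}$, and all constants must be uniform in $\beta$. First I will split $[0,\beta_*]$ into the point $\beta=0$ and an interval $[\beta_0,\beta_1]$ with $\beta_1=\beta_*$ (or slightly larger) and $\beta_0>0$ small.

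\emph{Case $\beta=0$.} Tensorization gives $\Var[\mu_A]{f}\le n\Diri{\mu_A}{f,f}$ deterministically.

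\emph{Case $0<\beta<\beta_0$.} Here $\beta$ is small. A simple route is to set $\beta_0=1/8$, so that the classical Poincar\'e/spectral-gap results for $\beta<1/4$ apply; but to stay self-contained I prefer to use the machinery already set up in the paper, working on $[\beta_0',\beta_1]$ with an arbitrarily small $\beta_0'>0$. The difficulty is that \Cref{lem:app-wpi-branches} and \Cref{lem:app-wpi-two-path-terminal} need $\beta_0>0$ to fix $T$, $L$ and $\Gamma$. Concretely, I will fix $\beta_0=\min(\beta_*,1/2)/2$. For $\beta\in(0,\beta_0)$ I will invoke the small-$\beta$ case separately: there $\|\beta A\|_{\mathrm{op}}\le 1/2$ with probability $1-e^{-cn}$, and a Bauerschmidt--Bodineau-type Poincar\'e inequality applies uniformly. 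This step is the one I am least sure about. If it fails, the fallback is to check that $T_0$ and $L$ stay bounded as $\beta_0\downarrow0$, which is plausible because small-time strong convexity covers everything when $\beta$ is small.

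\emph{Case $\beta\in[\beta_0,\beta_*]$.} Fix $\beta$. I apply \Cref{lem:app-wpi-branches} on $[\beta_0,\beta_*]$ with $T=T_0$ from \Cref{lem:app-wpi-two-path-terminal} and with $R=M_0+D+C_{\mathrm{sch}}+6$. Because the suprema in part~(ii) are taken over $\beta$, the constants $C_R,n_R$ do not depend on $\beta$. Markov's inequality over $A$, with $r=M_0+3$, then shows that the per-$\beta$ bad event has probability at most $2n^{r-R}=2n^{-D-C_{\mathrm{sch}}-3}$.

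The terminal event $\Omega_n^{\mathrm{end}}$ of \Cref{lem:app-wpi-two-path-terminal} is already uniform over $\beta\in[\beta_0,\beta_1]$, so it is needed only once. On the intersection of these events, \Cref{lem:app-wpi-two-path-criterion} with $a_n=Cn^7$, $\delta_n=e^{-cn}/4$ and $s=\epsilon/2$ produces the WPI with coefficient $Cn^{7+b}\epsilon^{-\Gamma}$. Here $\Gamma=8(L+1)T e^{2LT}$ depends only on $\beta_*$ through $\beta_0,\beta_1$, and $b$ depends on $C_R$. The $\delta_n$ term is absorbed as before. Setting $a=7+b$ gives the stated coefficient.

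\emph{Union bound.} Taking the union over at most $n^{C_{\mathrm{sch}}}$ values of $\beta$ gives total failure probability at most
\[
n^{C_{\mathrm{sch}}}\cdot2n^{-D-C_{\mathrm{sch}}-3}+Ce^{-cn}\le n^{-D}
\]
for large $n$, plus the small-$\beta$ event. Since the constants are chosen independently of the particular schedule, the conclusion holds for every deterministic $\mathcal B_n$.

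The main obstacle is the small-$\beta$ range. The other cases only repeat the proof of \Cref{thm:app-wpi-poly} while tracking uniformity in $\beta$.
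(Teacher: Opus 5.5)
Your architecture matches the paper's. You use a classical spectral-gap bound for small $\beta$ on a single operator-norm event, and you rerun the proof of \Cref{thm:app-wpi-poly} on $[\beta_0,\beta_*]$ with $R$ enlarged by $C_{\mathrm{sch}}$. There the suprema over $\beta$ in \Cref{lem:app-wpi-branches} and the $\beta$-uniform event $\Omega_n^{\mathrm{end}}$ make the constants $\beta$-independent, and $\Gamma$ depends on $\beta_*$ only through the interval. A union bound over at most $n^{C_{\mathrm{sch}}}$ temperatures finishes. Your bookkeeping there is correct: $R=M_0+D+C_{\mathrm{sch}}+6$ and $r=M_0+3$ give a per-$\beta$ failure probability of $2n^{-D-C_{\mathrm{sch}}-3}$.

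The gap is in the small-$\beta$ case with your threshold $\beta_0=\min(\beta_*,1/2)/2$. For $\beta_*\ge1/2$ this gives $\beta_0=1/4$, and as $\beta\uparrow1/4$ neither ingredient you invoke is uniform in $\beta$. First, the event $\|\beta A\|_{\mathrm{op}}\le1/2$ forces $\|A\|_{\mathrm{op}}\le1/(2\beta)\to2$. This is not a $1-e^{-cn}$ event uniformly in $\beta<1/4$, because the spectral edge fluctuates around $2$. Second, even on that event the spread $\lambda_{\max}(\beta A)-\lambda_{\min}(\beta A)$ can reach $1$. At that point the Eldan--Koehler--Zeitouni or Bauerschmidt--Bodineau gap, which is proportional to $1$ minus the spread, degenerates. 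Since $\mathcal B_n$ is an arbitrary deterministic schedule, it can contain points $\beta_n\uparrow1/4$, so you get no $\beta$-independent constant $C$.

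The fix is to cut a fixed distance below $1/4$, as the paper does with $1/15$. On the event $\|A\|_{\mathrm{op}}\le3$, which has probability at least $1-Ce^{-cn}$ by \Cref{lem:terminal-disorder-tail}, one has $0\preceq\beta A+3\beta I_n\preceq(2/5)I_n$ for every $\beta\le1/15$. Then \cite[Theorem~1]{EldanKoehlerZeitouni22} gives $\Var[\mu_{\beta,A}]{f}\le(5n/3)\Diri{\mu_{\beta,A}}{f,f}$, a genuine Poincar\'e inequality that implies the WPI outright. If $\beta_*\le1/15$ this covers the whole schedule; otherwise the large case runs on $[1/15,\beta_*]$.

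Your own ``simple route'' $\beta_0=1/8$ also works on the same event, since then $6\beta\le3/4<1$. The fallback of letting $\beta_0\downarrow0$ inside \Cref{lem:app-wpi-branches,lem:app-wpi-two-path-terminal} is unnecessary. Those lemmas do not provide it as stated, because they require a fixed $\beta_0>0$.
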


\begin{proof}
We fix $0\leq\beta_*<1$ and a deterministic schedule $\mathcal B_n\subseteq[0,\beta_*]$ with $|\mathcal B_n|\leq n^{C_{\mathrm{sch}}}$.
We take $r=M_0+3$ and choose $R>D+C_{\mathrm{sch}}+r+3$.
We prove this corollary by considering two cases, small $\beta$ and large $\beta$.

We first handle the large $\beta$.
Suppose $\beta_*>1/15$.
By \Cref{lem:app-wpi-two-path-terminal} with $(\beta_0,\beta_1)=(1/15,\beta_*)$, we choose one terminal time $T_0$ and {the matrix event $\Omega_n^{\mathrm{end}}$} for every $\beta\in[1/15,\beta_*]$.
By \Cref{lem:app-wpi-branches} on this interval with $T=T_0$ and the chosen $R$, we use the same Lipschitz constant $L$, error threshold $C_R\log n$, and lower bound on $n$ for every such $\beta$.

{In \Cref{lem:app-wpi-two-path-criterion}, the exponent $\Gamma$ depends only on $T_0,L$, hence only on $\beta_*$, while $C,B$ may also depend on $C_R$.}
The constants in the averaged terminal posterior variance bound of \Cref{lem:app-wpi-two-path-terminal} are also common to the interval.
Going through the proof of \Cref{thm:app-wpi-poly} with these choices therefore gives constants $C,a,\Gamma,N$, independent of $\beta\in[1/15,\beta_*]$, such that for each fixed $\beta$ the claimed WPI holds for all $f$ and all $\epsilon\in[n^{-M_0},1/2]$ outside a matrix event of probability at most $2n^{r-R}+C\exp(-cn)$.

For $0\leq\beta\leq1/15$, the event $\|A\|_{\mathrm{op}}\leq3$ gives $0\preceq\beta A+3\beta I_n\preceq(2/5)I_n$.
The diagonal shift leaves the Gibbs distribution unchanged, so \cite[Theorem~1]{EldanKoehlerZeitouni22} gives $\Var[\mu_{\beta,A}]{f}\leq(5n/3)\Diri{\mu_{\beta,A}}{f,f}$ for every $f$.
If $\beta_*\leq1/15$, we take $\Gamma=1$, $a=1$, and $C=5/3$.
By \Cref{lem:terminal-disorder-tail} with $\beta_1=1/2$, the probability that $\|A\|_{\mathrm{op}}>3$ is at most $C\exp(-cn)$.
After increasing $C,a,N$ if necessary, the same WPI constants and the same failure bound $2n^{r-R}+C\exp(-cn)$ thus apply to each fixed $\beta\in[0,\beta_*]$.

We now take the union over the prescribed schedule:
\begin{equation*}
 \mathbb P_{0,n}\left\{\text{the WPI fails for some }\beta\in\mathcal B_n\right\}
 \leq |\mathcal B_n|(2n^{r-R}+C\exp(-cn))
 \leq n^{-D}
\end{equation*}
for all sufficiently large $n$.
\end{proof}

We give discrete-time, finite-state versions of the warm-start and simulated-annealing bounds of Huang, Mohanty, Rajaraman, and Wu \cite[Theorems~4.6 and~4.13]{HuangMohantyRajaramanWu25}.
A law $\mu^{\mathrm{init}}$ is
$B$-warm with respect to $\mu$ if $\mu^{\mathrm{init}}(S)\leq B\mu(S)$ for every measurable set $S$, equivalently $\d\mu^{\mathrm{init}}/\d\mu\leq B$ almost everywhere.
{The following bound applies to reversible Markov kernels on a finite space with spectrum in $[0,1]$.}

\begin{lemma}[{\cite[Theorem~4.6]{HuangMohantyRajaramanWu25}}]
\label{lem:app-wpi-warm-decay}
Let $P$ be a reversible Markov kernel on a finite space, with
stationary law $\mu$ and spectrum in $[0,1]$.
Suppose $\Var[\mu]{f}\leq A_0\Diri{\mu}{f,f}
+\delta\operatorname{osc}(f)^2$, with $A_0\geq1$.
For every $B_0$-warm initial law $\mu^{\mathrm{init}}$ and integer $k\geq0$,
\begin{equation*}
 \DTV{\mu^{\mathrm{init}} P^k}{\mu}
 \leq\frac12\left(\e^{-k/A_0}(B_0-1)+\delta B_0^2\right)^{1/2}.
\end{equation*}
\end{lemma}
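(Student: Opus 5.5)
The plan is the standard $L^2$ (chi-square) argument combined with a truncation that exploits the defect $\delta$. Write $g_k = \d(\mu^{\mathrm{init}}P^k)/\d\mu$. By reversibility, $g_k = P^k g_0$, with $g_0\le B_0$ and $g_0\ge 0$. By Cauchy--Schwarz,
\[
\DTV{\mu^{\mathrm{init}}P^k}{\mu}=\tfrac12\E[\mu]{|g_k-1|}\le\tfrac12\Var[\mu]{g_k}^{1/2},
\]
so it suffices to prove $\Var[\mu]{g_k}\le \e^{-k/A_0}(B_0-1)+\delta B_0^2$.

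The first step is a differential-type inequality for $V_j\defeq\Var[\mu]{g_j}$. Since $P$ is self-adjoint with spectrum in $[0,1]$, we have
\[
V_j-V_{j+1}=\inner{g_j-1}{(I-P^2)(g_j-1)}_\mu\ge\inner{g_j-1}{(I-P)(g_j-1)}_\mu=\Diri{\mu}{g_j,g_j}.
\]
The inequality uses $1-\lambda^2\ge 1-\lambda$ on $[0,1]$. Next, $P$ is Markov, so $0\le g_j\le B_0$ for every $j$, hence $\operatorname{osc}(g_j)\le B_0$. The WPI then gives
\[
\Diri{\mu}{g_j,g_j}\ge A_0^{-1}\bigl(V_j-\delta B_0^2\bigr).
\]
Combining the two displays yields
\[
V_{j+1}-\delta B_0^2\le(1-A_0^{-1})(V_j-\delta B_0^2).
\]
This holds whenever $V_j\ge\delta B_0^2$. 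If instead $V_j<\delta B_0^2$, then monotonicity ($V_{j+1}\le V_j$) keeps all later $V$ below $\delta B_0^2$.

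The second step is to iterate. Either the sequence falls below $\delta B_0^2$ at some point and stays there, or
\[
V_k\le\delta B_0^2+(1-A_0^{-1})^k(V_0-\delta B_0^2)_+\le \delta B_0^2+\e^{-k/A_0}V_0.
\]
Here $A_0\ge1$ ensures $1-A_0^{-1}\in[0,1)$. Finally, $V_0=\E[\mu]{g_0^2}-1\le B_0\E[\mu]{g_0}-1=B_0-1$. Substituting into the Cauchy--Schwarz bound gives the claim.

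The main point to check is that the discrete-time contraction uses $I-P^2$ rather than $I-P$. This is exactly where the spectrum-in-$[0,1]$ hypothesis enters: with negative eigenvalues the comparison $1-\lambda^2\ge1-\lambda$ would fail. The other subtlety is that the oscillation bound must hold uniformly along the iterates, which follows from preservation of the warmness bound $g_j\le B_0$ under $P$. Everything else is routine.
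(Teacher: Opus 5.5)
Your proposal is correct and follows essentially the same route as the paper's proof: write the density as $f_k=P^kf_0$ with $0\le f_k\le B_0$, bound the one-step decrement $\inner{f_k}{(I-P^2)f_k}_\mu\ge\Diri{\mu}{f_k,f_k}$ using the spectrum in $[0,1]$, apply the WPI with $\operatorname{osc}(f_k)\le B_0$, iterate the affine recurrence from $v_0\le B_0-1$, and finish with Cauchy--Schwarz. Your case split on whether $V_j\ge\delta B_0^2$ is harmless but unnecessary, since $V_{j+1}-\delta B_0^2\le(1-A_0^{-1})(V_j-\delta B_0^2)$ is just a rearrangement of the decrement bound and holds whatever the sign of $V_j-\delta B_0^2$.
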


\begin{proof}
We set $f_k=\d(\mu^{\mathrm{init}}P^k)/\d\mu$ and
$v_k=\Var[\mu]{f_k}=\E[\mu]{(f_k-1)^2}$.
For every state $y$ with $\mu(y)>0$, by detailed balance we have that
\begin{equation*}
 f_{k+1}(y)
 =\frac{\sum_x\mu(x)f_k(x)P(x,y)}{\mu(y)}
 =\sum_xP(y,x)f_k(x)=(Pf_k)(y).
\end{equation*}
Thus $f_k=P^kf_0$.  Since a Markov kernel preserves the interval $[0,B_0]$, we have $0\leq f_k\leq B_0$ and
$\E[\mu]{f_k}=1$.  
In particular, $v_0=\E[\mu]{f_0^2}-1\leq B_0\E[\mu]{f_0}-1=B_0-1$.

For every spectral value $\lambda\in[0,1]$,
$1-\lambda^2\geq1-\lambda$.  By self-adjointness of $P$,
$I-P^2\succeq I-P$ on $L^2(\mu)$, and hence
\begin{align*}
 v_k-v_{k+1}
 =&~\inner{f_k}{(I-P^2)f_k}_\mu
 \geq\Diri{\mu}{f_k,f_k}
 \geq A_0^{-1}(v_k-\delta B_0^2),
\end{align*}
where the last inequality holds due to the WPI and $\operatorname{osc}(f_k)\leq B_0$.
Writing the resulting recurrence as
$v_{k+1}\leq(1-A_0^{-1})v_k+A_0^{-1}\delta B_0^2$ and iterating,
we obtain
\begin{equation*}
 v_k\leq(1-A_0^{-1})^kv_0
       +\delta B_0^2(1-(1-A_0^{-1})^k)
 \leq\e^{-k/A_0}(B_0-1)+\delta B_0^2.
\end{equation*}
Finally, by Cauchy--Schwarz,
$\DTV{\mu^{\mathrm{init}}P^k}{\mu}
=\frac12\E[\mu]{|f_k-1|}\leq\frac12\sqrt{v_k}${,} which finishes the proof.
\end{proof}

Now, we verify the assumption of the above lemma.
For random-scan heat-bath Glauber dynamics, the update at site $i$ is
$P_i f=\E[X\sim\mu]{f(X)\mid X_{-i}}$.
{By the tower property of conditional expectation}, for any $f,g$ we have
\begin{equation*}
 \inner f{P_i g}_\mu
 =\E[\mu]{\E[\mu]{f\mid X_{-i}}\E[\mu]{g\mid X_{-i}}}
 =\inner{P_i f}g_\mu,\qquad P_i^2f=P_i f.
\end{equation*}
In particular,
$\inner f{P_i f}_\mu=\E[\mu]{(\E[\mu]{f\mid X_{-i}})^2}$ lies
between zero and $\E[\mu]{f^2}$: by Jensen's inequality,
$\E[\mu]{(\E[\mu]{f\mid X_{-i}})^2}
\leq\E[\mu]{\E[\mu]{f^2\mid X_{-i}}}=\E[\mu]{f^2}$.
Averaging these quadratic-form bounds over $i$, we obtain
$0\preceq n^{-1}\sum_iP_i\preceq I$.

\begin{lemma}[{\cite[Theorem~4.13]{HuangMohantyRajaramanWu25}}]
\label{lem:app-wpi-annealing-composition}
Let $\mu_0,\ldots,\mu_L$ be probability measures on the same finite
space.  For $1\leq j\leq L$, let $P_j$ be reversible with stationary
law $\mu_j$ and nonnegative spectrum, and suppose
\begin{equation*}
 \mu_{j-1}\leq B_j\mu_j,\qquad
 \Var[\mu_j]{f}\leq A_j\inner{f}{(I-P_j)f}_{\mu_j}
          +\delta_j\operatorname{osc}(f)^2,\qquad A_j\geq1.
\end{equation*}
Then, for nonnegative integers $k_1,\ldots,k_L$,
\begin{equation*}
 \DTV{\mu_0P_1^{k_1}\cdots P_L^{k_L}}{\mu_L}
 \leq\frac12\sum_{j=1}^L
      \left(\e^{-k_j/A_j}(B_j-1)+\delta_jB_j^2\right)^{1/2}.
\end{equation*}
\end{lemma}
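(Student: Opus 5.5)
The plan is to chain \Cref{lem:app-wpi-warm-decay} stage by stage, using the triangle inequality for total variation together with the fact that Markov kernels contract total variation. For $0\le j\le L$ set $\nu_j=\mu_0P_1^{k_1}\cdots P_j^{k_j}$, so that $\nu_0=\mu_0$ and $\nu_j=\nu_{j-1}P_j^{k_j}$. The target is the bound
\begin{equation*}
 \DTV{\nu_L}{\mu_L}\le\sum_{j=1}^L\DTV{\mu_{j-1}P_j^{k_j}}{\mu_j},
\end{equation*}
after which each summand is controlled by \Cref{lem:app-wpi-warm-decay}.

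The inductive step is as follows. For each $j$,
\begin{equation*}
 \DTV{\nu_j}{\mu_j}\le\DTV{\nu_{j-1}P_j^{k_j}}{\mu_{j-1}P_j^{k_j}}+\DTV{\mu_{j-1}P_j^{k_j}}{\mu_j}.
\end{equation*}
The first term is at most $\DTV{\nu_{j-1}}{\mu_{j-1}}$, because applying a Markov kernel to both arguments cannot increase total variation: this is the data-processing inequality, or equivalently $\|(\nu-\nu')P\|_1\le\|\nu-\nu'\|_1$ for signed measures. Since $\nu_0=\mu_0$, we have $\DTV{\nu_0}{\mu_0}=0$, and unrolling the recursion yields the telescoped sum displayed above.

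For the $j$th summand, the hypothesis $\mu_{j-1}\le B_j\mu_j$ means precisely that $\mu_{j-1}$ is $B_j$-warm with respect to $\mu_j$. The kernel $P_j$ is reversible with respect to $\mu_j$, and its nonnegative spectrum lies in $[0,1]$ because it is a Markov kernel. The WPI with parameters $A_j\ge1$ and $\delta_j$ is assumed. So \Cref{lem:app-wpi-warm-decay} applies with $A_0=A_j$, $B_0=B_j$, $\delta=\delta_j$ and $k=k_j$, giving
\begin{equation*}
 \DTV{\mu_{j-1}P_j^{k_j}}{\mu_j}\le\tfrac12\left(\e^{-k_j/A_j}(B_j-1)+\delta_jB_j^2\right)^{1/2}.
\end{equation*}
Summing over $j$ then gives the claim.

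There is no real obstacle here. The one point that needs care is the contraction step, since $\nu_{j-1}$ need not be warm with respect to anything. The argument handles this by never using warmness for $\nu_{j-1}$: warmness is applied only to the exact previous stationary law $\mu_{j-1}$, and the discrepancy between $\nu_{j-1}$ and $\mu_{j-1}$ is carried forward purely through TV contraction. Implicitly, \Cref{lem:app-wpi-warm-decay} also requires that $\mu_j$ has full support wherever $\mu_{j-1}$ does, and this is already guaranteed by $\mu_{j-1}\le B_j\mu_j$.
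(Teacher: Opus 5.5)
Your proposal is correct and follows the paper's proof: carry the discrepancy forward by the triangle inequality and TV contraction under $P_j^{k_j}$, telescope from $\mu_0P_1^{k_1}\cdots P_{j-1}^{k_{j-1}}$ versus $\mu_{j-1}$, and bound each term with \Cref{lem:app-wpi-warm-decay} using only the warmness of $\mu_{j-1}$ with respect to $\mu_j$. Your remarks on why the spectrum lies in $[0,1]$ and why the density $\mathrm d\mu_{j-1}/\mathrm d\mu_j$ exists are correct.
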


\begin{proof}
{We write $\widehat\mu_{j-1}=\mu_0P_1^{k_1}\cdots P_{j-1}^{k_{j-1}}$ for the distribution at the start of stage $j$.}
{We compare $\widehat\mu_{j-1}$ with $\mu_{j-1}$, which is $B_j$-warm with respect to $\mu_j$.}
By total-variation contraction under $P_j^{k_j}$ and the triangle inequality we have
\begin{align*}
 \DTV{\widehat\mu_{j-1}P_j^{k_j}}{\mu_j}
 \leq&~\DTV{\widehat\mu_{j-1}P_j^{k_j}}{\mu_{j-1}P_j^{k_j}}
       +\DTV{\mu_{j-1}P_j^{k_j}}{\mu_j}\\
 \leq&~\DTV{\widehat\mu_{j-1}}{\mu_{j-1}}
       +\DTV{\mu_{j-1}P_j^{k_j}}{\mu_j}.
\end{align*}
Starting from $\widehat\mu_0=\mu_0$, we iterate this inequality and obtain
\begin{equation*}
 \DTV{\mu_0P_1^{k_1}\cdots P_L^{k_L}}{\mu_L}
 \leq\sum_{j=1}^L\DTV{\mu_{j-1}P_j^{k_j}}{\mu_j}.
\end{equation*}
{We apply \Cref{lem:app-wpi-warm-decay} to each term to finish the proof.}
\end{proof}

For $\beta>0$, the sampler is \Cref{alg:app-wpi-annealing}, with a fixed integer exponent $p=p(\beta,M,D)$ whose existence our proof establishes without giving an explicit value.
We count arithmetic operations, evaluations of $\exp$, and uniform random draws as unit-cost operations.

\begin{algorithm}[H]
\caption{Simulated annealing from {the} uniform product distribution}
\label{alg:app-wpi-annealing}
\KwIn{A zero-diagonal matrix $A$, target $0<\beta<1$, and a fixed exponent $p$.}
\KwOut{A spin configuration.}
Set $L=\ceil{\beta n}$ and $\beta_j=j\beta/L$ for $0\leq j\leq L$\;
Draw $x\in\{\pm1\}^n \sim \mathrm{Unif}(\set{\pm 1}^n)$\;
\For{$j=1,\ldots,L$}{
  \For{$k=1,\ldots,\ceil{n^p}$}{
    Choose $i$ uniformly from $[n]$\;
    Set $h_i=\beta_j\sum_{\ell\ne i}A_{i\ell}x_\ell$\;
    Replace $x_i$ by $+1$ with probability $(1+\exp(-2h_i))^{-1}$,
    and by $-1$ otherwise\;
  }
}
\Return{$x$}\;
\end{algorithm}

We now analyze \Cref{alg:app-wpi-annealing}, verifying warmness between consecutive Gibbs distributions and choosing the update exponent.
\begin{proof}[Proof of \Cref{thm:app-wpi-annealing}]
If $\beta=0$, we directly sample from the uniform product distribution. 
We assume that $\beta > 0$ and  we use \Cref{alg:app-wpi-annealing} and \Cref{lem:app-wpi-annealing-composition} to prove this theorem.
There are $L\leq n$ stages in \Cref{alg:app-wpi-annealing}.

We first verify warmness between consecutive Gibbs {distributions}.
On $\norm A_{\mathrm{op}}\leq3$, the energy $H_A(x)=\inner{x}{Ax}/2$ satisfies $|H_A(x)|\leq\|A\|_{\mathrm{op}}\|x\|_2^2/2\leq3n/2$ for every spin $x$.  
Therefore $\sup_{x} H_A(x)-\inf_{x} H_A(x)\leq3n$.
For every spin $x$, the normalized Gibbs probabilities satisfy
\begin{equation*}
 \frac{\mu_{\beta_{j-1},A}(x)}{\mu_{\beta_j,A}(x)}
 =\e^{-(\beta_j-\beta_{j-1})H_A(x)}
   \E[X\sim\mu_{\beta_{j-1},A}]{
        \e^{(\beta_j-\beta_{j-1})H_A(X)}}.
\end{equation*}
We bound the first factor using $\inf H_A$ and the second factor using $\sup H_A$.  
Since $\beta_j-\beta_{j-1}\leq1/n$, we obtain
\begin{equation*}
 \left\|\frac{\d\mu_{\beta_{j-1},A}}
                  {\d\mu_{\beta_j,A}}\right\|_\infty
 \leq\exp(3n(\beta_j-\beta_{j-1}))\leq B_0\defeq\e^3.
\end{equation*}

Next, we apply \Cref{cor:app-wpi-schedule} to $\mathcal B_n=\{\beta_j:0\leq j\leq L\}$, which has at most $n+1\leq n^2$ points for large $n$, with $C_{\mathrm{sch}}=2$,  $D+1$, and $M_0=2M+5$.
Let $C\geq1$ and $a,\Gamma\geq0$ be constants in \Cref{cor:app-wpi-schedule}.
Set $\varepsilon=\e^{-6}n^{-(2M+4)}$ and $p=\bigl\lceil a+(2M+4)\Gamma\bigr\rceil+2$.
For all sufficiently large $n$, $\varepsilon/n^{-M_0}=\e^{-6}n\geq1$ and $\varepsilon\leq1/2$.
{We write $\mu_j=\mu_{\beta_j,A}$, and let $P_j$ be its random-scan heat-bath Glauber transition matrix.}
Combined with \Cref{lem:app-wpi-annealing-composition}, we have that
\begin{equation*}
 \DTV{\mu_0P_1^{\lceil n^p \rceil}\cdots P_L^{\lceil n^p \rceil}}{\mu_L}
 \leq\frac12\sum_{j=1}^L
      \left(\e^{- \lceil n^p \rceil / (C n^a \varepsilon^{-\Gamma})}(\e^3-1)+\varepsilon \e^6\right)^{1/2}.
\end{equation*}
Since $C n^a\varepsilon^{-\Gamma}=C\e^{6\Gamma}n^{a+(2M+4)\Gamma}\geq1$ and $p\geq a+(2M+4)\Gamma+2$, the exponent ratio in the first term is at least $n^2/(C\e^{6\Gamma})$.
Thus, for all sufficiently large $n$, both terms inside the parentheses are at most $n^{-2(M+2)}$, and the total variation distance is at most $L n^{-(M+2)}\leq n^{-(M+1)}\leq n^{-M}$.

By a union bound, the common WPI event and $\{A:\|A\|_{\mathrm{op}}\leq3\}$ hold simultaneously with probability at least $1-n^{-(D+1)}-C\e^{-cn}\geq1-n^{-D}$ for all sufficiently large $n$.
Each of the $L\ceil{n^p}$ updates costs $O(n)$ exact-real operations, so including the input matrix and initial spins, the running time is $O(n^2+nL\ceil{n^p})=n^{O_{\beta,M,D}(1)}$.

\end{proof}

\begin{proof}[Proof of \Cref{cor:intro-counting}]
If $\beta=0$, we output $\widehat Z=2^n$.
We assume that $\beta>0$ and estimate the ratios of consecutive partition functions along the schedule in \Cref{alg:app-wpi-annealing}.
We recall that $L=\ceil{\beta n}\leq n$, $\beta_j=j\beta/L$, and $H_A(x)=\inner{x}{Ax}/2$.
We write $Z_j=\sum_{x\in\{\pm1\}^n}\e^{\beta_jH_A(x)}$ and set $g_j(x)=\e^{(\beta_j-\beta_{j-1})H_A(x)}$ for $1\leq j\leq L$.
Then $Z_0=2^n$, $Z_L=Z_A(0)$, and
\begin{equation*}
 r_j\defeq\frac{Z_j}{Z_{j-1}}
 =\E[X\sim\mu_{\beta_{j-1},A}]{g_j(X)}.
\end{equation*}
Thus we can estimate $Z_A(0)=2^n\prod_{j=1}^Lr_j$ by estimating these expectations.

We first choose one sampler that works at every temperature in the schedule.
We use the proof of \Cref{thm:app-wpi-annealing} with $M+3$ in place of $M$ and the common event from \Cref{cor:app-wpi-schedule}.
We obtain a fixed update exponent $p$ and an event of $\mathbb P_{0,n}$-probability at least $1-n^{-D}$ on which $\|A\|_{\mathrm{op}}\leq3$ and $\DTV{\nu_j}{\mu_{\beta_j,A}}\leq n^{-(M+3)}$ for every $0\leq j\leq L$.
Here $\nu_j$ is the output law of \Cref{alg:app-wpi-annealing} stopped after stage $j$; by \Cref{lem:app-wpi-annealing-composition}, its error is bounded by the sum of the errors of the first $j$ stages.
We fix $A$ in this event for the rest of the proof.
Since $|H_A(x)|\leq3n/2$ and $\beta_j-\beta_{j-1}\leq1/n$, we have $\e^{-3/2}\leq g_j(x)\leq\e^{3/2}$ and hence $r_j\geq\e^{-3/2}$.

We set $\varepsilon=n^{-M}/(4L)$ and $s=\ceil{2\e^6\varepsilon^{-2}\log(2L/\eta)}$.
For each $j$, we use independent runs of the sampler stopped after stage $j-1$ to draw $X_{j,1},\ldots,X_{j,s}$ with law $\nu_{j-1}$, and set $\widehat r_j=s^{-1}\sum_{k=1}^s g_j(X_{j,k})$.
We output $\widehat Z=2^n\prod_{j=1}^L\widehat r_j$.
By the total-variation bound and the bounds on $g_j$ and $r_j$, for all sufficiently large $n$ we have
\begin{equation*}
 \left|\E{\widehat r_j\mid A}-r_j\right|
 \leq\e^{3/2}n^{-(M+3)}
 \leq\frac{\varepsilon r_j}{2}.
\end{equation*}
Applying Hoeffding's inequality to the independent variables $g_j(X_{j,k})$, we obtain
\begin{equation*}
 \Pr{\left|\widehat r_j/r_j-1\right|>\varepsilon\mid A}
 \leq2\exp\left(-\frac{s\varepsilon^2}{2\e^6}\right)
 \leq\frac{\eta}{L}.
\end{equation*}
By a union bound, with probability at least $1-\eta$, all $L$ relative errors are at most $\varepsilon$.
Since $\varepsilon\leq1/4$, on this event we have
\begin{equation*}
 \left|\log\frac{\widehat Z}{Z_A(0)}\right|
 \leq\sum_{j=1}^L\left|\log\frac{\widehat r_j}{r_j}\right|
 \leq2L\varepsilon=\frac{n^{-M}}{2},
\end{equation*}
so $|\widehat Z/Z_A(0)-1|\leq\exp(n^{-M}/2)-1\leq n^{-M}$.

Finally, $L\leq n$, $s=O(n^{2M+2}\log(2n/\eta))$, and every sample takes $n^{O_{\beta,M,D}(1)}$ time.
Combined, the total running time is $n^{O_{\beta,M,D}(1)}\log(1/\eta)$.
\end{proof}

\bibliographystyle{alpha}
\bibliography{references}

\appendix
\crefalias{section}{appendix}
\crefalias{subsection}{subappendix}
\crefalias{subsubsection}{subsubappendix}

\section{Contiguity of the planted model}
\label{sec:planted-contiguity}

\begin{proof}[Proof of \Cref{lem:planting-likelihood}]
We compute $L_n(A)$ by averaging the conditional Gaussian density of $A$ over the uniform spin in the planted model:
\begin{align*}
  L_n(A)
  =&~2^{-n}\sum_{x\in\{\pm1\}^n}
  \exp\left(\beta\sum_{i<j}A_{ij}x_ix_j-\frac{\beta^2(n-1)}4\right)\\
  =&~2^{-n}Z_A(0)\exp(-\beta^2(n-1)/4).
\end{align*}
Thus $L_n(A)=\exp(\log Z_A(0)-n\log2-\beta^2(n-1)/4)$.
This is the Rademacher specialization of \cite[Eq.~(3)]{ElAlaouiKrzakalaJordan20}, with $\lambda=\beta^2${, $\sigma=+\infty$,} and $Y_{ij}=\sqrt n A_{ij}$.
The corresponding asymptotic mutual contiguity for $\beta<1$ follows from \cite[Theorem~2 and Corollary~3]{ElAlaouiKrzakalaJordan20}.
Since $L_n$ is the density of $\mathbb P_{n,\beta}^{\mathrm{pl}}$ relative to $\mathbb P_{0,n}$, we have $\E[A\sim\mathbb P_{0,n}]{L_n(A)}=1$.

To bound the second moment, we take independent uniform spins $x,x'$ and set $\xi_i=x_ix_i'$.
Integrating over $A$ and then applying the Hubbard--Stratonovich identity with $G\sim N(0,1)$, we obtain
\begin{align*}
  \E[A\sim\mathbb P_{0,n}]{L_n(A)^2}
  =&~\E[\xi\sim\operatorname{Unif}(\{\pm1\}^n)]{\exp\left(\frac{\beta^2}{2n}\left((\sum_i\xi_i)^2-n\right)\right)}\\
  =&~\e^{-\beta^2/2}\E[G\sim N(0,1)]{\cosh(\beta G/\sqrt n)^n}\\
  \leq&~\e^{-\beta^2/2}(1-\beta^2)^{-1/2}
  \leq(1-\beta^2)^{-1/2}.
\end{align*}
Here we used $\log\cosh z\leq z^2/2$ and $\beta<1$ in the last line.

To control the inverse moment, let $g=(\sqrt n A_{ij})_{i<j}$ and $F(g)=\log L_n(A)$.
This finite log-sum-exp of affine functions of the standard Gaussian vector $g$ is convex and Lipschitz with constant $\beta\sqrt{(n-1)/2}$, so all its exponential moments are finite.
Set $\Theta(\lambda)=\lambda^{-1}\log\E[A\sim\mathbb P_{0,n}]{L_n(A)^\lambda}$ for $\lambda\ne0$, with $\Theta(0)=\E[A\sim\mathbb P_{0,n}]{\log L_n(A)}$.
By \cite[Theorem~1]{Chen23}, $\Theta$ is convex on $\R$.
Since $\Theta(1)=0$, its successive secant slopes satisfy $\Theta(0)-\Theta(-1)\leq\Theta(1)-\Theta(0)\leq\Theta(2)-\Theta(1)$, and therefore
\begin{equation*}
  \log\E[A\sim\mathbb P_{0,n}]{L_n(A)^{-1}}
  =-\Theta(-1)\leq2(\Theta(2)-\Theta(1))
  =\log\E[A\sim\mathbb P_{0,n}]{L_n(A)^2}.
\end{equation*}
The second-moment bound now also bounds the inverse moment.

For every event $E$ determined by $(X,A,d,Y)$, we apply Cauchy--Schwarz to \cref{eq:planting-density-ratio} and use the second-moment bound to obtain
\begin{align*}
  \mathbb P_{n,\beta}^{\mathrm{pl,aug}}\tp{E}
  =&~\E[(X,A,d,Y)\sim\mathbb P_{0,n,\beta}^{\mathrm{aug}}]{L_n\tp{A}\1_E}\\
  \leq&~\E[A\sim\mathbb P_{0,n}]{L_n\tp{A}^2}^{1/2}
    \mathbb P_{0,n,\beta}^{\mathrm{aug}}\tp{E}^{1/2}\\
  \leq&~(1-\beta^2)^{-1/4}\mathbb P_{0,n,\beta}^{\mathrm{aug}}\tp{E}^{1/2}.
\end{align*}
For the reverse direction, we first use the density ratio and the inverse-moment bound to compute
\begin{equation*}
  \E[A\sim\mathbb P_{n,\beta}^{\mathrm{pl}}]{L_n\tp{A}^{-2}}
  =\E[A\sim\mathbb P_{0,n}]{L_n\tp{A}^{-1}}
  \leq(1-\beta^2)^{-1/2}.
\end{equation*}
We then apply Cauchy--Schwarz to the reciprocal density and obtain
\begin{align*}
  \mathbb P_{0,n,\beta}^{\mathrm{aug}}\tp{E}
  =&~\E[(X,A,d,Y)\sim\mathbb P_{n,\beta}^{\mathrm{pl,aug}}]{L_n\tp{A}^{-1}\1_E}\\
  \leq&~\E[A\sim\mathbb P_{n,\beta}^{\mathrm{pl}}]{L_n\tp{A}^{-2}}^{1/2}
    \mathbb P_{n,\beta}^{\mathrm{pl,aug}}\tp{E}^{1/2}\\
  \leq&~(1-\beta^2)^{-1/4}\mathbb P_{n,\beta}^{\mathrm{pl,aug}}\tp{E}^{1/2}.
\end{align*}
\end{proof}

We use the forward probability comparison in \Cref{lem:planting-likelihood} to transfer the matrix-norm tail bound from the original augmented law to the planted law in \Cref{lem:inverse-poly-planted-norm-tail}, and the reverse comparison to transfer the planted fixed-time cost bound to the original augmented law in \Cref{lem:scalar-reference-path-error}.

\begingroup

\section{Omitted proofs of TAP residual bounds}
\label{sec:omitted-tap-residual-proofs}

This appendix restates the main-section results whose proofs are deferred here and supplies the auxiliary estimates used in those proofs.
We use the fixed $0\leq\beta_*<1$ from \Cref{sec:tap-residual-bounds}; for an $M$-spin planted model, $M\ell\leq\beta_*^2$ means its inverse temperature $\sqrt{M\ell}$ is at most $\beta_*$.
The first subsection proves \Cref{lem:tap-gaussian-cancellation}, and the second proves \Cref{lem:tap-parent-child}.

\subsection{Gaussian estimates for cavity interpolation}
\label{subsec:tap-row-proofs}

We first prove the supporting Gaussian estimates used in the proof of \Cref{lem:tap-gaussian-cancellation}.
For a smooth vector-valued function $F:\mathbb R^M\to\mathbb R^k$, we write
$D_GF(G)$ for its Jacobian with respect to the Gaussian coordinates, whose
$(j,k)$ entry is $\partial_{G_k}F_j(G)$.
When $F:\mathbb R^M\to\mathbb R^M$, we define its Gaussian divergence by
\[
 \delta_GF\defeq\inner G{F(G)}-\operatorname{Tr}(D_GF(G)).
\]
We use part~\textup{(i)} below to bound fluctuations of the cavity means.
Part~\textup{(ii)} controls their contraction with the same Gaussian row after subtracting the derivative trace, which accounts for the dependence of the means on that row.

\begin{lemma}[{\cite[Theorem~4]{LedouxOleszkiewicz07},
\cite[Theorem~3.20]{BoucheronLugosiMassart13},
\cite[Theorem~2.4]{Nualart19}}]
\label{lem:tap-gaussian-poincare-p}
Let $M,k\geq1$, $p\geq2$, and $G\sim N(0,I_M)$.
All $L^p(G)$ norms are taken under this Gaussian law, using the Euclidean
norm for vectors and the Frobenius norm for matrices.
There is a constant $C_p$, depending only on $p$, for which the following
statements hold.
\begin{enumerate}[label=\textup{(\roman*)}]
\item For every smooth $F:\mathbb R^M\to\mathbb R^k$ such that $F(G)$ and
$D_GF(G)$ belong to $L^p(G)$,
\begin{equation*}
 \|F(G)-\E[G\sim N(0,I_M)]{F(G)}\|_{L^p(G)}
 \leq C_p\norm{D_GF(G)}_{L^p(G)}.
\end{equation*}
For $p=2$, we may take $C_2=1$.

\item For every smooth $F:\mathbb R^M\to\mathbb R^M$ such that $F(G)$ and
$D_GF(G)$ belong to $L^p(G)$,
\begin{equation*}
 \|\delta_GF\|_{L^p(G)}
 \leq C_p\left(
  \norm{F(G)}_{L^p(G)}
  +\norm{D_GF(G)}_{L^p(G)}\right).
\end{equation*}
\end{enumerate}
\end{lemma}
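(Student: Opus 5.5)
Both parts are classical Gaussian $L^p$ inequalities, and we would derive them from the Ornstein--Uhlenbeck semigroup / Maurey--Pisier representation rather than reprove anything deep.

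\emph{Part (i).} We use the Maurey--Pisier interpolation. Let $G'$ be an independent copy of $G$ and set $G_\theta=G\sin\theta+G'\cos\theta$ and $G_\theta'=G\cos\theta-G'\sin\theta$, so that $(G_\theta,G_\theta')$ is again a pair of independent standard Gaussians for every $\theta$. Then
\[
F(G)-F(G')=\int_0^{\pi/2}D_GF(G_\theta)\,G'_\theta\,\mathrm d\theta .
\]
By Jensen, $\|F(G)-\E{F(G)}\|_{L^p}\le\|F(G)-F(G')\|_{L^p}$. Applying Minkowski in $\theta$ gives at most
\[
\int_0^{\pi/2}\|D_GF(G_\theta)G'_\theta\|_{L^p}\,\mathrm d\theta .
\]
Conditional on $G_\theta$, the vector $D_GF(G_\theta)G'_\theta$ is centered Gaussian in $\R^k$ with covariance $DD^\top$. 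Its $p$th moment is therefore at most $C_p\|D\|_{\mathrm F}^p$, which gives (i) with $C_p=\frac{\pi}{2}\sqrt{\cdot}$ (Gaussian moment constant). For $p=2$, $C_2=1$ follows instead from the standard Gaussian Poincar\'e inequality applied coordinatewise and summed. Smoothness plus the integrability assumptions justify the fundamental theorem of calculus and Fubini; we approximate by truncation if needed.

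\emph{Part (ii).} Here $\delta_GF$ is the Skorokhod divergence (adjoint of $D_G$) on finite-dimensional Gaussian space, and the claimed bound is Meyer's inequality for $\delta$ at the first chaos level, as in \cite[Theorem~2.4]{Nualart19}. To keep the argument self-contained, we would split $F=\E{F}+(F-\E{F})$. For the constant vector $c=\E{F}$ we have $\delta c=\langle G,c\rangle$, whose $L^p$ norm is $C_p\|c\|_2\le C_p\|F\|_{L^p}$. For the centered part, we use the duality $\E{\delta_G F\cdot\phi}=\E{\langle F,\nabla\phi\rangle}$ together with the $L^q$-boundedness of the Riesz transform $\nabla(-\mathcal L)^{-1/2}$ (Meyer/Pisier). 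This yields
\[
\|\delta F\|_{L^p}\le C_p\bigl(\|\E{F}\|+\|D_GF\|_{L^p}\bigr).
\]
Part (i) then absorbs any residual centered term.

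The only genuinely nontrivial input is the Riesz transform bound in (ii), and this is the main obstacle. We would not reprove it, but cite it: dimension-free Meyer inequalities give exactly a constant depending only on $p$, which is what the statement requires (in particular no dependence on $M$ or $k$). We must also check that all constants are dimension-free in (i): the rotation argument uses only the Gaussian moment of a $k$-dimensional Gaussian vector measured in Euclidean norm. For that vector, $\E{\|Z\|^p}\le C_p(\operatorname{Tr}\Sigma)^{p/2}$ by Gaussian hypercontractivity of the norm, and $\operatorname{Tr}\Sigma=\|D\|_{\mathrm F}^2$, so this step is dimension-free as well.
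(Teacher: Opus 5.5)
Your proposal matches the paper's proof. For part~(i), the paper applies Pisier's inequality (your Maurey--Pisier rotation, giving the factor $\pi/2$) and then bounds the conditional Gaussian moment dimension-free by Minkowski in $L^{p/2}$ over the coordinates, which is your trace bound; it also gets $C_2=1$ from the coordinatewise Poincar\'e inequality, and for part~(ii) it simply cites Meyer's inequality for the Gaussian divergence, as you do. Your duality/Riesz-transform sketch for~(ii) is only heuristic, since pairing the centered part with $\nabla\phi$ leaves the derivative on the test function rather than on $F$, and the actual proof needs the commutation between the gradient and the Ornstein--Uhlenbeck semigroup. Because that step is a citation in both versions, nothing is lost.
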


\begin{proof}
For part~\textup{(i)}, we apply Pisier's inequality with the convex
function $z\mapsto\|z\|_2^p$ and an independent $G'\sim N(0,I_M)$:
\begin{equation*}
 \|F(G)-\E[G\sim N(0,I_M)]{F(G)}\|_{L^p(G)}
 \leq\frac\pi2\left\|D_GF(G)G'\right\|_{L^p(G,G')}
 \leq C_p\left\|D_GF(G)\right\|_{L^p(G)}.
\end{equation*}
For the last inequality, we condition on $G$ and apply Minkowski in
$L^{p/2}(G')$ to the sum of squared coordinates; each coordinate is
Gaussian with variance equal to the squared norm of the corresponding
Jacobian row.  At $p=2$, summing the scalar Gaussian Poincar\'e
inequalities over the coordinates of $F$ yields $C_2=1$.
Part~\textup{(ii)} is Meyer's inequality for the Gaussian divergence.
\end{proof}

For the cavity interpolation of \Cref{def:tap-cavity-pinning}, we fix $i$, set $M=n-1$, and write $a=a_i$, $G=G_i$, and $U=X_iX_{-i}$.
Given $(\mathcal F^{\setminus i},U)$, we have $a=\ell U+\sqrt\ell\,G$ with $G\sim N(0,I_M)$.
The cavity distribution $\mu_0=\mu^{\setminus i}$ and covariance $\Sigma_0=\Sigma^{\setminus i}$ still depend only on the surviving observations; they are not recomputed after revealing $U$.
The next two lemmas use this conditioning and average $G$ alone in $L^p(G)$; to bound $e_i$, we subsequently average over $U$ given $\mathcal F^{\setminus i}$.

\begin{lemma}
\label{lem:tap-weak-tilt}
For the cavity interpolation of \Cref{def:tap-cavity-pinning} with $M=n-1\geq1$, given $(\mathcal F^{\setminus i},U)$, the following bound holds for every $p\ge2$:
\begin{equation*}
 \sup_{|\gamma|\leq1}\norm{\Sigma_\gamma}_{L^p(G)}
 \leq\mathrm e^{40\ell M/p}
 \left\langle\mathcal D_{1234}^{p/2}\right\rangle_{\mu_0^{\otimes4}}^{1/p}.
\end{equation*}
\end{lemma}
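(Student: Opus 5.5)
Fix $|\gamma|\le1$ and condition on $(\mathcal F^{\setminus i},U)$, so that $\mu_0$ is fixed and only $G$ is random. The plan has three steps: express $\Sigma_\gamma$ through four replicas of $\mu_\gamma$, change measure from $\mu_\gamma^{\otimes4}$ to $\mu_0^{\otimes4}$, and bound the Gaussian average of the resulting tilt factor.

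\emph{Step 1: replica representation.} As in \cref{eq:tap-covariance-rectangle},
\[
\|\Sigma_\gamma\|_{\mathrm F}^2=\operatorname{Tr}\Sigma_\gamma^2=\langle\mathcal D_{1234}\rangle_{\mu_\gamma^{\otimes4}} .
\]
Jensen's inequality then gives
\[
\|\Sigma_\gamma\|_{\mathrm F}^p\le\langle\mathcal D_{1234}^{p/2}\rangle_{\mu_\gamma^{\otimes4}} .
\]

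\emph{Step 2: change of measure.} Since $\mu_\gamma\propto\mu_0\exp(\gamma\inner{a}{\tau})$, the product law satisfies
\[
\langle F\rangle_{\mu_\gamma^{\otimes4}}=Z_\gamma^{-4}\Bigl\langle F\prod_{k=1}^4 e^{\gamma\inner a{\tau^k}}\Bigr\rangle_{\mu_0^{\otimes4}} .
\]
Taking $\E[G]{\cdot}$ and using Fubini (the weight $\mathcal D_{1234}^{p/2}$ does not depend on $G$), we get
\[
\E[G]{\|\Sigma_\gamma\|_{\mathrm F}^p}\le\Bigl\langle\mathcal D_{1234}^{p/2}\,\E[G]{Z_\gamma^{-4}\textstyle\prod_k e^{\gamma\inner a{\tau^k}}}\Bigr\rangle_{\mu_0^{\otimes4}} .
\]
So it suffices to show that, uniformly over fixed replicas $\tau^1,\dots,\tau^4$,
\[
\E[G]{Z_\gamma^{-4}\textstyle\prod_k e^{\gamma\inner a{\tau^k}}}\le e^{40\ell M}.
\]
Taking $p$-th roots then gives the factor $e^{40\ell M/p}$.

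\emph{Step 3: the Gaussian factor.} Write $a=\ell U+\sqrt\ell G$. In the numerator, $\prod_k e^{\gamma\inner a{\tau^k}}=e^{\gamma\ell\inner U{S}+\gamma\sqrt\ell\inner G{S}}$ with $S=\sum_k\tau^k$. For the denominator, Jensen under $\mu_0$ gives
\[
Z_\gamma\ge\exp\bigl(\gamma\inner a{m_0}\bigr), \qquad\text{so}\qquad Z_\gamma^{-4}\le e^{-4\gamma\inner a{m_0}} .
\]
Here $m_0$ is fixed given the conditioning, and this is the key point: it makes the denominator an explicit exponential of a linear form in $G$. The whole factor is then bounded by
\[
\exp\bigl(\gamma\ell\inner U{S-4m_0}\bigr)\,\E[G]{\exp\bigl(\gamma\sqrt\ell\inner G{S-4m_0}\bigr)}=\exp\bigl(\gamma\ell\inner U{v}+\tfrac{\gamma^2\ell}{2}\|v\|_2^2\bigr),\qquad v=S-4m_0 .
\]
Since every coordinate of $v$ is at most $8$ in absolute value, $\|v\|_2^2\le64M$ and $|\inner Uv|\le8M$. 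The exponent is therefore at most $8\ell M+32\ell M=40\ell M$, which yields exactly the stated constant.

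The main obstacle I anticipate is whether Jensen's bound on $Z_\gamma$ is sharp enough, since it ignores fluctuations. Because $\ell M\le1$, the resulting loss is only a constant factor, so the argument should go through. The remaining item to verify is integrability (needed for Fubini), which holds because all sums are finite and the Gaussian exponential moments are finite.
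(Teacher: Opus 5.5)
Your proof is correct and follows the paper's argument essentially step for step: you use the replica identity $\operatorname{Tr}\Sigma_\gamma^2=\langle\mathcal D_{1234}\rangle_{\mu_\gamma^{\otimes4}}$ with Jensen, change measure to $\mu_0^{\otimes4}$, bound $Z_\gamma\ge e^{\gamma a\cdot m_0}$ by Jensen (using that $m_0$ is fixed under the conditioning), and evaluate the Gaussian MGF with $|U\cdot v|\le 8M$ and $\|v\|_2^2\le 64M$ to get $e^{40\ell M}$. Your worry about the sharpness of the Jensen bound is moot, since the stated factor $e^{40\ell M/p}$ already absorbs that loss without any need for $\ell M\le1$.
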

\begin{proof}
For four replicas, the Radon--Nikodym derivative is
\[
 \frac{\mathrm d\mu_\gamma^{\otimes4}}
      {\mathrm d\mu_0^{\otimes4}}(\sigma^1,\ldots,\sigma^4)
 =\prod_{r=1}^4\frac{\mathrm e^{\gamma a\cdot\sigma^r}}{Z_\gamma}
 =Z_\gamma^{-4}\mathrm e^{\gamma a\cdot(\sigma^1+\cdots+\sigma^4)}.
\]
We set $S=\sigma^1+\cdots+\sigma^4$.  Jensen's inequality gives $Z_\gamma=\langle \mathrm e^{\gamma a\cdot\sigma}\rangle_0
 \ge \mathrm e^{\gamma a\cdot m_0}$, so
\[
 Z_\gamma^{-4}\mathrm e^{\gamma a\cdot S}
 \le \mathrm e^{\gamma a\cdot(S-4m_0)}.
\]
We fix the replicas and set $W=S-4m_0$.  Since every coordinate of $W$ lies
in $[-8,8]$, we have
\[
 |U\cdot W|\le8M,\qquad \norm{W}_2^2\le64M.
\]
We use $a=\ell U+\sqrt\ell G$ and evaluate the Gaussian MGF:
\begin{align*}
 \E[G\sim N(0,I_M)]{\mathrm e^{\gamma a\cdot W}}
 =&~\exp\left(\gamma\ell U\cdot W
              +\frac{\gamma^2\ell}{2}\norm{W}_2^2\right)
 \le \mathrm e^{8\ell M+32\ell M}=\mathrm e^{40\ell M}.
\end{align*}
For any nonnegative function $F$ of the four replicas, with no explicit dependence on $G$, we multiply by $F$ and interchange the finite replica sum and the Gaussian expectation:
\begin{equation*}
 \E[G\sim N(0,I_M)]{\langle F\rangle_{\mu_\gamma^{\otimes4}}}
 =\left\langle F\E[G\sim N(0,I_M)]{
       Z_\gamma^{-4}\mathrm e^{\gamma a\cdot S}}\right\rangle_{\mu_0^{\otimes4}}
 \leq\mathrm e^{40\ell M}\langle F\rangle_{\mu_0^{\otimes4}}.
\end{equation*}
To bound the tilted covariance, we apply
\cref{eq:tap-covariance-rectangle} under $\mu_\gamma$ and obtain
\[
 \operatorname{Tr}\Sigma_\gamma^2
 =\langle\mathcal D_{1234}\rangle_{\mu_\gamma^{\otimes4}}.
\]
For each fixed Gaussian vector $G$ and $p\ge2$, we apply Jensen's inequality
under $\mu_\gamma^{\otimes4}$ and obtain
\[
 \norm{\Sigma_\gamma}_{\mathrm F}^{p}
 =(\operatorname{Tr}\Sigma_\gamma^2)^{p/2}
 \le\langle\mathcal D_{1234}^{p/2}\rangle_{\mu_\gamma^{\otimes4}}.
\]
We apply the nonnegative comparison just proved with $F=\mathcal D_{1234}^{p/2}$ and obtain, for every $|\gamma|\leq1$,
\begin{equation*}
 \E[G\sim N(0,I_M)]{\norm{\Sigma_\gamma}_{\mathrm F}^{p}}
 \leq\mathrm e^{40\ell M}\left\langle\mathcal D_{1234}^{p/2}\right\rangle_{\mu_0^{\otimes4}}.
\end{equation*}
Taking $p$th roots and then the supremum over $|\gamma|\leq1$ proves the lemma.
\end{proof}

\begin{lemma}
\label{lem:tap-tilted-mean-displacement}
For the cavity interpolation of \Cref{def:tap-cavity-pinning}, assume $M=n-1\geq1$ and $0\leq\ell M\leq1$.
Given $(\mathcal F^{\setminus i},U)$, for every fixed $p\geq2$ and $|\gamma|\leq1$, we have
\begin{equation*}
 \norm{m_\gamma-m_0}_{L^p(G)}\leq C_p|\gamma|\sqrt\ell\sup_{|r|\leq1}\norm{\Sigma_r}_{L^p(G)}.
\end{equation*}
\end{lemma}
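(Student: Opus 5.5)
We will write the displacement as an integral along the cavity interpolation. Since $\mu_r$ is the exponential tilt of $\mu_0$ by the field $r a$, differentiating in $r$ gives $\partial_r m_r=\Sigma_r a$. Hence
\[
 m_\gamma-m_0=\int_0^\gamma\Sigma_r a\,\mathrm dr
 =\ell\int_0^\gamma\Sigma_rU\,\mathrm dr+\sqrt\ell\int_0^\gamma\Sigma_rG\,\mathrm dr.
\]
By Minkowski's integral inequality in $L^p(G)$, it then suffices to bound $\|\Sigma_rU\|_{L^p(G)}$ and $\|\Sigma_rG\|_{L^p(G)}$ uniformly in $|r|\le1$.

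\emph{Drift term.} Since $\|U\|_2=\sqrt M$, we have $\|\Sigma_rU\|_2\le\sqrt M\|\Sigma_r\|_{\mathrm F}$. Using $\ell M\le1$, this gives $\ell\sqrt M\le\sqrt\ell$. The drift term therefore contributes at most $|\gamma|\sqrt\ell\sup_r\|\Sigma_r\|_{L^p(G)}$.

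\emph{Gaussian term.} This is the main obstacle, because $\Sigma_r$ depends on $G$. A naive bound $\|\Sigma_rG\|_2\le\|\Sigma_r\|_{\mathrm{op}}\|G\|_2$ would lose a factor $\sqrt M$. Instead, we use the Gaussian divergence from \Cref{lem:tap-gaussian-poincare-p}. The map $G\mapsto\Sigma_rG$ is not itself a divergence, so we handle it coordinatewise: the vector $F_j(G)=(\Sigma_r)_{j\cdot}$ (row $j$) satisfies $(\Sigma_rG)_j=\delta_GF_j+\operatorname{Tr}(D_GF_j)$. Summing vector-valued versions requires care. A cleaner alternative is to write $\Sigma_rG=\mathbb E[\Sigma_r]G+(\Sigma_r-\mathbb E\Sigma_r)G$, but this still meets the same dependence.

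We would therefore try Gaussian integration by parts directly on the derivative. The derivative is $D_G(\Sigma_r)=r\sqrt\ell\,\kappa_3^{(r)}$, the third cumulant tensor of $\mu_r$. Its contribution, $\operatorname{Tr}$-contracted, is $r\sqrt\ell\sum_k\kappa_3(\cdot,k,k)$, and the same Meyer estimate (\Cref{lem:tap-gaussian-poincare-p}-(ii)) applies componentwise. For the cumulant terms, we use $\partial_{y_k}\Sigma_{jk}=\kappa_3(j,k,k)$ together with the identity $\sigma_k^2=1$, which gives $\kappa_3(j,k,k)=-2m_k\Sigma_{jk}$. The trace term is then $-2\Sigma_r m_r$ with $\|\cdot\|_2\le2\sqrt M\|\Sigma_r\|_{\mathrm F}$, multiplied by $\sqrt\ell$; this is again $O(\|\Sigma_r\|_{\mathrm F})$ since $\ell M\le1$.

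For the divergence part we need $\|D_G(\Sigma_r)\|$ in a Frobenius-type norm, namely $\sqrt\ell\,\|\kappa_3\|_{\mathrm F}$. We would bound $\|\kappa_3\|_{\mathrm F}$ by $C\|\Sigma\|_{\mathrm F}\sqrt M$ using replica representations analogous to \cref{eq:tap-covariance-rectangle}, and then use $\sqrt{\ell M}\le1$. Controlling this third-cumulant norm is the step we expect to be hardest. If it fails, the fallback is a crude bound followed by absorbing losses via $\ell M\le1$.

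Combining the drift and Gaussian pieces yields $\|\Sigma_rG\|_{L^p(G)}\le C_p\sup_{|r|\le1}\|\Sigma_r\|_{L^p(G)}$. Integrating over $r\in[0,\gamma]$ then gives the claimed bound
\[
 \norm{m_\gamma-m_0}_{L^p(G)}\leq C_p|\gamma|\sqrt\ell\sup_{|r|\leq1}\norm{\Sigma_r}_{L^p(G)}.
\]
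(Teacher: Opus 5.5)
Your route (integrate $\partial_r m_r=\Sigma_r a$ pathwise and control $\|\Sigma_rG\|_{L^p(G)}$ through the divergence identity) can be completed. As written, however, two steps do not yet deliver the stated bound.

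\emph{Summing the Meyer estimate over rows.} Applying \Cref{lem:tap-gaussian-poincare-p}-\textup{(ii)} row by row and then summing (Minkowski in $L^{p/2}$) only gives $\|(\delta_GF_j)_j\|_{L^p(G)}\le C_p\bigl(\sum_j(\|F_j\|_{L^p(G)}+\|D_GF_j\|_{L^p(G)})^2\bigr)^{1/2}$. In general, $\sum_j\|(\Sigma_r)_{j\cdot}\|_{L^p(G)}^2$ can exceed $\|\Sigma_r\|_{L^p(G)}^2$ by up to a factor $M^{1-2/p}$. You therefore need Meyer's inequality for the $\mathbb R^M$-valued divergence of the matrix field $G\mapsto\Sigma_r(G)$. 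This Hilbert-space-valued Meyer inequality holds with dimension-free constants, but it is not the scalar statement you cite.

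\emph{The third-cumulant bound.} You leave $\|\kappa_3^{(r)}\|_{\mathrm F}\le C\sqrt M\|\Sigma_r\|_{\mathrm F}$ open, and your fallback does not work. The entrywise bound $|\kappa_3|\le 8$ only gives $\sqrt\ell\|\kappa_3\|_{\mathrm F}\lesssim\sqrt\ell M^{3/2}\le M$, which cannot be absorbed into $\|\Sigma_r\|_{\mathrm F}$. The needed estimate is nevertheless true by a one-line replica argument. Take $v=\sigma^1-m_r$ and $w=\sigma^2-m_r$ for two independent samples from $\mu_r$. Then $\|\kappa_3^{(r)}\|_{\mathrm F}^2=\langle(v\cdot w)^3\rangle_r\le 4M\langle(v\cdot w)^2\rangle_r=4M\|\Sigma_r\|_{\mathrm F}^2$, because $|v\cdot w|\le 4M$. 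Hence $\|D_G\Sigma_r\|_{\mathrm F}\le 2|r|\sqrt{\ell M}\|\Sigma_r\|_{\mathrm F}\le 2\|\Sigma_r\|_{\mathrm F}$.

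With these two repairs and your correct trace identity $\sum_k\partial_{G_k}(\Sigma_r)_{jk}=-2r\sqrt\ell\,(\Sigma_r m_r)_j$, you obtain $\|\Sigma_rG\|_{L^p(G)}\le C_p\|\Sigma_r\|_{L^p(G)}$, and the lemma follows.

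The paper's proof sidesteps both issues by splitting $m_\gamma-m_0=(\mathbb E_G m_\gamma-m_0)+(m_\gamma-\mathbb E_G m_\gamma)$.
\begin{itemize}
\item For the mean part, it differentiates $\mathbb E_G m_\gamma$ in $\gamma$ and integrates by parts in $G$. In expectation the divergence term vanishes, so only the diagonal contraction above survives. This gives $\frac{\mathrm d}{\mathrm d\gamma}\mathbb E_G m_\gamma=\ell\,\mathbb E_G[\Sigma_\gamma(U-2\gamma m_\gamma)]$, and hence the deterministic bound $3|\gamma|\ell\sqrt M\sup_r\|\Sigma_r\|_{L^p(G)}$.
\item For the fluctuation, it applies the $L^p$ Gaussian Poincar\'e inequality of \Cref{lem:tap-gaussian-poincare-p}-\textup{(i)} to $m_\gamma$ itself, whose $G$-Jacobian is $\gamma\sqrt\ell\,\Sigma_\gamma$.
\end{itemize}
The paper thus needs only first-order information about $m_\gamma$, never the full third-cumulant tensor or a vector-valued Meyer inequality. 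Your route pays for that second-order machinery. In exchange it controls the velocity $\partial_r m_r$ at each $r$, giving $\int_0^{|\gamma|}\|\Sigma_r\|_{L^p(G)}\,\mathrm dr$ in place of the supremum.
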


\begin{proof}
We first collect the derivatives with respect to $\gamma$.
With $(\mathcal F^{\setminus i},U,G)$ fixed, we have $\mu_\gamma(\sigma)=\mu_0(\sigma)\mathrm e^{\gamma a\cdot\sigma}/Z_\gamma$.
Differentiating this identity, we obtain
\[
 \frac{\mathrm d}{\mathrm d\gamma}\log Z_\gamma=a\cdot m_\gamma,\qquad
 \frac{\mathrm d}{\mathrm d\gamma} \mu_\gamma(\sigma)
 =\mu_\gamma(\sigma)a\cdot(\sigma-m_\gamma).
\]
Multiplying the second identity by $\sigma_j$ and summing gives
\[
 \frac{\mathrm d}{\mathrm d\gamma}(m_\gamma)_j
 =\left\langle\sigma_j\,a\cdot(\sigma-m_\gamma)\right\rangle_\gamma
 =\sum_k a_k
   \left\langle\sigma_j(\sigma_k-(m_\gamma)_k)\right\rangle_\gamma
 =\sum_k a_k(\Sigma_\gamma)_{jk}=(\Sigma_\gamma a)_j.
\]

We next collect the derivatives with respect to $G$.  Since
$\partial_{G_k}a_j=\sqrt\ell\,\1_{\{j=k\}}$, we have
\[
 \partial_{G_k}\log Z_\gamma
 =\gamma\sqrt\ell\,(m_\gamma)_k,\qquad
 \partial_{G_k}\mu_\gamma(\sigma)
 =\gamma\sqrt\ell\,\mu_\gamma(\sigma)
   (\sigma_k-(m_\gamma)_k).
\]
Consequently,
\[
 \partial_{G_k}(m_\gamma)_j
 =\gamma\sqrt\ell\,
   \left\langle\sigma_j(\sigma_k-(m_\gamma)_k)\right\rangle_\gamma
 =\gamma\sqrt\ell\,(\Sigma_\gamma)_{jk}.
\]
For a function $F=F(G,\sigma)$ differentiable in $G$, differentiating
both the Gibbs weight and $F$ gives
\begin{align*}
 \partial_{G_k}\langle F\rangle_\gamma
 =&~\sum_\sigma \partial_{G_k}\mu_\gamma(\sigma)F(G,\sigma)
   +\sum_\sigma\mu_\gamma(\sigma)\partial_{G_k}F(G,\sigma)\\
 =&~\gamma\sqrt\ell\,
   \left\langle F(\sigma_k-(m_\gamma)_k)\right\rangle_\gamma
   +\left\langle\partial_{G_k}F\right\rangle_\gamma.
\end{align*}
For
$F=(\sigma_j-(m_\gamma)_j)(\sigma_k-(m_\gamma)_k)$, we have
\[
 \partial_{G_k}F
 =-(\partial_{G_k}(m_\gamma)_j)(\sigma_k-(m_\gamma)_k)
  -(\partial_{G_k}(m_\gamma)_k)(\sigma_j-(m_\gamma)_j),
\]
and hence $\langle\partial_{G_k}F\rangle_\gamma=0$.  Using
$\sigma_k^2=1$, we obtain
\[
\begin{aligned}
 \partial_{G_k}(\Sigma_\gamma)_{jk}
 =&~\gamma\sqrt\ell
   \left\langle(\sigma_j-(m_\gamma)_j)
       (\sigma_k-(m_\gamma)_k)^2\right\rangle_\gamma\\
 =&~\gamma\sqrt\ell\bigl(1+(m_\gamma)_k^2\bigr)
       \left\langle\sigma_j-(m_\gamma)_j\right\rangle_\gamma
   -2\gamma\sqrt\ell\,(m_\gamma)_k
       \left\langle(\sigma_j-(m_\gamma)_j)\sigma_k\right\rangle_\gamma\\
 =&~-2\gamma\sqrt\ell\,(m_\gamma)_k(\Sigma_\gamma)_{jk}.
\end{aligned}
\]
Every entry of $\Sigma_\gamma$ is bounded, and the derivatives just
computed are bounded by constants depending on $M,\ell$.  The
$\gamma$-derivative is dominated by a constant times $1+\norm{G}_2$,
so differentiation under the Gaussian expectation and Gaussian
integration by parts are valid.

We now combine the two groups of derivatives by Gaussian integration by
parts.  Writing
$a=\ell U+\sqrt\ell G$ and integrating the Gaussian term by parts, we
obtain, component by component,
\begin{align*}
 \frac{\mathrm d}{\mathrm d\gamma}
 \E[G\sim N(0,I_M)]{(m_\gamma)_j}
 =&~\ell\E[G\sim N(0,I_M)]{(\Sigma_\gamma U)_j}
   +\sqrt\ell\sum_k
      \E[G\sim N(0,I_M)]{G_k(\Sigma_\gamma)_{jk}}\\
 =&~\ell\E[G\sim N(0,I_M)]{(\Sigma_\gamma U)_j}
   +\sqrt\ell\sum_k
      \E[G\sim N(0,I_M)]{
        \partial_{G_k}(\Sigma_\gamma)_{jk}}\\
 =&~\ell\E[G\sim N(0,I_M)]{
      (\Sigma_\gamma(U-2\gamma m_\gamma))_j}.
\end{align*}
Thus, in vector form,
\begin{equation}
 \frac{\mathrm d}{\mathrm d\gamma}\E[G\sim N(0,I_M)]{m_\gamma}
 =\ell\E[G\sim N(0,I_M)]{
      \Sigma_\gamma(U-2\gamma m_\gamma)}.
 \label{eq:tap-mean-derivative}
\end{equation}

Since $m_0$ does not depend on $G$, integrating
\cref{eq:tap-mean-derivative}, and applying the triangle and Jensen
inequalities, gives
\begin{align*}
 \norm{\E[G\sim N(0,I_M)]{m_\gamma}-m_0}_2
 \leq&~\ell\int_{\min(0,\gamma)}^{\max(0,\gamma)}
       \E[G\sim N(0,I_M)]{
          \norm{\Sigma_r(U-2rm_r)}_2}\mathrm d r\\
 \leq&~3\ell\sqrt M
       \int_{\min(0,\gamma)}^{\max(0,\gamma)}
       \norm{\Sigma_r}_{L^p(G)}\mathrm d r\\
 \leq&~3|\gamma|\ell\sqrt M
       \sup_{|r|\leq1}\norm{\Sigma_r}_{L^p(G)}.
\end{align*}
For the centered fluctuation, \Cref{lem:tap-gaussian-poincare-p}-\textup{(i)} yields
\begin{equation*}
 \norm{m_\gamma-\E[G\sim N(0,I_M)]{m_\gamma}}_{L^p(G)}
 \leq C_p|\gamma|\sqrt\ell\,
       \norm{\Sigma_\gamma}_{L^p(G)}.
\end{equation*}
Combining the two preceding inequalities with the triangle inequality,
and using $\ell M\leq1$, we obtain
\begin{align*}
 \norm{m_\gamma-m_0}_{L^p(G)}
 \leq&~\norm{m_\gamma-
       \E[G\sim N(0,I_M)]{m_\gamma}}_{L^p(G)}
       +\norm{\E[G\sim N(0,I_M)]{m_\gamma}-m_0}_2\\
 \leq&~C_p|\gamma|\sqrt\ell
       \sup_{|r|\leq1}\norm{\Sigma_r}_{L^p(G)}.
\end{align*}
This proves the lemma.
\end{proof}

\renewcommand{\restatedtheoremnumber}{\ref*{lem:tap-gaussian-cancellation}}
\begin{restatedlemma}[Restatement]
For the cavity interpolation of \Cref{def:tap-cavity-pinning}, assume $M=n-1\geq1$ and $0\leq\ell M\leq1$.
We condition on $(\mathcal F^{\setminus i},U)$ and let $x$ be any $[-1,1]$-valued random variable that may depend on $G$ and additional randomness.
Under this conditioning, $\E[\mathrm{row}]{\cdot}$ and $L^p(\mathrm{row})$ average over $G$ and the additional randomness used to construct $x$, while $\E[G\sim N(0,I_M)]{\cdot}$ and $L^p(G)$ average $G$ alone.
We define
\[
 \overline m_x=\tfrac{1+x}{2}m_{+1}+\tfrac{1-x}{2}m_{-1},
\]
\[
r_{\mathrm{row}}
 =\tfrac12(\log Z_{+1}-\log Z_{-1})-a\cdot \overline m_x
       +\ell(M-\norm{\overline m_x}_2^2)x.
\]
Then, for every $p\geq2$, there is a constant $C_p$, depending only on $p$, such that
\begin{equation*}
 \|r_{\mathrm{row}}\|_{L^p(\mathrm{row})}\leq C_p\ell\sup_{|\gamma|\leq1}\norm{\Sigma_\gamma}_{L^p(G)}.
\end{equation*}
\end{restatedlemma}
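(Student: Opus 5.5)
The plan is to write the cavity log-ratio as an integral over the interpolation parameter, expand every appearance of the Gaussian row $G$ through the divergence identity $\inner{G}{F}=\delta_GF+\operatorname{Tr}D_GF$, and check that the leading terms cancel exactly. What remains should be a sum of fluctuation terms. Each of these is controlled by \Cref{lem:tap-gaussian-poincare-p}-(ii) together with \Cref{lem:tap-tilted-mean-displacement}.

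\emph{Step 1 (integral representation).} From the derivative computed in the proof of \Cref{lem:tap-tilted-mean-displacement}, $\frac{\mathrm d}{\mathrm d\gamma}\log Z_\gamma=a\cdot m_\gamma$. Hence
\[
\tfrac12(\log Z_{+1}-\log Z_{-1})=\tfrac12\int_{-1}^1 a\cdot m_\gamma\,\mathrm d\gamma .
\]
Write $\overline m_x=c_+m_{+1}+c_-m_{-1}$ with $c_\pm=(1\pm x)/2\in[0,1]$ and $c_++c_-=1$. Then insert the $G$-independent vector $m_0$ everywhere: $m_\gamma=m_0+(m_\gamma-m_0)$ and $m_{\pm1}=m_0+(m_{\pm1}-m_0)$. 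The $m_0$ contributions give $\tfrac12\int_{-1}^1 a\cdot m_0\,\mathrm d\gamma-a\cdot m_0(c_++c_-)=0$, so the large term $\sqrt\ell\,G\cdot m_0$, which is of order $\sqrt{\ell M}$, drops out exactly. The remaining drift pieces $\ell U\cdot(m_\gamma-m_0)$ and $\ell U\cdot(m_{\pm1}-m_0)$ are bounded by $\ell\sqrt M\,\|m_\gamma-m_0\|_2$. By \Cref{lem:tap-tilted-mean-displacement} and $\ell M\le1$, this is at most $C_p\ell\sup_{|r|\le1}\|\Sigma_r\|_{L^p(G)}$. Since $c_\pm$ are bounded by $1$ even when $x$ depends on $G$ and extra randomness, these terms are harmless in $L^p(\mathrm{row})$.

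\emph{Step 2 (Onsager cancellation).} For the Gaussian pieces $\sqrt\ell\,G\cdot(m_\gamma-m_0)$, the proof of \Cref{lem:tap-tilted-mean-displacement} gives $D_G m_\gamma=\gamma\sqrt\ell\,\Sigma_\gamma$, so $\operatorname{Tr}D_G(m_\gamma-m_0)=\gamma\sqrt\ell(M-\|m_\gamma\|_2^2)$. The trace parts therefore contribute
\[
\tfrac{\ell}{2}\int_{-1}^1\gamma\bigl(M-\|m_\gamma\|_2^2\bigr)\mathrm d\gamma-\ell\bigl[c_+(M-\|m_{+1}\|_2^2)-c_-(M-\|m_{-1}\|_2^2)\bigr].
\]
Replace every $\|m_\gamma\|_2^2$ by $\|m_0\|_2^2$, at a cost of $2\sqrt M\|m_\gamma-m_0\|_2$ each. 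The first integral then vanishes by oddness, and the bracket becomes $(c_+-c_-)(M-\|m_0\|_2^2)=x(M-\|m_0\|_2^2)$. This cancels the explicit term $\ell(M-\|\overline m_x\|_2^2)x$, again up to $\ell\cdot2\sqrt M\|\overline m_x-m_0\|_2$.

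All substitution errors have the form $\ell\sqrt M\|m_\gamma-m_0\|_2$, so they are handled as in Step 1. I expect this bookkeeping to be the main obstacle. The point is that $x$ enters only through the bounded coefficients $c_\pm$ and is never differentiated in $G$. The divergence identity is applied to $m_{\pm1}-m_0$ alone, which is a smooth function of $G$ for fixed $(\mathcal F^{\setminus i},U)$, so $D_G$ never hits $x$.

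\emph{Step 3 (divergence remainders).} What remains is $\sqrt\ell\,\delta_G(m_\gamma-m_0)$, integrated over $\gamma$, together with $\sqrt\ell\,\delta_G(m_{\pm1}-m_0)$ multiplied by $c_\pm$. For these I apply \Cref{lem:tap-gaussian-poincare-p}-(ii) with $F=m_\gamma-m_0$:
\[
\|\delta_G(m_\gamma-m_0)\|_{L^p(G)}\le C_p\bigl(\|m_\gamma-m_0\|_{L^p(G)}+\sqrt\ell\,\|\Sigma_\gamma\|_{L^p(G)}\bigr)\le C_p\sqrt\ell\sup_{|r|\le1}\|\Sigma_r\|_{L^p(G)} ,
\]
where the last inequality uses \Cref{lem:tap-tilted-mean-displacement}. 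Multiplying by $\sqrt\ell$ gives the target $C_p\ell\sup_{|r|\le1}\|\Sigma_r\|_{L^p(G)}$. Minkowski's inequality in $\gamma$ (over an interval of length $2$) and over the finitely many terms, together with $|c_\pm|\le1$ to pass from $L^p(G)$ to $L^p(\mathrm{row})$, completes the bound.
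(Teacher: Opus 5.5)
Your proposal is correct and follows essentially the paper's argument: the integral representation of the log-ratio, exact cancellation of the $m_0$ terms, the divergence identity with Meyer's inequality and \Cref{lem:tap-tilted-mean-displacement}, oddness of $\int_{-1}^1\gamma\,\mathrm d\gamma$, and the bound $\bigl|\|m_\gamma\|_2^2-\|m_0\|_2^2\bigr|\le2\sqrt M\|m_\gamma-m_0\|_2$. The paper does the same bookkeeping, but packages it through $\Delta_\gamma=a\cdot(m_\gamma-m_0)-\gamma\ell\operatorname{Tr}\Sigma_0$ and the split $r_{\mathrm{row}}=A_i+xB_i$, whereas you expand term by term.
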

\begin{proof}
We fix $p\geq2$ and condition on $(\mathcal F^{\setminus i},U)$.
Under this conditioning, $m_0$ does not depend on $G$, and we have $a=\ell U+\sqrt\ell\,G$ and $\norm U_2=\sqrt M$.
Throughout the proof, $C_p$ denotes a constant depending only on $p$ that may change from line to line.
Using the definition of $\overline m_x$, we decompose the residual as
\[
 r_{\mathrm{row}}=A_i+xB_i,
\]
where
\[
 A_i \defeq\frac12(\log Z_{+1}-\log Z_{-1})
              -\frac12a\cdot(m_{+1}+m_{-1}),
\]
\[
 B_i \defeq\ell(M-\norm{\overline m_x}_2^2)
              -\frac12a\cdot(m_{+1}-m_{-1}).
\]
We estimate these two errors separately by subtracting the fixed cavity trace from the row response.

For $|\gamma|\leq1$, we set $F_\gamma\defeq m_\gamma-m_0$ and define
\[
 \Delta_\gamma\defeq a\cdot F_\gamma-\gamma\ell\operatorname{Tr}\Sigma_0.
\]
Since $D_GF_\gamma=\gamma\sqrt\ell\,\Sigma_\gamma$, we expand the Gaussian divergence as
\[
 \delta_GF_\gamma
 =G\cdot F_\gamma-\gamma\sqrt\ell\operatorname{Tr}\Sigma_\gamma.
\]
Substituting $a=\ell U+\sqrt\ell\,G$ into the definition of $\Delta_\gamma$, we obtain the exact decomposition
\begin{equation}
 \Delta_\gamma=\ell U\cdot F_\gamma+\sqrt\ell\,\delta_GF_\gamma
       +\gamma\ell(\operatorname{Tr}\Sigma_\gamma-\operatorname{Tr}\Sigma_0).
 \label{eq:tap-divergence-cancellation}
\end{equation}
For Ising spins, we have $\operatorname{Tr}\Sigma_\gamma=M-\norm{m_\gamma}_2^2$.
Using $\norm{m_\gamma}_2,\norm{m_0}_2\leq\sqrt M$, we obtain
\[
 |\operatorname{Tr}\Sigma_\gamma-\operatorname{Tr}\Sigma_0|
 =|\norm{m_\gamma}_2^2-\norm{m_0}_2^2|
 \leq2\sqrt M\,\norm{F_\gamma}_2.
\]
We apply the Gaussian divergence estimate in \Cref{lem:tap-gaussian-poincare-p}-\textup{(ii)} to $F_\gamma$ in \cref{eq:tap-divergence-cancellation}.
Together with $|\gamma|\leq1$ and the trace-difference bound, we obtain
\begin{align*}
 \norm{\Delta_\gamma}_{L^p(G)}
 \leq&~\ell\sqrt M\,\norm{F_\gamma}_{L^p(G)}
       +C_p\sqrt\ell\left(\norm{F_\gamma}_{L^p(G)}
              +|\gamma|\sqrt\ell\,\norm{\Sigma_\gamma}_{L^p(G)}\right)
 +2|\gamma|\ell\sqrt M\,\norm{F_\gamma}_{L^p(G)}\\
 \leq&~C_p(\sqrt\ell+\ell\sqrt M)\norm{F_\gamma}_{L^p(G)}
       +C_p|\gamma|\ell\norm{\Sigma_\gamma}_{L^p(G)}.
\end{align*}
By \Cref{lem:tap-tilted-mean-displacement}, we can upper bound the mean displacement by
\[
 \norm{F_\gamma}_{L^p(G)}
 \leq C_p|\gamma|\sqrt\ell
       \sup_{|r|\leq1}\norm{\Sigma_r}_{L^p(G)}.
\]
Substituting this bound into the preceding estimate and using $M\ell\leq1$, we obtain
\begin{equation}
\begin{aligned}
 \norm{\Delta_\gamma}_{L^p(G)}
 \leq C_p|\gamma|\ell(1+\sqrt{M\ell})
       \sup_{|r|\leq1}\norm{\Sigma_r}_{L^p(G)}
 \leq C_p|\gamma|\ell
       \sup_{|r|\leq1}\norm{\Sigma_r}_{L^p(G)}.
\end{aligned}
 \label{eq:tap-Delta-p}
\end{equation}

We now express $A_i$ and $B_i$ in terms of $\Delta_\gamma$.
By the definition of $\Delta_\gamma$, we have
\[
 a\cdot m_\gamma=a\cdot m_0+\gamma\ell\operatorname{Tr}\Sigma_0+\Delta_\gamma.
\]
We integrate $\frac{\mathrm d}{\mathrm d\gamma}\log Z_\gamma=a\cdot m_\gamma$ over $[-1,1]$ and use $\int_{-1}^1\gamma\,\mathrm d\gamma=0$ to obtain
\[
 \frac12(\log Z_{+1}-\log Z_{-1})
 =a\cdot m_0+\frac12\int_{-1}^1\Delta_\gamma\,\mathrm d\gamma.
\]
Taking the average of the two endpoint identities, we obtain
\[
 \frac12a\cdot(m_{+1}+m_{-1})
 =a\cdot m_0+\frac12(\Delta_1+\Delta_{-1}).
\]
Taking half their difference, we obtain
\[
 \frac12a\cdot(m_{+1}-m_{-1})
 =\ell\operatorname{Tr}\Sigma_0+\frac12(\Delta_1-\Delta_{-1}).
\]
Substituting these identities and $\operatorname{Tr}\Sigma_0=M-\norm{m_0}_2^2$ into the definitions of the two errors, we obtain
\[
A_i =\frac12\int_{-1}^1\Delta_\gamma\,\mathrm d\gamma
       -\frac12(\Delta_1+\Delta_{-1}),
\]
\[
B_i =\ell(\norm{m_0}_2^2-\norm{\overline m_x}_2^2)
       -\frac12(\Delta_1-\Delta_{-1}).
\]
Since $A_i$ depends only on $G$ under our conditioning, we apply Minkowski's inequality and \cref{eq:tap-Delta-p} to obtain
\begin{align*}
 \norm{A_i}_{L^p(G)}
\leq \frac12\int_{-1}^1\norm{\Delta_\gamma}_{L^p(G)}\,\mathrm d\gamma
       +\frac12\left(\norm{\Delta_1}_{L^p(G)}
                     +\norm{\Delta_{-1}}_{L^p(G)}\right)\leq C_p\ell\sup_{|\gamma|\leq1}\norm{\Sigma_\gamma}_{L^p(G)}.
\end{align*}
To estimate $B_i$, we use $\norm{\overline m_x}_2,\norm{m_0}_2\leq\sqrt M$ to obtain the pointwise bound
\[
 |\norm{\overline m_x}_2^2-\norm{m_0}_2^2|
 \leq2\sqrt M\,\norm{\overline m_x-m_0}_2.
\]
By convexity of the endpoint mixture, we also have
\begin{align*}
 \norm{\overline m_x-m_0}_2
 \leq \frac{1+x}{2}\norm{F_1}_2
       +\frac{1-x}{2}\norm{F_{-1}}_2
 \leq \norm{F_1}_2+\norm{F_{-1}}_2. 
\end{align*}
These pointwise inequalities use only $|x|\leq1$, so they remain valid when $x$ depends on $G$ and additional randomness.
Since $F_{\pm1}$ and $\Delta_{\pm1}$ depend only on $G$ under our conditioning, their $L^p(\mathrm{row})$ norms equal their $L^p(G)$ norms.
Using the pointwise bounds, \Cref{lem:tap-tilted-mean-displacement}, and \cref{eq:tap-Delta-p}, we obtain
\begin{align*}
 \norm{B_i}_{L^p(\mathrm{row})}
 \leq&~2\ell\sqrt M\left(\norm{F_1}_{L^p(G)}
                        +\norm{F_{-1}}_{L^p(G)}\right)
 \mathbin{+}\frac12\left(\norm{\Delta_1}_{L^p(G)}
               +\norm{\Delta_{-1}}_{L^p(G)}\right)\\
 \leq&~C_p\ell(1+\sqrt{M\ell})
       \sup_{|\gamma|\leq1}\norm{\Sigma_\gamma}_{L^p(G)}\\
 \leq&~C_p\ell\sup_{|\gamma|\leq1}\norm{\Sigma_\gamma}_{L^p(G)},
\end{align*}
where the last inequality uses $M\ell\leq1$.
Finally, using $r_{\mathrm{row}}=A_i+xB_i$ and $|x|\leq1$, we combine the two error estimates to obtain
\begin{align*}
 \norm{r_{\mathrm{row}}}_{L^p(\mathrm{row})}
 \leq\norm{A_i}_{L^p(G)}+\norm{B_i}_{L^p(\mathrm{row})}
 \leq C_p\ell\sup_{|\gamma|\leq1}\norm{\Sigma_\gamma}_{L^p(G)}.
\end{align*}
This proves the lemma.
\end{proof}

\subsection{Stability under \texorpdfstring{{deletion of one coordinate}}{deletion of one coordinate}}
\label{subsec:tap-stability-proofs}

In this subsection, we prove \Cref{lem:tap-parent-child}.

\renewcommand{\restatedtheoremnumber}{\ref*{lem:tap-parent-child}}
\begin{restatedlemma}[Restatement]
Under the planted model of \Cref{def:tap-planted-model}, set $J=\beta A^{\mathrm{full}}$.
If $n\geq2$ and $n\ell\leq\beta_*^2$, let $\mathbf r$ and
$\mathbf r^{\setminus i}$ be the residual before and after
deleting coordinate $i$.  Each $\mathbf r^{\setminus i}$ is
$\mathcal F^{\setminus i}$-measurable.  For every fixed integer $p\geq1$,
there is a constant $C_p$, depending only on $p,\beta_*,T$, such that,
for every $t\in[0,T]$,
\begin{equation*}
 \max_i\bigl\|\,\|\mathbf r\|_2-\|\mathbf r^{\setminus i}\|_2\,\bigr\|_{L^{4p}}\le C_pn^{-1/4}.
\end{equation*}
\end{restatedlemma}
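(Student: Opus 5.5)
We compare $\mathbf r$ and $\mathbf r^{\setminus i}$ through the pinned residuals $\mathbf r^{b}_{\mathrm{pin}}$, $b\in\{\pm1\}$, of \Cref{def:tap-cavity-pinning}. Throughout we fix $i$, set $M=n-1$, and condition on $(\mathcal F^{\setminus i},U)$ exactly as in \Cref{subsec:tap-row-proofs}.

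\emph{Step 1 (decomposition).} Since $\|\mathbf r\|_2^2=r_i^2+\|\mathbf r_{-i}\|_2^2$, the reverse triangle inequality gives
\[
 \bigl|\|\mathbf r\|_2-\|\mathbf r^{\setminus i}\|_2\bigr|\le|r_i|+\|\mathbf r_{-i}-\mathbf r^{\setminus i}\|_2 .
\]
By \Cref{lem:tap-complete-row}-(iii), $\|r_i\|_{L^{4p}}\le C_pn^{-1/2}$, so it suffices to control $\mathbf r_{-i}-\mathbf r^{\setminus i}$. The restricted full residual is
\[
 \mathbf r_{-i}=\operatorname{arctanh}(m_{-i})-y_{-i}-J^{\setminus i}m_{-i}-a_i m_i+\ell(n-\|m\|_2^2)m_{-i},
\]
with $m_{-i}=\sum_b\pi_bm_b$.
\begin{itemize}
\item Each $\mathbf r^b_{\mathrm{pin}}$ involves the field $y_{-i}+ba_i$. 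Since $\sum_b\pi_b b=m_i$, the mixture $\sum_b\pi_b\mathbf r^b_{\mathrm{pin}}$ reproduces the term $-a_im_i$ exactly.
\item The Onsager terms differ by $\ell(1+m_i^2)m_{-i}$ plus quadratic mixture corrections. These contribute $O(\ell\sqrt n)=O(n^{-1/2})$ in $L^p$.
\item The nonlinear term has error $\operatorname{arctanh}(\sum_b\pi_bm_b)-\sum_b\pi_b\operatorname{arctanh}(m_b)$. We bound it via the Jensen gap, using $\|m_{+1}-m_{-1}\|_2\le\|F_1\|_2+\|F_{-1}\|_2$ and \Cref{lem:tap-tilted-mean-displacement} together with \Cref{lem:tap-weak-tilt}. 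This requires inverse-marginal moments $\E{(1-m_j^2)^{-k}}$, which we would obtain from the field $y$ being Gaussian-smoothed at time $t$. This is the point where the constant picks up $T$-dependence and where the times $t+\ell$ enter.
\end{itemize}
Hence $\|\mathbf r_{-i}-\mathbf r^{\setminus i}\|_{L^{4p}}\lesssim\max_b\|\mathbf r^b_{\mathrm{pin}}-\mathbf r^{\setminus i}\|_{L^{4p}}+C_pn^{-1/2}$.

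\emph{Step 2 (interpolation as stochastic localization).} To compare $\mathbf r^{b}_{\mathrm{pin}}$ with $\mathbf r^{\setminus i}=\mathbf r^{0}$, we realize the field change $y_{-i}\mapsto y_{-i}+ba_i$ with $a_i=\ell X_iX_{-i}+\sqrt\ell G_i$. Given $X_i=b'$, this is a Gaussian observation of $X_{-i}$ of duration $\ell$. It can therefore be generated by running the $(n-1)$-spin localization SDE of \Cref{lem:app-wpi-observation-law}, started at $\mu^{\setminus i}$, from time $t$ to $t+\ell$ in a rotated frame: the extra field is $\ell U+\sqrt\ell G$ with $U=X_iX_{-i}$. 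Applying Itô's formula to $\mathbf r(Y_s)$ gives
\[
 \mathbf r^{b}_{\mathrm{pin}}-\mathbf r^{\setminus i}=\int(\partial\mathbf r/\partial y)\,\mathrm d\widetilde B_s+\text{drift}.
\]
Here $\partial\mathbf r/\partial y=H^{(Q)}_{\mathrm{TAP}}\Sigma-I$, and the drift contains second derivatives of $\mathbf r$ in $y$. The sign $b$ versus $X_i$ is handled by symmetry: the case $b=-X_i$ is a tilt whose likelihood ratio is bounded as in \Cref{lem:tap-weak-tilt}, since $e^{40\ell M}=O(1)$. \Cref{lem:bdg-stochastic-integral} then bounds the martingale part by $C_p\sqrt\ell\,\sup_s\|\partial\mathbf r/\partial y\|_{L^{4p}}$.

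\emph{Step 3 (main obstacle: susceptibility bound).} The key estimate is $\|H^{(Q)}_{\mathrm{TAP}}\Sigma-I\|_{L^{q}}\le C_qn^{1/4}$ uniformly along the path; with it the martingale part is $O(n^{-1/4})$. The plan is to write
\[
 H\Sigma-I=\bigl(\operatorname{diag}((1-m^2)^{-1})\Sigma-I\bigr)-J\Sigma+\ell(n-\|m\|^2)\Sigma-2\ell mm^\top\Sigma .
\]
\begin{itemize}
\item The terms $\|J\Sigma\|_F\le4\beta\|\Sigma\|_F$ and the Onsager terms are controlled by $\|\Sigma\|_F$, which is $O(\sqrt n)$ by \Cref{cor:tap-covariance}. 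Crude bounds therefore give only $\sqrt n$.
\item To gain, we would use that the diagonal of $\operatorname{diag}((1-m^2)^{-1})\Sigma-I$ vanishes. For the off-diagonal block, each row $j$ is, by the cavity computation of \Cref{lem:tap-tilted-mean-displacement}, $(1-m_j^2)^{-1}\Sigma_{j,-j}\approx(J\Sigma)_{j,-j}$ up to a row error of size $\ell\|\Sigma\|$ in $L^p$. The remaining cancellation with $J\Sigma$ and the Onsager term uses the Gaussian divergence (Meyer) bound, \Cref{lem:tap-gaussian-poincare-p}-(ii).
\item Summing row errors gives roughly $(n\cdot\ell^2\cdot n)^{1/2}\cdot\sqrt n^{1/2}$, and Cauchy--Schwarz against inverse-marginal moments loses at most a factor $n^{1/4}$ in total.
\end{itemize}
This row-by-row cancellation is where I expect the most difficulty. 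If it fails, I would accept a weaker exponent $n^{1/2-\delta}$, which still suffices for the application since \Cref{thm:polynomial-tap-residual} only needs $\eta_n\to0$ polynomially.

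\emph{Step 4 (drift and assembly).} The drift terms are integrated over a time interval of length $\ell$ with integrands of order $n^{1/2}$ in $L^p$, since they are second-order in $\mathbf r$ with bounded entries. They therefore contribute $O(n^{-1/2})$. Combining Steps 1, 2 and 4 with Step 3, and taking the maximum over $i$ by exchangeability, yields the bound $C_pn^{-1/4}$.
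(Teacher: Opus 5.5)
Your skeleton matches the paper's: the reverse triangle inequality plus \Cref{lem:tap-complete-row} for $r_i$, the recombination $\mathbf r_{-i}=\sum_b\pi_b\mathbf r^b_{\mathrm{pin}}+\Delta_{\mathrm{ent}}+\Delta_Q$, a localization path of duration $\ell$ started from the cavity law, and It\^o's formula with BDG. However, the step you flag in Step~3 is a genuine gap, and it carries the whole difficulty.

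\textbf{The Jacobian bound.} Each piece of $H^{(Q)}_{\mathrm{TAP}}\Sigma-I$ has Frobenius norm of order $\sqrt n$. Your row-by-row cancellation is only asserted, and the exponent count is heuristic. Nothing in the proposal shows the row errors are small. Proving this directly would require quantitative control of how $\Sigma_\gamma$ varies along the cavity tilt, i.e.\ third-order cumulants contracted with the Gaussian row, and none of the lemmas you cite supplies that. Without a polynomial gain over $\sqrt n$, the martingale term is $\sqrt\ell\cdot\sqrt n=O(1)$ and the argument gives nothing. The fallback exponent would not prove the stated $n^{-1/4}$ either.

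The missing idea is that you never need to analyze $H^{(Q)}_{\mathrm{TAP}}\Sigma-I$ directly. Integrating the residual SDE of \Cref{lem:tap-residual-sde} and taking conditional expectations gives $\mathbf r(y)=(P_h\mathbf r)(y)-\int_0^h(P_s\mathbf b)(y)\,\mathrm ds$, where $P_s$ averages over the localization path started at $y$. The transition kernel depends on $y$ only through $\exp(-\|y'-y\|_2^2/(2s))/\mathcal Z_J(y)$. Hence $\nabla_y\log p_s(y,y')=(y'-y)/s-m_J(y)$, whose conditional covariance is $\Sigma(y)+s^{-1}I$. Cauchy--Schwarz then bounds $\partial(P_sF)/\partial y$ by $(s^{-1/2}+\|\Sigma\|_{\mathrm{op}}^{1/2})$ times moments of $F$ itself. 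So you only need the bounds already available: $\|\mathbf r(y_s)\|_{L^p}\le C_p$, $\|\mathbf b(y_s)\|_{L^p}\le C_p\sqrt M$, and $\|\Sigma\|_{L^p}\lesssim\sqrt M$. Choosing $h=M^{-1/2}$ yields $C_pM^{1/4}$; this is \Cref{lem:tap-susceptibility}.

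\textbf{Two further estimates.} The tools you name do not supply these.
\begin{enumerate}
\item For $\Delta_{\mathrm{ent}}$, the Jensen gap is coordinatewise quadratic in $(m_{+1}-m_{-1})_j$. So you need $\sum_j(m_{+1}-m_{-1})_j^4$, weighted by inverse marginals, to be $O(n^{-1})$. The vector bound of \Cref{lem:tap-tilted-mean-displacement}, $\|m_{\pm1}-m_0\|_{L^p}\le C_p\sqrt\ell\sup_r\|\Sigma_r\|_{L^p}=O(1)$, only gives an $O(1)$ error. What you need is delocalization: $\|(m_b)_j-m_j^{\setminus i}\|_{L^p}\le C_pn^{-1/2}$ for each $j$. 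The paper obtains this in \Cref{lem:tap-mean-increments} by applying BDG to $\mathrm dm_j=\Sigma_{j,\cdot}\,\mathrm d\widetilde B^{\mathrm{loc}}$ along the same path, together with the row bound $\|\Sigma_{j,\cdot}\|_{L^p}\le C_p$. The inverse-marginal moments also need exponential moments of the cavity field $\phi(a_j)$ in $\operatorname{arctanh}(m_j)=y_j+\phi(a_j)$, not only of $y_j$.
\item The drift is not of order $\sqrt n$ ``because entries are bounded''. Its leading term is $m_j\|\delta^{(j)}\|_2^2$, and entry bounds only give $\|\delta^{(j)}\|_2^2\le M$. That yields an $O(\sqrt n)$ contribution after integrating over time $\ell$. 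You need the influence bound $\|\delta^{(j)}\|_{L^p}\le C_p$ of \Cref{lem:tap-influence}.
\end{enumerate}

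Your likelihood-ratio handling of the mismatched sign is workable, since the relevant $\chi^2$-divergence is $e^{4\ell M}-1=O(1)$. The paper's identity $\mathbb E[\sum_b\pi_bF_b]=\frac12\sum_b\mathbb E[F_b\mid X_i=b]$ makes it unnecessary.
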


We fix $p\geq2$ and a deleted coordinate $i$, and set $M=n-1$.  Let
$\iota_i:\mathbb R^{n-1}\to\mathbb R^n$ insert a zero at coordinate $i$.
{We recall that $J=\beta A^{\mathrm{full}}$, $J^{\setminus i}=J_{-i,-i}$, and $\ell=\beta^2/n$.}
By the triangle inequality,
we have
\begin{equation*}
 \bigl\|\,\|\mathbf r\|_2-\|\mathbf r^{\setminus i}\|_2\,\bigr\|_{L^p}
 \leq\|\mathbf r-\iota_i\mathbf r^{\setminus i}\|_{L^p}
 \leq\|r_i\|_{L^p}+\|\mathbf r_{-i}-\mathbf r^{\setminus i}\|_{L^p}.
\end{equation*}
By \Cref{lem:tap-complete-row} we
can upper bound $\|r_i\|_{L^p}$ by $C_pn^{-1/2}$.
We then focus on the second part.
We use the cavity representations from
\Cref{def:tap-cavity-pinning} and set
\[
 m^{\mathrm{av}}=\frac{m_{+1}+m_{-1}}{2},\qquad
 m^{\mathrm{dif}}=\frac{m_{+1}-m_{-1}}{2}.
\]
We also have that
$m_{-i}=\sum_b\pi_b m_b
=m^{\mathrm{av}}+m_i m^{\mathrm{dif}}$.
In particular, the pinned residual is
\[
 \mathbf r_{\mathrm{pin}}^b
 = \operatorname{arctanh}(m_b)-y_{-i}-b J_{-i,i}
   -J^{\setminus i}m_b
   +\ell(M-\norm{m_b}_2^2)m_b.
\]
We also recall definitions for residuals before and after
deleting coordinate $i$:
\[\mathbf r_{-i}
 = \operatorname{arctanh}(m_{-i})-y_{-i}-m_iJ_{-i,i}
    -J^{\setminus i}m_{-i}+\ell(n-\norm m_2^2)m_{-i},\]
\[
 \mathbf r^{\setminus i}
 = \operatorname{arctanh}(m^{\setminus i})-y_{-i}
    -J^{\setminus i}m^{\setminus i}
    +\ell(M-\|m^{\setminus i}\|_2^2)m^{\setminus i}.
\]
We claim that
\begin{equation*}
 \mathbf r_{-i}
 =\sum_{b=\pm1}\pi_b\mathbf r_{\mathrm{pin}}^b
   +\Delta_{\mathrm{ent}}+\Delta_Q,
\end{equation*}
where
\[
 \Delta_{\mathrm{ent}}
 =\operatorname{arctanh}(m_{-i})
       -\sum_{b=\pm1}\pi_b
                         \operatorname{arctanh}(m_b),
                         \]
 \[
 \Delta_Q
 = \ell(1-m_i^2)\left(
     (1+\norm{m^{\mathrm{dif}}}_2^2)m_{-i}
       +2\inner{m^{\mathrm{av}}}{m^{\mathrm{dif}}}m^{\mathrm{dif}}
                  \right).\]
Since $\sum_{b=\pm1}\pi_b=1$, subtracting
$\mathbf r^{\setminus i}$ from this decomposition yields
\begin{equation}
 \mathbf r_{-i}-\mathbf r^{\setminus i}
 =\sum_{b=\pm1}\pi_b
       (\mathbf r_{\mathrm{pin}}^b-\mathbf r^{\setminus i})
   +\Delta_{\mathrm{ent}}+\Delta_Q.
 \label{eq:tap-recombination}
\end{equation}

We now prove the claim.  By
$\sum_b\pi_b=1$,
$\sum_bb\pi_b=m_i$, and
$\sum_b\pi_b m_b=m_{-i}$, we have
\begin{equation*}
 \sum_b\pi_b
       (y_{-i}+b J_{-i,i}+J^{\setminus i}m_b)
 =y_{-i}+m_iJ_{-i,i}+J^{\setminus i}m_{-i}.
\end{equation*}
Substituting the formulas for the residuals and the definition of
$\Delta_{\mathrm{ent}}$, we obtain
\begin{align*}
 \mathbf r_{-i}-\sum_b\pi_b\mathbf r_{\mathrm{pin}}^b
 =&~\Delta_{\mathrm{ent}}+\ell(n-\norm m_2^2)m_{-i}
   -\ell\sum_b\pi_b
       (M-\norm{m_b}_2^2)m_b\\
 =&~\Delta_{\mathrm{ent}}+\ell(1-m_i^2)m_{-i}
   +\ell\left(\sum_b\pi_b
       \norm{m_b}_2^2m_b
       -\norm{m_{-i}}_2^2m_{-i}\right).
\end{align*}
For the second equality, we used $n=M+1$ and
$\norm m_2^2=m_i^2+\norm{m_{-i}}_2^2$.
It remains to show that the two terms multiplied by $\ell$ sum to
$\Delta_Q$.  We expand $m_b
=m^{\mathrm{av}}+b m^{\mathrm{dif}}$ and use
$\sum_b\pi_bb=m_i$:
\begin{align*}
 \sum_b\pi_b\|m_b\|_2^2m_b
 =&~(\|m^{\mathrm{av}}\|_2^2+\|m^{\mathrm{dif}}\|_2^2)m_{-i}
 +2\inner{m^{\mathrm{av}}}{m^{\mathrm{dif}}}
         (m_i m^{\mathrm{av}}+m^{\mathrm{dif}}),
\end{align*}
\begin{align*}
 \|m_{-i}\|_2^2
 =&~\|m^{\mathrm{av}}\|_2^2
       +2m_i\inner{m^{\mathrm{av}}}{m^{\mathrm{dif}}}
       +m_i^2\|m^{\mathrm{dif}}\|_2^2.
\end{align*}
After subtracting the second expression times $m_{-i}$ from the first,
we use
$m_i m^{\mathrm{av}}+m^{\mathrm{dif}}-m_i m_{-i}
=(1-m_i^2)m^{\mathrm{dif}}$ and obtain
\begin{equation*}
 \sum_b\pi_b\norm{m_b}_2^2m_b
 -\norm{m_{-i}}_2^2m_{-i}
 =(1-m_i^2)\left(\norm{m^{\mathrm{dif}}}_2^2m_{-i}
       +2\inner{m^{\mathrm{av}}}{m^{\mathrm{dif}}}m^{\mathrm{dif}}\right).
\end{equation*}
Substituting this identity into the residual difference, we obtain
\begin{equation*}
 \mathbf r_{-i}-\sum_b\pi_b\mathbf r_{\mathrm{pin}}^b
 =\Delta_{\mathrm{ent}}+\ell(1-m_i^2)
 \left((1+\norm{m^{\mathrm{dif}}}_2^2)m_{-i}
       +2\inner{m^{\mathrm{av}}}{m^{\mathrm{dif}}}m^{\mathrm{dif}}\right)
 =\Delta_{\mathrm{ent}}+\Delta_Q.
\end{equation*}

Later, under $X_i=b$ we will bound
$\mathbf r_{\mathrm{pin}}^b-\mathbf r^{\setminus i}$ by $C_pn^{-1/4}$ in $L^p$.
Under the unconditional planted law, we will also prove
$\|\Delta_{\mathrm{ent}}\|_{L^p}+\|\Delta_Q\|_{L^p}\leq C_pn^{-1/2}$.
{We first construct a localization path started from $y_{-i}$ to compare the cavity and pinned residuals, and then estimate the two remainders $\Delta_{\mathrm{ent}}$ and $\Delta_Q$.}

\subsubsection{From the cavity law to the pinned law}
We fix $i$ and $b\in\{\pm1\}$.
By \Cref{def:tap-cavity-pinning}, the cavity and pinned distributions have the same interaction $J^{\setminus i}$ and external fields $y_{-i}$ and $y_{-i}+ba_i$, respectively.
We connect these fields by a localization path.
{We condition on $X_i=b$ and sample $\xi$ according to $\mathcal L(\xi\mid\mathcal F^{\setminus i},X_i=b)=\mu^{\setminus i}$.}
We sample a standard Brownian motion $(\widetilde B_s)_{0\leq s\leq\ell}$
in $\mathbb R^M$, starting from $\widetilde B_0=0$ and independent of
$(\mathcal F^{\setminus i},X_i,\xi)$, and set
\begin{equation*}
 a_s=s\xi+\widetilde B_s,\qquad 0\leq s\leq\ell.
\end{equation*}
Here $a_s$ is the additional external field, so the total external field at time
$s$ is $y_{-i}+a_s$.

We next check that, given $(\mathcal F^{\setminus i},X_i=b)$,
$a_\ell$ and $b J_{-i,i}$ have the same conditional distribution.
{Given $(X,\mathcal F^{\setminus i})$, \Cref{def:tap-planted-model} and $J=\beta A^{\mathrm{full}}$ imply that the signed entries $b J_{j,i}$, $j\ne i$, are independent $N(\ell X_j,\ell)$ variables.}  Also,
$X_{-i}\mid\mathcal F^{\setminus i},X_i=b$ has law
$\mu^{\setminus i}$, since $X_i$ is independent of
$(X_{-i},J^{\setminus i},y_{-i})$.
On the other hand, $a_\ell\mid(\mathcal F^{\setminus i},X_i=b,\xi)$
has law $N(\ell\xi,\ell I_M)$.  Integrating these Gaussian laws over
$X_{-i}$ and $\xi$, respectively, with the same distribution
$\mu^{\setminus i}$, we obtain the same conditional law for
$a_\ell$ and $b J_{-i,i}$
given $(\mathcal F^{\setminus i},X_i=b)$.
Thus, for every nonnegative measurable function $F$, we have
\begin{equation*}
 \E{F(J^{\setminus i},y_{-i},y_{-i}+bJ_{-i,i})\mid\mathcal F^{\setminus i},X_i=b}
 =\E{F(J^{\setminus i},y_{-i},y_{-i}+a_\ell)\mid\mathcal F^{\setminus i},X_i=b}.
\end{equation*}
\begin{equation}
 \E{F(J^{\setminus i},y_{-i},y_{-i}+bJ_{-i,i})\mid X_i=b}
 =\E{F(J^{\setminus i},y_{-i},y_{-i}+a_\ell)\mid X_i=b}.
 \label{eq:tap-matched-endpoint-law}
\end{equation}

{
We next study the residual at $y_{-i}+a_s$ by deriving an SDE.
}
Let $M$ be an integer.
For $v\in\mathbb R^M$, we use the shorthand $\mu_J(v)$ for the distribution denoted by $\mu_{J,y}$ in \Cref{sec:tap-residual-bounds}, with $y$ replaced by $v$:
\begin{equation*}
 \mu_J(v)(\sigma)=\frac{\exp(\inner{\sigma}{J\sigma}/2+\inner{v}{\sigma})}{\sum_{\tau\in\{\pm1\}^M}\exp(\inner{\tau}{J\tau}/2+\inner{v}{\tau})},\qquad \sigma\in\{\pm1\}^M.
\end{equation*}
When $J=\beta_MA^{\mathrm{full}}$ under the $M$-spin planted model of \Cref{def:tap-planted-model}, this is exactly the posterior distribution $\mu_{A,v}$ defined in \Cref{subsec:prelim-planted-model} because the diagonal of $J$ does not affect the distribution.
We set
\begin{equation*}
 m_J(v)=\E[\sigma\sim\mu_J(v)]{\sigma},\qquad \Sigma(v)=\E[\sigma\sim\mu_J(v)]{(\sigma-m_J(v))(\sigma-m_J(v))^\top},
\end{equation*}
and write $m_{J,j}(v)$ for coordinate $j$ of $m_J(v)$.
For this $M$-spin system, we define the residual function at external field $y$ by
\begin{equation}
 \mathbf r(y)=\operatorname{arctanh}\tp{{m_J}(y)}-y-J{m_J}(y)
                     +\ell(M-\norm{{m_J}(y)}_2^2){m_J}(y).
 \label{eq:tap-M-residual}
\end{equation}
{
Later, when applying \cref{eq:tap-M-residual} to the cavity path, we use the matrix $J^{\setminus i}$ and dimension $M=n-1$.
With this matrix, $\mathbf r(y_{-i})=\mathbf r^{\setminus i}$, and \cref{eq:tap-matched-endpoint-law} compares the joint law of this initial residual and $\mathbf r(y_{-i}+a_\ell)$ with that of $\mathbf r^{\setminus i}$ and $\mathbf r_{\mathrm{pin}}^b$ under $X_i=b$.
}
\begin{lemma}
\label{lem:tap-residual-sde}
Fix a symmetric matrix $J\in\mathbb R^{M\times M}$, $\ell\geq0$, and
an initial external field $y\in\mathbb R^M$.  Let $\mu_{J,y}$ be the Gibbs {distribution} with
interaction $J$ and external field $y$, with mean ${m_J}(y)$ and covariance $\Sigma(y)$,
and let $\mathbf r$ be defined by \cref{eq:tap-M-residual}.
Sample $\xi\sim\mu_{J,y}$ and independently a standard Brownian motion
$(\widetilde B_s)_{s\geq0}$ in $\mathbb R^M$, starting from zero.
Set $y_s=y+s\xi+\widetilde B_s$ and
$\mathcal F_s=\sigma((y_u)_{0\leq u\leq s})$.
Then $\widetilde B_s^{\mathrm{loc}}
=y_s-y-\int_0^s {m_J}(y_u)\mathrm d u$ is Brownian motion in this filtration, and
\begin{equation*}
 \mathrm d\mathbf r(y_s)=\left.\frac{\partial\mathbf r(v)}{\partial v}\right|_{v=y_s}\mathrm d\widetilde B_s^{\mathrm{loc}}
                    +\mathbf b(y_s)\mathrm d s.
\end{equation*}
For every external field $v\in\mathbb R^M$, with
$D(v)=\operatorname{diag}((1-{m_{J,j}}(v)^2)^{-1})$, the coefficients are
\begin{align*}
 \frac{\partial\mathbf r(v)}{\partial v}=&~(D(v)-J+\ell(M-\norm{{m_J}(v)}_2^2)I_M
                   -2\ell {m_J}(v){m_J}(v)^\top)\Sigma(v)-I_M,\\
 \mathbf b(v)=&~{m_J}(v)\mathbin{\odot}
   (\operatorname{diag}(D(v)\Sigma(v)^2D(v))-\1) -\ell\operatorname{Tr}(\Sigma(v)^2){m_J}(v)-2\ell \Sigma(v)^2{m_J}(v).
\end{align*}
Here $\odot$ denotes coordinatewise multiplication and
$\operatorname{diag}$ applied to a matrix denotes its diagonal vector.
\end{lemma}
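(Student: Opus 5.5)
The plan has two parts: identify the process $y_s$ as a stochastic localization process, then apply It\^o's formula to the explicit residual function $\mathbf r$.

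\emph{Step 1: localization representation.} By \Cref{lem:app-wpi-observation-law}, applied to the full-support base distribution $\mu_{J,y}$ on $\{\pm1\}^M$ (whose tilt by $v-y$ is $\mu_J(v)$), the observation process $y_s-y=s\xi+\widetilde B_s$ has posterior mean $m_J(y_s)$. The same lemma shows that $\widetilde B^{\mathrm{loc}}_s$ is an $(\mathcal F_s)$-Brownian motion and that $\mathrm dy_s=m_J(y_s)\,\mathrm ds+\mathrm d\widetilde B^{\mathrm{loc}}_s$. Consequently $\mathrm d\langle y\rangle_s=I_M\,\mathrm ds$ and $\mathrm dm_J(y_s)=\Sigma(y_s)\,\mathrm d\widetilde B^{\mathrm{loc}}_s$.

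\emph{Step 2: It\^o's formula.} The residual $\mathbf r$ is a smooth function of $v$, since $m_J(v)\in(-1,1)^M$ is smooth in $v$ and $\operatorname{arctanh}$ is smooth on $(-1,1)$. It is cleaner to apply It\^o to $\mathbf r$ viewed as a function of $(m,v)$, namely
\[
\operatorname{arctanh}(m)-v-Jm+\ell(M-\|m\|_2^2)m,
\]
with $m=m_J(y_s)$ and $v=y_s$. The first-order terms then give $(D-J+\ell(M-\|m\|^2)I-2\ell mm^\top)\,\mathrm dm-\mathrm dy$. Substituting $\mathrm dm=\Sigma\,\mathrm d\widetilde B^{\mathrm{loc}}$ and $\mathrm dy=m\,\mathrm ds+\mathrm d\widetilde B^{\mathrm{loc}}$ produces exactly the stated martingale coefficient, together with the drift contribution $-m\,\mathrm ds$. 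Because $y_s$ enters linearly, the $-v$ term contributes no second-order correction.

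\emph{Step 3: second-order terms.} These involve only $m$, with $\mathrm d\langle m_j,m_k\rangle=(\Sigma^2)_{jk}\,\mathrm ds$. For the entropy term, $(\operatorname{arctanh})''(u)=2u/(1-u^2)^2$, so coordinate $j$ contributes $\tfrac12\cdot 2m_j(1-m_j^2)^{-2}(\Sigma^2)_{jj}=m_j(D\Sigma^2D)_{jj}$. Combined with the $-m_j$ from Step 2, this yields $m\odot(\operatorname{diag}(D\Sigma^2D)-\1)$. For $g(m)=-\ell\|m\|^2m$, the Hessian contraction $\tfrac12\sum_{k,l}\partial_k\partial_l g_j\,(\Sigma^2)_{kl}$ must be computed. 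Since $\partial_k\partial_l(\|m\|^2m_j)=2\delta_{kl}m_j+2\delta_{jk}m_l+2\delta_{jl}m_k$, this gives $-\ell(\operatorname{Tr}\Sigma^2\, m_j+2(\Sigma^2m)_j)$. The linear terms $-Jm$ and $\ell Mm$ contribute nothing at second order. Summing all contributions recovers $\mathbf b$.

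The main obstacle is rigor rather than algebra: It\^o requires a $C^2$ function, and $\operatorname{arctanh}$ blows up at $\pm1$. This is harmless here because $M$ is fixed and each $m_J(y_s)$ stays in $(-1,1)$ along continuous paths. Applying It\^o on the open set $(-1,1)^M$, with localization by stopping times on compacts, gives the identity pathwise for all $s$. I would expect the bookkeeping of the cubic Onsager term to be the only place where a sign or factor error could slip in.
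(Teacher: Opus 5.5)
Your proposal is correct and follows essentially the same route as the paper: you use the localization representation from \Cref{lem:app-wpi-observation-law} to obtain $\mathrm dy_s=m_s\,\mathrm ds+\mathrm d\widetilde B_s^{\mathrm{loc}}$ and $\mathrm dm_s=\Sigma_s\,\mathrm d\widetilde B_s^{\mathrm{loc}}$, then apply It\^o's formula to the entropy, linear, and cubic Onsager parts of the residual. The differences are only bookkeeping. You contract the Hessian of $-\ell\|m\|_2^2m$ directly, where the paper uses the product rule with $\mathrm d[\|m\|_2^2,m]_s$. You should also state explicitly, as the paper does, that $\partial m_J/\partial v=\Sigma$; this identifies your first-order coefficient with the Jacobian $\partial\mathbf r/\partial v$ asserted in the lemma.
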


\begin{proof}
We apply \Cref{lem:app-wpi-observation-law} in dimension $M$ with base distribution $\mu=\mu_{J,y}$ and additional observation $y_s-y=s\xi+\widetilde B_s$.
The tilt of this base distribution by a field $v$ is $\mu_{J,y+v}$, so its mean and covariance at $v=y_s-y$ are ${m_J}(y_s)$ and $\Sigma(y_s)$.
Thus the innovation process is precisely $\widetilde B^{\mathrm{loc}}$ from the statement.
We set $m_s={m_J}(y_s)$, $\Sigma_s=\Sigma(y_s)$, and $D_s=\operatorname{diag}((1-m_{s,j}^2)^{-1})$.
By the field SDE in \Cref{lem:app-wpi-observation-law}, we have
\begin{equation*}
 \mathrm d y_s=m_s\,\mathrm d s+\mathrm d\widetilde B_s^{\mathrm{loc}}.
\end{equation*}
By the mean SDE in the same lemma, we have
\begin{equation*}
 \mathrm d m_s=\Sigma_s\,\mathrm d\widetilde B_s^{\mathrm{loc}}.
\end{equation*}
Multiplying the Brownian coefficients, we obtain the quadratic covariation
\begin{equation*}
 \mathrm d[m_j,m_k]_s=(\Sigma_s^2)_{jk}\,\mathrm d s.
\end{equation*}
The mean ${m_J}(v)$ is a ratio of finite sums of exponentials with positive denominator, so it is smooth in $v$; full support of the Gibbs distribution gives ${m_J}(v)\in(-1,1)^M$ for every $v$.
We can therefore apply It\^o's formula to the three terms of the residual along the continuous field path.

We now apply these identities to the residual
\begin{equation*}
 \mathbf r(y_s)=\operatorname{arctanh}\tp{m_s}
                 -(y_s+Jm_s)+\ell(M-\norm{m_s}_2^2)m_s.
\end{equation*}
We use $(\operatorname{arctanh})'(v)=(1-v^2)^{-1}$ and
$(\operatorname{arctanh})''(v)=2v(1-v^2)^{-2}$ in It\^o's formula to obtain
\begin{equation*}
 \mathrm d \operatorname{arctanh}\tp{m_{s,j}}
 =\frac{\mathrm d m_{s,j}}{1-m_{s,j}^2}
   +\frac{m_{s,j}(\Sigma_s^2)_{jj}}{(1-m_{s,j}^2)^2}\mathrm d s.
\end{equation*}
We write the resulting vector SDE as
\begin{equation*}
 \mathrm d \operatorname{arctanh}\tp{m_s}
 =D_s\Sigma_s\mathrm d \widetilde B_s^{\mathrm{loc}}
   +m_s\mathbin{\odot}\operatorname{diag}(D_s\Sigma_s^2D_s)\mathrm d s.
\end{equation*}
For $-y_s-Jm_s$, we substitute the field and mean SDEs:
\begin{equation*}
 \mathrm d (-y_s-Jm_s)=-\mathrm d y_s-J\mathrm d m_s
 =-(I_M+J\Sigma_s)\mathrm d \widetilde B_s^{\mathrm{loc}}-m_s\mathrm d s.
\end{equation*}
For $\ell(M-\norm{m_s}_2^2)m_s$, we first compute
\begin{equation*}
 \mathrm d \norm{m_s}_2^2
 =2m_s^\top \Sigma_s\mathrm d \widetilde B_s^{\mathrm{loc}}
   +\operatorname{Tr}(\Sigma_s^2)\mathrm d s.
\end{equation*}
The Brownian coefficient of $\norm{m_s}_2^2$ is $2m_s^\top \Sigma_s$.
We multiply it by the Brownian coefficient of $m_{s,j}$ to obtain
\begin{equation*}
 \mathrm d[\norm m_2^2,m_j]_s
 =2\sum_k(\Sigma_sm_s)_k(\Sigma_s)_{jk}\mathrm d s
 =2(\Sigma_s^2m_s)_j\mathrm d s.
\end{equation*}
We substitute these two identities in It\^o's product formula:
\begin{align*}
 \mathrm d (\ell(M-\norm{m_s}_2^2)m_s)
 =&~\ell(M-\norm{m_s}_2^2)\mathrm d m_s
    -\ell m_s\mathrm d \norm{m_s}_2^2-\ell\mathrm d [\norm m_2^2,m]_s\\
 =&~(\ell(M-\norm{m_s}_2^2)I_M-2\ell m_sm_s^\top)\Sigma_s
            \mathrm d \widetilde B_s^{\mathrm{loc}} -\ell(\operatorname{Tr}(\Sigma_s^2)m_s+2\Sigma_s^2m_s)\mathrm d s.
\end{align*}
Adding the three drift coefficients, we obtain
\begin{equation*}
 m_s\mathbin{\odot}\operatorname{diag}(D_s\Sigma_s^2D_s)-m_s
 -\ell\operatorname{Tr}(\Sigma_s^2)m_s-2\ell \Sigma_s^2m_s=\mathbf b(y_s).
\end{equation*}
For the Brownian coefficient, we differentiate the residual in $y$
using $\partial {m_J}(y)/\partial y=\Sigma(y)$ and obtain
\begin{equation*}
 \frac{\partial\mathbf r(y)}{\partial y}
 =D(y)\Sigma(y)-I_M-J\Sigma(y)
   +(\ell(M-\norm{{m_J}(y)}_2^2)I_M-2\ell {m_J}(y){m_J}(y)^\top)\Sigma(y),
\end{equation*}
and $\mathrm d\mathbf r(y_s)=(\partial\mathbf r/\partial y)(y_s)
\mathrm d\widetilde B_s^{\mathrm{loc}}+\mathbf b(y_s)\mathrm d s$.
\end{proof}

{
We condition on $(\mathcal F^{\setminus i},X_i=b)$ and apply \Cref{lem:tap-residual-sde} with interaction matrix $J^{\setminus i}$ and initial field $y_{-i}$, so that the field path is $y_{-i}+a_s$.
}
We write $\mathbf b_s=\mathbf b(y_{-i}+a_s)$.
By \Cref{lem:tap-residual-sde} and
\cref{eq:tap-matched-endpoint-law}, under $X_i=b$ we have
\begin{equation*}
 \mathbf r_{\mathrm{pin}}^b-\mathbf r^{\setminus i}
 \overset{\mathrm d}{=}\int_0^\ell \left.\frac{\partial\mathbf r(v)}{\partial v}\right|_{v=y_{-i}+a_s}\d \widetilde B_s^{\mathrm{loc}}
   +\int_0^\ell\mathbf b_s\d s.
\end{equation*}
{Applying \Cref{lem:bdg-stochastic-integral} under the conditional law given $X_i=b$ with $T=\ell$ and $G_s=(\partial\mathbf r/\partial y)(y_{-i}+a_s)$, and using Minkowski's inequality for the drift term, we obtain}
\begin{equation}
 \|\mathbf r_{\mathrm{pin}}^b-\mathbf r^{\setminus i}\|_{L^p}
 \leq C_p\left\|\left(\int_0^\ell\norm{\left.\frac{\partial\mathbf r(v)}{\partial v}\right|_{v=y_{-i}+a_s}}_{\mathrm F}^2\d s\right)^{1/2}
                    \right\|_{L^p}
       +\int_0^\ell\|\mathbf b_s\|_{L^p}\d s.
 \label{eq:tap-pinned-bdg}
\end{equation}
Since $\ell=O(n^{-1})$ and $M=n-1$, it suffices to prove
\begin{equation*}
 \norm{\left.\frac{\partial\mathbf r(v)}{\partial v}\right|_{v=y_{-i}+a_s}}_{L^p}\leq C_pM^{1/4},
 \qquad \|\mathbf b_s\|_{L^p}\leq C_p\sqrt M.
\end{equation*}

We next provide some useful {lemmas}.
\begin{lemma}
\label{lem:tap-influence}
The following statements hold.
\begin{enumerate}[label=\textup{(\roman*)}]
\item {Let $\mu$ be a probability distribution on $\{\pm1\}^n$ with mean $m$ satisfying $|m_j|<1$ for every $j$, and define the covariance $\Sigma$, and $D=\operatorname{diag}((1-m_j^2)^{-1})$.}
For each $i$, define the influence $\delta^{(i)}=(\delta_j^{(i)})_{j\ne i}$ by
\begin{equation*}
 \delta_j^{(i)}
 =\Pr[\sigma\sim\mu]{\sigma_j=1\mid\sigma_i=1}
 -\Pr[\sigma\sim\mu]{\sigma_j=1\mid\sigma_i=-1}.
\end{equation*}
Then we have that $(D\Sigma^2D)_{ii}-1=\|\delta^{(i)}\|_2^2$ for every $i$.
\item Under the planted model of \Cref{def:tap-planted-model}, set $J=\beta A^{\mathrm{full}}$.
If $n\ell\leq\beta_*^2$ and $t\geq0$, then for every fixed $p\geq2$
there is a constant $C_p$, depending only on $p,\beta_*$, such that
\begin{equation*}
 \sup_i\|\delta^{(i)}\|_{L^p}\leq C_p,
 \qquad \sup_i\|\Sigma_{i,\cdot}\|_{L^p}\leq C_p.
\end{equation*}
\end{enumerate}
\end{lemma}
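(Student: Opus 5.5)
Part~\textup{(i)} is an exact identity, which I would prove by expressing $\Sigma$ through the pinned laws. Fix $i$ and write $\pi_\pm=\mu(\sigma_i=\pm1)=(1\pm m_i)/2$. For $j\ne i$, conditioning on $\sigma_i$ gives
\[
\Sigma_{ij}=\E{\sigma_i\sigma_j}-m_im_j=2\pi_+\pi_-\bigl(\E{\sigma_j\mid\sigma_i=1}-\E{\sigma_j\mid\sigma_i=-1}\bigr)=\tfrac12(1-m_i^2)\cdot2\delta_j^{(i)}=(1-m_i^2)\delta_j^{(i)}.
\]
Here we use $\E{\sigma_j\mid\sigma_i=b}=2\Pr{\sigma_j=1\mid\sigma_i=b}-1$, so the difference of conditional means is $2\delta_j^{(i)}$, and $4\pi_+\pi_-=1-m_i^2$. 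In addition, $\Sigma_{ii}=1-m_i^2$. Consequently $(D\Sigma)_{i,\cdot}$ is the vector with entry $1$ at $i$ and entries $\delta^{(i)}_j$ off the diagonal. Then $(D\Sigma^2D)_{ii}=\|(D\Sigma)_{i,\cdot}\|_2^2\cdot$ correction: precisely, $(D\Sigma^2D)_{ii}=D_{ii}^2\sum_k\Sigma_{ik}^2=\sum_k\bigl(\Sigma_{ik}/(1-m_i^2)\bigr)^2=1+\|\delta^{(i)}\|_2^2$. This is purely algebraic and needs no assumptions beyond $|m_i|<1$.

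For part~\textup{(ii)}, $\Sigma_{i,\cdot}=(1-m_i^2)(1,\delta^{(i)})$, so $\|\Sigma_{i,\cdot}\|_2\le1+\|\delta^{(i)}\|_2$. It therefore suffices to bound $\|\delta^{(i)}\|_{L^p}$. The influence vector equals $(m_{+1}-m_{-1})/2=m^{\mathrm{dif}}$ in the cavity interpolation of \Cref{def:tap-cavity-pinning}, since $\Pr{\sigma_j=1\mid\sigma_i=b}=(1+(m_b)_j)/2$. Hence
\[
\|\delta^{(i)}\|_2\le\|F_1\|_2+\|F_{-1}\|_2,\qquad F_\gamma=m_\gamma-m_0 .
\]
I would condition on $(\mathcal F^{\setminus i},U)$ and apply \Cref{lem:tap-tilted-mean-displacement} to get $\|F_{\pm1}\|_{L^p(G)}\le C_p\sqrt\ell\sup_{|r|\le1}\|\Sigma_r\|_{L^p(G)}$. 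I would then apply \Cref{lem:tap-weak-tilt}, using $\ell M\le1$, to bound this by $C_p\sqrt\ell\,\langle\mathcal D_{1234}^{p/2}\rangle_{(\mu^{\setminus i})^{\otimes4}}^{1/p}$. Averaging over the cavity observations with \Cref{cor:tap-covariance}-\textup{(i)} on the $(n-1)$-spin cavity system, whose inverse temperature is $\sqrt{M\ell}\le\beta_*$, gives $\E{\langle\mathcal D_{1234}^{p/2}\rangle}\le(K\sqrt{np})^p$. This yields $\|\delta^{(i)}\|_{L^p}\le C_p\sqrt{\ell n}\le C_p\beta_*$.

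The main obstacle is a conditioning subtlety. The pinned endpoints $m_{\pm1}$ are functions of $(\mathcal F^{\setminus i},a_i)$, and under the planted law the row $a_i=\ell U+\sqrt\ell G$ has exactly the conditional structure assumed in those lemmas. So the $L^p$ norm of $\delta^{(i)}$ under the full planted law is the average over $(\mathcal F^{\setminus i},U)$ of the conditional $L^p(G)$ norms. I need to check that this is legitimate: $\delta^{(i)}$ is a deterministic function of $m_{\pm1}$ and involves no $x=m_i$ weighting, so the bound transfers directly with no extra randomness. The case $n=1$ is vacuous since $\delta^{(i)}$ is then empty.
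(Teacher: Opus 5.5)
Your proposal is correct and follows the paper's proof essentially step for step. In part~\textup{(i)}, both arguments rest on $\Sigma_{ij}=(1-m_i^2)\delta^{(i)}_j$; you obtain it by conditioning on $\sigma_i$, while the paper uses $\mathbb{E}[\sigma_j\mid\sigma_i]-m_j=\delta^{(i)}_j(\sigma_i-m_i)$. The ``correction'' you mention is unnecessary: since $\Sigma$ is symmetric, $(D\Sigma^2D)_{ii}=\|(D\Sigma)_{i,\cdot}\|_2^2$ exactly. In part~\textup{(ii)}, as in the paper, you write $\delta^{(i)}=(m_{+1}-m_{-1})/2$, condition on $(\mathcal F^{\setminus i},U)$, chain \Cref{lem:tap-tilted-mean-displacement}, \Cref{lem:tap-weak-tilt}, and \Cref{cor:tap-covariance}-\textup{(i)} to get $C_p\sqrt{\ell M}$, and conclude with $\|\Sigma_{i,\cdot}\|_2\le1+\|\delta^{(i)}\|_2$.
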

\begin{proof}
For part~\textup{(i)}, a function of a binary spin is affine in that
spin, so
\begin{equation*}
 \E[\sigma\sim\mu]{\sigma_j\mid\sigma_i}-m_j
 =\delta_j^{(i)}(\sigma_i-m_i)\qquad(j\ne i).
\end{equation*}
We multiply by $\sigma_i-m_i$ and take expectation to obtain
$\Sigma_{ij}=(1-m_i^2)\delta_j^{(i)}$.
Since $\Sigma_{ii}=1-m_i^2$ and $\Sigma$ is symmetric, we have
\begin{equation*}
 (D\Sigma^2D)_{ii}
 =\frac{\sum_j\Sigma_{ij}^2}{(1-m_i^2)^2}
 =1+\|\delta^{(i)}\|_2^2.
\end{equation*}

{For part~\textup{(ii)}, we have that $|m_j|<1$ for every $j$ and the coefficients in part~\textup{(i)} are defined.}
For $n=1$, the influence vector is empty and $|\Sigma_{11}|=1-m_1^2\leq1$, so both bounds hold.
We now assume $n\geq2$.
For the pinned means of \Cref{def:tap-cavity-pinning}, we have $\delta^{(i)}=(m_{+1}-m_{-1})/2$ and $m_0=m^{\setminus i}$.
By the triangle inequality, we obtain
\[\|\delta^{(i)}\|_{L^p}
 \leq\tfrac12\bigl(\|m_{+1}-m^{\setminus i}\|_{L^p}+\|m_{-1}-m^{\setminus i}\|_{L^p}\bigr).\]
We use the cavity interpolation to bound both mean differences.
{
We fix $i$ and set $M=n-1$, $U=X_iX_{-i}$, and $G=G_i$.
We condition on $(\mathcal F^{\setminus i},U)$.
Then $m_0$ is fixed, and each $m_b$ depends on the remaining randomness only through $G$.
Here $L^p(G)$ averages only over $G$ with $(\mathcal F^{\setminus i},U)$ fixed, whereas the unsubscripted $L^p$ norm below averages over the full planted law.
Since $M\ell\leq n\ell\leq\beta_*^2<1$, we have that }
\begingroup

\begin{align*}
 \left(\E{\|m_b-m^{\setminus i}\|_2^p\mid\mathcal F^{\setminus i},U}\right)^{1/p}
 =&~\|m_b-m_0\|_{L^p(G)}\\
 (\text{by \Cref{lem:tap-tilted-mean-displacement}})\quad\leq&~C_p\sqrt\ell\sup_{|r|\leq1}\|\Sigma_r\|_{L^p(G)}\\
 (\text{by \Cref{lem:tap-weak-tilt}})\quad\leq&~C_p\sqrt\ell\,\mathrm e^{40\ell M/p}
 \left\langle\mathcal D_{1234}^{p/2}\right\rangle_{(\mu^{\setminus i})^{\otimes4}}^{1/p}.
\end{align*}
\endgroup
\begingroup

Next, we take expectations over $\+F^{\setminus i}$ and $U$,
\begin{align*}
 \|m_b-m^{\setminus i}\|_{L^p}^p
 =&~\E{\E{\|m_b-m^{\setminus i}\|_2^p\mid\mathcal F^{\setminus i},U}}\\
 \leq&~C_p^p\ell^{p/2}\mathrm e^{40\ell M}
 \E{\left\langle\mathcal D_{1234}^{p/2}\right\rangle_{(\mu^{\setminus i})^{\otimes4}}}\\
 (\text{by \Cref{cor:tap-covariance}-\textup{(i)}})\quad\leq&~C_p^p\ell^{p/2}\mathrm e^{40\ell M}(K\sqrt{Mp})^p
 \leq\bigl(C_p\sqrt{\ell M}\bigr)^p.
\end{align*}
\endgroup
{In the last inequality, we absorb $\mathrm e^{40\ell M}$ and the fixed $p$-dependent factors into $C_p^p$.  Taking $p$th roots and substituting the resulting bound for $b=1,-1$ into the triangle inequality above, we obtain that}
\begingroup

\begin{equation*}
 \|\delta^{(i)}\|_{L^p}\leq C_p\sqrt{\ell M}\leq C_p.
\end{equation*}
\endgroup
Finally, the covariance identity from part~\textup{(i)} gives
\begin{equation*}
 \|\Sigma_{i,\cdot}\|_2
 =(1-m_i^2)\sqrt{1+\|\delta^{(i)}\|_2^2}
 \leq1+\|\delta^{(i)}\|_2.\qedhere
\end{equation*}
\end{proof}
By the above lemma, the next lemma provides upper {bounds} for the residual, covariance matrix and the drift through the localization path.
\begin{lemma}
\label{lem:tap-residual-dynamics}
{Consider the $M$-spin planted model of \Cref{def:tap-planted-model} at inverse temperature $\beta_M=\sqrt{M\ell}$, with $J=\beta_MA^{\mathrm{full}}$ and $y=Y_t$.}
{Given $(J,y)$, draw $\xi\sim\mu_{J,y}$.  Independently of $(J,y,\xi)$, draw a standard Brownian motion $(\widetilde B_s)_{s\geq0}$ in $\mathbb R^M$ starting from zero, and set} ${y_s=y+s\xi+\widetilde B_s}$.

{We use $\mathbf r$, $\mathbf b$, and $\Sigma$ from \Cref{lem:tap-residual-sde} with dimension $M$, interaction matrix $J$, and the same $\ell$.}
{Suppose that $M\ell\leq\beta_*^2$.}
There is a constant $K>0$, depending only on $\beta_*$, such that, for every fixed $p\geq2$, there is a constant $C_p$, depending only on $p,\beta_*$, such that, for all $t,s\geq0$,
\begin{equation*}
 \|\mathbf r(y_s)\|_{L^p}\le C_p,\qquad
 \norm{\Sigma(y_s)}_{L^p}\le K\sqrt{Mp},\qquad
 \|\mathbf b(y_s)\|_{L^p}\le C_p\sqrt M{.}
\end{equation*}
Here, $L^p$ norms involving this path average jointly over $(J,y,\xi,\widetilde B)$.
\end{lemma}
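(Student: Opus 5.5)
The key observation is that the path law reduces to a fixed-time law: the pair $(J,y_s)$ has the same distribution as the fixed-time planted observation $(\beta_MA^{\mathrm{full}},Y_{t+s})$. To see this, condition on $(J,y)$ and draw $\xi\sim\mu_{J,y}$. By \Cref{eq:planted-posterior}, $\xi$ has the same conditional law as the planted spin $X$. Hence $(J,y,\xi)\overset{d}{=}(J,Y_t,X)$, and then
\[
y_s=y+s\xi+\widetilde B_s\overset{d}{=}Y_t+sX+\widetilde B_s\overset{d}{=}Y_{t+s}.
\]
Moreover, the posterior given $(J,y_s)$ is again $\mu_{J,y_s}$. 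So every quantity $\mathbf r(y_s)$, $\Sigma(y_s)$, $\mathbf b(y_s)$ is a fixed-time quantity of the $M$-spin planted model at time $t+s$. Because the constants in \Cref{lem:tap-complete-row}, \Cref{cor:tap-covariance} and \Cref{lem:tap-influence} do not depend on $t$, it suffices to prove the three bounds at $s=0$, uniformly in $t$.

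\emph{Residual.} At $s=0$, $\mathbf r(y)$ is exactly the empirical residual $\mathbf r^{(Q)}_t$ of the $M$-spin system, with $\ell$ held fixed. Its Onsager term $\ell(M-\|m\|_2^2)$ equals $\beta_M^2(1-Q_M(m))$, so it matches \cref{eq:tap-prescribed-residual} in dimension $M$. By Minkowski on $\|\mathbf r\|_2^2=\sum_j r_j^2$ in $L^{p/2}$,
\[
\|\mathbf r\|_{L^p}^2\le\sum_j\|r_j\|_{L^p}^2\le M\cdot C_p^2M^{-1}=C_p^2,
\]
using \Cref{lem:tap-complete-row}-(iii) for the $M$-spin model.

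\emph{Covariance.} We have $\|\Sigma\|_{\mathrm F}^p=(\operatorname{Tr}\Sigma^2)^{p/2}\le\langle\mathcal D_{1234}^{p/2}\rangle$ by Jensen. \Cref{cor:tap-covariance}-(i) then gives $\|\Sigma\|_{L^p}\le K\sqrt{Mp}$.

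\emph{Drift.} I split $\mathbf b$ into its three pieces. For the first piece, \Cref{lem:tap-influence}-(i) gives coordinate $j$ as $m_j\|\delta^{(j)}\|_2^2$. Hence its Euclidean norm is at most $(\sum_j\|\delta^{(j)}\|_2^4)^{1/2}$, and by Minkowski in $L^{p/2}$ together with \Cref{lem:tap-influence}-(ii) at order $2p$, its $L^p$ norm is at most $\sqrt M\,C_{2p}^2$. For the second piece, $\ell\operatorname{Tr}(\Sigma^2)\|m\|_2\le\ell\sqrt M\,\|\Sigma\|_{\mathrm F}^2$. By the covariance bound at order $2p$, this has $L^p$ norm at most $\ell\sqrt M\,K^2M\cdot 2p\le C_p\sqrt M$, since $\ell M\le1$. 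For the third piece, $\|\Sigma^2m\|_2\le\|\Sigma\|_{\mathrm{op}}\|\Sigma m\|_2$. I would bound it more crudely by $\|\Sigma\|_{\mathrm F}^2\sqrt M$ and then use the same estimate as the second piece.

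The main obstacle is justifying the distributional identification rigorously: $\xi$ is drawn from the posterior rather than being the true plant, and the diagonal of $J$ has to be handled. Once the Nishimori identity is stated at the level of $(J,y,\xi)$ jointly, Brownian additivity gives $Y_{t+s}$, and the rest is bookkeeping of the fixed-time bounds.
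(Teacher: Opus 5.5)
Your proposal is correct and follows essentially the same route as the paper. The posterior (Nishimori) identity gives $(J,y_s)\overset{\mathrm d}{=}(J,Y_{t+s})$ in the $M$-spin planted model. The residual bound then follows from \Cref{lem:tap-complete-row}-(iii) with Minkowski in $L^{p/2}$, and the drift bound from the influence identity of \Cref{lem:tap-influence} together with $\|\Sigma^2 m\|_2\le \operatorname{Tr}(\Sigma^2)\sqrt M$ and the covariance moments. The only cosmetic difference is that you obtain $\|\Sigma\|_{L^p}\le K\sqrt{Mp}$ from Jensen and \Cref{cor:tap-covariance}-(i), whereas the paper applies \Cref{cor:tap-covariance}-(ii) at order $p/2$; the two are equivalent.
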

\begin{proof}
We fix $t,s\geq0$.
By the posterior identity, $(J,y,\xi)$ and $(J,y,X)$ have the same joint law.
In the representation below, $g\sim N(0,I_M)$ is independent of $X$, the interaction noises, and the Brownian motion $(\widetilde B_s)_{s\geq0}$.
Since $\widetilde B$ is independent, we have
\begin{equation}
 (J,y_s)\overset{\mathrm d}{=}
 (J,(t+s)X+\sqrt t\,g+\widetilde B_s).
 \label{eq:tap-path-planted-marginal}
\end{equation}
{Given $X$, the noise on the right has law $N(0,(t+s)I_M)$ and is independent of the interaction noises.}
Thus we can apply the planted-model estimates at time $t+s$.

By \Cref{lem:tap-complete-row} at dimension $M$ and time $t+s$,
we have $\norm{r_i(y_s)}_{L^p}\leq C_pM^{-1/2}$.
Since $p/2\geq1$, we apply Minkowski's inequality in $L^{p/2}$:
\begin{equation*}
 \norm{\mathbf r(y_s)}_{L^p}^2
 =\left\|\sum_{i=1}^M r_i(y_s)^2\right\|_{L^{p/2}}
 \leq\sum_{i=1}^M\norm{r_i(y_s)}_{L^p}^2
 \leq M(C_pM^{-1/2})^2=C_p^2.
\end{equation*}
For the covariance matrix $\Sigma(y_s)$, we use $\norm{\Sigma(y_s)}_{\mathrm F}^2=\operatorname{Tr}(\Sigma(y_s)^2)$ and apply \Cref{cor:tap-covariance}-\textup{(ii)} at moment order $p/2$.
After increasing $K$ if necessary, we obtain
\begin{equation*}
 \norm{\Sigma(y_s)}_{L^p}
 =\bigl\|\operatorname{Tr}(\Sigma(y_s)^2)\bigr\|_{L^{p/2}}^{1/2}
 \leq K\sqrt{Mp}.
\end{equation*}

For the drift $\mathbf b(y_s)$, we write $m={m_J}(y_s)$, $\Sigma=\Sigma(y_s)$,
and $D=\operatorname{diag}((1-m_i^2)^{-1})$.
{For each $i$, we use the coefficients $\delta^{(i)}=(m_{+1}-m_{-1})/2$ from \Cref{lem:tap-influence}, where $m_{\pm1}$ are the pinned means obtained by deleting coordinate $i$ from the model with interaction $J$ and external field $y_s$.}
Substituting $(D\Sigma^2D)_{ii}-1=\|\delta^{(i)}\|_2^2$
into the drift formula in \Cref{lem:tap-residual-sde}, we obtain
\begin{equation*}
 b_i(y_s)
 =m_i\|\delta^{(i)}\|_2^2
   -\ell\operatorname{Tr}(\Sigma^2)m_i-2\ell(\Sigma^2m)_i.
\end{equation*}
We next bound the contribution of $m_i\|\delta^{(i)}\|_2^2$.
Since $|m_i|\leq1$, by Minkowski's inequality in $L^{p/2}$ and
\Cref{lem:tap-influence}-\textup{(ii)}, we have
\begin{align*}
 \E{\left(\sum_{i=1}^M m_i^2\|\delta^{(i)}\|_2^4\right)^{p/2}}^{2/p}
 = \left\|\sum_{i=1}^M m_i^2\|\delta^{(i)}\|_2^4
                                      \right\|_{L^{p/2}}
 \leq \sum_{i=1}^M
        \bigl\|m_i^2\|\delta^{(i)}\|_2^4\bigr\|_{L^{p/2}}
 \leq \sum_{i=1}^M\|\delta^{(i)}\|_{L^{2p}}^4
 \leq M C_{2p}^4.
\end{align*}
Taking square roots bounds this part by $C_p\sqrt M$.

Then, we bound $\ell\operatorname{Tr}(\Sigma^2)m$ and $2\ell \Sigma^2m$.
Since $\Sigma$ is symmetric, $\Sigma^2\succeq0$ and
$\norm{\Sigma^2}_{\mathrm{op}}\leq\operatorname{Tr}(\Sigma^2)$.
By $\norm m_2\leq\sqrt M$ and
$\norm{\Sigma^2m}_2\leq\norm{\Sigma^2}_{\mathrm{op}}\norm m_2$, we have
\begin{align*}
 \ell\operatorname{Tr}(\Sigma^2)\norm m_2+2\ell\norm{\Sigma^2m}_2
 \leq&~\ell\operatorname{Tr}(\Sigma^2)\norm m_2
       +2\ell\norm{\Sigma^2}_{\mathrm{op}}\norm m_2
 \leq 3\ell\sqrt M\,\operatorname{Tr}(\Sigma^2).
\end{align*}
By \Cref{cor:tap-covariance}-\textup{(ii)}, we have
$\norm{\operatorname{Tr}(\Sigma^2)}_{L^p}\leq KMp$.
We apply the triangle inequality to the three drift terms and obtain
\begin{equation*}
 \norm{\mathbf b(y_s)}_{L^p}
 \leq C_p\sqrt M+3\ell\sqrt M\,\norm{\operatorname{Tr}(\Sigma^2)}_{L^p}
 \leq C_p\sqrt M+3Kp(\ell M)\sqrt M
 \leq C'_p\sqrt M,
\end{equation*}
where we used $\ell M\leq\beta_*^2<1$.
\end{proof}
Next, we upper bound the norm of {the residual's Jacobian}.
Second-moment bounds for the corresponding covariance discrepancy were proved for zero-field SK in \cite[Theorem~1.3]{ElAlaouiGaitonde24} and for the planted localization model at $\beta<1/2$ in \cite[Theorem~4.10]{DaviesLeeSandhuShi26}; here we need $L^p$ control, but weaker dependence on the dimension suffices.
\begin{lemma}
\label{lem:tap-susceptibility}
Let $(J,y)$ follow the $M$-spin planted model of \Cref{def:tap-planted-model} at inverse temperature $\sqrt{M\ell}$, with $J=\sqrt{M\ell}A^{\mathrm{full}}$ and $y=Y_t$.
Suppose that $M\ell\leq\beta_*^2$ and $t\geq0$.
{Here $n=M$ specifies the dimension in that definition, and $\mathbf r$ is the residual in \cref{eq:tap-M-residual} for the resulting $M\times M$ matrix $J$.}
For every fixed $p\geq2$, there is a constant $C_p$,
depending only on $p,\beta_*$, such that
\begin{equation*}
 \norm{\frac{\partial\mathbf r(y)}{\partial y}}_{L^p}\leq C_pM^{1/4}.
\end{equation*}
\end{lemma}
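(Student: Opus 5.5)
Writing $\partial\mathbf r/\partial y=H\Sigma-I_M$ with $H=D-J+\ell(M-\|m\|_2^2)I_M-2\ell mm^\top$, I will bound the Frobenius norm in $L^p$ by \emph{Gaussian smoothing}: $H\Sigma-I_M$ is the Jacobian of the residual map $y\mapsto\mathbf r(y)$, and we already control $\mathbf r$ itself in $L^p$ along the localization path. The idea is to average the residual over a small Gaussian perturbation of the field. Concretely, I use the localization path of \Cref{lem:tap-residual-dynamics} started at $y$. I run it for a short time $s=\epsilon$ and apply the SDE of \Cref{lem:tap-residual-sde}:
\[
 \mathbf r(y_\epsilon)-\mathbf r(y)=\int_0^\epsilon \frac{\partial\mathbf r}{\partial v}(y_u)\,\mathrm d\widetilde B_u^{\mathrm{loc}}+\int_0^\epsilon\mathbf b(y_u)\,\mathrm du.
\]
The left side is bounded in $L^p$ by $2C_p$ by \Cref{lem:tap-residual-dynamics}, and the drift integral by $C_p\epsilon\sqrt M$. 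A reverse BDG inequality (the lower bound in \cite{MarinelliRockner16}) then gives
\[
 \Bigl\|\Bigl(\int_0^\epsilon\|\partial_v\mathbf r(y_u)\|_{\mathrm F}^2\,\mathrm du\Bigr)^{1/2}\Bigr\|_{L^p}\le C_p(1+\epsilon\sqrt M).
\]

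To pass from this time-averaged bound to the value at $u=0$, I will control the increment of the Jacobian along the path. Differentiating $H\Sigma$ in $v$ involves third cumulants and the derivative of $D$, which are awkward. Instead, I compare $\partial_v\mathbf r(y_u)$ with $\partial_v\mathbf r(y)$ crudely using the stochastic calculus representation $\mathrm d\Sigma$ (third cumulant times $\mathrm d\widetilde B$, plus drift), with entrywise bounds by $O(1)$ per coordinate. The terms with $D$ are handled through the influence identity of \Cref{lem:tap-influence}. This yields $\|\partial_v\mathbf r(y_u)-\partial_v\mathbf r(y)\|_{L^p}\le C_p(\sqrt{u}\,M+uM^{3/2})$ or a similar polynomial bound.

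Combining, $\sqrt\epsilon\,\|\partial_v\mathbf r(y)\|_{L^p}\le C_p(1+\epsilon\sqrt M)+\sqrt\epsilon\cdot(\text{increment})$. I choose $\epsilon=M^{-1/2}$, which balances $1/\sqrt\epsilon=M^{1/4}$ against $\epsilon\sqrt M/\sqrt\epsilon=M^{1/4}$ and gives the claimed $C_pM^{1/4}$, provided the increment term is also $O(M^{1/4})$ at this scale. Throughout, I use \eqref{eq:tap-path-planted-marginal} to evaluate everything under the planted law at time $t+u$, so that all constants stay uniform.

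The main obstacle is the increment step. A naive bound on $\|\partial_v\mathbf r(y_u)-\partial_v\mathbf r(y)\|$ is too large because $D$ can blow up. If the increment does not close, my fallback is to avoid pointwise comparison altogether. Since $\partial_v\mathbf r$ is symmetric-like, Gaussian smoothing can be done at the level of the function itself: define $\tilde{\mathbf r}(y)=\E{\mathbf r(y+\sqrt\epsilon\,g)}$ via the path's Markov semigroup and use Gaussian integration by parts, $\partial\tilde{\mathbf r}=\epsilon^{-1/2}\E{\mathbf r(\cdot)\,g^\top}$. This bounds the smoothed Jacobian by $\epsilon^{-1/2}\|\mathbf r\|$, and the difference $\partial\mathbf r-\partial\tilde{\mathbf r}$ is handled by the drift bound. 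This again balances at $\epsilon=M^{-1/2}$.
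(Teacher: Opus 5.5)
Your reverse-BDG step is correct: together with Doob's inequality it bounds $\|(\int_0^\epsilon\|\partial_v\mathbf r(y_u)\|_{\mathrm F}^2\,\mathrm du)^{1/2}\|_{L^p}$ by $C_p(1+\epsilon\sqrt M)$. However, it only controls a time average along the path, while the lemma asks for the value at $u=0$ for each fixed $t$. The increment bound you propose does not bridge this. The quantity $C_p(\sqrt u\,M+uM^{3/2})$ is of order $M^{3/4}+M$ at $u=M^{-1/2}$. Even after optimizing, $\epsilon^{-1/2}+\sqrt\epsilon\,M$ is minimized at $\epsilon=M^{-1}$ and yields only $O(M^{1/2})$. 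That is not the statement, and it is too weak for the deletion-stability application, where $\sqrt\ell\,\|\partial\mathbf r\|$ must be $o(1)$. An $O(M^{1/4})$ increment bound for $H\Sigma-I$ on time scales $M^{-1/2}$ is essentially the lemma itself, so the main route has a genuine gap.

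Your fallback has the right architecture, and it is the paper's. Integrating the SDE of \Cref{lem:tap-residual-sde} and taking conditional expectations gives $\mathbf r(y)=(P_h\mathbf r)(y)-\int_0^h(P_s\mathbf b)(y)\,\mathrm ds$, where $(P_sF)(y)=\mathbb E[F(y_s)\mid J,y]$. The It\^o integral is a true martingale because $\partial\mathbf r$ is bounded for fixed $J$. One then differentiates and takes $h=M^{-1/2}$.

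But the key derivative formula is wrong as stated. You cannot smooth with $y+\sqrt\epsilon\,g$, for two reasons. First, the pair $(J,y+\sqrt\epsilon\,g)$ is not planted: its signal is $tX$ but its noise variance is $t+\epsilon$, so \Cref{lem:tap-residual-dynamics} gives no moments there. Second, for the heat semigroup the difference $\mathbf r-Q_h\mathbf r$ is governed by $\tfrac12\Delta\mathbf r=\mathbf b-(\partial\mathbf r)\,m_J$, which contains the unknown Jacobian, so the argument is circular.

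The localization semigroup does keep $(J,y_s)$ planted at time $t+s$. Its transition density is $p_s(y,y')=(2\pi s)^{-M/2}\exp(-\|y'-y\|_2^2/(2s)-Ms/2)\,\mathcal Z_J(y')/\mathcal Z_J(y)$. Hence $\nabla_y\log p_s=(y'-y)/s-m_J(y)$ and
\[
 \partial_y(P_sF)(y)=\mathbb E\bigl[F(y_s)\bigl(\xi-m_J(y)+\widetilde B_s/s\bigr)^\top\bigm| J,y\bigr],
\]
not $s^{-1/2}\mathbb E[F\,g^\top]$. This score has conditional covariance $\Sigma(y)+s^{-1}I_M$. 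Cauchy--Schwarz therefore gives
\[
 \|\partial_yP_sF\|_{L^p}\le s^{-1/2}\|F(y_s)\|_{L^p}+\bigl\|\,\|\Sigma(y)\|_{\mathrm{op}}\bigr\|_{L^p}^{1/2}\|F(y_s)\|_{L^{2p}}.
\]
The second term is absent from your formula. By \Cref{cor:tap-covariance} it is of order $(K\sqrt{Mp})^{1/2}$, and it is itself a source of the exponent $1/4$.

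Similarly, ``handled by the drift bound'' must mean applying this gradient estimate to $F=\mathbf b$. That gives $C_p\sqrt M(s^{-1/2}+M^{1/4})$, which is integrable in $s$. The bound $\|\mathbf b\|_{L^p}\le C_p\sqrt M$ by itself controls no derivative. With these corrections, plus the routine domination arguments for differentiating under the integrals, the choice $h=M^{-1/2}$ gives $C_pM^{1/4}$.
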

\begin{proof}
We use the path $y_s$ from \Cref{lem:tap-residual-dynamics}.
For a function $F$ of the external field and $J$, we set
$(P_sF)(y)=\E{F(y_s)\mid J,y}$.
Integrating the SDE in \Cref{lem:tap-residual-sde} to time $0<h\leq1$
and taking expectation given $(J,y)$, we obtain
\begin{equation}
 \mathbf r(y)=(P_h\mathbf r)(y)-\int_0^h(P_s\mathbf b)(y)\d s.
 \label{eq:tap-smoothing-residual-identity}
\end{equation}
We will differentiate these two averages to bound the residual Jacobian;
the required integrability and differentiation justifications are given below.

We claim that, if $F$ is measurable and bounded as a function of the external field
for each fixed $J$, and $\norm{F(y_s)}_{L^{2p}}$ is finite, then
\begin{equation}
 \norm{\frac{\partial(P_sF)(y)}{\partial y}}_{L^p}
 \leq s^{-1/2}\norm{F(y_s)}_{L^p}
    +\bigl\|\,\|\Sigma(y)\|_{\mathrm{op}}\,\bigr\|_{L^p}^{1/2}
       \norm{F(y_s)}_{L^{2p}}.
 \label{eq:tap-smoothing-gradient}
\end{equation}

Assuming \cref{eq:tap-smoothing-gradient}, we first bound the residual Jacobian.
For $0<s\leq h$, we apply \Cref{lem:tap-residual-dynamics} at time $t+s$.
Using $\|\Sigma(y)\|_{\mathrm{op}}\leq\|\Sigma(y)\|_{\mathrm F}$, we substitute its bounds at orders $p$ and $2p$ and obtain
\begin{align*}
 \norm{\frac{\partial(P_h\mathbf r)(y)}{\partial y}}_{L^p}
 \leq&~C_p h^{-1/2}
       +(K\sqrt{Mp})^{1/2}C_{2p}
 \leq C_p(h^{-1/2}+M^{1/4}),\\
 \norm{\frac{\partial(P_s\mathbf b)(y)}{\partial y}}_{L^p}
 \leq&~C_p\sqrt M\,s^{-1/2}
       +(K\sqrt{Mp})^{1/2}C_{2p}\sqrt M
 \leq C_p\sqrt M(s^{-1/2}+M^{1/4}).
\end{align*}
We differentiate \cref{eq:tap-smoothing-residual-identity}, take
$L^p$ norms, and apply Minkowski to the integral:
\begin{align*}
 \norm{\frac{\partial\mathbf r(y)}{\partial y}}_{L^p}
 \leq&~\norm{\frac{\partial(P_h\mathbf r)(y)}{\partial y}}_{L^p}
       +\int_0^h
         \norm{\frac{\partial(P_s\mathbf b)(y)}{\partial y}}_{L^p}\d s
 \leq C_p(h^{-1/2}+M^{1/4}
                +\sqrt M(2\sqrt h+hM^{1/4})).
\end{align*}
We choose $h=M^{-1/2}\leq1$.  Then
$h^{-1/2}=\sqrt M\sqrt h=\sqrt M hM^{1/4}=M^{1/4}$,
so the right side is at most $C_pM^{1/4}$.

It remains to prove \cref{eq:tap-smoothing-gradient} and justify its
application to $\mathbf r$ and $\mathbf b$.
We fix $J$ and $s>0$, and let $p_s(y,y')$ be the conditional
density of $y_s$ evaluated at $y'\in\mathbb R^M$.
We write the conditional average as
\begin{equation*}
 (P_sF)(y)=\E{F(y_s)\mid J,y}
          =\int_{\mathbb R^M}F(y')p_s(y,y')\d y'.
\end{equation*}
When differentiating in the initial field $y$, we hold $J,s,y'$ fixed,
so $F(y')$ does not change.  We compute the density and its derivative
before justifying the exchange with the integral.  We set
\begin{equation*}
 \mathcal Z_J(y)=\sum_{\sigma\in\{\pm1\}^M}
      \exp(\inner{\sigma}{J\sigma}/2+\inner y\sigma).
\end{equation*}
{Given $(J,y,\xi=\sigma)$, we have $y_s\sim N(y+s\sigma,sI_M)$.}  We average these Gaussian densities
with the Gibbs probabilities of $\sigma$ to obtain
\begin{equation} \label{eq:tap-smoothing-kernel}
\begin{aligned}
 p_s(y,y')
 =&~\sum_{\sigma\in\{\pm1\}^M}
 \frac{\exp(\inner{\sigma}{J\sigma}/2+\inner y\sigma)}{\mathcal Z_J(y)}
 (2\pi s)^{-M/2}
       \exp\left(-\frac{\norm{y'-y-s\sigma}_2^2}{2s}\right)\\
 =&~(2\pi s)^{-M/2}
       \exp\left(-\frac{\norm{y'-y}_2^2}{2s}-\frac{Ms}{2}\right)
       \frac{\mathcal Z_J(y')}{\mathcal Z_J(y)},
\end{aligned}
\end{equation}
where the second equality uses $\norm\sigma_2^2=M$ and $\inner y\sigma+\inner{y'-y}\sigma=\inner{y'}\sigma$.

We differentiate in $y$, using $\nabla_y\log\mathcal Z_J(y)={m_J}(y)$:
\begin{equation*}
 \nabla_y\log p_s(y,y')
 =-\nabla_y\frac{\|y'-y\|_2^2}{2s}-\nabla_y\log\mathcal Z_J(y)
 =\frac{y'-y}{s}-{m_J}(y).
\end{equation*}
Multiplying by $p_s(y,y')$, we obtain
\begin{equation*}
 \nabla_yp_s(y,y')=p_s(y,y')
                 \left(\frac{y'-y}{s}-{m_J}(y)\right).
\end{equation*}

Then, we have
\begin{align*}
 \frac{\partial(P_sF)(y)}{\partial y}
 =&~\int_{\mathbb R^M}F(y')\bigl(\nabla_yp_s(y,y')\bigr)^\top\d y'\\
 =&~\int_{\mathbb R^M}F(y')
        \left(\frac{y'-y}{s}-{m_J}(y)\right)^\top p_s(y,y')\d y'\\
 =&~\E{F(y_s)\left(\frac{y_s-y}{s}-{m_J}(y)\right)^\top\mid J,y}.
\end{align*}
{We will apply Cauchy--Schwarz with $(J,y)$ fixed, taking $U=F(y_s)$ and $V=(y_s-y)/s-{m_J}(y)$.}
For any square-integrable $U,V$, with expectations taken under
their joint law, its rowwise form is
\begin{align*}
 \norm{\E{U_jV}}_2^2
 =&~\sup_{\norm v_2=1}\abs{\E{U_j\inner vV}}^2
 \leq \E{U_j^2}\sup_{\norm v_2=1}\E{\inner vV^2}
 =\E{U_j^2}\norm{\E{VV^\top}}_{\mathrm{op}}.
\end{align*}
Summing over $j$, we obtain
\begin{equation*}
 \norm{\E{UV^\top}}_{\mathrm F}^2
 \leq\norm{\E{VV^\top}}_{\mathrm{op}}\,\E{\norm U_2^2}.
\end{equation*}
Under the conditional law given $(J,y)$, we have $V=\xi-{m_J}(y)+\widetilde B_s/s$.
By the definitions of ${m_J}(y)$ and $\Sigma(y)$, we have
\begin{equation*}
 \E{\xi-{m_J}(y)\mid J,y}=0,\qquad
 \E{(\xi-{m_J}(y))(\xi-{m_J}(y))^\top\mid J,y}=\Sigma(y).
\end{equation*}
Under the same fixed $(J,y)$, we use $\widetilde B_s\sim N(0,sI_M)$ to compute
\begin{equation*}
 \E{\widetilde B_s/s\mid J,y}=0,\qquad
 s^{-2}\E{\widetilde B_s\widetilde B_s^\top\mid J,y}
 =s^{-2}sI_M=s^{-1}I_M.
\end{equation*}
By independence of $\xi$ and $\widetilde B_s$ given $(J,y)$,
we compute the cross term as
\begin{equation*}
 \E{(\xi-{m_J}(y))(\widetilde B_s/s)^\top\mid J,y}
 =\E{\xi-{m_J}(y)\mid J,y}\E{\widetilde B_s/s\mid J,y}^\top=0.
\end{equation*}
The other cross term is its transpose and also vanishes.
Expanding the product, we obtain
\begin{align*}
 &~\E{\left(\frac{y_s-y}{s}-{m_J}(y)\right)
       \left(\frac{y_s-y}{s}-{m_J}(y)\right)^\top\mid J,y}\\
 =&~\E{(\xi-{m_J}(y))(\xi-{m_J}(y))^\top\mid J,y}
   +s^{-2}\E{\widetilde B_s\widetilde B_s^\top\mid J,y}
 =\Sigma(y)+s^{-1}I_M.
\end{align*}
Substituting into the Cauchy--Schwarz bound, we obtain
\begin{align*}
 \norm{\frac{\partial(P_sF)(y)}{\partial y}}_{\mathrm F}
 \leq&~\norm{\Sigma(y)+s^{-1}I_M}_{\mathrm{op}}^{1/2}
       \bigl(\E{\norm{F(y_s)}_2^2\mid J,y}\bigr)^{1/2}\\
 \leq&~(s^{-1/2}+\norm{\Sigma(y)}_{\mathrm{op}}^{1/2})
       \bigl(\E{\norm{F(y_s)}_2^2\mid J,y}\bigr)^{1/2}.
\end{align*}
For any $q\geq2$, we fix $(J,y)$ and apply Jensen's inequality with exponent $q/2$:
\begin{equation*}
 \E{\bigl(\E{\|F(y_s)\|_2^2\mid J,y}\bigr)^{q/2}}
 \leq\E{\E{\|F(y_s)\|_2^q\mid J,y}}
 =\E{\|F(y_s)\|_2^q}.
\end{equation*}
Taking $q$th roots, we obtain
\begin{equation*}
 \left\|\bigl(\E{\norm{F(y_s)}_2^2\mid J,y}\bigr)^{1/2}\right\|_{L^q}
 \leq\norm{F(y_s)}_{L^q}.
\end{equation*}
We now take $L^p$ norms over $(J,y)$.  We use this inequality with $q=p$
for the term with coefficient $s^{-1/2}$, and Hölder with exponents
$2,2$, followed by the same inequality with $q=2p$, for the other term.
We obtain
\begin{equation*}
 \norm{\frac{\partial(P_sF)(y)}{\partial y}}_{L^p}
 \leq s^{-1/2}\norm{F(y_s)}_{L^p}
    +\bigl\|\,\|\Sigma(y)\|_{\mathrm{op}}\,\bigr\|_{L^p}^{1/2}
       \norm{F(y_s)}_{L^{2p}},
\end{equation*}
which proves \cref{eq:tap-smoothing-gradient}.

We now justify the steps deferred above.
We first establish the estimates needed for these steps.
We fix $J,M$.  Each probability $\mu_J(y)(\sigma)$
is a ratio of finite sums of exponentials in $y$, with a strictly
positive denominator.  Thus ${m_{J,j}}(y)=\sum_\sigma\sigma_j\mu_J(y)(\sigma)$
and $\Sigma_{jk}(y)=\sum_\sigma\sigma_j\sigma_k\mu_J(y)(\sigma)-{m_{J,j}}(y){m_{J,k}}(y)$
are infinitely differentiable in $y$.  

We next bound $\operatorname{arctanh}\tp{{m_{J,j}}(y)}-y_j$.
We write $\mu_{J^{\setminus j}}(y_{-j})$ for the Gibbs distribution on
$\{\pm1\}^{M-1}$ with interaction $J^{\setminus j}$ and external field $y_{-j}$.
{Here we delete $j$ from the $M$-spin system of this lemma, so $J^{\setminus j}$ has size $(M-1)\times(M-1)$.}
By summing the Gibbs weights separately over $\sigma_j=1$ and
$\sigma_j=-1$, we have
\begin{equation*}
 \frac{1+{m_{J,j}}(y)}{1-{m_{J,j}}(y)}
 =\frac{\mu_J(y)\{\sigma_j=1\}}{\mu_J(y)\{\sigma_j=-1\}}
 =\mathrm e^{2y_j}
 \frac{\E[\sigma\sim\mu_{J^{\setminus j}}(y_{-j})]{\mathrm e^{\sum_{k\ne j}J_{jk}\sigma_k}}}
      {\E[\sigma\sim\mu_{J^{\setminus j}}(y_{-j})]{\mathrm e^{-\sum_{k\ne j}J_{jk}\sigma_k}}}.
\end{equation*}
Taking half the logarithm, we obtain
\begin{equation*}
 \operatorname{arctanh}\tp{{m_{J,j}}(y)}-y_j
 =\frac12\log
 \frac{\E[\sigma\sim\mu_{J^{\setminus j}}(y_{-j})]{\mathrm e^{\sum_{k\ne j}J_{jk}\sigma_k}}}
      {\E[\sigma\sim\mu_{J^{\setminus j}}(y_{-j})]{\mathrm e^{-\sum_{k\ne j}J_{jk}\sigma_k}}}.
\end{equation*}
Since $|\sum_{k\ne j}J_{jk}\sigma_k|\leq\sum_{k\ne j}|J_{jk}|$,
both expectations in this ratio lie between
$\mathrm e^{-\sum_{k\ne j}|J_{jk}|}$ and
$\mathrm e^{\sum_{k\ne j}|J_{jk}|}$.  Hence
$\abs{\operatorname{arctanh}\tp{{m_{J,j}}(y)}-y_j}\leq\sum_{k\ne j}|J_{jk}|$.
We recall that $D(y)=\operatorname{diag}((1-{m_{J,j}}(y)^2)^{-1})_{j=1}^M$.
Also, $|\Sigma_{jk}|\leq1$ by Cauchy--Schwarz and the unit variance bound
for spins.  As proved in \Cref{lem:tap-influence},
$\Sigma_{jk}=(1-m_j^2)\delta_k^{(j)}$ for $k\ne j$.
Since $\delta_k^{(j)}$ is a difference of two probabilities,
$|\delta_k^{(j)}|\leq1$.  Therefore
$|(D\Sigma)_{jk}|\leq1$, also on the diagonal where $(D\Sigma)_{jj}=1$.
We now bound $\mathbf r(y)$, its Jacobian, and $\mathbf b(y)$ independently
of $y$.  By the bound on $\operatorname{arctanh}\tp{{m_{J,j}}(y)}-y_j$
and Cauchy--Schwarz in each row of $J$, we have
$\|\operatorname{arctanh}({m_J}(y))-y\|_2^2
\leq\sum_j(\sum_{k\ne j}|J_{jk}|)^2\leq M\|J\|_{\mathrm F}^2$.
Also $\|{m_J}(y)\|_2\leq\sqrt M$, $\|\Sigma(y)\|_{\mathrm F}\leq M$,
and $\|D(y)\Sigma(y)\|_{\mathrm F}\leq M$.
By the formulas for $\mathbf r$, its Jacobian, and $\mathbf b$, we obtain
\begin{align*}
 \norm{\mathbf r(y)}_2
 \leq&~\sqrt M(\norm J_{\mathrm F}+\norm J_{\mathrm{op}}+\ell M),\\
 \norm{\frac{\partial\mathbf r(y)}{\partial y}}_{\mathrm F}
 \leq&~\|D(y)\Sigma(y)\|_{\mathrm F}+\|J\|_{\mathrm{op}}\|\Sigma(y)\|_{\mathrm F}
       +3\ell M\|\Sigma(y)\|_{\mathrm F}+\sqrt M\\
 \leq&~M(1+\norm J_{\mathrm{op}})+3\ell M^2+\sqrt M,\\
 \norm{\mathbf b(y)}_2
 \leq&~\sqrt M(M+1)+3\ell M^2\sqrt M.
\end{align*}
To verify the bound on $\mathbf b(y)$, we use
$(D(y)\Sigma(y)^2D(y))_{jj}=\sum_k(D(y)\Sigma(y))_{jk}^2\leq M$
and $\operatorname{Tr}(\Sigma(y)^2)=\sum_{j,k}\Sigma_{jk}(y)^2\leq M^2$.
The three terms in its formula in \Cref{lem:tap-residual-sde} satisfy
\begin{align*}
 \norm{\bigl({m_{J,j}}(y)((D(y)\Sigma(y)^2D(y))_{jj}-1)\bigr)_{j=1}^M}_2
 \leq&~\sqrt M(M+1),\\
 \norm{\ell\operatorname{Tr}(\Sigma(y)^2){m_J}(y)}_2
 \leq&~\ell M^2\sqrt M,\\
 \norm{2\ell \Sigma(y)^2{m_J}(y)}_2
 \leq&~2\ell\norm{\Sigma(y)}_{\mathrm{op}}^2\sqrt M
 \leq2\ell M^2\sqrt M.
\end{align*}
We now justify differentiation under the integral over $y'$.
We fix $J,s>0$ and restrict $\|y\|_2\leq R$.
Every Gaussian center in \cref{eq:tap-smoothing-kernel} has norm
at most $R+s\sqrt M$.  Using
$\|y'-z\|_2^2\geq\|y'\|_2^2/2-\|z\|_2^2$, we obtain
\begin{equation*}
 p_s(y,y')\leq A_{s,R}\mathrm e^{-\|y'\|_2^2/(4s)},\qquad
 \|\nabla_y p_s(y,y')\|_2
 \leq A_{s,R}\left(\frac{\|y'\|_2+R}{s}+\sqrt M\right)
                  \mathrm e^{-\|y'\|_2^2/(4s)},
\end{equation*}
where $A_{s,R}=(2\pi s)^{-M/2}
\exp((R+s\sqrt M)^2/(2s))$.
Both bounds are integrable in $y'$.  Multiplying by the fixed-$J$
bound on $F$ provides a common integrable bound for the integrand and
its derivatives.  Differentiation under the integral is therefore valid.
The bounds on $\mathbf r$ and $\mathbf b$ above verify this condition
for both functions used in the proof.

We next justify the three operations in
\cref{eq:tap-smoothing-residual-identity} and its derivative.
In the following conditional expectations, we fix $(J,y)$ and average
over $\xi\sim\mu_{J,y}$ and the independent Brownian motion generating $y_s$.
First, the Jacobian bound ensures
\begin{equation*}
 \E{\int_0^h
 \norm{\left.\frac{\partial\mathbf r(v)}{\partial v}\right|_{v=y_s}}_{\mathrm F}^2
 \d s\mid J,y}
 \leq h\sup_v\norm{\frac{\partial\mathbf r(v)}{\partial v}}_{\mathrm F}^2
 <\infty.
\end{equation*}
For fixed $(J,y)$, the matrix
$\left.\frac{\partial\mathbf r(v)}{\partial v}\right|_{v=y_s}$
is a deterministic function of the current external field $y_s$.  It is therefore
measurable with respect to $\sigma(y_u:0\leq u\leq s)$, the filtration
in which $\widetilde B^{\mathrm{loc}}$ is Brownian.
The integrand is continuous and adapted.  Together with square
integrability, this makes the It\^o integral a martingale, so
\begin{equation*}
 \E{\int_0^h
 \left.\frac{\partial\mathbf r(v)}{\partial v}\right|_{v=y_s}
 \d \widetilde B_s^{\mathrm{loc}}\mid J,y}=0.
\end{equation*}

Second, the bound on $\mathbf b$ ensures absolute integrability:
\begin{equation*}
 \E{\int_0^h\norm{\mathbf b(y_s)}_2\d s\mid J,y}
 \leq h\sup_v\norm{\mathbf b(v)}_2<\infty.
\end{equation*}
By Fubini's theorem, applied coordinatewise with $(J,y)$ fixed, we have
\begin{equation*}
 \E{\int_0^h\mathbf b(y_s)\d s\mid J,y}
 =\int_0^h\E{\mathbf b(y_s)\mid J,y}\d s.
\end{equation*}
Taking conditional expectation in the integrated SDE and using these
two identities proves \cref{eq:tap-smoothing-residual-identity}.

Third, we justify differentiating the time integral with respect to the
initial external field $y$.  We fix $J,M$.  For every $s>0$,
the preceding domination argument proves that $P_s\mathbf b$ is
differentiable.
We apply the previously computed derivative estimate
with $F=\mathbf b$.  Since
$\|\Sigma(y)\|_{\mathrm{op}}\leq\operatorname{Tr}\Sigma(y)\leq M$, we have
\begin{equation*}
 \sup_y\left\|\frac{\partial(P_s\mathbf b)(y)}{\partial y}\right\|_{\mathrm F}
 \leq(s^{-1/2}+\sqrt M)\sup_v\|\mathbf b(v)\|_2.
\end{equation*}
The right side is integrable in $s$, because
$\sup_v\|\mathbf b(v)\|_2<\infty$ and
$\int_0^h(s^{-1/2}+\sqrt M)\d s=2\sqrt h+h\sqrt M$.
Also $\|(P_s\mathbf b)(y)\|_2\leq\sup_v\|\mathbf b(v)\|_2$,
so the function itself is integrable in $s$.
The endpoints $0,h$ do not depend on $y$.
By the mean-value formula, every coordinate difference quotient has
this same integrable bound.  We apply dominated convergence and obtain
\begin{equation*}
 \frac{\partial}{\partial y}\int_0^h(P_s\mathbf b)(y)\d s
 =\int_0^h\frac{\partial(P_s\mathbf b)(y)}{\partial y}\d s.
\end{equation*}
The differentiation identity above holds for fixed $J,y$.
For the subsequent $L^p$ estimate over random $(J,y)$, we separately use
the moment bounds already obtained from
\Cref{lem:tap-residual-dynamics}:
\begin{equation*}
 \int_0^h
 \norm{\frac{\partial(P_s\mathbf b)(y)}{\partial y}}_{L^p}\d s
 \leq C_p\sqrt M(2\sqrt h+hM^{1/4})<\infty,
\end{equation*}
which justifies Minkowski's inequality for the time integral under the
joint law of $(J,y)$.
\end{proof}

We now apply these estimates to \cref{eq:tap-pinned-bdg}.
Since $X_i$ is independent of $(J^{\setminus i},y_{-i})$, conditioning on $X_i=b$ preserves their planted distribution.
After averaging over the cavity observations, we apply \cref{eq:tap-path-planted-marginal}: for each $s\in[0,\ell]$, the pair $(J^{\setminus i},y_{-i}+a_s)$ has the $M=n-1$ spin planted distribution at inverse temperature $\sqrt{M\ell}$ and time $t+s$.
{By \Cref{lem:tap-susceptibility} and Minkowski's inequality in $L^{p/2}$ under the conditional law given $X_i=b$, we have}
\begin{equation*}
 \left\|\left(\int_0^\ell\norm{\left.\frac{\partial\mathbf r(v)}{\partial v}\right|_{v=y_{-i}+a_s}}_{\mathrm F}^2\d s\right)^{1/2}
           \right\|_{L^p}
 \leq\left(\int_0^\ell
           \norm{\left.\frac{\partial\mathbf r(v)}{\partial v}\right|_{v=y_{-i}+a_s}}_{L^p}^2\d s\right)^{1/2}
 \leq C_p\sqrt\ell\,M^{1/4}.
\end{equation*}
By \Cref{lem:tap-residual-dynamics}, we have $\int_0^\ell\norm{\mathbf b_s}_{L^p}\d s\leq C_p\ell\sqrt M$.
Substituting these upper bounds into \cref{eq:tap-pinned-bdg}, we obtain, for either $b\in\{\pm1\}$ under $X_i=b$,
\begin{equation}
 \|\mathbf r_{\mathrm{pin}}^b-\mathbf r^{\setminus i}\|_{L^p}
 \leq C_p(\sqrt\ell\,M^{1/4}+\ell\sqrt M)
 \leq C_pn^{-1/4}.
 \label{eq:tap-matched-row-increment}
\end{equation}

We next bound the two nonlinear errors in \cref{eq:tap-recombination}.

\begin{lemma}
\label{lem:tap-recombination-errors}
Under the planted model of \Cref{def:tap-planted-model}, set $J=\beta A^{\mathrm{full}}$.
If $n\geq2$, $n\ell\leq\beta_*^2$, and $t\in[0,T]$, then for every fixed $p\geq2$
there is a constant $C_p$, depending only on $p,\beta_*,T$, such that
for every deleted coordinate $i$,
\begin{equation*}
 \|\Delta_{\mathrm{ent}}\|_{L^p}+\|\Delta_Q\|_{L^p}
 \leq C_pn^{-1/2}.
\end{equation*}
\end{lemma}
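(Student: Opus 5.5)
We bound the two remainders separately, both through the pinned-mean difference $m^{\mathrm{dif}}=(m_{+1}-m_{-1})/2=\delta^{(i)}$ from \Cref{lem:tap-influence}. The input is \Cref{lem:tap-influence}-\textup{(ii)}, which gives $\|m^{\mathrm{dif}}\|_{L^p}\le C_p$ for every fixed $p$.

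\emph{The Onsager remainder $\Delta_Q$.} We use $|1-m_i^2|\le1$, $\|m_{-i}\|_2\le\sqrt n$, $\|m^{\mathrm{av}}\|_2\le\sqrt n$ and Cauchy--Schwarz, which give the pointwise bound
\[
\|\Delta_Q\|_2\le\ell\bigl((1+\|m^{\mathrm{dif}}\|_2^2)\sqrt n+2\sqrt n\,\|m^{\mathrm{dif}}\|_2^2\bigr)\le3\ell\sqrt n\,(1+\|m^{\mathrm{dif}}\|_2^2).
\]
Taking $L^p$ norms and applying \Cref{lem:tap-influence}-\textup{(ii)} at order $2p$ gives $\|\Delta_Q\|_{L^p}\le C_p\ell\sqrt n\le C_pn^{-1/2}$, since $\ell\le\beta_*^2/n$.

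\emph{The entropy remainder $\Delta_{\mathrm{ent}}$.} This is a Jensen gap of $\operatorname{arctanh}$ for a two-point mixture. Coordinatewise, $m_{-i,j}=\pi_+m_{+1,j}+\pi_-m_{-1,j}$ with the two points $m_{\pm1,j}=m^{\mathrm{av}}_j\pm m^{\mathrm{dif}}_j$. Since $\operatorname{arctanh}''$ blows up near $\pm1$, we do not Taylor expand directly. Instead we pass to the natural parameter $h_{b,j}=\operatorname{arctanh}(m_{b,j})$ and write
\[
\Delta_{\mathrm{ent},j}=\operatorname{arctanh}\Bigl(\sum_b\pi_b\tanh h_{b,j}\Bigr)-\sum_b\pi_bh_{b,j}.
\]
The map $(h_+,h_-)\mapsto\operatorname{arctanh}(\pi_+\tanh h_++\pi_-\tanh h_-)$ has gradient $\bigl(\pi_b(1-\tanh^2h_b)/(1-\bar m^2)\bigr)_b$, where $\bar m=m_{-i,j}$. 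This gives the bound $|\Delta_{\mathrm{ent},j}|\le C\pi_+\pi_-(h_{+,j}-h_{-,j})^2$ whenever $|h_+-h_-|\le1$, and $|\Delta_{\mathrm{ent},j}|\le C\pi_+\pi_-|h_{+,j}-h_{-,j}|$ in general, because the mixture point lies between the two $\tanh$ values.

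We then control $h_{+,j}-h_{-,j}$ by the cavity field representation used in the proof of \Cref{lem:tap-susceptibility}. Writing $\operatorname{arctanh}(m_{b,j})-(y_j+bJ_{ji})$ as a half log-ratio of cavity expectations, the difference $h_{+,j}-h_{-,j}-2J_{ji}$ equals a difference of tilted log-partition ratios. Its size is governed by the influence of spin $i$ on spin $j$ through the remaining couplings. We expect $\sum_j(h_{+,j}-h_{-,j})^4$ to have $L^p$ norm $O(n^{-1})$ in total: the direct term gives $\sum_jJ_{ji}^4=O(n^{-1})$, and the indirect term should be comparable to $\sum_j(\delta^{(i)}_j)^2/(1-m_{b,j}^2)^2$ weighted suitably. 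This would give $\|\Delta_{\mathrm{ent}}\|_{L^p}\le C_pn^{-1/2}$.

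The \textbf{main obstacle} is the indirect part of $h_{+,j}-h_{-,j}$ for coordinates with $|m_j|$ close to $1$. There the conversion from $\delta^{(i)}_j$ to a difference of natural parameters loses a factor $(1-m_j^2)^{-1}$. To handle it, we would first bound $\mathbb E\,\pi_+\pi_-\sum_j\min\{(h_{+,j}-h_{-,j})^4,(h_{+,j}-h_{-,j})^2\}$. We would represent $h_{+,j}-h_{-,j}$ along the cavity interpolation $\gamma\in[-1,1]$ of \Cref{def:tap-cavity-pinning} as $\int_{-1}^1\partial_\gamma\operatorname{arctanh}(m_{\gamma,j})\,\mathrm d\gamma=\int(D_\gamma\Sigma_\gamma a)_j\,\mathrm d\gamma$. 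We would then use the bound $|(D\Sigma)_{jk}|\le1$ from the proof of \Cref{lem:tap-susceptibility}, together with the Gaussian integration by parts of \Cref{lem:tap-tilted-mean-displacement}, to show that each term is $O(\ell\sqrt n)$ in $L^p$ after summing over $j$. The factors $t\le T$ enter only through the bound $|y_j|\lesssim$ polynomial moments, which is why $C_p$ depends on $T$.
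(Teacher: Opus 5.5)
Your bound on $\Delta_Q$ is correct, and it is more direct than the paper's. Since $m^{\mathrm{dif}}=\delta^{(i)}$, the pointwise estimate $\|\Delta_Q\|_2\le3\ell\sqrt n\,(1+\|\delta^{(i)}\|_2^2)$ together with \Cref{lem:tap-influence}-\textup{(ii)} at order $2p$ gives $C_p\ell\sqrt n$ at once. The paper instead expands the cubic term about $m^{\setminus i}$ and uses \Cref{lem:tap-mean-increments} with \eqref{eq:tap-posterior-weighted-matching}.

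The $\Delta_{\mathrm{ent}}$ part, however, is not a proof. Passing to natural parameters is harmless. With $d_j=h_{+,j}-h_{-,j}$, your small-$d$ estimate and the trivial bound $|\Delta_{\mathrm{ent},j}|\le|d_j|$ give $|\Delta_{\mathrm{ent},j}|\le Cd_j^2$ for every $d_j$. Your ``general'' bound $C\pi_+\pi_-|d_j|$ is false: $h_\pm=\pm K$, $\pi_+=\mathrm e^{-K}$ gives $\Delta_{\mathrm{ent},j}\approx K/2$. You do not need it, though.

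Everything then rests on $\|d_j\|_{L^{2p}}\le C_pn^{-1/2}$, and this is exactly where you switch to ``we expect'' and ``would''. The route you sketch does not deliver it, for three reasons.
\begin{enumerate}
\item The entrywise bound $|(D_\gamma\Sigma_\gamma)_{jk}|\le1$ only gives $|(D_\gamma\Sigma_\gamma a)_j|\le\|a\|_1=O(\sqrt n)$. You need $\ell^2$ control of the row, i.e.\ of the influence vector $\delta^{(j)}$ of the tilted law $\mu_\gamma$. For $\gamma\notin\{0,X_i\}$, the pair $(J^{\setminus i},y_{-i}+\gamma a_i)$ is not a planted observation: its signal $\gamma X_i\ell$ differs from its noise variance $\gamma^2\ell$. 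So \Cref{lem:tap-influence}-\textup{(ii)} does not apply. \Cref{lem:tap-weak-tilt} does not help either, since it only transfers nonnegative replica polynomials, while $\delta^{(j)}$ carries the factor $(1-m_j^2)^{-1}$.
\item Integrating $\sqrt\ell\langle\delta^{(j),\gamma},G\rangle$ by parts would further require Jacobian bounds on influence vectors that you have not established.
\item A first-moment bound on $\pi_+\pi_-\sum_j\min\{d_j^4,d_j^2\}$ could not yield an $L^p$ statement anyway.
\end{enumerate}

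The obstacle you name, coordinates with $|m_j|$ near $1$, is what the paper removes with \Cref{lem:tap-inverse-marginals}. There every inverse moment $\mathbb E(1-m_j^2)^{-q}$ is bounded uniformly for $t\in[0,T+1]$; this is the only source of the $T$-dependence in the lemma. With it, the paper argues as follows.
\begin{enumerate}
\item Taylor-expand $\operatorname{arctanh}$ directly about the cavity mean, giving $\Delta_{\mathrm{ent}}=\overline\rho-\sum_b\pi_b\rho^b$ exactly; the linear terms cancel because $\sum_b\pi_bm_b=m_{-i}$.
\item Bound the remainder by $|\rho_{\mathrm{ent}}(a,u)|\le u^2\bigl((1-a^2)^{-2}+(1-(a+u)^2)^{-2}\bigr)$.
\item Conclude by H\"older with \Cref{lem:tap-mean-increments} (increments of order $n^{-1/2}$ in $L^{4p}$ along the localization path under $X_i=b$), Minkowski over $j$, and \eqref{eq:tap-posterior-weighted-matching} for the weights $\pi_b$.
\end{enumerate}

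Alternatively, your natural-parameter route can be closed without inverse marginals. Run that same localization path on $\operatorname{arctanh}(m_{s,j})$. By It\^o, its diffusion row is $(D_s\Sigma_s)_{j,\cdot}$, of squared norm $1+\|\delta^{(j)}_s\|_2^2$, and its drift is $m_{s,j}(1+\|\delta^{(j)}_s\|_2^2)$. Both are bounded in every $L^q$ by \Cref{lem:tap-influence}-\textup{(ii)} at time $t+s$. BDG then gives $h_{b,j}-\operatorname{arctanh}(m^{\setminus i}_j)=O(n^{-1/2})$ under $X_i=b$. Under $X_i=-b$ one pays only the likelihood ratio $\exp(-2\langle bJ_{-i,i},X_{-i}\rangle)$, whose second moment is $\mathrm e^{4\ell M}$.
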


We fix the parameters of \Cref{lem:tap-recombination-errors}, a deleted coordinate $i$, and $p\geq2$, and set $M=n-1$.
To bound the posterior-weighted sums in \cref{eq:tap-recombination},
we use the following identity to transfer estimates proved for each pinned
quantity under the matching condition $X_i=b$.
We recall that $\mathcal F=\sigma(J,y)$ and $\pi_b=\mu(\sigma_i=b)$.
By the posterior identity \cref{eq:planted-posterior}, $\pi_b=\E[\mathbb P_{n,\beta}^{\mathrm{pl,aug}}]{\1_{\{X_i=b\}}\mid\mathcal F}$.
For any nonnegative $\mathcal F$-measurable random variable $F_b$, the conditional expectation identity is
\begin{equation*}
 \E[\mathbb P_{n,\beta}^{\mathrm{pl,aug}}]{\pi_b F_b}
 =\E[\mathbb P_{n,\beta}^{\mathrm{pl,aug}}]{
   F_b\E[\mathbb P_{n,\beta}^{\mathrm{pl,aug}}]{\1_{\{X_i=b\}}\mid\mathcal F}}
 =\E[\mathbb P_{n,\beta}^{\mathrm{pl,aug}}]{\1_{\{X_i=b\}}F_b}.
\end{equation*}
Since $\mathbb P_{n,\beta}^{\mathrm{pl,aug}}(X_i=b)=1/2$, we sum over the two signs and obtain
\begin{equation}
 \E[\mathbb P_{n,\beta}^{\mathrm{pl,aug}}]{\sum_{b=\pm1}\pi_b F_b}
 =\frac12\sum_{b=\pm1}
       \E[\mathbb P_{n,\beta}^{\mathrm{pl,aug}}]{F_b\mid X_i=b}.
 \label{eq:tap-posterior-weighted-matching}
\end{equation}
Here $\E[\mathbb P_{n,\beta}^{\mathrm{pl,aug}}]{\cdot}$ averages the joint law of $(X,J,y)$ under the planted model of \Cref{def:tap-planted-model}.

{We first rewrite $\Delta_{\mathrm{ent}}$ as a difference of Taylor remainders of $\operatorname{arctanh}$.}
For $f=\operatorname{arctanh}$ and $j\ne i$, its definition is
\begin{equation*}
 (\Delta_{\mathrm{ent}})_j=f(m_j)-\sum_b\pi_b f((m_b)_j).
\end{equation*}
We subtract the affine Taylor approximation at $m_j^{\setminus i}$
from each occurrence of $f$.  Since $\sum_b\pi_b=1$ and
$\sum_b\pi_b((m_b)_j-m_j^{\setminus i})=m_j-m_j^{\setminus i}$,
the subtracted terms cancel exactly, and we obtain
\begin{align*}
 (\Delta_{\mathrm{ent}})_j =&~\bigl[f(m_j)-f(m_j^{\setminus i})-f'(m_j^{\setminus i})(m_j-m_j^{\setminus i})\bigr]\\
 &-\sum_b\pi_b\bigl[f((m_b)_j)-f(m_j^{\setminus i})-f'(m_j^{\setminus i})((m_b)_j-m_j^{\setminus i})\bigr].
\end{align*}
We denote these remainders by
\begin{equation*}
 \rho_{\mathrm{ent}}(a,u)\defeq f(a+u)-f(a)-f'(a)u,
 \qquad a,a+u\in(-1,1),
\end{equation*}
and set
\begin{equation*}
 \rho_j^b=\rho_{\mathrm{ent}}(m_j^{\setminus i},(m_b)_j-m_j^{\setminus i}),
 \qquad
 \overline\rho_j=\rho_{\mathrm{ent}}(m_j^{\setminus i},m_j-m_j^{\setminus i}).
\end{equation*}
Thus, in vector form, we have
\begin{equation}
 \Delta_{\mathrm{ent}}=\overline\rho-\sum_b\pi_b\rho^b.
 \label{eq:tap-child-entropy-expansion}
\end{equation}

We now bound each remainder.  Taylor's formula gives
\begin{equation*}
 \rho_{\mathrm{ent}}(a,u)
 =u^2\int_0^1(1-v)f''(a+vu)\d v.
\end{equation*}
Since $|f''(z)|\leq2(1-z^2)^{-2}$ and
$|a+vu|\leq\max(|a|,|a+u|)$ for $v\in[0,1]$, we obtain
\begin{equation}
 |\rho_{\mathrm{ent}}(a,u)|
 \leq u^2\bigl((1-a^2)^{-2}+(1-(a+u)^2)^{-2}\bigr).
 \label{eq:tap-child-entropy-remainder}
\end{equation}

Using H\"older with exponents $2,2$ in
\cref{eq:tap-child-entropy-remainder}, we obtain
\begin{align}
 \E{|\rho_j^b|^p\mid X_i=b}^{1/p}
 \leq&~\E{|(m_b)_j-m_j^{\setminus i}|^{4p}\mid X_i=b}^{1/(2p)}
 \left(\E{(1-(m_j^{\setminus i})^2)^{-4p}\mid X_i=b}^{1/(2p)}\right.\notag\\
 &\left.\hspace{35mm}
       +\E{(1-(m_b)_j^2)^{-4p}\mid X_i=b}^{1/(2p)}\right).
 \label{eq:tap-child-entropy-holder}
\end{align}

{To bound the right-hand side, we use the $L^{4p}$ bound on $(m_b)_j-m_j^{\setminus i}$ at scale $n^{-1/2}$ and the moment bounds for $(1-(m_j^{\setminus i})^2)^{-1}$ and $(1-(m_b)_j^2)^{-1}$ at order $4p$.}  The corresponding estimate for $\overline\rho_j$ uses
the unconditional increment $m_j-m_j^{\setminus i}$.
We prove these estimates in the next two lemmas, then return to
\Cref{lem:tap-recombination-errors}.

\begin{lemma}
\label{lem:tap-mean-increments}
Under the planted model of \Cref{def:tap-planted-model}, set $J=\beta A^{\mathrm{full}}$.
If $n\geq2$, $n\ell\leq\beta_*^2$, and $t\geq0$, let $m_b$ be the pinned
mean and $m^{\setminus i}$ the cavity mean for coordinate $i$.
For every fixed $p\geq2$, there is a constant $C_p$, depending only
on $p,\beta_*$, such that for every $i$, $b\in\{\pm1\}$, and $j\ne i$,
\begin{equation*}
 \E{|(m_b)_j-m_j^{\setminus i}|^p\mid X_i=b}^{1/p}\leq C_pn^{-1/2},
 \qquad \|m_j-m_j^{\setminus i}\|_{L^p}\leq C_pn^{-1/2}.
\end{equation*}
\end{lemma}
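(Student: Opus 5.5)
The plan is to reuse the cavity interpolation from \Cref{def:tap-cavity-pinning} together with the tilted-mean displacement bound. The key difference from the proof of \Cref{lem:tap-influence}-\textup{(ii)} is that we now need a \emph{coordinatewise} bound at scale $n^{-1/2}$, rather than the vector bound $\|m_b-m^{\setminus i}\|_2\lesssim\sqrt{\ell M}$.

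\emph{Pinned increment.} We condition on $X_i=b$ and use the matched-endpoint identity \cref{eq:tap-matched-endpoint-law}. By that identity, $(m_b)_j-m^{\setminus i}_j$ has the same law as $m_{J^{\setminus i},j}(y_{-i}+a_\ell)-m_{J^{\setminus i},j}(y_{-i})$ along the localization path of \Cref{lem:tap-residual-sde}. That lemma and \Cref{lem:app-wpi-observation-law} give the exact representation
\begin{equation*}
 (m_b)_j-m^{\setminus i}_j\overset{\mathrm d}{=}\int_0^\ell (\Sigma(y_{-i}+a_s))_{j,\cdot}\,\mathrm d\widetilde B^{\mathrm{loc}}_s .
\end{equation*}
This is a martingale with no drift. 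Applying BDG (\Cref{lem:bdg-stochastic-integral}) and then Minkowski in $L^{p/2}$ bounds its $L^p$ norm by
\begin{equation*}
 C_p\Bigl(\int_0^\ell\|\Sigma(y_{-i}+a_s)_{j,\cdot}\|_{L^p}^2\,\mathrm ds\Bigr)^{1/2}.
\end{equation*}
By \cref{eq:tap-path-planted-marginal}, for each $s$ the pair $(J^{\setminus i},y_{-i}+a_s)$ has the $(n-1)$-spin planted law at time $t+s$, and this holds conditionally on $X_i=b$ by independence. So \Cref{lem:tap-influence}-\textup{(ii)} bounds each row norm by $C_p$. The result is $C_p\sqrt\ell\le C_pn^{-1/2}$, uniformly in $t$.

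\emph{Unconditional increment.} We write $m_j-m^{\setminus i}_j=\sum_b\pi_b\bigl((m_b)_j-m^{\setminus i}_j\bigr)$, since $m_{-i}=\sum_b\pi_bm_b$ and $\sum_b\pi_b=1$. By Jensen, $|\sum_b\pi_bF_b|^p\le\sum_b\pi_b|F_b|^p$. Here $F_b=(m_b)_j-m^{\setminus i}_j$ is $\mathcal F$-measurable, because the pinned means depend only on $(J,y)$. The posterior-weighted matching identity \cref{eq:tap-posterior-weighted-matching} then gives
\begin{equation*}
 \E{|m_j-m^{\setminus i}_j|^p}\le\tfrac12\sum_b\E{|F_b|^p\mid X_i=b},
\end{equation*}
and the first bound finishes the argument.

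\emph{Main obstacle.} The delicate step is the law identification in the first part. We must check that conditioning on $X_i=b$ leaves $(J^{\setminus i},y_{-i})$ planted, so that \Cref{lem:tap-influence} applies along the whole path $s\in[0,\ell]$. We also need the row covariance bound to be uniform over the shifted times $t+s$. \Cref{lem:tap-influence} is stated for every $t\ge0$ with constants depending only on $p,\beta_*$, so this uniformity should hold.
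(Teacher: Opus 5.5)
Your proposal is correct and follows the paper's proof essentially step for step. The matched-endpoint identity \cref{eq:tap-matched-endpoint-law} turns $(m_b)_j-m^{\setminus i}_j$ into the drift-free stochastic integral $\int_0^\ell\Sigma(y_{-i}+a_s)_{j,\cdot}\,\mathrm d\widetilde B^{\mathrm{loc}}_s$; BDG and Minkowski, combined with the row bound of \Cref{lem:tap-influence}-\textup{(ii)} at the planted times $t+s$, give $C_p\sqrt\ell$; and Jensen plus \cref{eq:tap-posterior-weighted-matching} yields the unconditional bound, with the uniformity points you flag resolved exactly as you indicate.
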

\begin{proof}
We express the mean difference as an increment along the same path $y_{-i}+a_s$ used for the residual increment.  We first condition on $(\mathcal F^{\setminus i},X_i=b)$, so that the interaction $J^{\setminus i}$ and the initial field $y_{-i}$ are fixed.
{Here $M=n-1$, and $\Sigma(v)$ denotes the covariance for interaction matrix $J^{\setminus i}$ and external field $v$.}
With ${m_{J^{\setminus i}}}(v)$ denoting the mean at external field $v$ and interaction $J^{\setminus i}$, we have that
\begin{equation*}
 m_j^{\setminus i}={m_{J^{\setminus i},j}}(y_{-i}),\qquad
 (m_b)_j={m_{J^{\setminus i},j}}(y_{-i}+bJ_{-i,i}).
\end{equation*}
{By the endpoint-law comparison preceding \cref{eq:tap-matched-endpoint-law}, $bJ_{-i,i}$ and $a_\ell$ have the same law given $(\mathcal F^{\setminus i},X_i=b)$.}
{Subtracting the fixed initial mean, we obtain, under this law,}
\begin{equation*}
 (m_b)_j-m_j^{\setminus i}
 \overset{\mathrm d}{=}{m_{J^{\setminus i},j}}(y_{-i}+a_\ell)-{m_{J^{\setminus i},j}}(y_{-i}).
\end{equation*}
Along this path, we apply \Cref{lem:app-wpi-observation-law} in dimension $M$ with base distribution $\mu=\mu^{\setminus i}$ and additional observation $a_s$.
The tilt of the cavity distribution by $v$ has interaction matrix $J^{\setminus i}$ and total field $y_{-i}+v$.
The process $\widetilde B_s^{\mathrm{loc}}=a_s-\int_0^s {m_{J^{\setminus i}}}(y_{-i}+a_u)\mathrm d u$ is Brownian motion in the filtration generated by $(a_u)_{0\leq u\leq s}$ {under the law above}, and
\begin{equation*}
 \mathrm d {m_{J^{\setminus i},j}}(y_{-i}+a_s)
 =\Sigma(y_{-i}+a_s)_{j,\cdot}\mathrm d\widetilde B_s^{\mathrm{loc}}.
\end{equation*}
We integrate from $0$ to $\ell$ and use $a_0=0$ to obtain
\begin{equation*}
 {m_{J^{\setminus i},j}}(y_{-i}+a_\ell)-{m_{J^{\setminus i},j}}(y_{-i})
 =\int_0^\ell\Sigma(y_{-i}+a_s)_{j,\cdot}\mathrm d\widetilde B_s^{\mathrm{loc}}.
\end{equation*}
{Thus, after averaging over $\mathcal F^{\setminus i}$ under the conditional law given $X_i=b$, the mean difference has the law of this stochastic integral.}
{The following path norms are under the same conditional law and average jointly over $(J^{\setminus i},y_{-i})$, $\xi$, and $\widetilde B$.}
{For each $s\in[0,\ell]$, by \cref{eq:tap-path-planted-marginal} the pair $(J^{\setminus i},y_{-i}+a_s)$ has the planted distribution in dimension $M=n-1$, with the same parameter $\ell$ and time $t+s$.}
By \Cref{lem:tap-influence}-\textup{(ii)}, we have $\|\Sigma(y_{-i}+a_s)_{j,\cdot}\|_{L^p}\leq C_p$.
We apply BDG and then Minkowski in $L^{p/2}$ to obtain
\begin{equation*}
 \E{|(m_b)_j-m_j^{\setminus i}|^p\mid X_i=b}^{1/p}
 \leq C_p\left(\int_0^\ell
       \|\Sigma(y_{-i}+a_s)_{j,\cdot}\|_{L^p}^2\mathrm d s\right)^{1/2}
 \leq C_p\sqrt\ell\leq C_pn^{-1/2}.
\end{equation*}
Finally, $m_j-m_j^{\setminus i}=\sum_b\pi_b((m_b)_j-m_j^{\setminus i})$.
The means are $\mathcal F$-measurable, so Jensen's inequality and \cref{eq:tap-posterior-weighted-matching} give
\begin{equation*}
 \E{|m_j-m_j^{\setminus i}|^p}
 \leq\frac12\sum_b\E{|(m_b)_j-m_j^{\setminus i}|^p\mid X_i=b}
 \leq C_p^p n^{-p/2}.
\end{equation*}
Taking $p$th roots proves the unconditional bound.
\end{proof}

{The Taylor remainder $\rho_{\mathrm{ent}}$ involves derivatives of $\operatorname{arctanh}$.}
Before estimating it, we bound the inverse moments of $1-m_j^2$.

\begin{lemma}
\label{lem:tap-inverse-marginals}
Fix $S>0$.
Under the planted model of \Cref{def:tap-planted-model}, set $J=\beta A^{\mathrm{full}}$.
If $n\ell\leq\beta_*^2$ and $t\in[0,S]$, let
$m_j=\E[\sigma\sim\mu_{J,y}]{\sigma_j}$.  For every fixed $p\geq1$,
there is a constant $C_p$, depending only on $p,\beta_*,S$, such that
\begin{equation*}
 \max_{j\in[n]}\E{(1-m_j^2)^{-p}}\leq C_p.
\end{equation*}
\end{lemma}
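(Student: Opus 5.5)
We bound the inverse marginal by a cavity representation of coordinate $j$. Summing the Gibbs weights over $\sigma_j=\pm1$ gives
\begin{equation*}
 \operatorname{arctanh}(m_j)=y_j+\frac12\log\frac{\E[\sigma\sim\mu^{\setminus j}]{\mathrm e^{\inner{J_{-j,j}}{\sigma}}}}{\E[\sigma\sim\mu^{\setminus j}]{\mathrm e^{-\inner{J_{-j,j}}{\sigma}}}}.
\end{equation*}
Since $1-m_j^2=\cosh(\operatorname{arctanh}m_j)^{-2}$, we have
\begin{equation*}
 (1-m_j^2)^{-p}=\cosh^{2p}(\operatorname{arctanh}m_j)\le \mathrm e^{2p|\operatorname{arctanh}m_j|}\le\mathrm e^{2py_j}+\mathrm e^{-2py_j}
\end{equation*}
multiplied by the appropriate power of the ratio. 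It therefore suffices to bound exponential moments of $|\operatorname{arctanh}m_j|$, namely
\begin{equation*}
 \E{\exp(2p|y_j|)\,\Xi_j^{p}},\qquad \Xi_j=\frac{\langle \mathrm e^{\inner{a_j}{\sigma}}\rangle^{\setminus j}}{\langle \mathrm e^{-\inner{a_j}{\sigma}}\rangle^{\setminus j}}+\frac{\langle \mathrm e^{-\inner{a_j}{\sigma}}\rangle^{\setminus j}}{\langle \mathrm e^{\inner{a_j}{\sigma}}\rangle^{\setminus j}},
\end{equation*}
where $a_j=J_{-j,j}$. By Cauchy--Schwarz we separate the field factor from the ratio. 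The field part is immediate: $y_j=tX_j+\sqrt t\,g_j$ with $t\le S$, so $\E{\mathrm e^{4p|y_j|}}\le C_{p,S}$.

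For the ratio we condition on $(\mathcal F^{\setminus j},X_j)$, so that $a_j=\ell X_jX_{-j}+\sqrt\ell G$ with $G\sim N(0,I_{n-1})$ independent. Each of the two averages is bounded using Jensen's inequality for the denominator,
\begin{equation*}
 \langle \mathrm e^{\mp\inner{a_j}{\sigma}}\rangle^{\setminus j}\ge \mathrm e^{\mp\inner{a_j}{m^{\setminus j}}},
\end{equation*}
and the bound $\langle\mathrm e^{\pm\inner{a_j}{\sigma}}\rangle^{\setminus j}\,\mathrm e^{\mp\inner{a_j}{m^{\setminus j}}}=\langle \mathrm e^{\pm\inner{a_j}{\sigma-m^{\setminus j}}}\rangle^{\setminus j}$ for the numerator. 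Thus each ratio is at most
\begin{equation*}
 \langle \mathrm e^{\inner{a_j}{\sigma-m^{\setminus j}}}\rangle^{\setminus j}\,\langle \mathrm e^{-\inner{a_j}{\sigma-m^{\setminus j}}}\rangle^{\setminus j},
\end{equation*}
which is a two-replica average $\langle \mathrm e^{\inner{a_j}{\sigma^1-\sigma^2}}\rangle^{\setminus j}$. We raise this to the power $2p$, apply Jensen again to pass to $2p$ replicas, and then take the Gaussian expectation over $G$ with the replicas fixed, exactly as in \Cref{lem:tap-weak-tilt}. Write $W=\sum_r(\sigma^{2r-1}-\sigma^{2r})$. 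The mean part contributes $\mathrm e^{\ell|U\cdot W|}\le \mathrm e^{C_p\ell n}$, and the Gaussian part gives $\exp(\ell\|W\|_2^2/2)\le\exp(C_p\ell n)$, which is bounded since $n\ell\le\beta_*^2$. This yields $\E{\Xi_j^{2p}}\le C_p$ uniformly in $t$ and $j$, because the bound $\|W\|_2^2\le 16p^2 n$ is trivial and no overlap concentration is needed.

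Combining the two Cauchy--Schwarz factors gives $\E{(1-m_j^2)^{-p}}\le C_{p,\beta_*,S}$. The main obstacle is the first step: one must check that the crude bound $|\inner{a_j}{\sigma-m^{\setminus j}}|$ with replicas fixed does not lose more than a constant in the exponent. This is exactly where the order of operations matters: take the Gaussian expectation inside the replica average (Tonelli, as all quantities are nonnegative) rather than bounding $\|a_j\|_1$, which would give $\mathrm e^{\sqrt n}$. If the Jensen lower bound on the denominator proves awkward, the fallback is to write the ratio via the interpolation $\frac{\mathrm d}{\mathrm d\gamma}\log Z_\gamma=a\cdot m_\gamma$ from \Cref{lem:tap-tilted-mean-displacement} and bound $\int_{-1}^1 a\cdot(m_\gamma-m_0)\,\mathrm d\gamma$ together with the Gaussian $a\cdot m_0$ term.
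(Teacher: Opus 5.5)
Your reduction matches the paper's: the cavity formula for $\operatorname{arctanh}(m_j)-y_j$, the bound $(1-m_j^2)^{-p}\le \mathrm e^{2p|y_j|}\Xi_j^{p}$, and Cauchy--Schwarz to split off the field. The problem is the key inequality for the ratio, which is false as written. Jensen gives $\langle \mathrm e^{-\langle a_j,\sigma\rangle}\rangle^{\setminus j}\ge \mathrm e^{-\langle a_j,m^{\setminus j}\rangle}$, that is, $1/\langle \mathrm e^{-\langle a_j,\sigma\rangle}\rangle^{\setminus j}\le \mathrm e^{+\langle a_j,m^{\setminus j}\rangle}$. This has the same sign as the factor you extract from the numerator, so the two factors add rather than cancel. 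The exact identity is
\[
 \frac{\langle \mathrm e^{\langle a_j,\sigma\rangle}\rangle^{\setminus j}}{\langle \mathrm e^{-\langle a_j,\sigma\rangle}\rangle^{\setminus j}}
 =\mathrm e^{2\langle a_j,m^{\setminus j}\rangle}\,
 \frac{\langle \mathrm e^{\langle a_j,\sigma-m^{\setminus j}\rangle}\rangle^{\setminus j}}{\langle \mathrm e^{-\langle a_j,\sigma-m^{\setminus j}\rangle}\rangle^{\setminus j}}.
\]
The prefactor is exactly the leading cavity field $\sum_{k\neq j}J_{jk}m^{\setminus j}_k$, and your two-replica quantity does not see it. A one-spin example shows the failure: if the single cavity spin has mean $m_0\to1$, the ratio tends to $\mathrm e^{2a}$, while your bound $\langle \mathrm e^{a(\sigma-m_0)}\rangle\langle \mathrm e^{-a(\sigma-m_0)}\rangle=\cosh^2a-m_0^2\sinh^2a$ tends to $1$. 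So $\mathbb E[\Xi_j^{2p}]\le C_p$ does not follow from the replica computation alone.

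The repair is short. Each centered average is at least $1$ by Jensen, so the identity gives $\Xi_j\le 2\,\mathrm e^{2|\langle a_j,m^{\setminus j}\rangle|}\langle \mathrm e^{\langle a_j,\sigma^1-\sigma^2\rangle}\rangle^{\setminus j}$. Condition on $(\mathcal F^{\setminus j},X)$. Then $m^{\setminus j}$ is fixed, and $\langle a_j,m^{\setminus j}\rangle$ is Gaussian with mean of absolute value at most $\ell M$ and variance $\ell\|m^{\setminus j}\|_2^2\le \ell M\le\beta_*^2$, where $M=n-1$. Hence all its exponential moments are bounded in terms of $p$ and $\beta_*$. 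One more H\"older step combines this with your centered replica/MGF estimate, which is correct and is the mechanism of \Cref{lem:tap-weak-tilt}.

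With this fix your proof is valid but differs from the paper's. The paper never separates $\langle a_j,m^{\setminus j}\rangle$. It notes that $\phi(a)=\frac12\log\bigl(\langle\mathrm e^{\langle a,\sigma\rangle}\rangle/\langle\mathrm e^{-\langle a,\sigma\rangle}\rangle\bigr)$ is odd with $|\partial_{a_k}\phi|\le 1$, so $G\mapsto\phi(\sqrt\ell G)$ is centered and $\beta_*$-Lipschitz. Gaussian concentration for Lipschitz functions then controls the mean and fluctuation parts at once, and the planted shift $\ell U$ costs at most $\ell M$. Your route replaces that concentration inequality with explicit Gaussian MGFs, at the price of tracking the cavity-field term separately.

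Your fallback via $\frac12\int_{-1}^1\langle a,m_\gamma\rangle\,\mathrm d\gamma$ does keep the $\langle a,m_0\rangle$ term. However, it would need exponential rather than fixed-order $L^p$ control of the remainder, so the repaired main route is preferable.
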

\begin{proof}
We adapt \cite[Lemma~5.3]{DaviesLeeSandhuShi26}, which is stated for $\beta<1/2$.
We will bound $(1-m_j^2)^{-p}$ by exponential moments of the total external field and the effect of row $j$.
We use the cavity model of \Cref{def:tap-cavity-pinning} at coordinate $j$, with $M=n-1$ and $a=a_j$.
The two unnormalized weights of $\sigma_j=\pm1$ are $\mathrm e^{y_j}\langle \mathrm e^{a\cdot\sigma}\rangle_{\mu^{\setminus j}}$ and $\mathrm e^{-y_j}\langle \mathrm e^{-a\cdot\sigma}\rangle_{\mu^{\setminus j}}$ respectively.
We take half the logarithm of their ratio and obtain
\[
 \operatorname{arctanh}\tp{m_j}=y_j+\phi(a),\qquad
 \phi(a)=\tfrac12(\log\langle \mathrm e^{a\cdot\sigma}\rangle_{\mu^{\setminus j}}
              -\log\langle \mathrm e^{-a\cdot\sigma}\rangle_{\mu^{\setminus j}}).
\]
Since $(1-m_j^2)^{-p}=\cosh^{2p}(y_j+\phi(a))$, we have
\begin{align*}
 \E{(1-m_j^2)^{-p}}
 \leq \E{\mathrm e^{2p|y_j|+2p|\phi(a)|}}
 \leq \E{\mathrm e^{4p|y_j|}}^{1/2}
        \E{\mathrm e^{4p|\phi(a)|}}^{1/2}.
\end{align*}
Thus it suffices to control these two exponential moments.

With $\mathcal F^{\setminus j}$ fixed, we differentiate the two log partition functions and obtain
\[
 \partial_{a_k}\phi(a)=\tfrac12(({m_{+1}})_k+({m_{-1}})_k),
\]
so $|\partial_{a_k}\phi|\le1$ and
$\norm{\nabla_a\phi}_2\le\sqrt M$.  Also $\phi(-a)=-\phi(a)$.
Since $G$ and $-G$ have the same law,
we obtain
\[
 \E[G\sim N(0,I_M)]{\phi(\sqrt\ell G)}
 =-\E[G\sim N(0,I_M)]{\phi(\sqrt\ell G)}=0.
\]
The function $G\mapsto\phi(\sqrt\ell G)$ has the Lipschitz property
with constant at most $\sqrt{\ell M}\leq\beta_*$.
We apply the Gaussian exponential-moment bound
\cite[Theorem~5.5]{BoucheronLugosiMassart13} and obtain, for $s\ge0$,
\[
 \E[G\sim N(0,I_M)]{\mathrm e^{s|\phi(\sqrt\ell G)|}}
 \le \E[G\sim N(0,I_M)]{\mathrm e^{s\phi(\sqrt\ell G)}}
     +\E[G\sim N(0,I_M)]{\mathrm e^{-s\phi(\sqrt\ell G)}}
 \le2\mathrm e^{s^2\beta_*^2/2}.
\]

The actual row is $a=\ell U+\sqrt\ell G$, where $U=X_jX_{-j}$
on the remaining coordinates.  Since
$\norm{\ell U}_2=\ell\sqrt M$ and $\phi$ has the Lipschitz property
with constant $\sqrt M$, we have
$|\phi(\ell U+\sqrt\ell G)-\phi(\sqrt\ell G)| \le\ell M\le \beta_*^2$.

{We then average the preceding row bound over $(\mathcal F^{\setminus j},X)$ and obtain $\E{\mathrm e^{s|\phi(a)|}}\leq2\exp(s\beta_*^2+s^2\beta_*^2/2)$.}
{For the external field, given $X$, we have} $y_j\sim N(tX_j,t)$.

{Given $X$, the absolute value of the mean and the variance of $y_j$ are both $t\leq S$.}
For every $s\geq0$, we use the Gaussian MGF and
$\mathrm e^{s|z|}\leq\mathrm e^{sz}+\mathrm e^{-sz}$ to obtain $\E{\mathrm e^{s|y_j|}}\leq2\exp(sS+s^2S/2)$.
Substituting these bounds at $s=4p$ into the initial Cauchy--Schwarz
estimate proves $\E{(1-m_j^2)^{-p}}
\leq\exp(C\cdot p^2+C\cdot p)$ after increasing $C$ for $p\geq1$.
\end{proof}

\begin{proof}[Proof of \Cref{lem:tap-recombination-errors}]
We fix $p\geq2$ and a deleted coordinate $i$, and set $M=n-1$.
{We first bound the Taylor remainders in \cref{eq:tap-child-entropy-expansion} using \cref{eq:tap-child-entropy-holder}.}
Under $X_i=b$, the pair $(J^{\setminus i},y_{-i})$ has the planted distribution
at dimension $M=n-1$ and time $t$, and the pair
$(J^{\setminus i},y_{-i}+bJ_{-i,i})$ has that distribution at time
$t+\ell$.  {Both use the same parameter $\ell$ and inverse temperature $\sqrt{M\ell}\leq\beta_*$.}
Since $t\leq T$ and $\ell<1$, both times lie in $[0,T+1]$.
We apply \Cref{lem:tap-inverse-marginals} with $S=T+1$ at order $4p$ to both means and
\Cref{lem:tap-mean-increments} at order $4p$ to their difference.
Substituting into \cref{eq:tap-child-entropy-holder}, we obtain
\begin{equation*}
 \E{|\rho_j^b|^p\mid X_i=b}^{1/p}\leq C_pn^{-1}.
\end{equation*}
Minkowski in $L^{p/2}$ under the same conditional law gives
\begin{equation*}
 \E{\|\rho^b\|_2^p\mid X_i=b}^{2/p}
 \leq\sum_{j\ne i}\E{|\rho_j^b|^p\mid X_i=b}^{2/p}
 \leq C_p^2Mn^{-2}\leq C_p^2n^{-1}.
\end{equation*}
Since $\rho^b$ is $\mathcal F$-measurable, Jensen's inequality and
\cref{eq:tap-posterior-weighted-matching} give
\begin{equation*}
 \left\|\sum_b\pi_b\rho^b\right\|_{L^p}
 \leq\left(\frac12\sum_b\E{\|\rho^b\|_2^p\mid X_i=b}\right)^{1/p}
 \leq C_pn^{-1/2}.
\end{equation*}

For $\overline\rho_j$, we use the full $n$-spin planted law.
{The endpoints are $m_j^{\setminus i}$ and $m_j$, so we apply \Cref{lem:tap-inverse-marginals} with the same choice $S=T+1$ to their inverse spin variances in dimensions $M$ and $n$, both at time $t$.}  H\"older and the bound on $m_j-m_j^{\setminus i}$ at order $4p$ give
\begin{equation*}
 \|\overline\rho_j\|_{L^p}
 \leq\|m_j-m_j^{\setminus i}\|_{L^{4p}}^2
 \left(\E{(1-(m_j^{\setminus i})^2)^{-4p}}^{1/(2p)}
       +\E{(1-m_j^2)^{-4p}}^{1/(2p)}\right)
 \leq C_pn^{-1}.
\end{equation*}
Again, Minkowski in $L^{p/2}$ gives
$\|\overline\rho\|_{L^p}^2\leq\sum_{j\ne i}
\|\overline\rho_j\|_{L^p}^2\leq C_p^2n^{-1}$.
We substitute these bounds in \cref{eq:tap-child-entropy-expansion}
to conclude $\|\Delta_{\mathrm{ent}}\|_{L^p}\leq C_pn^{-1/2}$.

{For $\Delta_Q$, we use the algebra preceding \cref{eq:tap-recombination} to write}
\begin{equation*}
 \Delta_Q=\ell(1-m_i^2)m_{-i}
       +\ell\left(\sum_b\pi_b\|m_b\|_2^2m_b-\|m_{-i}\|_2^2m_{-i}\right).
\end{equation*}
To separate the terms that cancel in this weighted difference, we first
write $\|a+u\|_2^2=\|a\|_2^2+2\inner au+\|u\|_2^2$ and multiply by $a+u$:
\begin{align*}
 \|a+u\|_2^2(a+u)
 =&~\|a\|_2^2a+\|a\|_2^2u+2\inner au\,a
       +2\inner au\,u+\|u\|_2^2a+\|u\|_2^2u.
\end{align*}
With $a$ fixed, the term $\|a\|_2^2a$ is constant in $u$, and
$\|a\|_2^2u+2\inner au\,a$ is linear in $u$.
We subtract these terms and denote the remaining quadratic and cubic
terms by
\begin{equation*}
 \rho_Q(a,u)
 \defeq \|a+u\|_2^2(a+u)-\|a\|_2^2a
       -(\|a\|_2^2u+2\inner au\,a)
 =2\inner au\,u+\|u\|_2^2a+\|u\|_2^2u.
\end{equation*}
If $a,a+u\in[-1,1]^M$, then $\|a\|_2\leq\sqrt M$ and
$\|u\|_2\leq2\sqrt M$.  Thus
$\|\rho_Q(a,u)\|_2\leq5\sqrt M\|u\|_2^2$.
Expanding both cubic terms about $m^{\setminus i}$, we obtain
\begin{align*}
 \|m_b\|_2^2m_b
 =\|m^{\setminus i}\|_2^2m^{\setminus i}
       +\|m^{\setminus i}\|_2^2(m_b-m^{\setminus i})
 +2\inner{m^{\setminus i}}{m_b-m^{\setminus i}}m^{\setminus i}
       +\rho_Q(m^{\setminus i},m_b-m^{\setminus i}),
\end{align*}
\[
 \|m_{-i}\|_2^2m_{-i}
 =\|m^{\setminus i}\|_2^2m^{\setminus i}
       +\|m^{\setminus i}\|_2^2(m_{-i}-m^{\setminus i})+2\inner{m^{\setminus i}}{m_{-i}-m^{\setminus i}}m^{\setminus i}
       +\rho_Q(m^{\setminus i},m_{-i}-m^{\setminus i}).
\]
We multiply the first identity by $\pi_b$, sum over $b$, and subtract
the second identity.  The constant terms cancel because
$\sum_b\pi_b=1$.  For the linear terms, we use
\begin{align*}
 \sum_b\pi_b(m_b-m^{\setminus i})=&~m_{-i}-m^{\setminus i},\\
 \sum_b\pi_b\inner{m^{\setminus i}}{m_b-m^{\setminus i}}
 =&~\inner{m^{\setminus i}}{m_{-i}-m^{\setminus i}}.
\end{align*}
Thus only the remainders remain, and we obtain
\begin{equation*}
 \sum_b\pi_b\|m_b\|_2^2m_b-\|m_{-i}\|_2^2m_{-i}
 =\sum_b\pi_b\rho_Q(m^{\setminus i},m_b-m^{\setminus i})
       -\rho_Q(m^{\setminus i},m_{-i}-m^{\setminus i}).
\end{equation*}
For the squared vector increments, we apply Minkowski in $L^p$
and \Cref{lem:tap-mean-increments} at order $2p$:
\begin{align*}
 \E{\|m_b-m^{\setminus i}\|_2^{2p}\mid X_i=b}^{1/p}
 \leq&~\sum_{j\ne i}\E{|(m_b)_j-m_j^{\setminus i}|^{2p}\mid X_i=b}^{1/p}
 \leq C_p,\\
 \|m_{-i}-m^{\setminus i}\|_{L^{2p}}^2
 \leq&~\sum_{j\ne i}\|m_j-m_j^{\setminus i}\|_{L^{2p}}^2\leq C_p.
\end{align*}
We substitute into $\|\rho_Q(a,u)\|_2\leq5\sqrt M\|u\|_2^2$ and obtain
\begin{align*}
 \E{\|\rho_Q(m^{\setminus i},m_b-m^{\setminus i})\|_2^p\mid X_i=b}^{1/p}
 \leq&~C_p\sqrt M,\\
 \|\rho_Q(m^{\setminus i},m_{-i}-m^{\setminus i})\|_{L^p}
 \leq&~C_p\sqrt M.
\end{align*}
We apply Jensen's inequality and \cref{eq:tap-posterior-weighted-matching}
to the weighted sum of $\rho_Q(m^{\setminus i},m_b-m^{\setminus i})$,
which are $\mathcal F$-measurable, and use
$\|m_{-i}\|_2\leq\sqrt M$ for the remaining term.  Therefore
\begin{equation*}
 \|\Delta_Q\|_{L^p}\leq\ell\sqrt M+C_p\ell\sqrt M
 \leq C_pn^{-1/2}.
\end{equation*}
{Combining the bounds for $\Delta_{\mathrm{ent}}$ and $\Delta_Q$ proves the lemma.}
\end{proof}

\begin{proof}[Proof of \Cref{lem:tap-parent-child}]
We fix a deleted coordinate $i$ and a moment order $p\geq2$.
We recall that $\iota_i:\mathbb R^{n-1}\to\mathbb R^n$ inserts a zero
at coordinate $i$, so
$\norm{\iota_i\mathbf r^{\setminus i}}_2
=\|\mathbf r^{\setminus i}\|_2$.
It suffices to bound $\norm{\mathbf r-\iota_i\mathbf r^{\setminus i}}_{L^p}$,
because the reverse triangle inequality bounds the difference of the
two Euclidean norms by this vector difference.

Both $\mathbf r_{\mathrm{pin}}^b$ and $\mathbf r^{\setminus i}$
are functions of $(J,y)$.  We apply Jensen's inequality and
\cref{eq:tap-posterior-weighted-matching} to their difference, then use
\cref{eq:tap-matched-row-increment} to obtain
\begin{equation*}
 \left\|\sum_b\pi_b(\mathbf r_{\mathrm{pin}}^b-\mathbf r^{\setminus i})\right\|_{L^p}
 \leq\left(\frac12\sum_b
 \E{\|\mathbf r_{\mathrm{pin}}^b-\mathbf r^{\setminus i}\|_2^p\mid X_i=b}\right)^{1/p}
 \leq C_pn^{-1/4}.
\end{equation*}

We now use \cref{eq:tap-recombination} on the coordinates different
from $i$.  Coordinate $i$ of
$\mathbf r-\iota_i\mathbf r^{\setminus i}$ equals $r_i$.
We apply the triangle inequality in vector $L^p$ to obtain
\begin{align*}
 \norm{\mathbf r-\iota_i\mathbf r^{\setminus i}}_{L^p}
 \leq&~\norm{r_i}_{L^p}
       +\left\|\sum_{b=\pm1}\pi_b
          (\mathbf r_{\mathrm{pin}}^b-\mathbf r^{\setminus i})
         \right\|_{L^p}
       +\norm{\Delta_{\mathrm{ent}}}_{L^p}+\norm{\Delta_Q}_{L^p}\\
 \leq&~C_pn^{-1/2}+C_pn^{-1/4}+C_pn^{-1/2}
 \leq C'_pn^{-1/4}.
\end{align*}
Here we used \Cref{lem:tap-complete-row} at order $p$ for $r_i$,
the weighted-sum estimate above, and
\Cref{lem:tap-recombination-errors} at order $p$ for the two errors.
Their constants do not depend on $i$.

For every fixed integer $p\geq1$, we apply this vector estimate at order
$4p\geq2$ and use the reverse triangle inequality:
\begin{equation*}
 \max_i\bigl\|\,\|\mathbf r\|_2-\|\mathbf r^{\setminus i}\|_2\,\bigr\|_{L^{4p}}
 \leq\max_i\norm{\mathbf r-\iota_i\mathbf r^{\setminus i}}_{L^{4p}}
 \leq C'_{4p}n^{-1/4}.
\end{equation*}
Renaming the constant proves the claimed bound.
Finally,
$\mathbf r^{\setminus i}
=R_{n-1}^{(Q)}(m^{\setminus i};J^{\setminus i},y_{-i})$,
and $m^{\setminus i}$ is a ratio of finite Gibbs sums determined by
$(J^{\setminus i},y_{-i})$.
Therefore $\mathbf r^{\setminus i}$ is
$\mathcal F^{\setminus i}$-measurable, as asserted.
\end{proof}

\endgroup

\section{Probability boosting for AMP good events}
\label{sec:amp-good-events-proof}
In this section, we prove \Cref{lem:amp-good-events} by considering two cases.
For small times, we use a matrix-norm bound.
{For large times, we combine AMP stability with the Gaussian concentration result in \Cref{lem:gaussian-enlargement-boosting} to prove the small-residual and local strong-convexity conditions with failure probability at most $C\e^{-cn}$.}
{We then use overlap concentration and fixed-depth AMP estimates to prove posterior-mean proximity with failure probability at most $C_Hn^{-H}$ for every fixed $H>0$.}

\subsection{Useful lemmas}

We first introduce some standard and useful lemmas.

\begin{lemma}[{\cite[Lemma~5.13, Eq.~(5.64), and Lemma~5.6(c)]{ElAlaouiMontanariSellke25}, \cite[Theorem~2(i)]{Celentano24}}]
\label{prop:fixed-time-amp-tap-input}
Fix $0<\beta<1$ and $s>0$.
There are constants $r,\kappa>0$, depending only on $(\beta,s)$, such that, for every $\epsilon>0$, there is an integer $K_0=K_0(\beta,s,\epsilon)$ with the following properties for every fixed $K\geq K_0$.
\begin{enumerate}[label=\textup{(\roman*)}]
\item Under $\mathbb P_{n,\beta}^{\mathrm{pl,aug}}$, with probability tending to one as $n\to\infty$, the pair $(A^{\mathrm{full}},Y_s)$ satisfies $\epsilon$-small TAP residual and $(r,\kappa)$-local strong convexity at time $s$ and AMP depth $K$, as defined in \Cref{def:amp-tap-geometry}.
\item The posterior-mean approximation satisfies
\begin{equation*}
 \limsup_{n\to\infty}\E[\mathbb P_{n,\beta}^{\mathrm{pl,aug}}]{
 \frac{\|m(A,Y_s)-\widehat m_K(A^{\mathrm{full}},Y_s)\|_2^2}{n}}
 \leq\epsilon^2.
\end{equation*}
\end{enumerate}
\end{lemma}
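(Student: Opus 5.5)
This lemma is imported from prior work; my plan is to transcribe the results of \cite{ElAlaouiMontanariSellke25} and \cite{Celentano24} into the notation of \Cref{def:amp-tap-geometry}, and to check that the hypotheses and normalizations agree.

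\emph{Step 1: align the models.} The model in \cite{ElAlaouiMontanariSellke25} at time $s$ is the planted model $A^{\mathrm{full}}=W+\beta XX^\top/n$ with $Y_s=sX+B_s$. Its AMP recursion is exactly the iteration $\widehat m_k=\tanh(z_k)$ with Onsager term $\beta^2(1-\|\widehat m_k\|_2^2/n)\widehat m_{k-1}$. Their state evolution fixed point for the SK choice $\xi(q)=q^2/2$ is $q_\beta(s)$, exactly as already noted in the proof of \Cref{lem:q-beta-regularity}. The diagonal of $A^{\mathrm{full}}$ is GOE-type, which matches their convention. If their convention were zero-diagonal instead, the extra term $\operatorname{diag}(d)\widehat m$ has norm $O(\max_i|d_i|)\sqrt n=o(\sqrt n)$ with high probability, so it changes neither the residual bound nor the Hessian by more than $o(1)$.

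\emph{Step 2: part (i).} The residual estimate follows from \cite[Lemma~5.13]{ElAlaouiMontanariSellke25}. There, state evolution gives $\lim_{K}\operatorname{plim}_n n^{-1/2}\|R_{\mathrm{TAP}}(\widehat m_K)\|_2=0$, since $\operatorname{arctanh}(\widehat m_K)=z_K$ and $z_K-z_{K+1}\to0$ in normalized norm, and the Onsager coefficient converges to $\beta^2(1-q_\beta(s))$. Choosing $K_0$ large then gives the $\epsilon$-small residual for every fixed $K\ge K_0$. Local strong convexity is \cite[Theorem~2(i)]{Celentano24}: on a ball of radius $r\sqrt n$ around $\widehat m_K$, the Hessian $\operatorname{diag}((1-u_j^2)^{-1})-\beta A^{\mathrm{full}}+\beta^2(1-q)I$ is at least $\kappa I$ with probability tending to one, with $r,\kappa$ depending only on $(\beta,s)$ and uniform in large $K$. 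Two checks are needed here. First, Celentano's ball and curvature constants must not depend on $\epsilon$; this matters because $r,\kappa$ are quantified before $\epsilon$. Second, his prescribed-$q$ Hessian must match ours.

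\emph{Step 3: part (ii).} Bayes-optimality gives $\lim_n n^{-1}\E{\|m\|_2^2}=q_\beta(s)$; this is the overlap concentration of \Cref{lem:tap-overlap-mgf} combined with Nishimori, or \cite[Lemma~5.6(c)]{ElAlaouiMontanariSellke25}. State evolution gives $n^{-1}\E{\|\widehat m_K\|_2^2}\to q_K$ and $n^{-1}\E{\langle X,\widehat m_K\rangle}\to q_K$, where $q_K\to q_\beta(s)$. Expanding the square and using $\E{\langle m,\widehat m_K\rangle}=\E{\langle X,\widehat m_K\rangle}$, which holds because $\widehat m_K$ is a function of the observations, yields $\limsup_n n^{-1}\E{\|m-\widehat m_K\|_2^2}\le q_\beta(s)-q_K$, which is at most $\epsilon^2$ for $K\ge K_0$. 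This is \cite[Eq.~(5.64)]{ElAlaouiMontanariSellke25}.

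I expect the main obstacle to be bookkeeping rather than mathematics. The key point is to confirm that the cited curvature radius and constant in \cite{Celentano24} are uniform in $K\ge K_0$ and independent of $\epsilon$. The other point is to confirm that the identification $q_K\to q_\beta(s)$ is monotone, which ensures every $K\ge K_0$, not just a subsequence, works. Monotonicity of the state-evolution sequence in the Bayes-optimal setting provides this.
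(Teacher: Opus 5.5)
Your proposal is correct and takes essentially the paper's route: the paper also just transcribes \cite[Lemma~5.13, Eq.~(5.64), Lemma~5.6(c)]{ElAlaouiMontanariSellke25} and \cite[Theorem~2(i)]{Celentano24}. Your direct state-evolution decomposition of $R_{\mathrm{TAP}}(\widehat m_K)$, which also uses $\widehat m_K-\widehat m_{K-1}\to0$ and $\|\widehat m_K\|_2^2/n\to q_\beta(s)$, replaces the paper's remark that the extra term proportional to $(\|u\|_2^2/n-q)u$ in the modified TAP gradient \cite[Eq.~(5.84)]{ElAlaouiMontanariSellke25} vanishes at the AMP iterate; the bookkeeping you leave implicit is that Celentano's free energy equals $F_{\mathrm{TAP}}/n$ plus a constant, so his curvature bound $\kappa/n$ becomes $\kappa$ on the same $r\sqrt n$ ball, and that his iterate $m^{K-1}$ corresponds to $\widehat m_K$.
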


We briefly explain the adjustments needed to apply the cited results.
We specialize \cite{ElAlaouiMontanariSellke25} to $\xi(q)=q^2/2$ and take $\lambda=\beta$ and $\gamma_0=s$ in \cite{Celentano24}.
The limiting overlaps in both papers equal $q_\beta(s)$, and the AMS iterate $m^{K-1}$ in \cite{Celentano24} corresponds to our $\widehat m_K$.
Compared with our residual, the modified TAP gradient in \cite[Eq.~(5.84)]{ElAlaouiMontanariSellke25} has an additional term proportional to $(\|u\|_2^2/n-q)u$.
At the AMP iterate, overlap convergence ensures that the norm of this term divided by $\sqrt n$ tends to zero as $n\to\infty$ and then $K\to\infty$, so \cite[Lemma~5.13]{ElAlaouiMontanariSellke25} yields the small-residual condition.
The AMS free energy in \cite[Theorem~2(i)]{Celentano24} equals $F_{\mathrm{TAP}}/n$ plus a constant, so its Hessian lower bound $\kappa/n$ becomes $\kappa$ in our normalization, on the same ball of radius $r\sqrt n$.
We choose $r,\kappa$ first and then take $K_0\geq1$ large enough for part~\textup{(i)}.
Finally, by \cite[Eq.~(5.64) and Lemma~5.6(c)]{ElAlaouiMontanariSellke25}, the normalized mean-square error tends to zero in the same order of limits, so increasing $K_0$ establishes part~\textup{(ii)} as well.

We also need an operator-norm tail bound under the planted law.
\begin{lemma}
\label{lem:inverse-poly-planted-norm-tail}
Fix $0\leq\beta<1$.  For every fixed $\epsilon>0$, there are $c_{\beta,\epsilon},
C_{\beta,\epsilon}>0$ such that
\begin{equation*}
  \mathbb P_{n,\beta}^{\mathrm{pl,aug}}\left\{
    \norm{A^{\mathrm{full}}}_{\mathrm{op}}>2+\epsilon\right\}
  \leq C_{\beta,\epsilon}\e^{-c_{\beta,\epsilon}n}.
\end{equation*}
\end{lemma}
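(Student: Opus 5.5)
The plan is to decompose $A^{\mathrm{full}}=W+\frac{\beta}{n}XX^\top$ under $\mathbb P_{n,\beta}^{\mathrm{pl,aug}}$. The triangle inequality gives
\begin{equation*}
 \norm{A^{\mathrm{full}}}_{\mathrm{op}}\leq\norm{W}_{\mathrm{op}}+\frac{\beta}{n}\norm{X}_2^2=\norm{W}_{\mathrm{op}}+\beta.
\end{equation*}
This bound is too weak, because $2+\beta>2+\epsilon$ when $\epsilon<\beta$. A rank-one perturbation of strength $\beta<1$ lies below the BBP threshold, so the top eigenvalue should stick to the bulk edge $2$. I would therefore not argue through the planted law directly for the spike. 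Instead I would transfer the bound through contiguity.

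Under either augmented law, $A^{\mathrm{full}}=A+\operatorname{diag}(d)$, where $d$ has i.i.d.\ $N(\beta/n,2/n)$ coordinates independent of $A$. The diagonal contributes $\norm{\operatorname{diag}(d)}_{\mathrm{op}}=\max_i|d_i|$. A Gaussian union bound gives $\max_i|d_i|\leq\epsilon/3$ except with probability $2n\exp(-c\epsilon^2n)$, which is exponentially small. It remains to bound $\norm{A}_{\mathrm{op}}$ under the planted law. Since the joint law of $(X,A,d,Y)$ has likelihood ratio $L_n(A)$ by \cref{eq:planting-density-ratio}, the forward comparison in \Cref{lem:planting-likelihood} applies:
\begin{equation*}
 \mathbb P_{n,\beta}^{\mathrm{pl,aug}}\{\norm{A}_{\mathrm{op}}>2+2\epsilon/3\}\leq(1-\beta^2)^{-1/4}\,\mathbb P_{0,n}\{\norm{A}_{\mathrm{op}}>2+2\epsilon/3\}^{1/2}.
\end{equation*}
Under $\mathbb P_{0,n}$, the matrix $A$ is a zero-diagonal Wigner matrix with $N(0,1/n)$ entries. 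Standard results give $\E{\norm{A}_{\mathrm{op}}}\leq2+o(1)$, for instance via the GOE comparison: add an independent diagonal and control the difference by the max-of-Gaussians bound, or use the Bandeira--van Handel bound. The map $A\mapsto\norm{A}_{\mathrm{op}}$ is $\sqrt{2/n}$-Lipschitz in the standard Gaussian coordinates $(\sqrt nA_{ij})_{i<j}$. Gaussian concentration (\Cref{lem:gaussian-enlargement-boosting}) then gives
\begin{equation*}
 \mathbb P_{0,n}\{\norm{A}_{\mathrm{op}}>2+2\epsilon/3\}\leq2\exp(-c\epsilon^2n)
\end{equation*}
for $n$ large enough that the mean is below $2+\epsilon/3$.

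Taking the square root keeps the bound exponential. The union bound with the diagonal event then yields the claim. Small $n$ is absorbed by enlarging $C_{\beta,\epsilon}$, since probabilities are at most $1$.

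The main obstacle is the expectation bound $\E{\norm{A}_{\mathrm{op}}}\leq2+o(1)$ for the zero-diagonal Wigner matrix. I would cite a standard reference for this rather than prove it. It is also the step the paper appears to defer to an existing tail bound, namely \Cref{lem:terminal-disorder-tail} for the original law. If that lemma already gives $\mathbb P_{0,n}\{\norm{A}_{\mathrm{op}}>2+\epsilon'\}\leq Ce^{-cn}$, I would invoke it directly and reduce the proof to the contiguity transfer plus the diagonal estimate.
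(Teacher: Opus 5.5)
Your argument is correct and follows the paper's route. Both bound the norm under the original law by GOE-type concentration and then transfer to the planted law with the forward comparison of \Cref{lem:planting-likelihood}, which avoids the sub-BBP spike exactly as you intended. The only difference is that the paper never separates the diagonal: under $\mathbb P_{0,n,\beta}^{\mathrm{aug}}$ the auxiliary diagonal makes $A^{\mathrm{full}}-(\beta/n)I_n$ an exact GOE matrix, so the GOE tail $2\e^{-nu^2/4}$ applies to the whole event at once, with no union bound over the $d_i$ and no separate expectation bound for the zero-diagonal matrix. Also note that \Cref{lem:terminal-disorder-tail} as stated only controls $\|A\|_{\mathrm{op}}\le3$, so it would not suffice by itself for small $\epsilon$; your main argument does not depend on it.
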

\begin{proof}
{Under the original augmented law,}
$A^{\mathrm{full}}=G+\tp{\beta/n}I_n$, where $G$ is a centered GOE matrix.
By \cite[Section~3.3]{BandeiraElAlaouiRodder26}, for every $u\geq0$, we have
$\Pr[G\sim\mathrm{GOE}(n)]{\norm{G}_{\mathrm{op}}\geq2+u}
\leq2\e^{-nu^2/4}$.
For $n\geq2\beta/\epsilon$, we have
$\norm{(\beta/n)I_n}_{\mathrm{op}}=\beta/n\leq\epsilon/2$. Hence,
\begin{align*}
  \mathbb P_{0,n,\beta}^{\mathrm{aug}}\left\{
    \norm{A^{\mathrm{full}}}_{\mathrm{op}}>2+\epsilon\right\}
  \leq\Pr[G\sim\mathrm{GOE}(n)]{
    \norm{G}_{\mathrm{op}}>2+\epsilon/2}
  \leq2\e^{-n\epsilon^2/16}.
\end{align*}
We set $E=\{\norm{A^{\mathrm{full}}}_{\mathrm{op}}>2+\epsilon\}$ and apply the forward probability comparison in \Cref{lem:planting-likelihood} to obtain
\begin{equation*}
 \mathbb P_{n,\beta}^{\mathrm{pl,aug}}(E)
 \leq(1-\beta^2)^{-1/4}\mathbb P_{0,n,\beta}^{\mathrm{aug}}(E)^{1/2}.
\end{equation*}
Combining this with the tail bound under $\mathbb P_{0,n,\beta}^{\mathrm{aug}}$, for $n\geq2\beta/\epsilon$ we obtain
\begin{equation*}
 \mathbb P_{n,\beta}^{\mathrm{pl,aug}}(E)
 \leq(1-\beta^2)^{-1/4}\left(2\e^{-n\epsilon^2/16}\right)^{1/2}
 =\sqrt2(1-\beta^2)^{-1/4}\e^{-n\epsilon^2/32}.
\end{equation*}
This finishes the proof.
\end{proof}
Next, we establish the Lipschitz properties.
{To cover the inverse-temperature interval in \Cref{lem:app-wpi-branches}, we also control perturbations in temperature.}
{In the next lemma, we include the inverse temperature as a subscript on $\widehat m_{K,\beta}$, $R_{\mathrm{TAP},\beta}$, and $F_{\mathrm{TAP},\beta}$.}

\begin{lemma}
\label{lem:inverse-poly-fixed-depth-stability}
Fix $0<\beta_0<\beta_1<1$ and an integer $K\geq0$.
{There is a constant $C=C_{\beta_0,\beta_1,K}$ such that the following bounds hold for every $\beta,\theta\in[\beta_0,\beta_1]$, $s,t\geq0$, symmetric matrices $A^{\mathrm{full}},\widetilde A^{\mathrm{full}}$ with operator norms at most $4$, and fields $y,\widetilde y\in\R^n$.}
\begin{enumerate}[label=\textup{(\roman*)}]
\item The scalar fixed points satisfy
\begin{equation*}
 \abs{q_\beta(t)-q_\theta(s)}
 \leq\frac{\abs{t-s}+2\beta_1\abs{\beta-\theta}}{1-\beta_1^2}.
\end{equation*}
\item The AMP centers and their TAP residuals satisfy
\begin{align*}
 \norm{\widehat m_{K,\beta}\tp{A^{\mathrm{full}},y}
       -\widehat m_{K,\theta}\tp{\widetilde A^{\mathrm{full}},\widetilde y}}_2
 \leq C\left(\sqrt n
   \norm{A^{\mathrm{full}}-\widetilde A^{\mathrm{full}}}_{\mathrm{op}}
   +\norm{y-\widetilde y}_2+\sqrt n\abs{\beta-\theta}\right),
\end{align*}
\begin{align*}
 &~\left\|R_{\mathrm{TAP},\beta}\tp{
     \widehat m_{K,\beta}\tp{A^{\mathrm{full}},y};
     A^{\mathrm{full}},y,q_\beta(t)}
   -R_{\mathrm{TAP},\theta}\tp{
     \widehat m_{K,\theta}\tp{\widetilde A^{\mathrm{full}},\widetilde y};
     \widetilde A^{\mathrm{full}},\widetilde y,q_\theta(s)}
   \right\|_2\\
 \leq&~C\left(\sqrt n
   \norm{A^{\mathrm{full}}-\widetilde A^{\mathrm{full}}}_{\mathrm{op}}
   +\norm{y-\widetilde y}_2
   +\sqrt n\tp{\abs{\beta-\theta}+\abs{t-s}}\right).
\end{align*}
\item For every $u\in(-1,1)^n$, the TAP Hessians satisfy
\begin{align*}
 &~\left\|\nabla_u^2F_{\mathrm{TAP},\beta}
       \tp{u;A^{\mathrm{full}},y,q_\beta(t)}
   -\nabla_u^2F_{\mathrm{TAP},\theta}
       \tp{u;\widetilde A^{\mathrm{full}},\widetilde y,q_\theta(s)}
       \right\|_{\mathrm{op}}\\
 \leq&~C\left(
   \norm{A^{\mathrm{full}}-\widetilde A^{\mathrm{full}}}_{\mathrm{op}}
   +\abs{\beta-\theta}+\abs{t-s}\right).
\end{align*}
\end{enumerate}
\end{lemma}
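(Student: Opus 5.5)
The plan is to handle the three parts in order, each by a direct perturbation argument. Throughout we use $\norm{A^{\mathrm{full}}}_{\mathrm{op}},\norm{\widetilde A^{\mathrm{full}}}_{\mathrm{op}}\le4$ and $\beta,\theta\le\beta_1<1$.

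\emph{Part (i).} Subtract the fixed-point equations $q_\beta(t)=f(\beta^2q_\beta(t)+t)$ and $q_\theta(s)=f(\theta^2q_\theta(s)+s)$. \Cref{lem:q-beta-regularity} records that $f$ is $1$-Lipschitz on $[0,\infty)$, so
\begin{equation*}
 |q_\beta(t)-q_\theta(s)|\le\beta^2|q_\beta(t)-q_\theta(s)|+|\beta^2-\theta^2|\,q_\theta(s)+|t-s|.
\end{equation*}
Since $q_\theta(s)\le1$, $|\beta^2-\theta^2|\le2\beta_1|\beta-\theta|$ and $\beta^2\le\beta_1^2$, rearranging gives the claimed bound.

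\emph{Part (ii).} Extend the induction in \Cref{lem:amp-spatial-lipschitz}. Set $\Delta_k=\norm{\widehat m_{k,\beta}(A^{\mathrm{full}},y)-\widehat m_{k,\theta}(\widetilde A^{\mathrm{full}},\widetilde y)}_2$ and subtract the two updates for $z_{k+1}$. Insert intermediate terms so that each difference isolates one parameter:
\begin{itemize}
\item $\beta A^{\mathrm{full}}\widehat m-\theta\widetilde A^{\mathrm{full}}\widetilde m$ is bounded by $4\Delta_k+\sqrt n\,\norm{A^{\mathrm{full}}-\widetilde A^{\mathrm{full}}}_{\mathrm{op}}+4\sqrt n|\beta-\theta|$, using $\norm{\widetilde m}_2\le\sqrt n$;
\item the field term contributes $\norm{y-\widetilde y}_2$;
\item the Onsager term contributes $\Delta_{k-1}+2\Delta_k+2\sqrt n|\beta-\theta|$, exactly as in that lemma plus the $\beta^2-\theta^2$ discrepancy.
\end{itemize}
Since $\tanh$ is $1$-Lipschitz, this gives a linear recursion of the form $\Delta_{k+1}\le6\Delta_k+\Delta_{k-1}+(\text{perturbation})$. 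Inducting over fixed depth $K$ yields the AMP-center bound with $C$ depending on $K$.

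For the residual, write $R_{\mathrm{TAP}}=\operatorname{arctanh}(\widehat m_K)-y-\beta A^{\mathrm{full}}\widehat m_K+\beta^2(1-q)\widehat m_K$. The awkward term is $\operatorname{arctanh}$, which is not Lipschitz near $\pm1$. Avoid it by using $\operatorname{arctanh}(\widehat m_K)=z_K$, since $\widehat m_K=\tanh(z_K)$. Then $z_K$ differs by at most the same recursion bound, which the induction already controls (through the $\Delta_{K}$, $\Delta_{K-1}$ estimates). The remaining terms are bilinear in bounded quantities: the $\beta^2(1-q)$ difference is handled with part (i), which contributes the $\sqrt n(|\beta-\theta|+|t-s|)$ term. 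This $\operatorname{arctanh}=z_K$ substitution is the one step needing care; with it, everything is routine.

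\emph{Part (iii).} The Hessian is
\begin{equation*}
 \operatorname{diag}\bigl((1-u_i^2)^{-1}\bigr)-\beta A^{\mathrm{full}}+\beta^2(1-q)I_n.
\end{equation*}
At the same $u$ the diagonal entropy term cancels exactly. What remains is at most $\norm{\beta A^{\mathrm{full}}-\theta\widetilde A^{\mathrm{full}}}_{\mathrm{op}}+|\beta^2(1-q_\beta(t))-\theta^2(1-q_\theta(s))|$. The first term is at most $\norm{A^{\mathrm{full}}-\widetilde A^{\mathrm{full}}}_{\mathrm{op}}+4|\beta-\theta|$, and the second is bounded via part (i).
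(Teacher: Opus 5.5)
Your proposal is correct and follows the paper's proof essentially step for step: the same Lipschitz fixed-point rearrangement for (i), the same extension of the recursion in \Cref{lem:amp-spatial-lipschitz} together with the substitution $\operatorname{arctanh}(\widehat m_K)=z_K$ for (ii) (the paper's residual decomposition also begins with $z_K-\widetilde z_K$), and the same cancellation of the diagonal term for (iii). The only slip is an off-by-one in indexing: $z_K-\widetilde z_K$ is controlled through $\Delta_{K-1}$ and $\Delta_{K-2}$ together with the perturbation terms, not through $\Delta_K$ and $\Delta_{K-1}$.
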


\begin{proof}
For part~\textup{(i)}, recall the scalar channel function
$\psi(\gamma)=\E[G\sim N(0,1)]{\tanh^2\tp{\gamma+\sqrt\gamma G}}$,
which is one-Lipschitz by \cite[Lemma~5.6(a),(b)]{ElAlaouiMontanariSellke25}.
The fixed-point equations are
$q_\beta(t)=\psi\tp{t+\beta^2q_\beta(t)}$ and
$q_\theta(s)=\psi\tp{s+\theta^2q_\theta(s)}$.  Hence
\begin{align*}
 \abs{q_\beta(t)-q_\theta(s)}
 =&~\abs{\psi\tp{t+\beta^2q_\beta(t)}
       -\psi\tp{s+\theta^2q_\theta(s)}}\\
 \leq&~\abs{t-s+\beta^2q_\beta(t)-\theta^2q_\theta(s)}\\
 =&~\abs{t-s+\beta^2\tp{q_\beta(t)-q_\theta(s)}
       +(\beta^2-\theta^2)q_\theta(s)}\\
 \leq&~\abs{t-s}+\beta^2\abs{q_\beta(t)-q_\theta(s)}
       +\abs{\beta^2-\theta^2}q_\theta(s).
\end{align*}
Since $0\leq q_\theta(s)\leq1$, $\beta^2\leq\beta_1^2$, and
$\abs{\beta^2-\theta^2}=(\beta+\theta)\abs{\beta-\theta}
\leq2\beta_1\abs{\beta-\theta}$, we obtain
\begin{align*}
 \abs{q_\beta(t)-q_\theta(s)}
 \leq&~\abs{t-s}+\beta_1^2\abs{q_\beta(t)-q_\theta(s)}
       +2\beta_1\abs{\beta-\theta}.
\end{align*}
Moving the term $\beta_1^2\abs{q_\beta(t)-q_\theta(s)}$ to the left
and dividing by $1-\beta_1^2>0$ proves part~\textup{(i)}.

{For part~\textup{(ii)}, we extend the recursion argument of \cite[Lemma~5.17]{ElAlaouiMontanariSellke25} to matrix, field, and inverse-temperature perturbations.}  Write $z_k,m_k$ and
$\widetilde z_k,\widetilde m_k$ for the AMP iterates at the two inputs,
with $m_k=\tanh(z_k)$ and $\widetilde m_k=\tanh(\widetilde z_k)$.
Set $Q_k=\norm{m_k}_2^2/n$ and
$\widetilde Q_k=\norm{\widetilde m_k}_2^2/n$.
Since the coordinates of both vectors lie in $[-1,1]$,
$\abs{Q_k-\widetilde Q_k}\sqrt n\leq2\norm{m_k-\widetilde m_k}_2$.
{The matrix and inverse-temperature coefficients satisfy}
\begin{align*}
 \norm{\beta A^{\mathrm{full}}-\theta\widetilde A^{\mathrm{full}}}_{\mathrm{op}}
 \leq&~\beta_1
   \norm{A^{\mathrm{full}}-\widetilde A^{\mathrm{full}}}_{\mathrm{op}}
   +4\abs{\beta-\theta},\\
 \abs{\beta^2-\theta^2}\leq&~2\beta_1\abs{\beta-\theta}.
\end{align*}
By definition, we have
\begin{align*}
 z_{k+1}=&~\beta A^{\mathrm{full}}m_k+y-\beta^2(1-Q_k)m_{k-1},\\
 \widetilde z_{k+1}=&~\theta\widetilde A^{\mathrm{full}}\widetilde m_k
       +\widetilde y-\theta^2(1-\widetilde Q_k)\widetilde m_{k-1}.
\end{align*}
Subtracting and separating the changes in the iterates from those in the coefficients, {we obtain}
\begin{align*}
 z_{k+1}-\widetilde z_{k+1}
 =&~\beta A^{\mathrm{full}}(m_k-\widetilde m_k)
    +(\beta A^{\mathrm{full}}-\theta\widetilde A^{\mathrm{full}})
       \widetilde m_k+(y-\widetilde y)
 -\beta^2(1-Q_k)(m_{k-1}-\widetilde m_{k-1})\\
 &~-(\beta^2(1-Q_k)-\theta^2(1-\widetilde Q_k))
       \widetilde m_{k-1}.
\end{align*}
For the matrix terms, $\norm{A^{\mathrm{full}}}_{\mathrm{op}}\leq4$
and $\norm{\widetilde m_k}_2\leq\sqrt n$ imply
\begin{align*}
 \norm{\beta A^{\mathrm{full}}(m_k-\widetilde m_k)}_2
 \leq&~4\beta_1\norm{m_k-\widetilde m_k}_2,\\
 \norm{(\beta A^{\mathrm{full}}-\theta\widetilde A^{\mathrm{full}})
       \widetilde m_k}_2
 \leq&~\beta_1\sqrt n
       \norm{A^{\mathrm{full}}-\widetilde A^{\mathrm{full}}}_{\mathrm{op}}
       +4\sqrt n\abs{\beta-\theta}.
\end{align*}
For the Onsager terms, $0\leq Q_k,\widetilde Q_k\leq1$, so the first
has norm at most $\beta_1^2\norm{m_{k-1}-\widetilde m_{k-1}}_2$.
The coefficient in the last term satisfies
\begin{align*}
 \abs{\beta^2(1-Q_k)-\theta^2(1-\widetilde Q_k)}
 =&~\abs{(\beta^2-\theta^2)(1-Q_k)
             +\theta^2(\widetilde Q_k-Q_k)}
 \leq 2\beta_1\abs{\beta-\theta}
             +\beta_1^2\abs{Q_k-\widetilde Q_k}.
\end{align*}
Since $\norm{\widetilde m_{k-1}}_2\leq\sqrt n$, the bound on
$\abs{Q_k-\widetilde Q_k}$ above bounds the last term by
\begin{equation*}
 2\beta_1\sqrt n\abs{\beta-\theta}
 +2\beta_1^2\norm{m_k-\widetilde m_k}_2.
\end{equation*}
The triangle inequality therefore gives
\begin{align*}
 \norm{z_{k+1}-\widetilde z_{k+1}}_2
 \leq&~(4\beta_1+2\beta_1^2)\norm{m_k-\widetilde m_k}_2
       +\beta_1^2\norm{m_{k-1}-\widetilde m_{k-1}}_2\\
 &~+\beta_1\sqrt n
       \norm{A^{\mathrm{full}}-\widetilde A^{\mathrm{full}}}_{\mathrm{op}}
       +(4+2\beta_1)\sqrt n\abs{\beta-\theta}
       +\norm{y-\widetilde y}_2.
\end{align*}
Absorbing the fixed coefficients into $C$, we obtain
\begin{align*}
 \norm{z_{k+1}-\widetilde z_{k+1}}_2
 \leq&~C\left(\norm{m_k-\widetilde m_k}_2
       +\norm{m_{k-1}-\widetilde m_{k-1}}_2
       +\sqrt n\norm{A^{\mathrm{full}}
                         -\widetilde A^{\mathrm{full}}}_{\mathrm{op}}
       +\sqrt n\abs{\beta-\theta}\right)
       +\norm{y-\widetilde y}_2.
\end{align*}
Take $C\geq1$ in this recursion.  To make the induction explicit, set
$c_{-1}=c_0=0$ and $c_{k+1}=C(c_k+c_{k-1}+1)$.
We claim that, for every $k\geq0$,
\begin{equation*}
 \norm{z_k-\widetilde z_k}_2
 \leq c_k\left(\sqrt n
 \norm{A^{\mathrm{full}}-\widetilde A^{\mathrm{full}}}_{\mathrm{op}}
 +\norm{y-\widetilde y}_2+\sqrt n\abs{\beta-\theta}\right).
\end{equation*}
At $k=0$, both vectors are zero.  Also $m_{-1}=\widetilde m_{-1}=0$.
Since $\tanh$ is one-Lipschitz,
$\norm{m_k-\widetilde m_k}_2\leq\norm{z_k-\widetilde z_k}_2$.
If the above bound holds at steps $k$ and $k-1$, substituting
these bounds into the recursion gives
\begin{align*}
 \norm{z_{k+1}-\widetilde z_{k+1}}_2
 \leq&~C(c_k+c_{k-1}+1)
 \left(\sqrt n
 \norm{A^{\mathrm{full}}-\widetilde A^{\mathrm{full}}}_{\mathrm{op}}
 +\norm{y-\widetilde y}_2+\sqrt n\abs{\beta-\theta}\right).
\end{align*}
For $k=0$, the same calculation uses the zero initial $m_{-1}$ vectors.
The coefficient is precisely $c_{k+1}$, completing the induction.
{At the fixed depth $K$, $c_K$ depends only on the inverse-temperature interval and $K$.}  The bound for $m_K-\widetilde m_K$ follows by
applying the same one-Lipschitz property of $\tanh$.

Recall that
$R_{\mathrm{TAP},\beta}(u;A^{\mathrm{full}},y,q)
=\operatorname{arctanh}(u)-y-\beta A^{\mathrm{full}}u
+\beta^2(1-q)u$.
To control the residuals, subtract their defining expressions:
\begin{align*}
 &~R_{\mathrm{TAP},\beta}\tp{m_K;A^{\mathrm{full}},y,q_\beta(t)}
   -R_{\mathrm{TAP},\theta}\tp{\widetilde m_K;
        \widetilde A^{\mathrm{full}},\widetilde y,q_\theta(s)}\\
 =&~z_K-\widetilde z_K-(y-\widetilde y)
       -\beta A^{\mathrm{full}}(m_K-\widetilde m_K)
       -(\beta A^{\mathrm{full}}-\theta\widetilde A^{\mathrm{full}})
           \widetilde m_K\\
 &~+\beta^2(1-q_\beta(t))(m_K-\widetilde m_K)
       +(\beta^2(1-q_\beta(t))-\theta^2(1-q_\theta(s)))
           \widetilde m_K.
\end{align*}
{For the last term, we first separate the inverse-temperature and fixed-point perturbations, then apply part~\textup{(i)}:}
\begin{align*}
 \abs{\beta^2(1-q_\beta(t))-\theta^2(1-q_\theta(s))}
 =&~\abs{(\beta^2-\theta^2)(1-q_\beta(t))
            +\theta^2(q_\theta(s)-q_\beta(t))}\\
 \leq&~2\beta_1\abs{\beta-\theta}
             +\beta_1^2\abs{q_\theta(s)-q_\beta(t)}\\
 \leq&~2\beta_1\abs{\beta-\theta}
       +\frac{\beta_1^2}{1-\beta_1^2}
          \tp{\abs{t-s}+2\beta_1\abs{\beta-\theta}}\\
 =&~\frac{2\beta_1}{1-\beta_1^2}\abs{\beta-\theta}
       +\frac{\beta_1^2}{1-\beta_1^2}\abs{t-s}.
\end{align*}

Taking norms in the residual decomposition and using
$\norm{\widetilde m_K}_2\leq\sqrt n$, we obtain
\begin{align*}
 &~\left\|R_{\mathrm{TAP},\beta}\tp{m_K;A^{\mathrm{full}},y,q_\beta(t)}
   -R_{\mathrm{TAP},\theta}\tp{\widetilde m_K;
        \widetilde A^{\mathrm{full}},\widetilde y,q_\theta(s)}\right\|_2\\
 \leq&~\norm{z_K-\widetilde z_K}_2+\norm{y-\widetilde y}_2
       +(4\beta_1+\beta_1^2)\norm{m_K-\widetilde m_K}_2\\
 &~+\sqrt n\left(
       \beta_1\norm{A^{\mathrm{full}}-\widetilde A^{\mathrm{full}}}_{\mathrm{op}}
       +4\abs{\beta-\theta}
       +\frac{2\beta_1\abs{\beta-\theta}+\beta_1^2\abs{t-s}}
                    {1-\beta_1^2}\right)\\
 \leq&~((1+4\beta_1+\beta_1^2)c_K+1)
       \left(\sqrt n
       \norm{A^{\mathrm{full}}-\widetilde A^{\mathrm{full}}}_{\mathrm{op}}
       +\norm{y-\widetilde y}_2+\sqrt n\abs{\beta-\theta}\right)\\
 &~+\sqrt n\left(
       \beta_1\norm{A^{\mathrm{full}}-\widetilde A^{\mathrm{full}}}_{\mathrm{op}}
       +4\abs{\beta-\theta}
       +\frac{2\beta_1\abs{\beta-\theta}+\beta_1^2\abs{t-s}}
                    {1-\beta_1^2}\right),
\end{align*}
where the last inequality substitutes the two iterate bounds just proved.
Enlarging the constant $C$ gives the residual bound in part~\textup{(ii)}.

For part~\textup{(iii)}, the Hessian is
$\nabla_u^2F_{\mathrm{TAP},\beta}
=\operatorname{diag}\tp{(1-u_i^2)^{-1}}
-\beta A^{\mathrm{full}}+\beta^2(1-q_\beta(t))I_n$.
{At the same $u$, the diagonal terms $\operatorname{diag}((1-u_i^2)^{-1})$ cancel each other, thus}
\begin{align*}
 &~\nabla_u^2F_{\mathrm{TAP},\beta}
      \tp{u;A^{\mathrm{full}},y,q_\beta(t)}
   -\nabla_u^2F_{\mathrm{TAP},\theta}
      \tp{u;\widetilde A^{\mathrm{full}},\widetilde y,q_\theta(s)}\\
 =&~-\beta A^{\mathrm{full}}+\theta\widetilde A^{\mathrm{full}}
    +(\beta^2(1-q_\beta(t))-\theta^2(1-q_\theta(s)))I_n.
\end{align*}
The operator norm of a scalar multiple of $I_n$ is the absolute value
of that scalar.  Hence
\begin{align*}
 &~\left\|\nabla_u^2F_{\mathrm{TAP},\beta}
      \tp{u;A^{\mathrm{full}},y,q_\beta(t)}
   -\nabla_u^2F_{\mathrm{TAP},\theta}
      \tp{u;\widetilde A^{\mathrm{full}},\widetilde y,q_\theta(s)}
   \right\|_{\mathrm{op}}\\
 \leq&~\norm{\beta A^{\mathrm{full}}
                 -\theta\widetilde A^{\mathrm{full}}}_{\mathrm{op}}
       +\abs{\beta^2(1-q_\beta(t))-\theta^2(1-q_\theta(s))}\\
 \leq&~\beta_1
       \norm{A^{\mathrm{full}}-\widetilde A^{\mathrm{full}}}_{\mathrm{op}}
       +\left(4+\frac{2\beta_1}{1-\beta_1^2}\right)\abs{\beta-\theta}
       +\frac{\beta_1^2}{1-\beta_1^2}\abs{t-s}.
\end{align*}
Taking $C$ at least as large as these three coefficients proves
part~\textup{(iii)}.
\end{proof}

\subsection{Convexity for small times}
We first establish part~\textup{(i)} of \Cref{lem:amp-good-events}.
{At small times, we show that the TAP free energy is strongly convex.}
\begin{lemma}
\label{lem:inverse-poly-initial-geometry}
Fix $0<\beta_1<1$ and let $t_0 = (1-\beta_1)^2 (1 - \beta_1^2)/(4 \beta_1^2)$.
There are $c,C>0$, depending only on $\beta_1$,
such that, for every $0<\beta\leq\beta_1$, with
$\mathbb P_{n,\beta}^{\mathrm{pl,aug}}$-probability at least $1-C\e^{-cn}$,
$\norm{A^{\mathrm{full}}}_{\mathrm{op}}\leq4$ and
\begin{equation*}
 \forall t\in[0,t_0],\ y\in\R^n,\ u\in(-1,1)^n, \quad \nabla_u^2F_{\mathrm{TAP}}
   (u;A^{\mathrm{full}},y,q_\beta(t))\succeq \frac{(1 - \beta_1)^2}{2} I_n.
\end{equation*}
\end{lemma}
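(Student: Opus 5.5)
The plan is a direct computation of the Hessian combined with an operator-norm bound on $A^{\mathrm{full}}$.

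For $u\in(-1,1)^n$ the Hessian is
\[
\nabla_u^2F_{\mathrm{TAP}}(u;A^{\mathrm{full}},y,q)=\operatorname{diag}((1-u_i^2)^{-1})-\beta A^{\mathrm{full}}+\beta^2(1-q)I_n .
\]
Each diagonal entry satisfies $(1-u_i^2)^{-1}\geq 1$, and the Hessian does not depend on $y$. So it suffices to show
\[
I_n-\beta A^{\mathrm{full}}+\beta^2(1-q_\beta(t))I_n\succeq\tfrac{(1-\beta_1)^2}{2}I_n
\quad\text{for all } t\in[0,t_0].
\]
This is a statement about $\lambda_{\max}(A^{\mathrm{full}})$ and the scalar $q_\beta(t)$ only. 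It will hold simultaneously for all $t,y,u$ on a single matrix event.

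On that event I require $\norm{A^{\mathrm{full}}}_{\mathrm{op}}\leq 2+\epsilon$ for a fixed $\epsilon=\epsilon(\beta_1)\leq 2$. This also gives $\norm{A^{\mathrm{full}}}_{\mathrm{op}}\leq 4$. By \Cref{lem:inverse-poly-planted-norm-tail}, the event fails with probability at most $C_{\beta,\epsilon}\e^{-c_{\beta,\epsilon}n}$. The constants there come from the forward comparison in \Cref{lem:planting-likelihood}, whose factor $(1-\beta^2)^{-1/4}\leq(1-\beta_1^2)^{-1/4}$, so they can be taken uniform over $\beta\leq\beta_1$.

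On this event the smallest eigenvalue of the matrix above is at least $1-\beta(2+\epsilon)+\beta^2(1-q_\beta(t))$. I bound $q_\beta(t)\leq t/(1-\beta^2)$ using \Cref{lem:q-beta-regularity}, so that
\[
\beta^2 q_\beta(t)\leq\frac{\beta^2 t_0}{1-\beta^2}\leq\frac{\beta_1^2t_0}{1-\beta_1^2}=\frac{(1-\beta_1)^2}{4}.
\]
The eigenvalue lower bound is then at least $(1-\beta)^2-\epsilon\beta-(1-\beta_1)^2/4$. Since $(1-\beta)^2\geq(1-\beta_1)^2$, choosing $\epsilon=(1-\beta_1)^2/4$ leaves at least $(1-\beta_1)^2/2$, as required.

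The main care is in the constants, not in any deep step. The definition of $t_0$ is exactly what makes the Onsager shift lose only a quarter of $(1-\beta_1)^2$, and the norm slack $\epsilon$ takes another quarter. The remaining check is that the exponential tail of \Cref{lem:inverse-poly-planted-norm-tail} is uniform in $\beta\in(0,\beta_1]$. The GOE tail under the original law is $\beta$-free once $n\geq 2/\epsilon$, and the transfer factor is bounded as noted, so $c,C$ depend only on $\beta_1$.
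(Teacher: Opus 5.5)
Your proposal is correct and follows the paper's proof essentially step for step. You drop the diagonal term via $(1-u_i^2)^{-1}\geq 1$, bound $\lambda_{\max}(A^{\mathrm{full}})\leq 2+\epsilon$ by \Cref{lem:inverse-poly-planted-norm-tail}, and use $q_\beta(t)\leq t/(1-\beta^2)$ so that the choice of $t_0$ costs exactly $(1-\beta_1)^2/4$. The differences are minor: you take $\epsilon=(1-\beta_1)^2/4$ where the paper takes $\min\bigl(1/2,(1-\beta_1)^2/(4\beta_1)\bigr)$ (both give $\beta\epsilon\leq(1-\beta_1)^2/4$), and you spell out why the tail constants are uniform over $\beta\leq\beta_1$, which the paper leaves implicit.
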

\begin{proof}
Set
$\epsilon_{\mathrm{op}}=\min(1/2,(1-\beta_1)^2/(4\beta_1))$.
By \Cref{lem:inverse-poly-planted-norm-tail}, with probability at least $1-C\e^{-cn}$, we have  $\norm{A^{\mathrm{full}}}_{\mathrm{op}}\leq2+\epsilon_{\mathrm{op}}$.
On this event, we have
\begin{align*}
 \nabla_u^2F_{\mathrm{TAP}}
 =&~\operatorname{diag}((1-u_i^2)^{-1})
       -\beta A^{\mathrm{full}}+\beta^2(1-q_\beta(t))I_n\\
 \succeq&~((1-\beta)^2-\beta\epsilon_{\mathrm{op}}
                      -\beta^2q_\beta(t))I_n\\
 \succeq&~\left(\frac34(1-\beta_1)^2
             -\frac{\beta_1^2t}{1-\beta_1^2}\right)I_n.
\end{align*}
{For the last inequality, we used \Cref{lem:q-beta-regularity} and the definition of $\epsilon_{\mathrm{op}}$.}
{For $t\le t_0$, we have $\beta_1^2t/(1-\beta_1^2)\leq(1-\beta_1)^2/4$, so the Hessian is at least $(1-\beta_1)^2I_n/2$.}
{Also $\|A^{\mathrm{full}}\|_{\mathrm{op}}\le2+\epsilon_{\mathrm{op}}\le5/2<4$, completing the proof.}
\end{proof}

\subsection{Handling large times}
Next, we strengthen the probability that the AMP center has a small
TAP residual and a positive Hessian lower bound in its neighborhood, proving the second part of \Cref{lem:amp-good-events}.
{We first fix a parameter pair $(\beta_1,s)$.}
{By \Cref{lem:inverse-poly-fixed-depth-stability}, even when the standard Gaussian vector generating $(A^{\mathrm{full}},Y_s)$ is perturbed by at most $\delta\sqrt n$, these properties still hold with slightly weaker bounds.}
{We apply \Cref{lem:gaussian-enlargement-boosting} to boost the probability.}
{Finally, we use the parameter estimates in \Cref{lem:inverse-poly-fixed-depth-stability} to extend the conclusion to nearby inverse temperatures and times $(\beta,t)$.}

\begin{lemma}
\label{lem:inverse-poly-positive-geometry}
Fix $0<\beta_1<1$ and $s>0$. There are constants $r,\kappa>0$ and an integer $K_0\geq0$ such that, for every fixed $K\geq K_0$, there are constants $c,C>0$ and an open rectangle $U\subseteq(0,1)\times(s/2,2s)$ containing $(\beta_1,s)$, such that the following holds.

{For every $n$ and every $(\beta,t)\in U$, with $\mathbb P_{n,\beta}^{\mathrm{pl,aug}}$-probability at least $1-C\e^{-cn}$, the following hold simultaneously:}
\begin{enumerate}[label=\textup{(\roman*)}]
\item $(A^{\mathrm{full}},Y_t)$ satisfies $(\kappa r/16)$-small TAP
residual and $(4r,4\kappa)$-local strong convexity at inverse temperature
$\beta$, time $t$, and AMP depth $K$, as defined in
\Cref{def:amp-tap-geometry}.
\item
\begin{equation*}
  \norm{A^{\mathrm{full}}}_{\mathrm{op}}\leq4,\qquad
  \norm{Y_t}_2\leq C\sqrt n,\qquad
  \norm{Y_t}_\infty\leq n^2.
\end{equation*}
\end{enumerate}
\end{lemma}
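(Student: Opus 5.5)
We follow the route outlined in the technical overview: take a constant-probability good event at the base point $(\beta_1,s)$, boost it by Gaussian concentration after conditioning on the plant, and then spread it to a neighbourhood of parameters using the stability estimates of \Cref{lem:inverse-poly-fixed-depth-stability}.

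\emph{Base event.} Apply \Cref{prop:fixed-time-amp-tap-input} at $(\beta_1,s)$ to get constants $r_0,\kappa_0$, and set $r=r_0/8$, $\kappa=\kappa_0/8$. Choose $\epsilon=\kappa r/128$ and let $K_0=K_0(\beta_1,s,\epsilon)$. With probability tending to one, the pair $(A^{\mathrm{full}},Y_s)$ at inverse temperature $\beta_1$ and depth $K\ge K_0$ has $\epsilon$-small residual and $(r_0,\kappa_0)=(8r,8\kappa)$-local strong convexity. Intersect with the event $\norm{A^{\mathrm{full}}}_{\mathrm{op}}\le 5/2$ from \Cref{lem:inverse-poly-planted-norm-tail}.

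Now condition on $X=x$ and write $A^{\mathrm{full}}=W+\beta_1xx^\top/n$ and $Y_s=sx+\sqrt s\,G$ as a deterministic map of a standard Gaussian vector $\xi=(W\text{-entries},G)$. Since the map is $O(1)$-Lipschitz from $\xi$ to $(\sqrt n A^{\mathrm{full}},Y_s)$ in Frobenius/Euclidean norm, and the operator norm is dominated by the Frobenius norm, the data move by $O(\delta\sqrt n)$ in the relevant norms when $\xi$ moves by $\delta\sqrt n$.

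\emph{Boosting.} This is the step I expect to be the main obstacle: obtaining $\Pr{\xi\in\mathcal G_x}\ge 1/2$ uniformly in $x$, not merely on average over $x$. I would first use the spin-flip symmetry $(W,G,x)\mapsto(DWD,DG,Dx)$ with $D=\operatorname{diag}(x)$. AMP is equivariant under this map, and the residual norm and the Hessian spectrum are invariant. So the conditional probability does not depend on $x$, equals the unconditional probability, and is at least $1/2$ for $n\ge n_0$. For small $n$, enlarging $C$ absorbs the failure.

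Let $F_x(\xi)=\operatorname{dist}(\xi,\mathcal G_x)$. By \Cref{lem:gaussian-enlargement-boosting}, $F_x\le\delta\sqrt n$ except with probability $2\Phi(-\delta\sqrt n)\le 2\e^{-\delta^2n/2}$. If $\xi'\in\mathcal G_x$ with $\norm{\xi-\xi'}\le\delta\sqrt n$, then by \Cref{lem:inverse-poly-fixed-depth-stability}(ii) the AMP center moves by at most $C\delta\sqrt n$ and the residual by at most $C\delta\sqrt n$. By part (iii), the Hessian moves by at most $C\delta$ in operator norm at every fixed $u$, and $\norm{A^{\mathrm{full}}}_{\mathrm{op}}\le 5/2+C\delta\le 4$. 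Choosing $C\delta\le\min(r,\kappa,\epsilon)$ gives:
\begin{itemize}
\item the ball $B(\widehat m,4r\sqrt n)$ is contained in the old ball $B(\widehat m',8r\sqrt n)$, so the Hessian is at least $8\kappa-C\delta\ge 4\kappa$ on it;
\item the residual is at most $2\epsilon\sqrt n\le\kappa r\sqrt n/64$.
\end{itemize}

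\emph{Nearby parameters.} For $(\beta,t)$ close to $(\beta_1,s)$, couple the two models through the same $(x,W,G)$. Then $A^{\mathrm{full}}$ changes by $|\beta-\beta_1|/n\cdot\norm{xx^\top}_{\mathrm{op}}=|\beta-\beta_1|$ and $Y_t-Y_s=(t-s)x+(\sqrt t-\sqrt s)G$. The latter has norm $O(\sqrt n(|t-s|+|\sqrt t-\sqrt s|))$ on the event $\norm G_2\le2\sqrt n$, whose failure probability is exponentially small. Applying \Cref{lem:inverse-poly-fixed-depth-stability} again, the residual grows by $C\sqrt n(|\beta-\beta_1|+|t-s|+|\sqrt t-\sqrt s|)$ and the Hessian shifts by a similar amount. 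Choosing the rectangle $U$ small keeps the residual below $\kappa r\sqrt n/16$ and the convexity constant at least $4\kappa$ on radius $4r\sqrt n$. Since the Lipschitz constants of the map $\xi\mapsto$ data are uniform, the boosted probability carries over.

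\emph{Remaining bounds.} We get $\norm{Y_t}_2\le t\sqrt n+\sqrt t\,\norm G_2\le C\sqrt n$ on the same Gaussian event. Finally, $\norm{Y_t}_\infty\le 2s+\sqrt{2s}\,\max_i|G_i|\le n^2$ except with probability $n\e^{-n^4/(8s)}$, which is absorbed into $C\e^{-cn}$.
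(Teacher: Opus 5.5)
Your proposal follows the paper's proof essentially step for step. It builds a constant-probability good event at $(\beta_1,s)$ from \Cref{prop:fixed-time-amp-tap-input}, makes it uniform in the plant by the spin-flip symmetry, boosts it via Gaussian concentration of $\operatorname{dist}(\xi,\mathcal G_x)$ together with \Cref{lem:inverse-poly-fixed-depth-stability}, and transports it to nearby $(\beta,t)$ through the coupling with common $(X,W,G)$.

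The one point to tighten is the bookkeeping of margins after boosting. There you should keep what your choice $C\delta\le\min(r,\kappa)$ actually yields: curvature at least $7\kappa$ on $B(\widehat m,7r\sqrt n)\cap(-1,1)^n$, as the paper records in \Cref{lem:inverse-poly-anchor-geometry}. The weaker $(4r,4\kappa)$ you record already equals the final target, so it leaves no room to absorb the further shifts of the center and the Hessian when you pass to $(\beta,t)\in U$.
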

{To prove the lemma, we define a set $E_x$ for each spin $x\in\set{\pm1}^n$ at the fixed parameters $(\beta_1,s)$.}
In the next lemma, we show that this good event happens with probability at least $1/2$.
{Later, we apply \Cref{lem:gaussian-enlargement-boosting} to the distance from this event.}

For fixed $0<\beta_1<1$, {$s>0$}, $r,\kappa>0$,
and an integer $K\geq0$, define $E_x$ for each $x\in\set{\pm1}^n$
as follows.  {For $(A^{\mathrm{full}},y)$, write $\widehat m=\widehat m_{K,\beta_1}(A^{\mathrm{full}},y)$.}
Let $E_x\subseteq\R_{\mathrm{sym}}^{n\times n}\times\R^n$ be the set
of pairs satisfying
$\norm{A^{\mathrm{full}}}_{\mathrm{op}}\leq3$,
$\norm{y-sx}_2\leq2\sqrt{sn}$, and
$\norm{R_{\mathrm{TAP},\beta_1}
(\widehat m;A^{\mathrm{full}},y,q_{\beta_1}(s))}_2
\leq\kappa r\sqrt n/64$, with
\begin{equation*}
 \nabla_u^2F_{\mathrm{TAP},\beta_1}
   \tp{u;A^{\mathrm{full}},y,q_{\beta_1}(s)}
 \succeq8\kappa I_n
 \qquad\text{for every }u\in
 B(\widehat m,8r\sqrt n)\cap(-1,1)^n.
\end{equation*}
\begin{lemma}
\label{lem:inverse-poly-conditional-good-event}
Fix $0<\beta_1<1$ and {$s>0$}.
There exist $r,\kappa>0$ and an integer $K_0\geq0$, such that, for every fixed $K\geq K_0$, sufficiently large $n$ and every
$x\in\set{\pm1}^n$, under the planted construction at inverse
temperature $\beta_1$, we have
\begin{equation*}
 \Pr{(A^{\mathrm{full}},Y_s)\in E_x\mid X=x}\geq\frac12.
\end{equation*}
\end{lemma}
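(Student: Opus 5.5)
The plan is to prove that the averaged (unconditional) probability of $E_x$ tends to one, and then to remove the conditioning on $X=x$ by the spin-flip (gauge) symmetry of the planted model. The first step fixes $r,\kappa$ and $K_0$. We apply \Cref{prop:fixed-time-amp-tap-input} at the parameters $(\beta_1,s)$. It gives constants $r',\kappa'>0$, and we set $r=r'/8$ and $\kappa=\kappa'/8$. Local strong convexity with constants $(r',\kappa')$ on $B(\widehat m,r'\sqrt n)$ is then exactly the Hessian condition $\succeq 8\kappa I_n$ on $B(\widehat m,8r\sqrt n)$. We also choose $\epsilon=\kappa r/64$ in part~(i) of that proposition, which fixes $K_0=K_0(\beta_1,s,\epsilon)$. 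With these choices, for every fixed $K\ge K_0$ the residual and convexity requirements in $E_x$ hold with $\mathbb P_{n,\beta_1}^{\mathrm{pl,aug}}$-probability $1-o(1)$.

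Next we handle the two remaining constraints. For the operator norm, \Cref{lem:inverse-poly-planted-norm-tail} with $\epsilon=1$ gives $\|A^{\mathrm{full}}\|_{\mathrm{op}}\le 3$ except with exponentially small probability. For the field, conditionally on $X=x$ we have $Y_s-sx=B_s\sim N(0,sI_n)$. Since $\|B_s\|_2^2/s$ is $\chi^2_n$, the bound $\|B_s\|_2\le 2\sqrt{sn}$ fails with probability at most $e^{-cn}$, uniformly in $x$. Combining the three estimates gives $\Pr{(A^{\mathrm{full}},Y_s)\in E_X}\to1$ as $n\to\infty$.

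The main step is to pass from this averaged statement to every fixed $x$, and we do it by symmetry. Take $x\in\{\pm1\}^n$ and set $S=\operatorname{diag}(x)$. Conditionally on $X=x$, the pair $(A^{\mathrm{full}},Y_s)$ has the same law as $(SA'S,SY')$, where $(A',Y')$ is generated with $X=\mathbf 1$. This holds because $SWS$ is again GOE, $S\mathbf 1\mathbf 1^\top S=xx^\top$, and $SB_s\overset{d}{=}B_s$. We then check that every ingredient of $E_x$ is equivariant under the conjugation $(A,y)\mapsto(SAS,Sy)$:
\begin{itemize}
\item Since $\tanh$ is odd and the Onsager coefficient depends only on $\|\widehat m_k\|_2$, induction on the AMP recursion gives $\widehat m_K(SAS,Sy)=S\widehat m_K(A,y)$.
\item The residual satisfies $R_{\mathrm{TAP}}(Su;SAS,Sy,q)=SR_{\mathrm{TAP}}(u;A,y,q)$, using that $\operatorname{arctanh}$ is odd.
\item The Hessian transforms as $\nabla^2F(Su;SAS,Sy,q)=S\nabla^2F(u;A,y,q)S$, because the diagonal term is even in $u$.
\item $S$ maps the ball $B(\widehat m,8r\sqrt n)\cap(-1,1)^n$ bijectively onto the corresponding ball around $S\widehat m$.
\item $S$ preserves both $\|\cdot\|_{\mathrm{op}}$ and $\|Sy-sx\|_2=\|y-s\mathbf 1\|_2$.
\end{itemize}
Together these give $(SAS,Sy)\in E_x$ if and only if $(A,y)\in E_{\mathbf 1}$. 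The conditional probability $p_n:=\Pr{(A^{\mathrm{full}},Y_s)\in E_x\mid X=x}$ is therefore the same for all $x$. Because $X$ is uniform, $p_n=\Pr{(A^{\mathrm{full}},Y_s)\in E_X}\to1$, so $p_n\ge 1/2$ for all sufficiently large $n$.

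The delicate point is the transfer from \Cref{prop:fixed-time-amp-tap-input}, which concerns the full planted law, with the diagonal of $A^{\mathrm{full}}$ kept in both the AMP iteration and the residual. We need to confirm that its constants $r,\kappa$ match the ball radius and Hessian bound used in $E_x$ after the factor-of-$8$ rescaling. We also need the gauge computation to treat the diagonal correctly: it is invariant under conjugation by $S$, since $(SAS)_{ii}=A_{ii}$.
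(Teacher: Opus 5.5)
Your proposal is correct and follows essentially the same route as the paper. The paper also shows $\Pr{(A^{\mathrm{full}},Y_s)\notin E_X}=o(1)$ by a union bound over \Cref{prop:fixed-time-amp-tap-input} with $\epsilon=\kappa r/64$, the planted operator-norm tail, and the Gaussian noise bound, and then uses the diagonal sign-flip symmetry together with equivariance of AMP, the residual, and the Hessian to show the conditional probability does not depend on $x$; your $S=\operatorname{diag}(x)$ plays the role of the paper's $D=\operatorname{diag}(x_ix'_i)$, and your explicit choice $r=r'/8$, $\kappa=\kappa'/8$ simply makes precise the paper's ``choose $r,\kappa$ small enough.''
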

\begin{proof}
We first show that $\Pr{(A^{\mathrm{full}},Y_s)\notin E_X}=o(1)$. We verify each constraint and use a union bound.
Apply \Cref{prop:fixed-time-amp-tap-input} at $(\beta_1,s)$ and
choose $r,\kappa>0$ small enough, set $\epsilon=\kappa r/64$ and let $K_0$ be the depth
threshold in \Cref{prop:fixed-time-amp-tap-input}.
{Fix any $K\geq K_0$. With probability tending to one, we have}
\begin{equation*}
 \nabla_u^2F_{\mathrm{TAP},\beta_1}
   \tp{u;A^{\mathrm{full}},Y_s,q_{\beta_1}(s)}
 \succeq8\kappa I_n
 \quad\text{for every }u\in
 B\tp{\widehat m_{K,\beta_1}(A^{\mathrm{full}},Y_s),8r\sqrt n}
       \cap(-1,1)^n,
\end{equation*}
and the TAP residual at this AMP center is at most
$\epsilon\sqrt n\leq\kappa r\sqrt n/64$.
The matrix bound $\norm{A^{\mathrm{full}}}_{\mathrm{op}}\leq3$ follows from \Cref{lem:inverse-poly-planted-norm-tail},
and the noise bound $\norm{y-sx}_2\leq2\sqrt{sn}$ follows from $Y_s-sX\sim N(0,sI_n)$.
A union bound therefore gives
$\Pr{(A^{\mathrm{full}},Y_s)\notin E_X}=o(1)$.

Next, we claim that for any $x, x'\in \set{\pm 1}^n$, we have 
\[
\Pr{(A^{\mathrm{full}},Y_s)\in E_x\mid X=x} = \Pr{(A^{\mathrm{full}},Y_s)\in E_{x'}\mid X=x'}.
\]
Then, this lemma follows directly.

We now prove this claim. Fix $x,x'\in\set{\pm1}^n$.
Let $D=\operatorname{diag}(x_1x'_1,\ldots,x_nx'_n)$, so that
$D^\top=D$, $D^2=I_n$, and $Dx=x'$.
{We condition on $X=x$ and write $A^{\mathrm{full}}=W+\beta_1xx^\top/n$ and $Y_s=sx+\sqrt sG$, where $W\sim\mathrm{GOE}(n)$ and $G\sim N(0,I_n)$ are independent.}  Then
\begin{equation*}
 D^\top A^{\mathrm{full}}D=D^\top WD+\frac{\beta_1}{n}x'(x')^\top,
 \qquad DY_s=sx'+\sqrt sDG.
\end{equation*}
Since $(D^\top WD,DG)$ has the same joint law as $(W,G)$, we have
\begin{equation*}
 \mathcal L(D^\top A^{\mathrm{full}}D,DY_s\mid X=x)
 =\mathcal L(A^{\mathrm{full}},Y_s\mid X=x'),
\end{equation*}
where $\mathcal L$ denotes the distribution under the planted construction.
{It suffices to verify that, for every $(A^{\mathrm{full}},y)$,}
\begin{equation*}
 (A^{\mathrm{full}},y)\in E_x
 \quad\Longleftrightarrow\quad
 (D^\top A^{\mathrm{full}}D,Dy)\in E_{x'}.
\end{equation*}

First, the matrix and noise bounds are preserved because
$\norm{D^\top A^{\mathrm{full}}D}_{\mathrm{op}}
=\norm{A^{\mathrm{full}}}_{\mathrm{op}}$ and
$\norm{Dy-sx'}_2=\norm{D(y-sx)}_2=\norm{y-sx}_2$.
Next, the AMP recursion satisfies
\begin{equation*}
 \widehat m_{K,\beta_1}(D^\top A^{\mathrm{full}}D,Dy)
 =D\widehat m_{K,\beta_1}(A^{\mathrm{full}},y).
\end{equation*}
Indeed, its initial vectors are zero.  If its iterates at the transformed
input are $Dz_k,Dm_k,Dm_{k-1}$, the next update is
\begin{align*}
 \beta_1D^\top A^{\mathrm{full}}D(Dm_k)+Dy
       -\beta_1^2\left(1-\frac{\norm{Dm_k}_2^2}{n}\right)Dm_{k-1}
 =&~ D\left(\beta_1A^{\mathrm{full}}m_k+y
       -\beta_1^2\left(1-\frac{\norm{m_k}_2^2}{n}\right)m_{k-1}\right)\\
 =&~ Dz_{k+1}.
\end{align*}
Also $\tanh(Dz_{k+1})=D\tanh(z_{k+1})$, completing the induction.

For the residual, its definition and
$\operatorname{arctanh}(Du)=D\operatorname{arctanh}(u)$ give
\begin{align*}
 R_{\mathrm{TAP},\beta_1}(Du;D^\top A^{\mathrm{full}}D,Dy,q)
 =&~D\operatorname{arctanh}(u)-Dy
       -\beta_1D^\top A^{\mathrm{full}}D(Du)
       +\beta_1^2(1-q)Du\\
 =&~D R_{\mathrm{TAP},\beta_1}(u;A^{\mathrm{full}},y,q).
\end{align*}
Taking $u=\widehat m_{K,\beta_1}(A^{\mathrm{full}},y)$ and
$q=q_{\beta_1}(s)$ shows that the residual norm is unchanged.
Finally, writing the Hessian as
$\operatorname{diag}((1-u_i^2)^{-1})
-\beta_1A^{\mathrm{full}}+\beta_1^2(1-q)I_n$, we obtain
\begin{equation*}
 \nabla^2F_{\mathrm{TAP},\beta_1}
       (Du;D^\top A^{\mathrm{full}}D,Dy,q)
 =D^\top\bigl[\nabla^2F_{\mathrm{TAP},\beta_1}
       (u;A^{\mathrm{full}},y,q)\bigr]D.
\end{equation*}
The function $u\mapsto Du$ sends the ball around the original AMP center
onto the ball of the same radius around the transformed center,
preserves $(-1,1)^n$, and preserves the lower bound $8\kappa I_n$.
Thus every condition defining $E_x$ is equivalent to the corresponding
condition defining $E_{x'}$.
{Combining this equivalence with the preceding identity in distribution, {we obtain}}
\begin{align*}
 \Pr{(A^{\mathrm{full}},Y_s)\in E_x\mid X=x}
 =&~\Pr{(D^\top A^{\mathrm{full}}D,DY_s)\in E_{x'}\mid X=x}\\
 =&~\Pr{(A^{\mathrm{full}},Y_s)\in E_{x'}\mid X=x'}.
\end{align*}
These probabilities are therefore equal for all $x$ and equal to their
average over $X$.  Hence
\begin{equation*}
 \Pr{(A^{\mathrm{full}},Y_s)\in E_x\mid X=x}
 =\Pr{(A^{\mathrm{full}},Y_s)\in E_X}=1-o(1)\geq1/2
\end{equation*}
for all sufficiently large $n$, with a threshold independent of $x$.
\end{proof}
{Given $X=x$, the pair $(A^{\mathrm{full}},Y_s)$ is generated by a standard Gaussian vector.}
{We apply \Cref{lem:gaussian-enlargement-boosting} to the distance from the Gaussian input to $E_x$ to improve this probability bound.}
\begin{lemma}
\label{lem:inverse-poly-anchor-geometry}
Take $r,\kappa,K_0$ from \Cref{lem:inverse-poly-conditional-good-event}
and fix $K\geq K_0$.
There exist $c,C>0$, depending only on $(\beta_1,s,K)$, such that for
every $n$, with probability at least $1-C\e^{-cn}$ under the planted
construction at inverse temperature $\beta_1$, the following hold
simultaneously:
\begin{enumerate}[label=\textup{(\roman*)}]
\item $(A^{\mathrm{full}},Y_s)$ satisfies $(\kappa r/32)$-small TAP
residual at inverse temperature $\beta_1$, time $s$, and AMP depth $K$,
as defined in \Cref{def:amp-tap-geometry}.
\item $(A^{\mathrm{full}},Y_s)$ satisfies $(7r,7\kappa)$-local strong
convexity at the same inverse temperature, time, and AMP depth.
\item $\norm{A^{\mathrm{full}}}_{\mathrm{op}}\leq7/2$ and
$\norm{Y_s-sX}_2\leq3\sqrt{sn}$.
\end{enumerate}
\end{lemma}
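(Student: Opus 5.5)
Conditional on $X=x$, I will write the input as a standard Gaussian vector $\xi=(\xi_W,\xi_G)\in\R^{N}$, where $N=n(n+1)/2+n$. The matrix is $A^{\mathrm{full}}=W(\xi_W)+\beta_1xx^\top/n$, with $W$ linear in $\xi_W$ and $\|W(\xi_W)-W(\xi_W')\|_{\mathrm{op}}\le\|W(\xi_W)-W(\xi_W')\|_{\mathrm F}\le 2n^{-1/2}\|\xi_W-\xi_W'\|_2$. The field is $Y_s=sx+\sqrt s\,\xi_G$. Let $\mathcal G_x\subseteq\R^N$ be the preimage of $E_x$ and set $F_x(\xi)=\operatorname{dist}(\xi,\mathcal G_x)$, which is $1$-Lipschitz. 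For large $n$, \Cref{lem:inverse-poly-conditional-good-event} gives $\Pr{\xi\in\mathcal G_x}\ge1/2$, so $0$ is a median of $F_x(\xi)$. \Cref{lem:gaussian-enlargement-boosting} then yields $\Pr{F_x(\xi)>\delta\sqrt n\mid X=x}\le2\Phi(-\delta\sqrt n)\le2\e^{-\delta^2n/2}$, uniformly in $x$. Averaging over $X$ and enlarging $C$ to cover the finitely many small $n$ gives the claimed $1-C\e^{-cn}$. It remains to choose $\delta=\delta(\beta_1,s,K)>0$ so that $F_x(\xi)\le\delta\sqrt n$ implies (i)--(iii).

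Suppose $F_x(\xi)\le\delta\sqrt n$ and pick $\xi'\in\overline{\mathcal G_x}$ within $\delta\sqrt n$ of $\xi$. To avoid closure issues, I take $\xi'\in\mathcal G_x$ at distance at most $2\delta\sqrt n$. Let $(\widetilde A,\widetilde y)$ be the corresponding pair, which lies in $E_x$. Then $\|A^{\mathrm{full}}-\widetilde A\|_{\mathrm{op}}\le4\delta$ and $\|Y_s-\widetilde y\|_2\le2\sqrt s\,\delta\sqrt n$. Item (iii) follows once $4\delta\le1/2$ and $2\sqrt s\delta\le\sqrt s$: we get $\|A^{\mathrm{full}}\|_{\mathrm{op}}\le3+4\delta\le7/2$ and $\|Y_s-sx\|_2\le2\sqrt{sn}+2\sqrt s\delta\sqrt n\le3\sqrt{sn}$. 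For (i) I apply \Cref{lem:inverse-poly-fixed-depth-stability}(ii) with $\beta=\theta=\beta_1$ and $t=s$; both matrices have operator norm at most $4$. The residual at the new AMP center differs from the old one by at most $C_K(4\delta+2\sqrt s\delta)\sqrt n$. The old residual is at most $\kappa r\sqrt n/64$, so choosing $\delta$ with $C_K(4+2\sqrt s)\delta\le\kappa r/64$ gives $(\kappa r/32)$-small residual.

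For (ii), the same lemma gives $\|\widehat m-\widetilde m\|_2\le C_K(4+2\sqrt s)\delta\sqrt n\le r\sqrt n$ after shrinking $\delta$. Hence $B(\widehat m,7r\sqrt n)\subseteq B(\widetilde m,8r\sqrt n)$, and for each $u$ in the smaller ball intersected with $(-1,1)^n$ the old Hessian is at least $8\kappa I_n$. \Cref{lem:inverse-poly-fixed-depth-stability}(iii) with equal $\beta$ and $t$ bounds the Hessian change by $C\cdot4\delta\le\kappa$, which gives $7\kappa I_n$. That lemma requires $\beta\in[\beta_0,\beta_1]$; I take any $\beta_0\in(0,\beta_1)$, since only $\beta=\theta=\beta_1$ is used.

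The only delicate point is that $\delta$ and the constants must not depend on $x$ or $n$. This holds because the perturbation bounds are deterministic with constants depending only on $(\beta_1,s,K)$, and the median statement from \Cref{lem:inverse-poly-conditional-good-event} has an $x$-independent threshold. Measurability of $\mathcal G_x$ is handled as in \Cref{lem:local-root-measurable-lipschitz}, or we can simply use the outer-measure form of the isoperimetric bound.
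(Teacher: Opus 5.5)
Your proposal is correct and follows essentially the same route as the paper. You condition on $X=x$, apply \Cref{lem:gaussian-enlargement-boosting} to the $1$-Lipschitz distance to the Gaussian preimage of $E_x$, which has median zero by \Cref{lem:inverse-poly-conditional-good-event}, and transfer (i)--(iii) by a deterministic perturbation claim from \Cref{lem:inverse-poly-fixed-depth-stability}(ii)--(iii) with $\delta$ depending only on $(\beta_1,s,K)$. The differences from the paper are cosmetic. You use the looser matrix bound $2n^{-1/2}$ instead of the exact isometry $\frac n2\|A-A'\|_{\mathrm F}^2+\frac1s\|y-y'\|_2^2=\|\xi-\xi'\|_2^2$. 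You also threshold at $\delta\sqrt n$ with a witness at distance $2\delta\sqrt n$, whereas the paper thresholds at $\delta\sqrt n/2$ with a witness at distance less than $\delta\sqrt n$.
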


\begin{proof}
Fix $x\in\set{\pm1}^n$.
By \Cref{lem:inverse-poly-conditional-good-event},
\begin{equation*}
 \Pr{(A^{\mathrm{full}},Y_s)\in E_x\mid X=x}\geq1/2
\end{equation*}
for all sufficiently large $n$.  {We condition on $X=x$.}
We claim that there is $\delta>0$, depending only on $(\beta_1,s,K)$,
such that whenever $(A_0^{\mathrm{full}},y_0)\in E_x$ and
\begin{equation*}
 \frac n2\norm{A^{\mathrm{full}}-A_0^{\mathrm{full}}}_{\mathrm F}^2
 +\frac1s\norm{y-y_0}_2^2\leq\delta^2n,
\end{equation*}
then the following holds.
Let 
$\widehat m=\widehat m_{K,\beta_1}(A^{\mathrm{full}},y)$ and
$\widehat m_0=\widehat m_{K,\beta_1}(A_0^{\mathrm{full}},y_0)$.
We have $\norm{\widehat m-\widehat m_0}_2\leq r\sqrt n$ and 
\begin{align*}
 \norm{R_{\mathrm{TAP},\beta_1}(\widehat m;A^{\mathrm{full}},y,q_{\beta_1}(s))
 -R_{\mathrm{TAP},\beta_1}(\widehat m_0;A_0^{\mathrm{full}},y_0,q_{\beta_1}(s))}_2
 \leq \kappa r\sqrt n/64.
\end{align*}
For every $u\in(-1,1)^n$,
\begin{equation*}
 \norm{\nabla_u^2F_{\mathrm{TAP},\beta_1}(u;A^{\mathrm{full}},y,q_{\beta_1}(s))
 -\nabla_u^2F_{\mathrm{TAP},\beta_1}(u;A_0^{\mathrm{full}},y_0,q_{\beta_1}(s))}_{\mathrm{op}}
 \leq\kappa.
\end{equation*}
Also $\norm{A^{\mathrm{full}}}_{\mathrm{op}}\leq7/2$ and
$\norm{y-sx}_2\leq3\sqrt{sn}$.

{To prove the claim, let $C$ be the constant in \Cref{lem:inverse-poly-fixed-depth-stability} for depth $K$ and inverse temperatures at most $\beta_1$.
We choose $\delta>0$ so that}
\begin{equation*}
 \delta\leq\min\left(1,\frac1{2\sqrt2},
 \frac{r}{C(\sqrt2+\sqrt s)},
 \frac{\kappa r}{64C(\sqrt2+\sqrt s)},
 \frac{\kappa}{\sqrt2C}\right).
\end{equation*}
The assumed distance bound implies
$\norm{A^{\mathrm{full}}-A_0^{\mathrm{full}}}_{\mathrm{op}}
\leq\sqrt2\delta$ and $\norm{y-y_0}_2\leq\delta\sqrt{sn}$.
{Since $(A_0^{\mathrm{full}},y_0)\in E_x$, the triangle inequality gives}
\begin{equation*}
 \norm{A^{\mathrm{full}}}_{\mathrm{op}}\leq3+\sqrt2\delta\leq7/2,
 \qquad \norm{y-sx}_2\leq(2+\delta)\sqrt{sn}\leq3\sqrt{sn}.
\end{equation*}
Both matrices therefore satisfy the operator-norm bound required by
\Cref{lem:inverse-poly-fixed-depth-stability}.
{We apply its part~\textup{(ii)} at the same inverse temperature $\beta_1$ and time $s$.}
Each of the center and residual differences is at most
$C(\sqrt2+\sqrt s)\delta\sqrt n$.
Our choice of $\delta$ bounds this quantity by both $r\sqrt n$ and
$\kappa r\sqrt n/64$.
Part~\textup{(iii)} bounds the Hessian difference at each fixed $u$ by
$\sqrt2C\delta\leq\kappa$, proving the claim.

{Consequently, whenever $(A_0^{\mathrm{full}},y_0)\in E_x$ and the preceding distance bound holds, $(A^{\mathrm{full}},y)$ also satisfies parts~\textup{(i)--(iii)} of this lemma.}
Indeed, the residual norm is at most
$\kappa r\sqrt n/64+\kappa r\sqrt n/64=\kappa r\sqrt n/32$.
For every $u\in B(\widehat m,7r\sqrt n)$,
$\norm{u-\widehat m_0}_2\leq\norm{u-\widehat m}_2
+\norm{\widehat m-\widehat m_0}_2<8r\sqrt n$.
On this ball intersected with $(-1,1)^n$, the original Hessian is at least
$8\kappa I_n$, so the perturbed Hessian is at least $7\kappa I_n$.

\begingroup

It remains to apply \Cref{lem:gaussian-enlargement-boosting} to a distance function on the Gaussian input generating the planted model given $X=x$.
Write $N_G=n(n+1)/2+n$ and let $\xi=(g,h)\sim N(0,I_{N_G})$ be independent of $X$, where $g=(g_{ij})_{i\leq j}$ and $h=(h_i)_{i\leq n}$.
Define the symmetric matrix $A_x(\xi)$ and field $y_x(\xi)$ by
\begin{equation*}
 (A_x(\xi))_{ij}=\frac{\beta_1x_ix_j}{n}+\frac{g_{ij}}{\sqrt n} (i<j),
 \quad (A_x(\xi))_{ii}=\frac{\beta_1}{n}+\sqrt{\frac2n}g_{ii},
 \quad (y_x(\xi))_i=sx_i+\sqrt s\,h_i.
\end{equation*}
For every $x\in\set{\pm1}^n$, this pair has the law of $(A^{\mathrm{full}},Y_s)$ given $X=x$.
Define
\begin{equation*}
 \mathcal G_x\defeq
 \left\{\zeta\in\R^{N_G}:(A_x(\zeta),y_x(\zeta))\in E_x\right\},
\end{equation*}
and define
\begin{equation*}
 \mathcal H_x\defeq
 \left\{\zeta\in\R^{N_G}:(A_x(\zeta),y_x(\zeta))
 \text{ satisfies parts~\textup{(i)--(iii)} of this lemma}\right\}.
\end{equation*}
These sets are Borel because the finite-depth AMP output, residual, and Hessian are continuous in the Gaussian input, and failure of the Hessian condition is an open condition in the data and the trial point.
By \Cref{lem:inverse-poly-conditional-good-event}, for all sufficiently large $n$ and every $x\in\set{\pm1}^n$,
$\Pr{\xi\in\mathcal G_x}\geq\frac12$.

For every $x\in\set{\pm1}^n$, define the $1$-Lipschitz function
$F_x(z)\defeq\operatorname{dist}(z,\mathcal G_x)$ for any $z\in\R^{N_G}$.
Since $F_x\geq0$ and $\Pr{F_x(\xi)=0}\geq\Pr{\xi\in\mathcal G_x}\geq1/2$, zero is a median of $F_x(\xi)$.

We next verify the stability inclusion.
For $\xi=(g,h)$ and $\zeta=(g',h')$, the coordinate scalings give
\begin{align*}
 &~\frac n2\norm{A_x(\xi)-A_x(\zeta)}_{\mathrm F}^2
       +\frac1s\norm{y_x(\xi)-y_x(\zeta)}_2^2\\
 =&~\frac n2\left(\frac2n\sum_{i<j}(g_{ij}-g'_{ij})^2
                   +\frac2n\sum_i(g_{ii}-g'_{ii})^2\right)
       +\sum_i(h_i-h'_i)^2
 =\norm{\xi-\zeta}_2^2.
\end{align*}
If $F_x(\xi)\leq\delta\sqrt n/2$, then some $\zeta\in\mathcal G_x$ satisfies $\norm{\xi-\zeta}_2<\delta\sqrt n$.
The preceding identity and the perturbation claim show that $\xi\in\mathcal H_x$.
Thus, for every $x\in\set{\pm1}^n$,
$\left\{\xi\in\R^{N_G}:F_x(\xi)\leq\frac{\delta\sqrt n}{2}\right\}
 \subseteq\mathcal H_x$.
For each $x$, we apply \Cref{lem:gaussian-enlargement-boosting} to $F_x$ with $L=1$, $q=0$, and $u=\delta\sqrt n/2$, and then average this bound over $X$ to obtain
\begin{equation*}
 \Pr{\xi\notin\mathcal H_X}
 \leq\Pr{F_X(\xi)>\delta\sqrt n/2}
 \leq 2\Phi(-\delta\sqrt n/2)
 \leq 2\exp(-\delta^2n/8).
\end{equation*}
Since $(A_X(\xi),y_X(\xi))$ has the planted law at inverse temperature $\beta_1$ and time $s$, this is the required failure bound for parts~\textup{(i)--(iii)}.
Finally, increasing $C$ covers the finitely many smaller dimensions and finishes the proof.
\endgroup
\end{proof}
{We now prove \Cref{lem:inverse-poly-positive-geometry} using the preceding probability bounds and the perturbation estimates in \Cref{lem:inverse-poly-fixed-depth-stability}.}
\begin{proof}[Proof of \Cref{lem:inverse-poly-positive-geometry}]
{We take $r,\kappa,K_0$ from \Cref{lem:inverse-poly-conditional-good-event}, fix any $K\geq K_0$, and apply \Cref{lem:inverse-poly-anchor-geometry} at $(\beta_1,s)$.}
{It remains to transfer these bounds to nearby inverse temperatures and times through \Cref{lem:inverse-poly-fixed-depth-stability}.}
Draw $X\sim\operatorname{Unif}(\set{\pm1}^n)$,
$W\sim\mathrm{GOE}(n)$, and $G\sim N(0,I_n)$ independently.
For two points $(\beta_1,s)$ and $(\beta,t)$, we set
\begin{equation*}
 A_{\beta_1}^{\mathrm{full}}=W+\beta_1 XX^\top/n,\quad
 A_{\beta}^{\mathrm{full}}=W+\beta XX^\top/n,\quad
 Y_s=sX+\sqrt sG,\quad Y_t=tX+\sqrt tG.
\end{equation*}
On the event in
\Cref{lem:inverse-poly-anchor-geometry}, we have $\norm G_2\leq3\sqrt n$,
and for $t\in(s/2,2s)$,
\begin{equation*}
 \norm{A_{\beta}^{\mathrm{full}}-A_{\beta_1}^{\mathrm{full}}}_{\mathrm{op}}
 =|\beta-\beta_1|,\qquad
 \norm{Y_t-Y_s}_2
 \leq( |t-s|+3|\sqrt t-\sqrt s|)\sqrt n
 \leq C_s|t-s|\sqrt n.
\end{equation*}
We apply \Cref{lem:inverse-poly-fixed-depth-stability} on the temperature interval $[\beta_1/2,(1+\beta_1)/2]$, and choose $U\subseteq(\beta_1/2,(1+\beta_1)/2)\times(s/2,2s)$.
Parts~\textup{(ii)--(iii)} of
\Cref{lem:inverse-poly-fixed-depth-stability} bound the center and
residual changes by $C(|\beta-\beta_1|+|t-s|)\sqrt n$, and the
Hessian change by $C(|\beta-\beta_1|+|t-s|)$.
Choose an open rectangle $U$ around $(\beta_1,s)$ so small that these
changes are at most $r\sqrt n$, $\kappa r\sqrt n/32$, and $\kappa$,
respectively.  Also require $|\beta-\beta_1|\leq1/2$ throughout $U$.
{By \Cref{lem:inverse-poly-anchor-geometry}-\textup{(iii)} and $\|A_\beta^{\mathrm{full}}-A_{\beta_1}^{\mathrm{full}}\|_{\mathrm{op}}=|\beta-\beta_1|$, we obtain $\|A_\beta^{\mathrm{full}}\|_{\mathrm{op}}\le7/2+1/2=4$.}

The $4r\sqrt n$ ball around the AMP center at $(\beta,t)$ lies in the
$7r\sqrt n$ ball around the AMP center at $(\beta_1,s)$.
{The Hessian at inverse temperature $\beta$ and time $t$ is at least $6\kappa I_n$}
throughout the smaller ball intersected with $(-1,1)^n$,
and its center residual is at most $\kappa r\sqrt n/16$.
Also $\norm{Y_t}_2\leq(t+3\sqrt t)\sqrt n$ and
$\norm{Y_t}_\infty\leq n^2$ for large $n$.
For each parameter pair in $U$, failure is contained in the complement
of the event in \Cref{lem:inverse-poly-anchor-geometry}, whose
probability is at most $C\e^{-cn}$.
\end{proof}
{We next combine the mean-square estimate in \Cref{prop:fixed-time-amp-tap-input}-\textup{(ii)} with concentration to prove posterior-mean proximity with inverse-polynomial failure probability.}
\begin{lemma}
\label{lem:inverse-poly-positive-posterior}
Choose the fixed depth $K$ in \Cref{lem:inverse-poly-positive-geometry}
also above the threshold in \Cref{prop:fixed-time-amp-tap-input}
{at $(\beta_1,s)$ with $\epsilon=r/(4\sqrt2)$.}
For these parameters, the rectangle $U$ in
\Cref{lem:inverse-poly-positive-geometry} can be chosen so that, for every
fixed integer $p\geq1$ and every $(\beta,t)\in U$, with
$\mathbb P_{n,\beta}^{\mathrm{pl,aug}}$-probability at least
$1-C_{\beta_1,s,p}n^{-p}$, the pair $(A^{\mathrm{full}},Y_t)$ satisfies
$r$-posterior-mean proximity at inverse temperature $\beta$, time $t$,
and AMP depth $K$, as defined in \Cref{def:amp-tap-geometry}.
\end{lemma}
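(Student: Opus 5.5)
The plan follows the overview in the technical section. We set $D_n=\|m(A,Y_t)-\widehat m_K(A^{\mathrm{full}},Y_t)\|_2^2/n$ with $\widehat m_K=\widehat m_{K,\beta}$, and expand
\begin{equation*}
 D_n=Q_n(m)+\frac{\|\widehat m_K\|_2^2}{n}-\frac2n\inner{m}{\widehat m_K}.
\end{equation*}
By the posterior identity \cref{eq:planted-posterior}, $\inner{m}{\widehat m_K}=\E{\inner{X}{\widehat m_K}\mid A^{\mathrm{full}},Y_t}$, since $\widehat m_K$ is a function of the observations. The target has two parts. First, $\E{D_n}\leq r^2/2$ uniformly for $(\beta,t)\in U$ and large $n$. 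Second, $\E{|D_n-\E{D_n}|^{2p}}\leq C_pn^{-p}$. Markov's inequality then gives $\Pr{D_n>r^2}\leq(2/r^2)^{2p}C_pn^{-p}$, which is the claim.

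\emph{Concentration.} The three terms are handled separately.
\begin{enumerate}[label=\textup{(\alph*)}]
\item For $Q_n(m)$, \Cref{lem:tap-overlap-mgf}\textup{(ii)} already gives $\E{|Q_n(m)-q_\beta(t)|^{2p}}\leq C_pn^{-p}$, uniformly in $\beta\leq\beta_*$ and $t$.
\item For $\|\widehat m_K\|_2^2/n$, condition on $X=x$ and use the Gaussian parametrization $\xi\mapsto(A_x(\xi),y_x(\xi))$ from the proof of \Cref{lem:inverse-poly-anchor-geometry}. By \Cref{lem:inverse-poly-fixed-depth-stability}\textup{(ii)}, $\widehat m_K$ is $C\sqrt n$-Lipschitz in $\xi$ (in the scaled norm) on the event $\|A^{\mathrm{full}}\|_{\mathrm{op}}\leq4$. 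To get a globally Lipschitz function, replace $A^{\mathrm{full}}$ by its spectral truncation at operator norm $4$. This is Lipschitz in Frobenius norm and changes nothing outside an $\e^{-cn}$ event (\Cref{lem:inverse-poly-planted-norm-tail}). The map $\xi\mapsto\|\widehat m_K\|_2^2/n$ is then $O(n^{-1/2})$-Lipschitz.
\item For $n^{-1}\inner{x}{\widehat m_K}$ with $x$ fixed, the same argument applies: it is $O(n^{-1/2})$-Lipschitz in $\xi$.
\end{enumerate}
\Cref{lem:gaussian-enlargement-boosting} gives $2p$-th central moments $C_pn^{-p}$ for (b) and (c) conditionally on $X=x$.

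Two further issues remain.
\begin{enumerate}[label=\textup{(\roman*)}]
\item The conditional means $\E{\,\cdot\mid X=x}$ do not depend on $x$. This follows from the sign-flip symmetry $D=\operatorname{diag}(xx')$ used in \Cref{lem:inverse-poly-conditional-good-event}: $\widehat m_K$ is equivariant, and $\inner{x}{\widehat m_K}$ is invariant. So conditional concentration upgrades to unconditional concentration.
\item The cross term has the form $n^{-1}\E{\inner{X}{\widehat m_K}\mid\mathcal F}$. Its $2p$-th central moment is bounded by Jensen's inequality (conditional expectation is an $L^{2p}$ contraction) by that of $n^{-1}\inner{X}{\widehat m_K}$, after centering at the common mean. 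The constant mean commutes with the conditional expectation.
\end{enumerate}

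\emph{Mean bound and uniformity over $U$.} At the anchor $(\beta_1,s)$, \Cref{prop:fixed-time-amp-tap-input}\textup{(ii)} with $\epsilon=r/(4\sqrt2)$ gives $\limsup\E{D_n}\leq r^2/32$. The main obstacle is transferring this to all $(\beta,t)\in U$ with $n$-uniform constants, because \Cref{prop:fixed-time-amp-tap-input} is only pointwise and asymptotic. We plan to show that $\E{D_n}$ is continuous in $(\beta,t)$ uniformly in $n$, and then shrink $U$.

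\begin{enumerate}[label=\textup{(\roman*)}]
\item The terms $\E{\|\widehat m_K\|_2^2}/n$ and $\E{\inner{X}{\widehat m_K}}/n$ can be coupled across parameters using the construction in the proof of \Cref{lem:inverse-poly-positive-geometry}, with the same $(X,W,G)$. \Cref{lem:inverse-poly-fixed-depth-stability}\textup{(ii)} then bounds the change by $C(|\beta-\beta_1|+|t-s|)$ plus an $\e^{-cn}$ contribution.
\item For $\E{Q_n(m)}$, the overlap concentration in \Cref{lem:tap-overlap-mgf} gives $|\E{Q_n(m)}-q_\beta(t)|\leq Cn^{-1/2}$. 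By \Cref{lem:inverse-poly-fixed-depth-stability}\textup{(i)}, $q_\beta(t)$ is Lipschitz in $(\beta,t)$.
\end{enumerate}
Hence $\E{D_n}$ at $(\beta,t)$ is at most its value at $(\beta_1,s)$ plus $C(|\beta-\beta_1|+|t-s|)+Cn^{-1/2}$. Choosing $U$ small enough makes this at most $r^2/2$ for $n\geq n_0$. Smaller $n$ are absorbed into $C_{\beta_1,s,p}$.
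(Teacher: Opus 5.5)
Your proposal is correct and follows essentially the same route as the paper. The common ingredients are the three-term decomposition with the cross term handled by conditional Jensen, and conditional Gaussian concentration plus the sign-flip symmetry for $\|\widehat m_K\|_2^2/n$ and $\inner{X}{\widehat m_K}/n$. The proof also shares the overlap moment bound for $Q_n(m)$ and the anchor estimate from \Cref{prop:fixed-time-amp-tap-input}\textup{(ii)}, transferred to $U$ by the common-$(X,W,G)$ coupling and \Cref{lem:inverse-poly-fixed-depth-stability}.

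The differences are cosmetic. The paper centers at $a_n(\beta,t)=q_\beta(t)+\E{S_2}-2\E{S_3}$ rather than at $\E{D_n}$. It also obtains a globally Lipschitz surrogate by a McShane extension clipped to $[-1,1]$ instead of your spectral truncation of $A^{\mathrm{full}}$. Your truncation works equally well, since clipping eigenvalues is $1$-Lipschitz in Frobenius norm.
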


{To prove this lemma, we expand the squared distance:}
\begin{equation*}
 \frac{\norm{m\tp{A,Y_t}-\widehat m_K\tp{Y_t}}_2^2}{n}
 =\frac{\norm{m\tp{A,Y_t}}_2^2}{n}
  +\frac{\norm{\widehat m_K\tp{Y_t}}_2^2}{n}
  -\frac{2\inner{m\tp{A,Y_t}}{\widehat m_K\tp{Y_t}}}{n}.
\end{equation*}
We control these three terms separately.

{For the first term, $\|m(A,Y_t)\|_2^2/n$ is the conditional expectation of $\langle\sigma^1,\sigma^2\rangle/n$ for two independent samples from $\mu_{A,Y_t}$, so we apply \Cref{lem:tap-overlap-mgf}.}
{For the second term, \Cref{lem:inverse-poly-fixed-depth-stability} and Gaussian concentration bound the fluctuations of $\|\widehat m_K(Y_t)\|_2^2/n$.}
{For the cross term, we use Gaussian concentration for $n^{-1}\langle X,\widehat m_K(Y_t)\rangle$ and then condition on $(A^{\mathrm{full}},Y_t)$ and apply Jensen's inequality.}

We first use the next lemma to handle the last two terms.

\begin{lemma}
\label{lem:inverse-poly-fixed-depth-concentration}
Fix an integer $K\geq0$ and a compact rectangle {$I\subseteq(0,1)\times(0,\infty)$}.
{At each $(\beta,t)\in I$, let $\widehat m_K(Y_t)$ be the AMP center at inverse temperature $\beta$, and set}
\begin{equation*}
 S_2(\beta,t)=\frac{\norm{\widehat m_K(Y_t)}_2^2}{n},\qquad
 S_3(\beta,t)=\frac{\inner X{\widehat m_K(Y_t)}}{n}.
\end{equation*}
For $j\in\{2,3\}$ and every fixed integer $p\geq1$,
\begin{equation*}
 \E[\mathbb P_{n,\beta}^{\mathrm{pl,aug}}]{
  \left|S_j(\beta,t)
   -\E[\mathbb P_{n,\beta}^{\mathrm{pl,aug}}]{S_j(\beta,t)}\right|^{2p}}
 \leq C_{I,K,p}n^{-p}.
\end{equation*}
For every $(\beta,t),(\beta_1,s)\in I$,
\begin{align*}
 \left|\E[\mathbb P_{n,\beta}^{\mathrm{pl,aug}}]{S_j(\beta,t)}
       -\E[\mathbb P_{n,\beta_1}^{\mathrm{pl,aug}}]{S_j(\beta_1,s)}\right|
 \leq C_{I,K}(|\beta-\beta_1|+|t-s|)+C_{I,K}\e^{-c_I n}.
\end{align*}
\end{lemma}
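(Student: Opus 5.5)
We prove the concentration bound by conditioning on the plant $X=x$ and applying Gaussian concentration (\Cref{lem:gaussian-enlargement-boosting}) to the Gaussian input $\xi=(g,h)$. We use the same parametrization $A_x(\xi),y_x(\xi)$ as in the proof of \Cref{lem:inverse-poly-anchor-geometry}, now at general $(\beta,t)\in I$, with $y_x(\xi)=tx+\sqrt t\,h$.

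\emph{Step 1: local Lipschitz bound.} On the event $\Omega=\{\norm{A_x(\xi)}_{\mathrm{op}}\le4\}$, part (ii) of \Cref{lem:inverse-poly-fixed-depth-stability} gives
\[
\norm{\widehat m_K(\xi)-\widehat m_K(\zeta)}_2\le C(\sqrt2+\sqrt t)\norm{\xi-\zeta}_2 .
\]
Here we use $\sqrt n\norm{\Delta A}_{\mathrm{op}}\le\sqrt n\norm{\Delta A}_{\mathrm F}\le\sqrt2\norm{\Delta g}_2$. Since every vector involved has norm at most $\sqrt n$, both $S_2=\norm{\widehat m_K}_2^2/n$ and $S_3=\inner{x}{\widehat m_K}/n$ are then $C'/\sqrt n$-Lipschitz in $\xi$, with $C'$ uniform over $I$.

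\emph{Step 2: globally Lipschitz extension.} To get a globally Lipschitz function, we replace $S_j$ by the McShane extension $\tilde S_j$ of its restriction to the closed set $\Omega_x=\{\xi:\norm{A_x(\xi)}_{\mathrm{op}}\le4\}$. The extension is $C'/\sqrt n$-Lipschitz on all of $\R^{N_G}$, and we may truncate it to $[-1,1]$. \Cref{lem:gaussian-enlargement-boosting} then gives, for each fixed $x$, the bound $\E{|\tilde S_j-\E{\tilde S_j\mid X=x}|^{2p}\mid X=x}\le C_p(C')^{2p}n^{-p}$.

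\emph{Step 3: averaging over the plant.} The conditional means $\E{\tilde S_j\mid X=x}$ do not depend on $x$. This follows from the sign-flip symmetry $D$ used in \Cref{lem:inverse-poly-conditional-good-event}: under it $\widehat m_K\mapsto D\widehat m_K$ and $x\mapsto Dx$, so $S_2$ and $S_3$ are invariant, and $\Omega_x$ is mapped to $\Omega_{x'}$. Averaging over $X$ therefore leaves the centered moment bound intact. Finally, $S_j=\tilde S_j$ outside $\Omega^c$. Since $|S_j|,|\tilde S_j|\le1$ and $\Pr{\Omega^c}\le C\e^{-cn}$ by \Cref{lem:inverse-poly-planted-norm-tail}, swapping $S_j$ and $\tilde S_j$ in both the moment and the mean costs only $O(\e^{-cn})$, which is absorbed into $C_{I,K,p}n^{-p}$.

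\emph{Step 4: continuity of the means.} We couple the two parameter points exactly as in the proof of \Cref{lem:inverse-poly-positive-geometry}, using common $(X,W,G)$. This gives $\norm{A_\beta-A_{\beta_1}}_{\mathrm{op}}=|\beta-\beta_1|$ and $\norm{Y_t-Y_s}_2\le(|t-s|+|\sqrt t-\sqrt s|\,\norm G_2/\sqrt n)\sqrt n$. On the event that both operator norms are at most $4$ and $\norm G_2\le2\sqrt n$, part (ii) of \Cref{lem:inverse-poly-fixed-depth-stability} bounds $\norm{\widehat m_K-\widehat m_K'}_2$ by $C(|\beta-\beta_1|+|t-s|)\sqrt n$. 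Here $|\sqrt t-\sqrt s|\le C_I|t-s|$ because $t$ is bounded away from $0$ on the compact $I$. Hence $|S_j-S_j'|\le C(|\beta-\beta_1|+|t-s|)$ on this event. The complementary event has probability $C\e^{-cn}$ (by the Gaussian norm tail and \Cref{lem:inverse-poly-planted-norm-tail}), and there $|S_j-S_j'|\le2$, which produces the $\e^{-c_In}$ term.

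\emph{Main obstacle.} The delicate point is that $\widehat m_K$ is Lipschitz only on the operator-norm-bounded set, so Gaussian concentration cannot be applied to it directly; Step 2 handles this through the Lipschitz extension together with the exponentially small complement. A second point to check is that the constants stay uniform over $(\beta,t)\in I$. This holds because \Cref{lem:inverse-poly-fixed-depth-stability} applies on an interval $[\beta_0,\beta_1]$ containing the $\beta$-projection of $I$, and $t$ ranges over a compact set.
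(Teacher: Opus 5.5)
Your proposal is correct and follows essentially the same route as the paper. Both arguments use the same steps: a $C/\sqrt n$-Lipschitz bound on the operator-norm cutoff via \Cref{lem:inverse-poly-fixed-depth-stability}(ii); a clipped McShane extension with Gaussian concentration conditionally on $X=x$; the sign-flip symmetry to make the conditional mean independent of the plant; an exponentially small swap back to $S_j$; and the common-noise coupling for the parameter stability. The differences are cosmetic. You apply the symmetry to the extension itself, which is legitimate because the McShane formula is equivariant under the coordinate sign-flip isometry of the Gaussian input, though you should state this explicitly. You also truncate on $\norm{G}_2\le2\sqrt n$, whereas the paper simply uses the bound on the mean of $\norm{G}_2$.
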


\begin{proof}
We first prove the concentration.
{We fix $(\beta,t)\in I$ and condition on $X=x$.}
Take independent standard Gaussian variables $(g_{ij})_{i\leq j}$ and
$(h_i)_{i\leq n}$, and write
\begin{equation*}
 A^{\mathrm{full}}_{ij}=\frac{\beta x_ix_j}{n}+\frac{g_{ij}}{\sqrt n}
 \ (i<j),\qquad
 A^{\mathrm{full}}_{ii}=\frac{\beta}{n}+\sqrt{\frac2n}g_{ii},
 \qquad Y_{t,i}=tx_i+\sqrt t\,h_i.
\end{equation*}
Collect these variables into $\xi=(g,h)$, where $g=(g_{ij})_{i\leq j}$
and $h=(h_i)_{i\leq n}$.
For two such standard Gaussian inputs $\xi,\widetilde\xi$, the scaling gives
\begin{equation*}
 \norm{\xi-\widetilde\xi}_2^2
 =\frac n2\norm{A^{\mathrm{full}}-\widetilde A^{\mathrm{full}}}_{\mathrm F}^2
  +\frac1t\norm{y-\widetilde y}_2^2.
\end{equation*}
Consequently,
$\sqrt n\norm{A^{\mathrm{full}}-\widetilde A^{\mathrm{full}}}_{\mathrm{op}}
\leq\sqrt2\norm{\xi-\widetilde\xi}_2$ and
$\norm{y-\widetilde y}_2\leq\sqrt t\norm{\xi-\widetilde\xi}_2$.
When both matrix norms are at most $4$,
\Cref{lem:inverse-poly-fixed-depth-stability}-\textup{(ii)} bounds the
change of the AMP output by $C_{I,K}\norm{\xi-\widetilde\xi}_2$.
Let $u,v$ be the AMP outputs corresponding to $\xi,\widetilde\xi$,
respectively.  Both belong to $[-1,1]^n$, so
$\norm{u+v}_2\leq2\sqrt n$.
Writing $S_j(\xi)$ for the scalar evaluated at the data generated by
$\xi$, with $(\beta,t,x)$ fixed, we obtain
\begin{align*}
 |S_2(\xi)-S_2(\widetilde\xi)|
 =&~\frac{|\norm u_2^2-\norm v_2^2|}{n}
 =\frac{|\inner{u+v}{u-v}|}{n}\\
 \leq&~\frac{\norm{u+v}_2}{n}\norm{u-v}_2
 \leq\frac{2}{\sqrt n}\norm{u-v}_2
 \leq\frac{2C_{I,K}}{\sqrt n}\norm{\xi-\widetilde\xi}_2.
\end{align*}
For $S_3$, the planted spin is the same fixed $x$ at both inputs and
$\norm x_2=\sqrt n$.  Hence
\begin{align*}
 |S_3(\xi)-S_3(\widetilde\xi)|
 =\frac{|\inner x{u-v}|}{n}
 \leq\frac{\norm x_2}{n}\norm{u-v}_2
 =\frac1{\sqrt n}\norm{u-v}_2
 \leq\frac{C_{I,K}}{\sqrt n}\norm{\xi-\widetilde\xi}_2.
\end{align*}
Absorbing the factor two into $C_{I,K}$, each $S_j$, $j\in\{2,3\}$,
is therefore $C_{I,K}/\sqrt n$-Lipschitz in the standard Gaussian input
on the set where the matrix norm is at most $4$.

\begingroup

{We have proved this Lipschitz property on the set $\norm{A^{\mathrm{full}}}_{\mathrm{op}}\le4$.
For each fixed $x$, we use McShane's extension theorem and clip the extension to $[-1,1]$, preserving its Lipschitz constant; denote the resulting function by $\overline S_{j,x}$, which agrees with $S_j$ on the matrix-norm cutoff.
Doing this for every $x$ defines the family $(\overline S_{j,x})_{x\in\set{\pm1}^n}$, and under the planted construction we set $\overline S_j=\overline S_{j,X}(\xi)$, where $X$ and $\xi$ are independent.}
{For each $x\in\set{\pm1}^n$, we apply the moment conclusion of \Cref{lem:gaussian-enlargement-boosting} to $\overline S_{j,x}$ with $L=C_{I,K}/\sqrt n$.}
Since $\xi$ is independent of $X$, for every integer $p\geq1$ and every $x\in\set{\pm1}^n$, we obtain
\begin{equation*}
 \E{\left|\overline S_j-\E{\overline S_j\mid X=x}\right|^{2p}\mid X=x}
 \leq C_{I,K,p}n^{-p}.
\end{equation*}
Next, we compare $S_j$ with its extension $\overline S_j$.
{Given $X=x$, we have $A^{\mathrm{full}}=W+\beta xx^\top/n$.
Conjugating by $\operatorname{diag}(x)$ shows that the distribution of $\|A^{\mathrm{full}}\|_{\mathrm{op}}$ does not depend on $x$.
Since the inverse temperatures in $I$ are bounded above by a constant smaller than one, the proof of \Cref{lem:inverse-poly-planted-norm-tail} gives constants $C_I,c_I>0$ such that}
\begin{align*}
 \Pr{\norm{A^{\mathrm{full}}}_{\mathrm{op}}>4\mid X=x}
 =\Pr{\norm{A^{\mathrm{full}}}_{\mathrm{op}}>4} \leq C_I\e^{-c_In}{.}
\end{align*}
Since $|S_j|$ and $|\overline S_j|$ are bounded by one,
\begin{equation*}
 \E{|S_j-\overline S_j|^{2p}\mid X=x}\leq2^{2p}C_I\e^{-c_In}.
\end{equation*}
Jensen's inequality also gives
\begin{equation*}
 \left|\E{\overline S_j-S_j\mid X=x}\right|^{2p}
 \leq\E{|\overline S_j-S_j|^{2p}\mid X=x}
 \leq2^{2p}C_I\e^{-c_In}.
\end{equation*}
We decompose
\begin{equation*}
 S_j-\E{S_j\mid X=x}
 =(S_j-\overline S_j)
 +(\overline S_j-\E{\overline S_j\mid X=x})
 +\E{\overline S_j-S_j\mid X=x}.
\end{equation*}
Using $|a+b+c|^{2p}\leq3^{2p-1}(|a|^{2p}+|b|^{2p}+|c|^{2p})$,
we obtain
\begin{align*}
 &~\E{|S_j-\E{S_j\mid X=x}|^{2p}\mid X=x}\\
 \leq&~3^{2p-1}\left(
 \E{|S_j-\overline S_j|^{2p}\mid X=x}
 +\E{|\overline S_j-\E{\overline S_j\mid X=x}|^{2p}\mid X=x}
 +\left|\E{\overline S_j-S_j\mid X=x}\right|^{2p}\right).
\end{align*}
Substituting the preceding bounds, we obtain
\begin{equation*}
 \E{|S_j-\E{S_j\mid X=x}|^{2p}\mid X=x}
 \leq3^{2p-1}\left(2^{2p+1}C_I\e^{-c_In}
 +C_{I,K,p}n^{-p}\right).
\end{equation*}
For each fixed $p$, $n^p\e^{-c_In}\leq(p/(\e c_I))^p$.
After increasing the constant depending only on $(I,K,p)$, we therefore obtain
\begin{equation*}
 \E{|S_j-\E{S_j\mid X=x}|^{2p}\mid X=x}
 \leq C_{I,K,p}n^{-p}.
\end{equation*}
\endgroup

By the same sign-change argument as in the proof of
\Cref{lem:inverse-poly-conditional-good-event}, the conditional law of
$S_j$ given $X=x$ does not depend on $x$: both the squared AMP norm and
its overlap with the planted spin are unchanged by the sign change.
Hence $\E{S_j\mid X=x}=\E[\mathbb P_{n,\beta}^{\mathrm{pl,aug}}]{S_j}$.
Averaging the conditional moment bound proves the first part of this lemma.

Next, we prove the stability.
We couple two quantities $\E[\mathbb P_{n,\beta}^{\mathrm{pl,aug}}]{S_j(\beta,t)}$ and $\E[\mathbb P_{n,\beta_1}^{\mathrm{pl,aug}}]{S_j(\beta_1,s)}$ by using a common uniform spin $X$, full GOE matrix $W$, and independent $G\sim N(0,I_n)$.
Hence we have $A_\beta^{\mathrm{full}}=W+\beta XX^\mathsf T/n$,
$A_{\beta_1}^{\mathrm{full}}=W+\beta_1 XX^\mathsf T/n$,
$Y_t=tX+\sqrt tG$, and $Y_s=sX+\sqrt sG$.
Then
\begin{equation*}
 \norm{A_\beta^{\mathrm{full}}-A_{\beta_1}^{\mathrm{full}}}_{\mathrm{op}}
 =|\beta-\beta_1|,\qquad
 Y_t-Y_s=(t-s)X+(\sqrt t-\sqrt s)G.
\end{equation*}
On the event
$\norm{A_\beta^{\mathrm{full}}}_{\mathrm{op}}\leq4$ and $\norm{A_{\beta_1}^{\mathrm{full}}}_{\mathrm{op}}\leq4$,
part~\textup{(ii)} of
\Cref{lem:inverse-poly-fixed-depth-stability} implies
\begin{equation*}
 |S_j(\beta,t)-S_j(\beta_1,s)|
 \leq C_{I,K}\left(|\beta-\beta_1|+|t-s|
       +|\sqrt t-\sqrt s|\frac{\norm G_2}{\sqrt n}\right).
\end{equation*}
The time coordinates in $I$ are bounded away from zero, so
$|\sqrt t-\sqrt s|\leq C_I|t-s|$, and
$\E[G\sim N(0,I_n)]{\norm G_2}\leq\sqrt n$.
The event
$\norm{A_\beta^{\mathrm{full}}}_{\mathrm{op}}>4$ or
$\norm{A_{\beta_1}^{\mathrm{full}}}_{\mathrm{op}}>4$
has probability at most $C_I\e^{-c_In}$, and
$|S_j(\beta,t)-S_j(\beta_1,s)|\leq2$.
Taking expectations gives the second part of this lemma.
\end{proof}

Now, we are ready to prove \Cref{lem:inverse-poly-positive-posterior}.
\begin{proof}[Proof of \Cref{lem:inverse-poly-positive-posterior}]
Recall from \Cref{lem:inverse-poly-fixed-depth-concentration} that
$S_2(\beta,t)=\norm{\widehat m_K(Y_t)}_2^2/n$ and
$S_3(\beta,t)=\inner X{\widehat m_K(Y_t)}/n$.
Set $S_1(\beta,t)=\norm{m(A,Y_t)}_2^2/n$.
Since $\E[\mathbb P_{n,\beta}^{\mathrm{pl,aug}}]{X\mid A^{\mathrm{full}},Y_t}=m(A,Y_t)$, expanding the squared distance gives
\begin{equation}
  \frac{\norm{m\tp{A,Y_t}-\widehat m_{K}\tp{Y_t}}_2^2}{n}
  =S_1(\beta,t)+S_2(\beta,t)-2\E[\mathbb P_{n,\beta}^{\mathrm{pl,aug}}]{S_3(\beta,t)\mid A^{\mathrm{full}},Y_t}.
  \label{eq:inverse-poly-three-scalars}
\end{equation}
We now prove the concentration of the above quantity.
For the first term, by \Cref{lem:tap-overlap-mgf}, we use $q_\beta(t)$ to approximate it.
For the other two terms, we use their expectations. 
This gives the next approximation
\begin{equation*}
 a_n(\beta,t)\defeq q_\beta(t)+\E[\mathbb P_{n,\beta}^{\mathrm{pl,aug}}]{S_2(\beta,t)}
                    -2\E[\mathbb P_{n,\beta}^{\mathrm{pl,aug}}]{S_3(\beta,t)}.
\end{equation*}
Let $r$ be the radius constant from
\Cref{lem:inverse-poly-positive-geometry}.
We claim that $U$ can be chosen independently of $p$ so that, for every
fixed integer $p\geq1$ and every $(\beta,t)\in U$,
for all sufficiently large $n$, we have
\begin{equation*}
 a_n(\beta,t)\leq r^2/4,\qquad
  \E[\mathbb P_{n,\beta}^{\mathrm{pl,aug}}]{\abs{
    \frac{\norm{m\tp{A,Y_t}-\widehat m_{K}\tp{Y_t}}_2^2}{n}
    -a_n(\beta,t)}^{2p}}\leq C_{\beta_1,s,p}n^{-p}.
\end{equation*}

Assume the claim.  If
$\norm{m(A,Y_t)-\widehat m_K(Y_t)}_2>r\sqrt n$, then
\begin{equation*}
 \frac{\norm{m(A,Y_t)-\widehat m_K(Y_t)}_2^2}{n}-a_n(\beta,t)
 >r^2-a_n(\beta,t)\geq\frac{3r^2}{4}.
\end{equation*}
Applying Markov's inequality to the $2p$-th power
of the absolute difference, we obtain
\begin{align*}
 &~\mathbb P_{n,\beta}^{\mathrm{pl,aug}}
 \left\{\norm{m(A,Y_t)-\widehat m_K(Y_t)}_2>r\sqrt n\right\}\\
 \leq&~\mathbb P_{n,\beta}^{\mathrm{pl,aug}}\left\{
 \left|\frac{\norm{m(A,Y_t)-\widehat m_K(Y_t)}_2^2}{n}
 -a_n(\beta,t)\right|>\frac{3r^2}{4}\right\}\\
 \leq&~\left(\frac4{3r^2}\right)^{2p}
 \E[\mathbb P_{n,\beta}^{\mathrm{pl,aug}}]{\left|
 \frac{\norm{m(A,Y_t)-\widehat m_K(Y_t)}_2^2}{n}
 -a_n(\beta,t)\right|^{2p}}
 \leq C_{\beta_1,s,p}n^{-p}.
\end{align*}
Increasing the constant to cover the finitely many smaller $n$ proves
the lemma.

{It remains to prove the claim.
We first bound $a_n(\beta_1,s)$ and then use the parameter stability estimates to bound $a_n(\beta,t)$ throughout $U$.}
\begingroup
By the choice of $K$ and \Cref{prop:fixed-time-amp-tap-input}-\textup{(ii)} at $(\beta_1,s)$ with $\epsilon=r/(4\sqrt2)$, we have
\begin{equation*}
 \E[\mathbb P_{n,\beta_1}^{\mathrm{pl,aug}}]{
 \frac{\|m(A,Y_s)-\widehat m_{K,\beta_1}(A^{\mathrm{full}},Y_s)\|_2^2}{n}}
 \leq\frac{r^2}{32}+o(1).
\end{equation*}
\endgroup
Taking expectations in
\cref{eq:inverse-poly-three-scalars} under the planted construction
at inverse temperature $\beta_1$ yields
\begin{align*}
  a_n(\beta_1,s)
  =\E[\mathbb P_{n,\beta_1}^{\mathrm{pl,aug}}]{\frac{\norm{m\tp{A,Y_s}-\widehat m_{K}\tp{Y_s}}_2^2}{n}}
    +q_{\beta_1}\tp s-\E[\mathbb P_{n,\beta_1}^{\mathrm{pl,aug}}]{S_1(\beta_1,s)}.
\end{align*}
{For the second term, we condition on $(A,Y_t)$, draw $\sigma^1,\sigma^2$ independently from $\mu_{A,Y_t}$, and set $R_{12}=\inner{\sigma^1}{\sigma^2}/n$.}
Then we have that $S_1(\beta,t)=\E[(\sigma^1,\sigma^2)\sim\mu_{A,Y_t}^{\otimes2}]{R_{12}}$.
{By \Cref{lem:tap-overlap-mgf} and Jensen's inequality, we have}
\begin{align*}
 \E[\mathbb P_{n,\beta}^{\mathrm{pl,aug}}]{|S_1(\beta,t)-q_\beta(t)|^{2p}}
 \leq&~\E[\mathbb P_{n,\beta}^{\mathrm{pl,aug}}]{\E[(\sigma^1,\sigma^2)\sim\mu_{A,Y_t}^{\otimes2}]{
 |R_{12}-q_\beta(t)|^{2p}}}
 \leq C_{\beta_1,s,p}n^{-p}.
\end{align*}
Taking $p=1$ and applying Cauchy--Schwarz at $(\beta_1,s)$ gives
\begin{equation*}
 |q_{\beta_1}(s)-\E[\mathbb P_{n,\beta_1}^{\mathrm{pl,aug}}]{S_1(\beta_1,s)}|
 \leq\E[\mathbb P_{n,\beta_1}^{\mathrm{pl,aug}}]{|S_1(\beta_1,s)-q_{\beta_1}(s)|^2}^{1/2}
 \leq C_{\beta_1,s}n^{-1/2}.
\end{equation*}
{Combining the two estimates, we obtain $a_n(\beta_1,s)\leq r^2/32+o(1)+C_{\beta_1,s}n^{-1/2}\leq r^2/8$ for all sufficiently large $n$.}

Next, we extend this upper bound to any other $(\beta, t) \in U$.
Parts~\textup{(i)--(ii)} of \Cref{lem:inverse-poly-fixed-depth-stability}
and \Cref{lem:inverse-poly-fixed-depth-concentration} imply
\begin{equation*}
 a_n(\beta,t)\leq a_n(\beta_1,s)
       +C(|\beta-\beta_1|+|t-s|)+C\e^{-cn}.
\end{equation*}
Shrink $U$ around $(\beta_1,s)$, keeping its closure inside the original
rectangle, so that
$a_n(\beta,t)\leq r^2/4$ for $(\beta,t)\in U$ for all sufficiently large $n$.

We now prove the moment bound in the claim.  Subtracting the definition
of $a_n$ from \cref{eq:inverse-poly-three-scalars}, we have
\begin{align*}
 \frac{\norm{m(A,Y_t)-\widehat m_K(Y_t)}_2^2}{n}-a_n(\beta,t)
 =&~S_1(\beta,t)-q_\beta(t)
    +S_2(\beta,t)-\E[\mathbb P_{n,\beta}^{\mathrm{pl,aug}}]{S_2(\beta,t)}\\
 &~-2\left(\E[\mathbb P_{n,\beta}^{\mathrm{pl,aug}}]{S_3(\beta,t)\mid A^{\mathrm{full}},Y_t}
             -\E[\mathbb P_{n,\beta}^{\mathrm{pl,aug}}]{S_3(\beta,t)}\right).
\end{align*}
The first term has the overlap moment bound just established.
For the second term, \Cref{lem:inverse-poly-fixed-depth-concentration}
gives a $2p$-moment at most $C_{\beta_1,s,p}n^{-p}$.
For the conditional expectation in the last term, Jensen's inequality gives
\begin{align*}
 \E[\mathbb P_{n,\beta}^{\mathrm{pl,aug}}]{\left|\E[\mathbb P_{n,\beta}^{\mathrm{pl,aug}}]{S_3(\beta,t)\mid A^{\mathrm{full}},Y_t}
             -\E[\mathbb P_{n,\beta}^{\mathrm{pl,aug}}]{S_3(\beta,t)}\right|^{2p}}
 \leq&~\E[\mathbb P_{n,\beta}^{\mathrm{pl,aug}}]{|S_3(\beta,t)-\E[\mathbb P_{n,\beta}^{\mathrm{pl,aug}}]{S_3(\beta,t)}|^{2p}}\\
 \leq&~C_{\beta_1,s,p}n^{-p},
\end{align*}
where the last inequality is again
\Cref{lem:inverse-poly-fixed-depth-concentration}.
Applying $|a+b-2c|^{2p}\leq3^{2p-1}
(|a|^{2p}+|b|^{2p}+2^{2p}|c|^{2p})$ to the decomposition proves
the moment bound, and hence the claim.
\end{proof}

\begin{proof}[Proof of \Cref{lem:amp-good-events}]
We first prove part~\textup{(i)}. We choose
\begin{equation*}
 t_0=\min\left(T,\frac{(1-\beta_1)^2(1-\beta_1^2)}{4\beta_1^2}\right),
 \qquad \kappa_0=\frac{(1-\beta_1)^2}{8}.
\end{equation*}
At $u=0$, the Hessian is
\begin{equation*}
 \nabla_u^2F_{\mathrm{TAP}}(0;A^{\mathrm{full}},y,q_\beta(t))
 =I_n-\beta A^{\mathrm{full}}+\beta^2(1-q_\beta(t))I_n.
\end{equation*}
For every $H>0$ and $\beta\in[\beta_0,\beta_1]$, we apply \Cref{lem:inverse-poly-initial-geometry} at $u=0$ and use $4\kappa_0=(1-\beta_1)^2/2$ to obtain
\begin{equation*}
 \mathbb P_{n,\beta}^{\mathrm{pl,aug}}\left\{
 \norm{A^{\mathrm{full}}}_{\mathrm{op}}>4\ \text{or}\
 \exists t\in[0,t_0]:
 \lambda_{\min}\bigl(I_n-\beta A^{\mathrm{full}}+\beta^2(1-q_\beta(t))I_n\bigr)<4\kappa_0
 \right\}\le C\e^{-cn}\le C_Hn^{-H}
\end{equation*}
for all sufficiently large $n$.

We next prove part~\textup{(ii)}. We fix a reference pair $(\bar\beta,s)$ in $[\beta_0,\beta_1]\times[t_0,T]$ and use \Cref{lem:inverse-poly-positive-geometry,lem:inverse-poly-positive-posterior} with the same fixed AMP depth. {Their proofs use stability, \Cref{lem:gaussian-enlargement-boosting}, and overlap estimates on the compact time interval $[t_0/2,2T]\subseteq(0,\infty)$.}
We obtain an open rectangle $U_{\bar\beta,s}$ around $(\bar\beta,s)$ and constants $r,\kappa>0$ and $K\in\mathbb Z_{\geq0}$.

For every fixed $(\beta,t)\in U_{\bar\beta,s}\cap([\beta_0,\beta_1]\times[t_0,T])$, write $\widehat m=\widehat m_K(Y_t)$.
By \Cref{lem:inverse-poly-positive-geometry}, on one event of $\mathbb P_{n,\beta}^{\mathrm{pl,aug}}$-probability at least $1-C\e^{-cn}$, the pair $(A^{\mathrm{full}},Y_t)$ satisfies $(\kappa r/16)$-small TAP residual and $(4r,4\kappa)$-local strong convexity at depth $K$, together with $\norm{A^{\mathrm{full}}}_{\mathrm{op}}\leq4$ and $\norm{Y_t}_\infty\leq n^2$.
For every fixed integer $p\geq1$, the posterior-mean estimate at the same $K$ and $(\beta,t)$ is
\begin{equation*}
 \mathbb P_{n,\beta}^{\mathrm{pl,aug}}\left\{\norm{m(A,Y_t)-\widehat m}_2>r\sqrt n\right\}
 \leq C_pn^{-p}.
\end{equation*}
For each reference pair $(\bar\beta,s)$, we choose a closed rectangle contained in $U_{\bar\beta,s}$ with $(\bar\beta,s)$ strictly inside it.
By compactness, we retain finitely many such rectangles $I_1,\ldots,I_{\ell_{\max}}$, with their corresponding $K_\ell,r_\ell,\kappa_\ell$, so that every pair in $[\beta_0,\beta_1]\times[t_0,T]$ lies strictly inside at least one $I_\ell$.
We now fix $H>0$ and set $p=\max(1,\lceil H\rceil)$. For fixed $\ell$ and $(\beta,t)\in I_\ell\cap([\beta_0,\beta_1]\times[t_0,T])$, let $G_{\ell,\beta,t}$ be the event in part~\textup{(ii)}. Since $\kappa_\ell r_\ell/16\leq\kappa_\ell r_\ell/8$, by the above two probability bounds and a union bound, {for suitable constants $C_H,n_H$ and $n\geq n_H$, we have}
\begin{equation*}
 \mathbb P_{n,\beta}^{\mathrm{pl,aug}}(G_{\ell,\beta,t}^c)
 \leq C_\ell\e^{-c_\ell n}+C_{\ell,p}n^{-p}
 \leq C_Hn^{-H}.\qedhere
\end{equation*}
\end{proof}

\section{WPI under a strong enough field}
\label{app:terminal-posterior-variance}

In this section, we prove \Cref{lem:app-wpi-two-path-terminal}.
For a symmetric $J\in\R^{m\times m}$ and $h\in\R^m$, we define the distribution $\pi_{J,h}$ on $\{\pm1\}^m$ by
\begin{equation*}
 \pi_{J,h}(\sigma)\propto\exp(\inner{\sigma}{J\sigma}/2+\inner{h}{\sigma}).
\end{equation*}
We define its correlation matrix by
\begin{equation*}
 \operatorname{Cor}_{\pi_{J,h}}(\sigma)\defeq D^{-1/2}\operatorname{Cov}_{\pi_{J,h}}(\sigma)D^{-1/2},
\end{equation*}
where $D$ is the diagonal matrix of coordinate variances.
The following consequence of the strong-field estimates bounds this correlation matrix even when the constant field has a perturbation of order $\sqrt m$.

\begin{lemma}[{\cite[Lemma~2.2 and Theorem~1.1]{BandeiraElAlaouiRodder26}}]
\label{lem:strong-field-correlation}
Fix $C_1\geq0$ and $\delta\in(0,1)$.
There is a constant $C=C(C_1,\delta)>0$ such that, for every fixed $C_2\geq0$ and all sufficiently large $m$, the following holds.
If $J\in\R^{m\times m}$ is symmetric with $\norm{J}_{\mathrm{op}}\leq C_1$ and $\max_{|S|\leq\delta m}\norm{2J_{S,S}}_{\mathrm{op}}\leq2/5$, then, for every $v\in\R^m$ with $\norm{v}_2\leq C_2\sqrt m$ and every $\theta\geq C(1+C_2)$,
\begin{equation*}
 \norm{\operatorname{Cor}_{\pi_{J,\theta\ones_m+v}}(\sigma)}_{\mathrm{op}}\leq6.
\end{equation*}
\end{lemma}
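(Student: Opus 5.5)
The plan is to reduce to the setting covered by \cite[Lemma~2.2 and Theorem~1.1]{BandeiraElAlaouiRodder26} by isolating the few coordinates where the perturbation $v$ destroys the strong field. Set $h=\theta\ones_m+v$ and call $W=\{i:|h_i|<\theta/2\}$ the \emph{weak set}. Every $i\in W$ has $|v_i|>\theta/2$. Hence $|W|\le 4\norm{v}_2^2/\theta^2\le 4C_2^2m/\theta^2$, and choosing $C(C_1,\delta)$ so that $\theta\ge C(1+C_2)$ forces $4C_2^2/\theta^2\le\delta$ gives $|W|\le\delta m$. On the complement $W^c$ every field satisfies $|h_i|\ge\theta/2$, which is as large as we like compared with $C_1$. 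After flipping the signs of the coordinates with $h_i<0$, which conjugates $J$ by a diagonal $\pm1$ matrix and changes neither $\norm{J}_{\mathrm{op}}$, nor the restricted norms $\norm{2J_{S,S}}_{\mathrm{op}}$, nor $\norm{\operatorname{Cor}}_{\mathrm{op}}$, all strong fields point in the positive direction. This is the form in which Theorem~1.1 of \cite{BandeiraElAlaouiRodder26} applies: strong field off a set of size at most $\delta m$, and on that set the interaction block obeys the restricted condition $\norm{2J_{W,W}}_{\mathrm{op}}\le2/5$.

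To carry this out, I would decompose the covariance by the law of total covariance, conditioning on $\sigma_{W^c}$:
\[
\operatorname{Cov}_{\pi}(\sigma)=\E{\operatorname{Cov}(\sigma\mid\sigma_{W^c})}+\operatorname{Cov}\bigl(\E{\sigma\mid\sigma_{W^c}}\bigr).
\]
Given $\sigma_{W^c}$, the law of $\sigma_W$ is $\pi_{J_{W,W},\,h_W+J_{W,W^c}\sigma_{W^c}}$. Its interaction satisfies $\norm{2J_{W,W}}_{\mathrm{op}}\le2/5$, so the Eldan--Koehler--Zeitouni-type covariance bound packaged in \cite[Lemma~2.2]{BandeiraElAlaouiRodder26} gives a conditional correlation of operator norm at most a constant close to $5/3$, uniformly in the field. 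For the strong block, the fields of size at least $\theta/2-C_1\sqrt{\cdot}$ make each $\sigma_j$ with $j\in W^c$ nearly frozen, with variance of order $e^{-\theta}$. Through the contraction estimate of Theorem~1.1, the normalized influence of $\sigma_{W^c}$ on itself and on $\sigma_W$ is $O(\norm{J}_{\mathrm{op}}\cdot\text{small})$ once $\theta$ is large relative to $C_1$.

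The variance normalization $D^{-1/2}$ is where care is needed. Strong coordinates have tiny variances, and cross terms between $W$ and $W^c$ must be measured relative to $\sqrt{D_{ii}D_{jj}}$. My plan is to bound the correlation operator norm through a Dobrushin-style influence matrix in normalized coordinates. This is the quantity that Theorem~1.1 controls by $1+O(C_1/\theta)$ on $W^c$. I would then combine it with the weak-block bound via a $2\times2$ block operator-norm estimate: the diagonal blocks contribute at most $5/3$ and $1+o(1)$, and the off-diagonal blocks contribute at most $O(C_1 e^{-c\theta})$ after normalization, for a total below $6$ once $C$ is large.

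The main obstacle is this off-diagonal, variance-normalized cross term. The weak coordinates feel a field $J_{W,W^c}\sigma_{W^c}$ whose $\ell_2$ size can be of order $\sqrt m$, not small. I would handle it by noting that this field fluctuates only through the few strong spins that are not aligned, which is exactly the regime covered by the perturbation allowance $\norm{v}_2\le C_2\sqrt m$ in \cite[Theorem~1.1]{BandeiraElAlaouiRodder26}. This is why the threshold for $\theta$ must scale with $1+C_2$ while the bound $6$ stays universal. If the direct block estimate fails to close, the fallback is to apply Theorem~1.1 to the full field $\theta\ones_m+v$ as a black box and invoke Lemma~2.2 only to verify its restricted-norm hypothesis on the weak set.
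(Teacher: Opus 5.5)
Your total-covariance step handles the easy part correctly. Conditionally on $\sigma_{W^c}$, the law of $\sigma_W$ has interaction $J_{W,W}$ with $\lambda_{\max}-\lambda_{\min}\le 2/5$. The Eldan--Koehler--Zeitouni Poincar\'e inequality applied to linear test functions therefore gives $\mathbb{E}[\operatorname{Cov}(\sigma_W\mid\sigma_{W^c})]\preceq\tfrac53 D_W$.

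All of the difficulty sits in the other term, $\operatorname{Cov}(\mathbb{E}[\sigma\mid\sigma_{W^c}])$, and the estimates you propose for it are false under the stated hypotheses. The bound $\norm{J}_{\mathrm{op}}\le C_1$ allows rows with large $\ell_1$-norm, so couplings can cancel the external field at a coordinate outside $W$.
\begin{itemize}
\item For instance, take $v=0$ and $J_{jk}=J_{kj}=-\theta/(m-1)$ for all $k\ne j$. The operator norm is $\theta/\sqrt{m-1}$, so every hypothesis holds for large $m$. The other spins are almost all $+1$, so the effective field at $j$ is about $\theta(1-\tanh\theta)$. Hence $\operatorname{Var}(\sigma_j)$ is close to $1$, not $O(\mathrm{e}^{-\theta})$.
\item Couple such a $j$ by a fixed small constant $c_0$ to a second such coordinate, or to a weak coordinate $i$ with $v_i=-\theta$. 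This produces a normalized correlation of order $c_0$ that does not decay in $\theta$.
\end{itemize}
So neither ``$1+O(C_1/\theta)$ on $W^c$'' nor ``off-diagonal blocks $O(C_1\mathrm{e}^{-c\theta})$'' can hold. The $\sigma_{W^c}$-marginal is also not a pairwise Ising model, and you give no argument for the Dobrushin-type influence bound you attribute to Theorem~1.1. Your fallback, invoking \cite[Theorem~1.1]{BandeiraElAlaouiRodder26} as a black box, skips exactly what must be checked: that its estimates survive a single-site pinning and a field perturbation of size $C_2\sqrt m$.

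The missing idea is that the small-submatrix hypothesis $\max_{|S|\le\delta m}\norm{2J_{S,S}}_{\mathrm{op}}\le2/5$ is applied to random disagreement sets, not to a fixed weak set determined by $v$. The paper's argument runs as follows.
\begin{itemize}
\item Pin $\sigma_i=b$. The remaining spins see the field $\theta\ones+v_{-i}+bJ_{-i,i}$, whose perturbation has norm at most $C_2\sqrt m+C_1$.
\item Apply \cite[Lemma~2.3 and Corollary~2.4]{BandeiraElAlaouiRodder26} with parameter $\delta/2$, under the field threshold of \cite[Eq.~(7)]{BandeiraElAlaouiRodder26}. This threshold is where $\theta\ge C(1+C_2)$ comes from. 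A sample then has fewer than $\delta m/4$ negative spins with probability at least $1-\mathrm{e}^{-cm}$, uniformly over the pinning.
\item Hence two samples pinned oppositely at $i$ disagree on fewer than $\delta m$ coordinates, except with probability $2\mathrm{e}^{-cm}$.
\item \cite[Lemma~2.2]{BandeiraElAlaouiRodder26} and Eq.~(11) in the proof of \cite[Theorem~1.1]{BandeiraElAlaouiRodder26} bound the covariance contribution on the good event by $5$, using the small-submatrix hypothesis. The bad event adds at most $2m\mathrm{e}^{-cm}$, giving $5+2m\mathrm{e}^{-cm}\le6$.
\end{itemize}
The count of negative spins automatically includes both your weak coordinates and the coordinates whose field is cancelled by couplings. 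That is why no fixed weak/strong split is needed.
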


{
We briefly explain the proof.
After pinning $\sigma_i=b$, the field perturbation on the remaining coordinates is $v_{-i}+bJ_{-i,i}$, with norm at most $C_2\sqrt m+C_1$.
We use $\delta/2$ as the parameter in \cite[Lemma~2.3 and Corollary~2.4]{BandeiraElAlaouiRodder26}.
By the field threshold in \cite[Eq.~(7)]{BandeiraElAlaouiRodder26}, we can choose $C=C(C_1,\delta)$ so that these estimates hold for every $\theta\geq C(1+C_2)$.
Thus, given any single-coordinate pinning, a sample from $\pi_{J,\theta\ones_m+v}$ has fewer than $\delta m/4$ negative spins with probability at least $1-\e^{-cm}$, for a constant $c>0$.
For two independent samples, given that they differ at coordinate $i$, we apply this bound to the two opposite pinnings and use a union bound to obtain probability at most $2\e^{-cm}$ of disagreement at at least $\delta m$ coordinates.
Finally, by \cite[Lemma~2.2]{BandeiraElAlaouiRodder26} and Eq.~(11) in the proof of \cite[Theorem~1.1]{BandeiraElAlaouiRodder26}, we bound the covariance contribution by $5$ using the small-submatrix assumption, and hence the correlation norm by $5+2m\e^{-cm}\leq6$ for all sufficiently large $m$.
}

{The next lemma adapts the conditioning argument in the proof of \cite[Lemma~3.1]{BandeiraElAlaouiRodder26} to the Poincar\'e inequality in our normalization.}
\begin{lemma}
\label{lem:pinned-glauber-mixing}
Fix $\delta\in(0,1)$, and let $n$ be sufficiently large depending only on $\delta$.
Let $J\in\R^{n\times n}$ be symmetric, let $h\in\R^n$, and set $\pi=\pi_{J,h}$.
Suppose that for every pinning $\sigma_{I^c}=x_{I^c}$ with $I\ne\varnothing$:
\begin{enumerate}
 \item If $|I|\geq\delta n$, the conditional distribution on $I$ has a correlation matrix of operator norm at most $6$.
 \item If $|I|<\delta n$, then $\lambda_{\max}(J_{I,I})-\lambda_{\min}(J_{I,I})\leq2/5$.
\end{enumerate}
Then, for every $f:\{\pm1\}^n\to\R$,
\begin{equation*}
 \Var[\pi]{f}\leq\frac53 n^7\Diri{\pi}{f,f}.
\end{equation*}
\end{lemma}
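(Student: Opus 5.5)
The approach is to induct on the size of the free set $I$ through the pinned measures, following the conditioning argument of \cite[Lemma~3.1]{BandeiraElAlaouiRodder26}. For each nonempty $I$ and each pinning $x_{I^c}$ we write $\pi^{x}_I$ for the conditional law on $\{\pm1\}^I$. We also write $\mathcal E_I$ for the Dirichlet form of random-scan heat-bath dynamics restricted to $I$, normalized with the factor $1/|I|$. The goal is a bound $\Var[\pi^x_I]{f}\le \lambda(|I|)\,|I|\,\mathcal E_I(f,f)$ with $\lambda$ growing at most polynomially. Since $\mathcal E_{[n]}=\Diri{\pi}{f,f}$, taking $I=[n]$ then gives the statement.

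\emph{Base regime $|I|<\delta n$.} Hypothesis~(2) says that $J_{I,I}$ has spectral spread at most $2/5$. Shifting the diagonal leaves the pinned measure unchanged, so we may assume $0\preceq J_{I,I}+cI\preceq (2/5)I$. The external field is now $h_I+J_{I,I^c}x_{I^c}$, and \cite[Theorem~1]{EldanKoehlerZeitouni22} applies with arbitrary field. It gives $\Var{f}\le (1-2/5)^{-1}|I|\,\mathcal E_I(f,f)=\tfrac53|I|\mathcal E_I(f,f)$, which is exactly the $5/3$ constant in the target.

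\emph{Large sets $|I|\ge\delta n$.} Hypothesis~(1) bounds the correlation matrix of $\pi^x_I$ by $6$. From this we want approximate tensorization from $I$ down to $I$ minus one site, i.e.\ a block-dynamics (``down-up'') contraction
\[
\Var[\pi^x_I]{f}\le \Bigl(1-\tfrac{c}{|I|}\Bigr)^{-1}\frac{1}{|I|}\sum_{j\in I}\E{\Var[\pi^x_I]{f\mid \sigma_j}},
\]
valid for some $c$ depending only on the correlation bound. This is the spectral-independence / local-to-global step. In \cite{BandeiraElAlaouiRodder26} it reads as $\Var{f}\le \frac{k}{k-\eta}\cdot$(average variance after pinning one more site) with $\eta\le 6$. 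Iterating it from $|I|$ down to $\lceil\delta n\rceil$ multiplies the constant by $\prod_{k=\delta n}^{|I|}\frac{k}{k-6}\lesssim (n/\delta n)^{6}$, while hypothesis~(1) continues to apply at every intermediate pinning. One then uses the fact that the averaged Dirichlet forms of the pinned measures compose back into $\mathcal E_I$, up to the factor $|I|$ per level accounted for by the normalization.

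\emph{Bookkeeping and main obstacle.} Combining the two regimes, the polynomial loss is about $(1/\delta)^6\cdot n$ from the normalization, and bounding it crudely by $n^7$ for large $n$ gives $\tfrac53 n^7$. The main obstacle is the local-to-global step. It requires checking that the correlation bound of hypothesis~(1) (correlation rather than covariance normalization) gives the one-site contraction factor $1-\eta/k$ with $\eta$ controlled by $6$, and that the constant then telescopes. I would first try to adapt the spectral-independence implies approximate-tensorization lemma of \cite{BandeiraElAlaouiRodder26} verbatim, checking only the normalization of the Dirichlet form.
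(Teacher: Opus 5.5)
Your proposal is correct and matches the paper's argument: both iterate the one-site variance contraction from \cite[Lemma~3.1]{BandeiraElAlaouiRodder26} with $\eta\le 6$ from hypothesis~(1) down to $m=\lceil\delta n\rceil-1<\delta n$ free coordinates, apply \cite[Theorem~1]{EldanKoehlerZeitouni22} after a diagonal shift to get $\frac53 m$, and average the pinned Dirichlet forms back to $\mathcal E_\pi(f,f)$. The paper phrases your induction as revealing uniformly random coordinates and bounds the product crudely by $n^{-6}$ rather than by your $\approx\delta^{6}$. Two small clarifications: the normalized Dirichlet forms average exactly, so no factor of $|I|$ is lost per level, and the correlation normalization is exactly what yields $\frac1k\sum_{j}\operatorname{Var}\bigl(\mathbb E[f\mid\sigma_j]\bigr)\le\frac{6}{k}\operatorname{Var}(f)$, so the obstacle you flagged is not actually an issue.
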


{
We briefly explain the adjustment.
We take $m=\lceil\delta n\rceil-1\geq12$ and reveal uniformly chosen distinct coordinates of a sample from $\pi$ until $m$ coordinates remain.
Writing $\nu_r$ for the conditional distribution with $r$ coordinates remaining, we apply the variance iteration in the proof of \cite[Lemma~3.1]{BandeiraElAlaouiRodder26} with the correlation bound in hypothesis~{\textup{(1)}} to obtain
\begin{equation*}
 \E{\Var[\nu_m]{f}}
 \geq\prod_{r=m+1}^n(1-6/r)\Var[\pi]{f}
 \geq n^{-6}\Var[\pi]{f}.
\end{equation*}
Here the expectations are over the revealed coordinates and their spin values, and every correlation bound is used with $r\geq m+1\geq\delta n$.
Since $m<\delta n$, by hypothesis~{\textup{(2)}}, a diagonal shift, and \cite[Theorem~1]{EldanKoehlerZeitouni22}, we obtain $\Var[\nu_m]{f}\leq(5m/3)\Diri{\nu_m}{f,f}$.
The remaining coordinates form a uniform $m$-subset, and averaging over the revealed spin values restores $\pi$, so $\E{\Diri{\nu_m}{f,f}}=\Diri{\pi}{f,f}$.
We combine these bounds and $m\leq n$ to obtain the claimed bound.
}

\subsection{Application to the SK model}

We fix $0<\beta_0<\beta_1<1$ and choose $\delta=\delta_{\beta_1}\in(0,1/2)$ such that $\delta\log(\e/\delta)\leq1/(800\beta_1^2)$.
To apply the preceding lemmas, we use the matrix event
\begin{equation*}
 \Omega_n^{\mathrm{end}}
 \defeq\left\{A:\norm{A}_{\mathrm{op}}\leq3,\
   \max_{|I|\leq\delta n}\norm{\beta_1 A_{I,I}}_{\mathrm{op}}\leq\frac15\right\}.
\end{equation*}
{We adapt the concentration and union bound in \cite[Section~3.3]{BandeiraElAlaouiRodder26} to the hollow matrix used here.}
\begin{lemma}
\label{lem:terminal-disorder-tail}
For all sufficiently large $n$,
\begin{equation*}
 \mathbb P_{0,n}((\Omega_n^{\mathrm{end}})^c)
 \leq2\e^{-n/4}+2\e^{-n/(800\beta_1^2)}.
\end{equation*}
\end{lemma}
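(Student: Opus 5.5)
The event $\Omega_n^{\mathrm{end}}$ is an intersection of two conditions, and I will bound the two failure probabilities separately and add them.

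\emph{Operator norm.} Write $A=G-\operatorname{diag}(G)$, where $G$ is a GOE matrix with off-diagonal entries $N(0,1/n)$ and diagonal entries $N(0,2/n)$. By the GOE tail bound quoted in the proof of \Cref{lem:inverse-poly-planted-norm-tail}, $\Pr{\norm G_{\mathrm{op}}\geq 2+u}\le 2\e^{-nu^2/4}$. Taking $u=1/2$ would only give $2\e^{-n/16}$, so I will instead take $u=1$ with an extra margin coming from the diagonal. The diagonal part satisfies $\norm{\operatorname{diag}(G)}_{\mathrm{op}}=\max_i|G_{ii}|\le \sqrt{2/n}\max_i|g_{ii}|$, which is at most any fixed constant with probability $1-2n\e^{-cn}$. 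Since $A$ is a $1$-Lipschitz function of the normalized entries, in the sense that $\norm A_{\mathrm{op}}\le \norm A_{\mathrm F}$ scales as $\sqrt{2/n}$ times the Euclidean norm of the standard Gaussian input, a cleaner route is to apply the Gaussian concentration inequality of \Cref{lem:gaussian-enlargement-boosting} directly to $\norm A_{\mathrm{op}}$. This function of the $n(n-1)/2$ standard Gaussians has Lipschitz constant $\sqrt{2/n}$, and its median is at most $2+o(1)$ by the GOE bound together with the diagonal estimate. Hence $\Pr{\norm A_{\mathrm{op}}>3}\le 2\Phi(-(1-o(1))\sqrt{n/2})\le 2\e^{-n/4}$ for large $n$, up to adjusting the $o(1)$ slack.

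\emph{Small principal submatrices.} For a fixed $I$ with $|I|=k\le\delta n$, $A_{I,I}$ is a hollow $k\times k$ Wigner matrix with entry variance $1/n$. Using the same Lipschitz concentration, $\norm{A_{I,I}}_{\mathrm{op}}$ has median at most $C\sqrt{k/n}\le C\sqrt\delta$, for example via the bound $2\sqrt{k/n}+o(1)$ obtained by rescaling the GOE tail, and Lipschitz constant $\sqrt{2/n}$. With the threshold $1/(5\beta_1)$, the deviation needed is $1/(5\beta_1)-C\sqrt\delta\ge 1/(10\beta_1)$, provided $\delta$ is small. This holds because $\delta\log(\e/\delta)\le 1/(800\beta_1^2)$ forces $\delta\lesssim\beta_1^{-2}/800$. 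Each set then fails with probability at most $2\exp(-n/(400\beta_1^2))$. Only the maximal sets with $|I|=\lfloor\delta n\rfloor$ matter, since operator norms of principal submatrices are monotone in $I$. A union bound over at most $\binom{n}{\lfloor\delta n\rfloor}\le \exp(n\delta\log(\e/\delta))\le\exp(n/(800\beta_1^2))$ sets gives the total bound $2\exp(-n/(800\beta_1^2))$.

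The main obstacle is the constant bookkeeping in the second step. The median bound $C\sqrt\delta$ must be genuinely smaller than $1/(10\beta_1)$ under the stated choice of $\delta$, and the exponent $n/(400\beta_1^2)$ must beat the entropy $n/(800\beta_1^2)$ exactly. If the median estimate is too lossy, I would first try a sharper Lipschitz constant, namely $\sqrt{2/n}$ acting only on the $k(k-1)/2$ entries of $I$, together with an exact GOE median bound of $2\sqrt{k/n}$.
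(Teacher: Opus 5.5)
Your second step is sound. The median of $\|A_{I,I}\|_{\mathrm{op}}$ is $2\sqrt{k/n}+O(\sqrt{\log n/n})$, and $\delta\le\delta\log(\mathrm{e}/\delta)\le 1/(800\beta_1^2)$ gives $2\sqrt\delta\le 1/(\sqrt{200}\,\beta_1)<1/(10\beta_1)$. That leaves a fixed margin which absorbs the lower-order term. Reducing to $|I|=\lfloor\delta n\rfloor$ by monotonicity of principal compressions is also fine.

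The gap is in the first term, where the threshold $3$ and the exponent $n/4$ leave no room. Borell's inequality about the median gives $2\Phi(-(1-\epsilon_n)\sqrt{n/2})$. Since $\Phi(-x)$ is of order $x^{-1}\mathrm{e}^{-x^2/2}$, this is at most $2\mathrm{e}^{-n/4}$ only when $\epsilon_n=O(\log n/n)$. Your median estimate only gives $\epsilon_n=O(\sqrt{\log n/n})$: the GOE tail bound places the median of $\|G\|_{\mathrm{op}}$ at most at $2+O(n^{-1/2})$, and the diagonal correction $\sqrt{2/n}\max_i|g_{ii}|$ is of order $\sqrt{\log n/n}$. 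The resulting bound is $\exp(-n/4+O(\sqrt{n\log n}))$, which exceeds $2\mathrm{e}^{-n/4}$ for large $n$. So ``adjusting the $o(1)$ slack'' is not possible, and your first route (splitting $u=1$ between the GOE tail and the diagonal) has the same defect. This weaker bound would suffice for every later use of the lemma, which only needs $C\mathrm{e}^{-cn}$, but it does not prove the stated inequality.

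The missing idea is to center at the mean and remove the diagonal without loss by convexity, as the paper does. Let $D$ be an independent centered Gaussian diagonal with variance $2/n$, so that $A_{I,I}+D_I$ is $\sqrt{k/n}$ times a standard $k\times k$ GOE. Since $\lambda_{\max}$ is convex, Jensen's inequality conditionally on $A$ gives
\[
\mathbb{E}\lambda_{\max}(A_{I,I})\le\mathbb{E}\lambda_{\max}(A_{I,I}+D_I)\le 2\sqrt{k/n},
\]
and the same holds for $-A_{I,I}$. The map $\lambda_{\max}(\pm A_{I,I})$ is $\sqrt{2/n}$-Lipschitz in the standard Gaussian entries. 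The one-sided concentration inequality about the mean, $\mathbb{P}(F-\mathbb{E}F>u)\le\mathrm{e}^{-u^2/(2L^2)}$, then gives $\mathbb{P}(\|A_{I,I}\|_{\mathrm{op}}>2\sqrt{k/n}+u)\le 2\mathrm{e}^{-nu^2/4}$ with no lower-order loss. Taking $I=[n]$ and $u=1$ yields exactly $2\mathrm{e}^{-n/4}$, and $u=1/(10\beta_1)$ together with your union bound handles the small submatrices.
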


{
We sketch its proof as follows.
For $I\subseteq[n]$ with $|I|=k\geq1$, we add an independent centered Gaussian diagonal with entry variance $2/n$ to $A_{I,I}$ to obtain a GOE matrix.
By Jensen's inequality for $\lambda_{\max}$ and the GOE expectation bound in \cite[Section~3.3]{BandeiraElAlaouiRodder26}, we obtain $\E{\lambda_{\max}(A_{I,I})}\leq2\sqrt{k/n}$.
The largest eigenvalue is $\sqrt{2/n}$-Lipschitz in the independent standard Gaussian entries above the diagonal, so Gaussian concentration, also applied to $-A_{I,I}$, gives
\begin{equation*}
 \mathbb P_{0,n}\{\|A_{I,I}\|_{\mathrm{op}}>2\sqrt{k/n}+u\}
 \leq2\e^{-nu^2/4},\qquad u\geq0.
\end{equation*}
We take $I=[n]$ and $u=1$ to obtain the first term in the failure bound.
For $k\leq\delta n$, our choice of $\delta$ ensures $2\sqrt\delta\leq1/(10\beta_1)$, so taking $u=1/(10\beta_1)$ bounds each small-submatrix failure by $2\e^{-n/(400\beta_1^2)}$.
We take a union bound over at most $\exp(n\delta\log(\e/\delta))$ subsets and use $\delta\log(\e/\delta)\leq1/(800\beta_1^2)$ to obtain the second term.
}

On this matrix event, we apply \Cref{lem:strong-field-correlation,lem:pinned-glauber-mixing} to obtain a Poincar\'e inequality at the terminal fields needed below.
\begin{lemma}
\label{lem:terminal-field-poincare}
There are constants $T_0,C,N>0$, depending only on $(\beta_0,\beta_1)$, such that, for every $n\geq N$, $A\in\Omega_n^{\mathrm{end}}$, $\beta\in[\beta_0,\beta_1]$, and $y\in\R^n$ with $\||y|-T_0\ones_n\|_2\leq2\sqrt{T_0n}$,
\begin{equation*}
 \Var[\mu_{A,y}]{f}\leq Cn^7\Diri{\mu_{A,y}}{f,f}
\end{equation*}
for every $f:\{\pm1\}^n\to\R$, where $\mu_{A,y}$ is the Gibbs distribution at inverse temperature $\beta$ and external field $y$.
\end{lemma}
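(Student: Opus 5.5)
The plan is to apply \Cref{lem:pinned-glauber-mixing} to $\pi=\mu_{A,y}$, written as $\pi_{J,h}$ with $J=\beta A$ and $h=y$, and to verify its two hypotheses from the matrix event $\Omega_n^{\mathrm{end}}$ and \Cref{lem:strong-field-correlation}. Doing so gives the Poincar\'e constant $\frac53 n^7$, so we may take $C=5/3$.

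The first step is a gauge transformation, since the field need not be positive. Set $D=\operatorname{diag}(\operatorname{sgn}(y_i))$, with sign $+1$ when $y_i=0$. Flipping spins by $D$ maps $\mu_{J,y}$ to $\mu_{DJD,|y|}$. This map preserves $\|J\|_{\mathrm{op}}$, the operator norms of principal submatrices, the Dirichlet form and the variance. So we may assume $h=|y|=T_0\ones_n+v$ with $\|v\|_2\le2\sqrt{T_0n}$.

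Hypothesis (2) follows from $\Omega_n^{\mathrm{end}}$. For $|I|<\delta n$ we have $\|J_{I,I}\|_{\mathrm{op}}\le\|\beta_1A_{I,I}\|_{\mathrm{op}}\le1/5$, so the spread of eigenvalues is at most $2/5$.

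For hypothesis (1), fix a pinning $\sigma_{I^c}=x_{I^c}$ with $|I|=m\ge\delta n$. The conditional law is $\pi_{J_{I,I},h_I+J_{I,I^c}x_{I^c}}$. Its field is $T_0\ones_m+v'$, where $v'=v_I+J_{I,I^c}x_{I^c}$. We bound
\[
\|v'\|_2\le2\sqrt{T_0n}+3\sqrt n\le(2\sqrt{T_0}+3)\delta^{-1/2}\sqrt m=:C_2\sqrt m .
\]
We then apply \Cref{lem:strong-field-correlation} with $C_1=3$ and the parameter $\delta$. Its small-submatrix condition holds because $\max_{|S|\le\delta m}\|2J_{S,S}\|_{\mathrm{op}}\le2/5$, inherited from the event since $\delta m\le\delta n$. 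The lemma applies provided $T_0\ge C(3,\delta)(1+C_2)$ and $m\ge\delta n$ is large, which holds for $n\ge N$.

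The main obstacle is that $C_2$ itself grows like $\sqrt{T_0}$, so the condition on $T_0$ looks circular. It is resolved by noting that the condition is linear in $T_0$ against $\sqrt{T_0}$: the requirement $T_0\ge C(1+(2\sqrt{T_0}+3)\delta^{-1/2})$ holds for all sufficiently large $T_0$. We fix such a $T_0$, which depends only on $(\beta_0,\beta_1)$ through $\delta=\delta_{\beta_1}$.

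With both hypotheses verified, \Cref{lem:pinned-glauber-mixing} gives $\Var[\mu_{A,y}]{f}\le\frac53n^7\Diri{\mu_{A,y}}{f,f}$. The bound is uniform in $\beta\in[\beta_0,\beta_1]$, because only $\beta\le\beta_1$ was used.
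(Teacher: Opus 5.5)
Your proposal is correct and follows essentially the same route as the paper. You gauge the field to $|y|$, obtain hypothesis (2) of \Cref{lem:pinned-glauber-mixing} from the small-submatrix bound in $\Omega_n^{\mathrm{end}}$, and obtain hypothesis (1) from \Cref{lem:strong-field-correlation} with $C_2=\delta^{-1/2}(2\sqrt{T_0}+O(1))$, resolving the apparent circularity because the constraint on $T_0$ grows only like $\sqrt{T_0}$. The only differences are cosmetic: you take $C_1=3$ where the paper takes $3\beta_1$, which is harmless since $\beta_1<1$, and you make the constant $C=5/3$ explicit.
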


\begin{proof}
We first choose $T_0$ large enough that
\begin{equation*}
 T_0\geq C(3\beta_1,\delta)\bigl(1+\delta^{-1/2}(2\sqrt{T_0}+3\beta_1)\bigr),
\end{equation*}
where $C(3\beta_1,\delta)$ is the constant in \Cref{lem:strong-field-correlation}.
Such a fixed $T_0$ exists because the right side grows as $O(\sqrt{T_0})$.
We now fix $A\in\Omega_n^{\mathrm{end}}$, $\beta\in[\beta_0,\beta_1]$, and $y$ with $\||y|-T_0\ones_n\|_2\leq2\sqrt{T_0n}$.

To make the external field nonnegative, we set $s_i\defeq\operatorname{sgn}(y_i)$, with $\operatorname{sgn}(0)=1$, and $S\defeq\operatorname{diag}(s_1,\ldots,s_n)$.
For $\widetilde x\defeq Sx$ and $\widetilde A\defeq SAS$, we have $S^2=I_n$ and
$\frac\beta2\inner{x}{Ax}+\inner{y}{x}
 =\frac\beta2\inner{\widetilde x}{\widetilde A\widetilde x}
   +\inner{|y|}{\widetilde x}$.

Thus, if $X\sim\mu_{A,y}$, then $\widetilde X\defeq SX\sim\mu_{\widetilde A,|y|}$.
Since $S$ is orthogonal, $\norm{\widetilde A}_{\mathrm{op}}=\norm{A}_{\mathrm{op}}$.
For every $I\subseteq[n]$, writing $S_I$ for the restriction of $S$ to $I$, we also have $\norm{\widetilde A_{I,I}}_{\mathrm{op}}
 =\norm{S_I A_{I,I}S_I}_{\mathrm{op}}
 =\norm{A_{I,I}}_{\mathrm{op}}$.
Hence $\widetilde A\in\Omega_n^{\mathrm{end}}$.
For any $f$, we set $\widetilde f(u)\defeq f(Su)$, so that $\Var[\mu_{A,y}]{f}=\Var[\mu_{\widetilde A,|y|}]{\widetilde f}$.

The change of variables only exchanges the two spin values in each coordinate, so the one-coordinate conditional variances, and hence their average, satisfy
$\Diri{\mu_{A,y}}{f,f}
 =\Diri{\mu_{\widetilde A,|y|}}{\widetilde f,\widetilde f}$.
It therefore suffices to prove the Poincar\'e inequality for $\mu_{\widetilde A,|y|}$.

We now check the two pinning conditions in \Cref{lem:pinned-glauber-mixing}.
For a nonempty $I\subseteq[n]$, we fix the spins outside $I$ to $\tau\in\{\pm1\}^{I^c}$.
Expanding the quadratic term in the Gibbs weight, we obtain, for $u\in\{\pm1\}^I$,
\begin{equation*}
 \mu_{\widetilde A,|y|}(\widetilde X_I=u\mid\widetilde X_{I^c}=\tau)
 \propto\exp\left(\frac\beta2\inner{u}{\widetilde A_{I,I}u}
  +\inner{|y|_I+\beta\widetilde A_{I,I^c}\tau}{u}\right),
\end{equation*}
where terms independent of $u$ are absorbed into the normalizing constant.
Thus the interaction on $I$ is $\beta\widetilde A_{I,I}$ and its field is $T_0\ones_I+v_I$, where
$v_I\defeq |y|_I-T_0\ones_I+\beta\widetilde A_{I,I^c}\tau$.

If $|I|\geq\delta n$, we use $\norm{\widetilde A}_{\mathrm{op}}\leq3$ and $\norm{\tau}_2\leq\sqrt n$ to obtain
\begin{align*}
 \norm{v_I}_2
 \leq \||y|-T_0\ones_n\|_2
       +\beta\norm{\widetilde A_{I,I^c}}_{\mathrm{op}}\norm{\tau}_2
 \leq (2\sqrt{T_0}+3\beta_1)\sqrt n
 \leq\delta^{-1/2}(2\sqrt{T_0}+3\beta_1)\sqrt{|I|}.
\end{align*}
Also, $\norm{\beta\widetilde A_{I,I}}_{\mathrm{op}}\leq3\beta_1$, and for every $J\subseteq I$ with $|J|\leq\delta|I|$, we have $\norm{2\beta\widetilde A_{J,J}}_{\mathrm{op}}\leq2/5$.
By \Cref{lem:strong-field-correlation} with $m=|I|$, $C_1=3\beta_1$, $C_2=\delta^{-1/2}(2\sqrt{T_0}+3\beta_1)$, and $\theta=T_0$, the correlation matrix of this pinned distribution has operator norm at most $6$ for all sufficiently large $n$.
If $|I|<\delta n$, we instead use $\widetilde A\in\Omega_n^{\mathrm{end}}$ to obtain
\begin{equation*}
 \lambda_{\max}(\beta\widetilde A_{I,I})-\lambda_{\min}(\beta\widetilde A_{I,I})
 \leq2\norm{\beta\widetilde A_{I,I}}_{\mathrm{op}}\leq\frac25.
\end{equation*}
We have verified both hypotheses of \Cref{lem:pinned-glauber-mixing}, so we can apply it to $\mu_{\widetilde A,|y|}$ and $\widetilde f$.
Using the variance and Dirichlet-form identities above, we obtain
\begin{equation*}
 \Var[\mu_{A,y}]{f}
 =\Var[\mu_{\widetilde A,|y|}]{\widetilde f}
 \leq Cn^7\Diri{\mu_{\widetilde A,|y|}}{\widetilde f,\widetilde f}
 =Cn^7\Diri{\mu_{A,y}}{f,f}.
\end{equation*}
\end{proof}

\subsection{Averaged terminal posterior variance}

\begin{proof}[Proof of \Cref{lem:app-wpi-two-path-terminal}]
We take $T_0,C,N$ from \Cref{lem:terminal-field-poincare}.
By \Cref{lem:terminal-disorder-tail}, the event $\Omega_n^{\mathrm{end}}$ has $\mathbb P_{0,n}$-probability at least $1-C\exp(-cn)$, after adjusting $C,c$.
We fix $A\in\Omega_n^{\mathrm{end}}$ and $\beta\in[\beta_0,\beta_1]$, and let $X,B,Y$ be as in \Cref{lem:app-wpi-two-path-terminal}.
We split the expectation according to whether the terminal field satisfies the condition in \Cref{lem:terminal-field-poincare}.
For this, we define the event
\begin{equation*}
 \mathcal G\defeq\left\{\||Y_{T_0}|-T_0\ones_n\|_2\leq2\sqrt{T_0n}\right\}.
\end{equation*}
We first bound the probability of $\mathcal G^c$.
We write $Y_{T_0}=T_0X+\sqrt{T_0}G$, where $G\sim N(0,I_n)$ is independent of $X$, and use the coordinatewise inequality $\bigl||Y_{T_0,i}|-T_0\bigr|\leq\sqrt{T_0}|G_i|$.
By Markov's inequality and the Gaussian moment-generating function, we obtain
\begin{align*}
 \Pr[X\sim\mu_{\beta,A},\,B]{\mathcal G^c}
 \leq&~\Pr[G\sim N(0,I_n)]{\|G\|_2>2\sqrt n}\\
 \leq&~\exp(-3n/2)\E[G\sim N(0,I_n)]{\exp(3\|G\|_2^2/8)}\\
 =&~\exp(-(3/2-\log2)n).
\end{align*}
By the law of total variance, we have that
\begin{align*}
 \E[X\sim\mu_{\beta,A},\,B]{\Diri{\mu_{A,Y_{T_0}}}{f,f}}
 =&~\frac1n\sum_{i=1}^n\E[X\sim\mu_{\beta,A},\,B]{\Var{f(X)\mid X_{-i},Y_{T_0}}}\\
 \leq&~\frac1n\sum_{i=1}^n\E[X\sim\mu_{\beta,A}]{\Var{f(X)\mid X_{-i}}}
 =\Diri{\mu_{\beta,A}}{f,f}.
\end{align*}
On $\mathcal G$, we apply \Cref{lem:terminal-field-poincare} to bound the posterior variance by $Cn^7\Diri{\mu_{A,Y_{T_0}}}{f,f}$.
On $\mathcal G^c$, we use $\Var[\mu_{A,Y_{T_0}}]{f}\leq\operatorname{osc}(f)^2/4$, which holds because $f$ takes values in $[\inf f,\sup f]$.
Splitting the expectation over these two events, we obtain
\begin{align*}
 \E[X\sim\mu_{\beta,A},\,B]{\Var[\mu_{A,Y_{T_0}}]{f}}
 =&~\E[X\sim\mu_{\beta,A},\,B]{\1_{\mathcal G}\Var[\mu_{A,Y_{T_0}}]{f}}
   +\E[X\sim\mu_{\beta,A},\,B]{\1_{\mathcal G^c}\Var[\mu_{A,Y_{T_0}}]{f}}\\
 \leq&~Cn^7\E[X\sim\mu_{\beta,A},\,B]{\1_{\mathcal G}\Diri{\mu_{A,Y_{T_0}}}{f,f}}
   +\frac{\operatorname{osc}(f)^2}{4}\Pr[X\sim\mu_{\beta,A},\,B]{\mathcal G^c}\\
 \leq&~Cn^7\E[X\sim\mu_{\beta,A},\,B]{\Diri{\mu_{A,Y_{T_0}}}{f,f}}
   +\tfrac14\exp(-cn)\operatorname{osc}(f)^2\\
 \leq&~Cn^7\Diri{\mu_{\beta,A}}{f,f}
   +\tfrac14\exp(-cn)\operatorname{osc}(f)^2,
\end{align*}
where we use the bound on $\Pr[X\sim\mu_{\beta,A},\,B]{\mathcal G^c}$ and the averaged Dirichlet-form inequality proved above, and decrease $c$ if necessary.
\end{proof}

\end{document}